\newcolumntype{P}[1]{>{\centering\arraybackslash}p{#1}}
\newtheorem{thm}{Theorem}
\numberwithin{thm}{section}
\newtheorem{cor}[thm]{Corollary}
\newtheorem{lem}[thm]{Lemma}
\newtheorem{prop}[thm]{Proposition}
\newtheorem{defn}[thm]{Definition}
\newtheorem{conj}{Conjecture}[section]
\newtheorem{openprob}{Open Problem}[]
\renewcommand{\p@subsection}{}
\renewcommand{\p@subsubsection}{}
\newcommand\bea{\begin{eqnarray}}
\newcommand\eea{\end{eqnarray}}
\newcommand\be{\begin{equation}}
\newcommand\ee{\end{equation}}
\newcommand\bes{\begin{subequations}}
\newcommand\ees{\end{subequations}}
\newcommand\bed{\begin{displaymath}}
\newcommand\eed{\end{displaymath}}
\newcommand\beal{\begin{aligned}}
\newcommand\eeal{\end{aligned}}
\newcommand\bew{\begin{widetext}}
\newcommand\eew{\end{widetext}}
\newcommand\beit{\begin{itemize}}
\newcommand\eeit{\end{itemize}}
\def\bea{\begin{array}}
\def\eea{\end{array}}
\newcommand\been{\begin{enumerate}}
\newcommand\eeen{\end{enumerate}}
\newcommand{\ident}[0]{\mathds{1}}
\newcommand{\poly}{\mathsf{poly}}
\newcommand{\im}{\operatorname{im}}
\newcommand{\rs}{\operatorname{rs}}
\newcommand{\Aut}{\operatorname{Aut}}
\newcommand{\transpose}[0]{\mathsf{T}}
\definecolor{red1}{rgb}{0.76, 0.23, 0.13}
\definecolor{green1}{rgb}{0.0, 0.5, 0.0}
\newcommand{\cmark}{\textcolor{green1}{\ding{51}}}
\newcommand{\xmark}{\textcolor{red1}{\ding{55}}}
\begin{document}

\title{Automorphism gadgets in homological product codes}

\author[1]{Noah Berthusen}
\author[1]{Michael J. Gullans}
\author[1,2]{Yifan Hong\footnote{\href{mailto:yhong137@umd.edu}{yhong137@umd.edu}}}
\author[1]{\\Maryam Mudassar}
\author[1]{Shi Jie Samuel Tan\footnote{\href{mailto:stan97@umd.edu}{stan97@umd.edu}}}
\affil[1]{\small Joint Center for Quantum Information and Computer Science,

University of Maryland and NIST, College Park, MD 20742, USA}
\affil[2]{\small Joint Quantum Institute, University of Maryland and NIST, College Park, MD 20742, USA}

\date{\today}

\maketitle
\renewcommand{\thefootnote}{\fnsymbol{footnote}}
\setcounter{footnote}{5}
\footnotetext{{The authors are listed in alphabetical order.}}
\setcounter{footnote}{0}
\renewcommand{\thefootnote}{\arabic{footnote}}
\begin{abstract}
    The homological product is a general-purpose recipe that forges new quantum codes from arbitrary classical or quantum input codes, often providing enhanced error-correcting properties. When the input codes are classical linear codes, it is also known as the hypergraph product. We investigate structured homological product codes that admit logical operations arising from permutation symmetries in their input codes. We present a broad theoretical framework that characterizes the logical operations resulting from these underlying automorphisms. In general, these logical operations can be performed by a combination of physical qubit permutations and a subsystem circuit. In special cases related to symmetries of the input Tanner graphs, logical operations can be performed solely through qubit permutations. We further demonstrate that these ``automorphism gadgets'' can possess inherent fault-tolerant properties such as effective distance preservation, assuming physical permutations are free. Finally, we survey the literature of classical linear codes with rich automorphism structures and show how various classical code families fit into our framework. Complementary to other fault-tolerant gadgets for homological product codes, our results further advance the search for practical fault tolerance beyond topological codes in platforms capable of long-range connectivity.
\end{abstract}

\newpage
\tableofcontents

\section{Introduction}

Quantum error correction, and more generally fault-tolerant quantum computation, promises reliable large-scale computations at the cost of additional resources \cite{Knill_1998, Aharonov_2008, Kitaev_2003}. Since the advent of topological codes \cite{Kitaev_2006, Bombin_2006}, there has been a significant push to lower the resource overhead for error correction, culminating in the discovery of high-rate quantum low-density parity-check (qLDPC) codes \cite{Breuckmann_2021, Bravyi_2024}. These codes sacrifice geometric locality in order to more efficiently encode their logical qubits compared to topological codes. It was then shown by Gottesman that constant-rate qLDPC codes can enable fault-tolerant quantum computation with asymptotically constant spatial overhead \cite{Gottesman_2014} but linear time overhead, compared to the unencoded circuit. The time overhead has since been improved down to polylogarithmic \cite{nguyen2024FT, tamiya2024FT}.

Although the above results are encouraging from a theoretical standpoint, in practice there are often a lot of large factors hiding in the constants which may make near-term implementations difficult. On this front, there has been a recent surge of interest in constructing explicit low-overhead gadgets applicable to large families of qLDPC codes, going by the names of bridges \cite{cross2024}, adapters \cite{swaroop2025} and extractors \cite{he2025extr} as well as homomorphic measurements \cite{xu2024fast}. These gadgets can measure arbitrary combinations of logical Pauli operators and so are well-suited for the Pauli-based computational model \cite{Litinski_2019}. In this work, we will explore logical gates implemented by unitary circuits such as qubit permutations and transversal gates. A particularly alluring type of gate is an automorphism gate, in which logical operations are performed by simply permuting the physical qubits.

Automorphisms are a reordering of the bits (or qubits) that maps a code to itself. The group arising from the composition of classical error correcting code automorphisms has been a topic of intense study, both as a subject of intrinsic mathematical interest and for applications such as improved decoding~\cite{macwilliams_1964}.
It was later shown that automorphisms could enable low-overhead logical gates in quantum error correcting codes~\cite{calderbank_1997, Grassl_2013} as well as assist in their decoding~\cite{koutsioumpas2025, pacenti2025}.
Unfortunately, logical gates that can be implemented in this way are limited, see Section~\ref{sec:bounds}; nonetheless, several recent works have shown that automorphism gates can provide computational speed-ups~\cite{xu2024fast, malcolm2025} and increased logical gate fidelity~\cite{hong2024, burton2024, reichardt2024}, motivating a deeper understanding of how to obtain quantum codes with useful automorphism groups.

Given a stabilizer code, it is not often obvious how to compute the available symmetries and the resulting logical gates. For classical linear codes, it is known that computing the automorphism group is intimately related to the \textsc{Permutation Code Equivalence} problem that can be reduced to the \textsc{Graph Isomorphism} problem~\cite{Petrank_1997}. Sayginel et al.~\cite{sayginel2025} made progress on this problem for stabilizer codes by providing a method for finding fault-tolerant logical Clifford gates arising from code automorphisms. 
The method presented there is general and applies to all stabilizer codes. In this work, we focus on Calderbank-Shor-Steane (CSS) stabilizer codes constructed from the product of classical or quantum codes, of which the hypergraph product~\cite{Tillich_2014}, homological product~\cite{Freedman_2014, Bravyi_2014_HGP, Zeng_2019, Campbell_2019}, lifted product~\cite{Panteleev_2021, panteleev2021quantum}, and balanced product \cite{Breuckmann_2021_BP} are examples. Intuitively, since these codes are ``products'' of underlying codes, one might expect them to inherit the automorphism symmetries of their input classical codes.
For the cases of hypergraph and homological products, we show that this intuition is nearly correct, and we provide a systematic method of obtaining a quantum error-correcting code with a desired automorphism group. In particular, we show that the automorphism groups of the underlying classical codes can be utilized by the resulting quantum code to yield a set of logical gates which can be implementing on the physical level through a combination of qubit permutations and a (not necessarily shallow) circuit on a subsystem of the data qubits. Because of the possible existence of these circuits, we refer to these logical gates as \textit{automorphism gadgets}, as opposed to \textit{automorphism gates}, which consist solely of qubit permutations. To this end, we show that a stricter notion of code automorphism, known as Tanner graph automorphism, does give rise to true automorphism gates in the corresponding product codes.
It might be of interest to some readers that Lockhart and Gonz\'{a}lez-Guill\'{e}n have studied the \textsc{Stabilizer State Isomorphism} problem which can be loosely defined as the problem that decides whether there exists some qubit permutation that sends a stabilizer state to another stabilizer state. They provided evidence that the \textsc{Stabilizer State Isomorphism} is an intermediate problem for QCMA and showed that the polynomial hierarchy collapses if the  problem is QCMA-complete~\cite{lockhart2017quantum}. While the problem is not exactly the same as the problem of identifying true automorphism gates for product codes, the complexity of the \textsc{Stabilizer State Isomorphism} problem is likely to suggest hardness for the problem of our interest.

\subsection{Related works}

We acknowledge some previous and related works regarding quantum code automorphisms and their utility in fault-tolerant computation. Grassl and Roetteler \cite{Grassl_2013} introduced the concept of utilizing code automorphisms to perform fault-tolerant logical gates in quantum stabilizer codes. They also show that combining automorphisms with transversal CNOT can significantly enlarge the set of fault-tolerant gates for certain CSS codes. Chao and Reichardt \cite{Chao_2018_Aut} later showed how the full logical Clifford group could be realized in the $\llbracket 15, 7, 3\rrbracket$ quantum Hamming code through a combination of permutations (automorphisms), transversal $H$, and round-robin CZ circuits. Gong and Renes \cite{gong2024} demonstrate a similar result using quantum Reed-Muller codes and propose an implementation in 2D neutral atom arrays.

Hong et al. \cite{LRESC} presented a construction of inherited automorphism gates for hypergraph product codes in the context of concatenation. In this work, we provide a broad theoretical framework that encapsulates and builds upon this earlier work. In particular, we generalize to higher-fold homological products and prove statements about inherent fault tolerance of the resulting gadgets.

More recently, Malcolm et al. \cite{malcolm2025} show how automorphisms of classical codes lift to automorphisms in a subsystem variant of the hypergraph product code. They leverage these inherited automorphisms to achieve a $O(1)$ Clifford compiling ratio using the classical Simplex code family. However, it should be noted that although the weight of the gauge check operators of the subsystem hypergraph product code can be constant, the weight of the stabilizer generators is proportional to the code distance, and so the LDPC property of the stabilizers must be sacrificed if one desires more than a constant code distance. Nonetheless, they show that small-length instances achieve competitive performance under circuit-level noise compared to surface codes of similar distance.

Guyot and Jaques \cite{guyot2025} provide generic bounds on the parameters of any quantum CSS code based on the size and structure of its automorphism group. Since the product constructions in our work produce CSS codes, all of our results are broadly encompassed by their bounds. When restricted to the class of product codes, we are able to derive tighter bounds by examining automorphism limitations of the input codes themselves.

\subsection{Summary of results}

Our contributions can be sorted into three categories: general recipes, analyses of fault tolerance, and a survey of input codes.

Our primary result is explicit recipes of logical gadgets for homological product codes based on the underlying automorphism structures of their input codes; the details of the construction can be found in Section \ref{sec:aut gadgets}. Typically, a code is defined with respect to its codewords, in which case a code automorphism is defined as a permutation symmetry of the codewords. However, the homological product is defined with respect to the parity-check matrices of the classical seed codes rather than their codewords and so depends strongly on the choice of parity-check matrices.
Additionally, both the parity-check matrix and its transpose are put on an equal footing in the homological product; as a consequence, we will need to keep track of how the columns \emph{and} rows transform in the parity-check matrix from a code automorphism.
Geometrically, the Tanner graph of the homological quantum code will be (up to node relabeling) the Euclidean graph product of the input Tanner graphs. As such, one can always arrange the qubits and parity checks of the homological product code in a rectangular layout so that all qubit-check interactions are strictly ``horizontal'' or ``vertical''. Permutations of the (qu)bits in the input codes now become permutations of rows and columns in the homological product code; see Figure \ref{fig:2D HGP Aut gadget} for an illustration.

\begin{thm}[Inherited automorphism gadgets (informal)]
    For a homological product code with two input codes, which can either be classical linear codes or quantum CSS codes, the number of distinct automorphism gadgets is the product of the number of distinct automorphisms (or gadgets) of the input codes. In particular, the group generated by these automorphism gadgets is a direct product of the automorphism groups of the input codes.
\end{thm}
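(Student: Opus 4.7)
The plan is to construct an explicit group homomorphism
$\Phi : \mathrm{Aut}(C_1) \times \mathrm{Aut}(C_2) \to \mathrm{Aut}(Q)$,
where $Q$ denotes the homological product code, and then show that $\Phi$ is injective so that its image has the claimed cardinality and direct-product structure. I will work in the chain-complex picture since it unifies the classical linear and CSS inputs: the homological product code is the total complex of the tensor product of the two input chain complexes, so every qubit or check sector of $Q$ is naturally indexed by a Cartesian product of a ``bit'' or ``check'' sector from $C_1$ with one from $C_2$.

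First I would recall that an automorphism of $C_i$ is a tuple of permutations, one per level of its chain complex, that commutes with the boundary maps; for a classical code this reduces to a pair $(\pi_i, \sigma_i)$ with $P_{\sigma_i} H_i P_{\pi_i}^{\transpose} = H_i$. Given automorphisms $a_1, a_2$ of the inputs, I define $\Phi(a_1, a_2)$ to act on the sector of $Q$ indexed by levels $(k_1, k_2)$ as the tensor product of the corresponding constituent permutations. A direct block calculation such as
\[
(P_{\sigma_1} \otimes P_{\pi_2})(H_1 \otimes I)(P_{\pi_1} \otimes P_{\pi_2})^{\transpose} = (P_{\sigma_1} H_1 P_{\pi_1}^{\transpose}) \otimes I = H_1 \otimes I,
\]
together with the analogous identity for the transposed block, shows that each block of the stabilizer check matrices is preserved, so $\Phi(a_1, a_2)$ is indeed a code automorphism of $Q$. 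When an input admits only a more general gadget (a permutation together with a subsystem circuit), the lift is obtained by additionally applying the input circuit on the corresponding slice of the product qubit sector, which is well-defined because distinct sectors share no qubits.

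The homomorphism property $\Phi(a_1 a_1', a_2 a_2') = \Phi(a_1, a_2)\Phi(a_1', a_2')$ then follows sector-by-sector from the elementary identity $(\pi_1 \pi_1') \otimes (\pi_2 \pi_2') = (\pi_1 \otimes \pi_2)(\pi_1' \otimes \pi_2')$, combined with the observation that subsystem circuits attached to different sectors act on disjoint qubits and hence commute. For injectivity I would restrict the action of $\Phi(a_1, a_2)$ to the ``data-data'' qubit sector, where the induced permutation is exactly $\pi_1 \otimes \pi_2$; fixing one index and letting the other vary recovers each constituent permutation uniquely, and the same trick on the remaining sectors recovers the check-level permutations. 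Injectivity yields $|\Phi(\mathrm{Aut}(C_1) \times \mathrm{Aut}(C_2))| = |\mathrm{Aut}(C_1)| \cdot |\mathrm{Aut}(C_2)|$, and combined with the homomorphism property this gives the asserted direct-product structure.

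The main technical obstacle will be the CSS-input case, where one must track permutations at all three levels of each input complex and verify that the lifted action respects the CSS structure, i.e.\ that it permutes $X$-stabilizers among themselves and $Z$-stabilizers among themselves rather than mixing the two sectors; this forces the lifted permutations on $X$- and $Z$-check sectors to be determined consistently by the input data, which I expect to follow from the chain-map condition on each input automorphism. A secondary subtlety is the bookkeeping for the full gadget version: two lifts that differ only by a stabilizer-equivalent subsystem circuit should be counted as the same gadget, so before counting I would fix a canonical representative in each equivalence class by defining ``distinct'' at the level of induced logical action rather than at the level of physical circuits.
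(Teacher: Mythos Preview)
Your construction conflates \emph{code automorphisms} with \emph{Tanner graph automorphisms}. For a general code automorphism $\sigma$ of a classical input, the dual condition is $h\sigma = wh$ with $w \in \mathrm{GL}_m(\mathbb{F}_2)$, and $w$ is \emph{not} a permutation in general. The paper's lifted gadget is $U_1 = (\sigma \otimes \ident) \oplus (w \otimes \ident)$: a permutation on the left (bit--bit) sector but a CNOT circuit on the right (check--check) sector. Consequently the lift does \emph{not} land in $\mathrm{Aut}(Q)$; it is an ``automorphism gadget'' (permutation plus subsystem circuit), and your codomain is wrong. Your definition ``a tuple of permutations, one per level'' captures only the Tanner subgroup $\mathcal{T} \subseteq \mathrm{Aut}(\mathcal{C})$, which is the content of a \emph{different} theorem in the paper (Tanner graph automorphism inheritance). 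Your parenthetical about ``additionally applying the input circuit'' gestures at the right fix but does not actually say where that circuit goes or why the result still preserves the stabilizer group.

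The paper's approach to distinctness is also different from yours. Rather than proving injectivity of a physical-level map, the paper computes the \emph{logical} action of each lifted gadget directly on the canonical logical basis $G_{X,\mathrm{L}}, G_{Z,\mathrm{L}}$ of the product code, obtaining $\overline{V}_1 = v_1^{-\transpose} \otimes \ident$ and $\overline{V}_2 = \ident \otimes v_2$. Since these commute and tensor decompositions over $\mathbb{F}_2$ are unique, the generated group is $\mathcal{A}_1 \times \mathcal{A}_2$. This sidesteps the physical-to-logical quotient you flag as a ``secondary subtlety''. Your injectivity argument (restrict to the data--data sector and read off $\pi_1 \otimes \pi_2$) does establish physical-level injectivity, but it does not by itself give distinct \emph{logical} actions; that step requires the extra hypothesis $d, d^\perp \geq 3$ (the paper's Corollary on automorphism equivalence), which you should make explicit.
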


By ``distinct'', we mean that each automorphism gadget corresponds to a distinct transformation (gate) on our logical qubits. This property does not generically hold for all codes, but we show in Theorem \ref{thm:dual aut bound} and Corollary \ref{cor:dual aut equal} that this property does indeed hold for classical linear codes with dual distance at least 3.\footnote{An analogous result for quantum CSS codes has been recently shown in \cite{guyot2025}.} Our recipes are very general and include the case where an input quantum code can itself be a homological product code with automorphism gadgets. These gadgets can then be lifted once again by further homological products. From this recursion, the above theorem also includes the case where we take \textsf{k}-fold homological products of \textsf{k} input codes.

Within the category of fault tolerance, we derive several results concerning both the circuit depth and the spread of correlated errors for our automorphism gadgets.

\begin{cor}[Circuit depth preservation (informal)]
    Suppose we have a code possessing an automorphism gadget with circuit depth $T$. Then its lifted gadget after a homological product will also have circuit depth $T$.
\end{cor}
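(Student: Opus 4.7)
The plan is to exploit the product (grid) structure of the homological product qubit layout described in the preamble and in Figure~\ref{fig:2D HGP Aut gadget}. Write the original automorphism gadget as a pair $(\pi_A, U_A)$, where $\pi_A$ is the physical qubit permutation and $U_A$ is the subsystem circuit of depth $T$ that corrects $\pi_A$ back to a true code automorphism. In the homological product $A \otimes B$, qubits (and checks) are indexed by pairs $(i,j)$ with $i$ ranging over positions of $A$ and $j$ ranging over positions of $B$, and the recipe from Section~\ref{sec:aut gadgets} lifts $(\pi_A, U_A)$ to a pair $(\tilde{\pi}_A, \tilde{U}_A)$ whose action factorizes as $\pi_A \otimes I_B$ on coordinates. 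The core observation to establish is then simply that this tensor structure translates into parallel, rather than sequential, execution of the subsystem circuit.

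First, I would unpack the lift of the subsystem circuit. Because $\tilde{U}_A$ is built by replacing every qubit $i$ that appears in $U_A$ by the entire $B$-indexed slice $\{(i,j)\}_j$, and every gate of $U_A$ on qubits $i_1,\dots,i_k$ is replaced by the family of gates acting on $(i_1,j),\dots,(i_k,j)$ for each fixed $j$, the lifted circuit decomposes as
\begin{equation*}
\tilde{U}_A \;=\; \prod_{j} U_A^{(j)},
\end{equation*}
where $U_A^{(j)}$ is an identical copy of $U_A$ supported entirely within the $j$-th $A$-slice of the grid. Since the supports $\{(i,j) : i \in \operatorname{supp}(U_A)\}$ are disjoint for distinct values of $j$, all copies $U_A^{(j)}$ act on mutually disjoint qubit sets and therefore commute. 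They can be executed simultaneously, layer by layer, so the depth of $\tilde{U}_A$ equals the depth of any single copy $U_A^{(j)}$, namely $T$.

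The second step is to confirm that no extra layers are introduced by the permutation part or by the interplay between $\tilde{\pi}_A$ and $\tilde{U}_A$. Under the convention adopted throughout the paper that physical qubit permutations are free (stated, e.g., in the abstract and in the fault-tolerance discussion preceding the corollary), the permutation contributes depth zero, and the ordering of $\tilde{\pi}_A$ and $\tilde{U}_A$ is inherited verbatim from the original gadget. The argument extends by induction to $\textsf{k}$-fold products: when one of the input ``codes'' is itself a homological product with an inherited gadget of depth $T$, applying the lift again still only replicates the gadget across a new index, maintaining disjointness of supports.

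The main obstacle I anticipate is purely bookkeeping: verifying, from the explicit formulas in Section~\ref{sec:aut gadgets}, that the lifted subsystem circuit is genuinely of the product form $\prod_j U_A^{(j)}$ with disjoint supports, rather than a circuit in which different $j$-slices get entangled by the lift. Once this structural claim is pinned down directly from the definition of the homological product (and the fact that the automorphism acts on only one tensor factor), the depth statement is immediate from parallel composition of disjoint-support subcircuits.
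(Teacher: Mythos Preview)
Your approach is essentially the paper's: the paper simply observes, directly from the explicit formulas in Section~\ref{sec:aut gadgets}, that every component of the lifted gadget is a Kronecker product of an input component with the identity, and hence the depth is preserved. Your parallel-copies argument is a correct unpacking of why ``$\otimes\,\ident$'' preserves depth.

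One bookkeeping point you should expect when you carry out the verification you flag at the end: the lifted gadget is \emph{not} literally of the form $\tilde U_A = \prod_j U_A^{(j)}$ on a single enlarged qubit set. The homological product creates a \emph{new} qubit sector (the right sector in the paper's notation), and the lifted gadget acts there as $W\otimes\ident$, where $W$ is the row-transformation satisfying $H_X U = W H_X$; see e.g.\ \eqref{eq:3D HGP gadgets} and \eqref{eq:4D HGP gadgets}. This $W$ is not the same object as your $U_A$. So the lifted circuit depth is really $\max\big(\mathrm{depth}(U),\,\mathrm{depth}(W)\big)$, and you need $\mathrm{depth}(W)\le T$ as well. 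In the inductive setting of the paper this holds automatically, because $W$ is itself of the form $w\otimes\ident$ (or a direct sum of such terms) built from exactly the same base circuit $w$ that determines the depth of $U_A$; but your decomposition into $(\pi_A,U_A)$ alone does not carry enough data to see this, since $W$ lives on the check side of the input complex, not on the qubit side. Once you track $W$ alongside $U$ through the lift, your parallel-copies argument applies to each piece separately and the conclusion follows.
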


The above corollary provides us with an incredible amount of flexibility when it comes to designing homological product codes with desired automorphism gadgets; namely, it implies that we do not need to worry about any conflicts between different input codes.

A generic code automorphism corresponding to a permutation transformation on the columns of a parity-check matrix may not necessarily be a permutation transformation on the rows. We identify a subgroup of code automorphisms, which we call Tanner graph automorphisms, where the above property does indeed hold. The name stems from the fact that when a code automorphism acts as a permutation on both sides of the parity-check matrix, it can be interpreted as an automorphism of the associated Tanner graph. An automorphism gadget corresponding to an input Tanner graph automorphism acts strictly as a permutation and so becomes a code automorphism of the homological product code. In fact, we show an even stronger result, which is that these Tanner graph automorphisms lift to Tanner graph automorphisms themselves.

\begin{thm}[Tanner graph automorphism inheritance (informal)]
    Suppose we have a code possessing a Tanner graph automorphism of its parity-check matrix. Then its lifted gadget after a homological product is also a Tanner graph automorphism consisting solely of qubit permutations on the physical level.
\end{thm}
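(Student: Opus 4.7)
The plan is to exhibit the lifted action explicitly on every node type of the product Tanner graph and verify that it preserves the incidence structure. Recall that a Tanner graph automorphism of a parity-check matrix $H \in \mathbb{F}_2^{m \times n}$ is a pair $(\sigma,\tau)$ of permutations on the bits and checks such that $P_\tau\, H\, P_\sigma^\transpose = H$; equivalently, viewing $H$ as the single boundary map of a length-two chain complex over $\mathbb{F}_2$, this is a chain endomorphism whose components at each grade are permutation matrices. Since both bits and checks are permuted, such an automorphism requires no subsystem circuit to implement---it is purely a relabeling of physical degrees of freedom.

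I would first treat the hypergraph product of two classical codes with parity-check matrices $H_1, H_2$ and Tanner graph automorphisms $(\sigma_i,\tau_i)$. The HGP has qubits indexed by $V_1 \times V_2 \sqcup C_1 \times C_2$, $X$-checks by $C_1 \times V_2$, and $Z$-checks by $V_1 \times C_2$. The candidate lift permutes $V_1 \times V_2$ by $\sigma_1 \times \sigma_2$, $C_1 \times C_2$ by $\tau_1 \times \tau_2$, the $X$-checks by $\tau_1 \times \sigma_2$, and the $Z$-checks by $\sigma_1 \times \tau_2$. Verifying that this lift preserves $H_X = (H_1 \otimes I \mid I \otimes H_2^\transpose)$ block-by-block reduces, via $(A \otimes B)(C \otimes D) = AC \otimes BD$, to the two defining relations $P_{\tau_i} H_i P_{\sigma_i}^\transpose = H_i$ together with trivial identities $P P^\transpose = I$ coming from the identity blocks; the analogous computation works for $H_Z$. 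Every component of the lift is itself a permutation, so the gadget reduces to a pure qubit permutation and is a genuine Tanner graph automorphism of the HGP.

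For higher-fold homological products and for quantum inputs (viewed as longer chain complexes), I would recast the argument functorially. A Tanner graph automorphism is by definition a chain endomorphism by permutations; the tensor product of chain maps is a chain map; and the tensor product of permutation matrices in a tensor-product basis is again a permutation matrix. Hence the tensor of Tanner graph automorphisms of each input is a chain endomorphism of the product complex whose grade-wise components are permutations, i.e., a Tanner graph automorphism of the product code. The four-block decomposition of the HGP case is recovered by reading off the grade-$1$ piece of this tensor for two two-term complexes, and induction extends the argument to $\textsf{k}$-fold products. When a Tanner graph automorphism lives on only one input, one tensors with the identity chain map on the other.

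The main obstacle is not conceptual but organizational: one must keep careful track of the node classes of the HGP, of the row-versus-column conventions for permutation actions, and---in the general chain-complex setting---of the bidegree assignments that dictate whether $\sigma_i$ or $\tau_i$ appears on a given tensor slot. Once the functorial viewpoint is adopted, however, the verification is essentially tautological: the relations $P_{\tau_i} H_i P_{\sigma_i}^\transpose = H_i$ are the only nontrivial inputs, and everything else is formal tensor-product manipulation.
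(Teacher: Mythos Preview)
Your proposal is correct and takes essentially the same approach as the paper. The paper establishes the result by explicitly computing the lifted gadgets $\tilde{U}$ and the accompanying row-transformations $\tilde{W}$, $\tilde{W}'$ case-by-case (HGP in \S4.1, quantum\,$\times$\,classical in \S4.2, quantum\,$\times$\,quantum in \S4.3) and then observing that each piece is a Kronecker product or direct sum of the input $\sigma$, $w$, $U$, $W$; when those inputs are permutation matrices, so are all the outputs. Your functorial tensor-of-chain-maps phrasing packages the same observation more compactly and handles all cases at once, but the mathematical content is identical.
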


The above theorem also implies that when we recursively perform additional homological products, these Tanner graph automorphisms carry over to every iteration of the product, irrespective of the other input codes. Finally, for the case where one of the input codes is classical, we show that our automorphism gadgets can be made distance-preserving upon ignoring (gauging) some of the logical qubits. Specifically, we show that circuit-level correlated errors do not decrease the effective fault distance, under the assumption that physical permutations do not spread errors. This assumption is typically valid when permutations are performed by physical movement rather than a series of entangling gates, like in trapped ion or neutral atom architectures. We were only able to show a partial result (Theorem \ref{thm:d_eff 4D HGP}) for the case where \emph{both} input codes are quantum.

\begin{thm}[Effective distance preservation (informally Theorems \ref{thm:d_eff 2D HGP} and \ref{thm:d_eff 3D HGP 2})]
    For hypergraph product and (quantum $\times$ classical) homological product codes, if the input automorphism gadgets preserve the effective fault distance, then so will their lifted versions after the homological product.
\end{thm}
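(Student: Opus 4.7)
The plan is to reduce effective fault distance preservation for the lifted gadget to that of the input gadgets by exploiting the rectangular product structure of homological product codes. From the construction in Section~\ref{sec:aut gadgets}, the lifted automorphism gadget decomposes into (i) qubit permutations, inherited from both input gadgets, that act row-wise and column-wise on the rectangular qubit array of the product code, and (ii) a subsystem circuit that splits into disjoint copies of each input subsystem circuit, one per row (for the gadget lifted from the first input code) and one per column (for the gadget lifted from the second). Because physical permutations are assumed not to spread errors, the only nontrivial error propagation occurs within these disjoint subsystem circuits.

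Next I would proceed by contradiction: suppose a space--time fault pattern $F$ of total weight $w$ strictly below the target effective distance produces an undetectable logical fault on the lifted gadget. I would partition the support of $F$ by the row or column index of the rectangular layout, obtaining restricted patterns $\{F_r\}$ and $\{F_c\}$, each of which --- by the independence of the subsystem circuits along each axis --- is itself a legal fault pattern on a single copy of an input subsystem circuit. The next key step is to use the tensor description of logical operators in the hypergraph product (and of $X$-type and $Z$-type logicals separately in the quantum--classical homological product): an undetectable logical error on the product code, projected onto any single row or column, yields either a stabilizer of that copy of the input code or a nontrivial logical on it. If every projection were a stabilizer then the full error would be either detectable or trivial on the product code, contradicting the assumption. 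Hence some $F_r$ (or $F_c$) must implement a nontrivial logical fault on the input gadget, and the assumed effective distance preservation of that gadget forces its weight to be at least the input effective distance; combining this bound with the way input distances multiply into the product distance then yields the desired global lower bound on $w$.

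The main obstacle, and where I expect most of the technical work, is to justify the ``projection'' step rigorously in the presence of mid-circuit faults and intermediate syndrome extractions: one must propagate a row-restricted fault through the remaining subsystem gates on that row and verify that the resulting end-of-circuit Pauli, modulo the stabilizers of the row copy, descends to a well-defined logical of the input code. For the hypergraph product this is essentially linear algebra over $\mathbb{F}_2$, since both inputs are classical and the logical operators admit a clean row/column tensor description. For the (quantum $\times$ classical) case the classical axis still projects cleanly, but on the quantum axis one has to track $X$-type and $Z$-type logicals separately through the quantum input's subsystem circuit and ensure that the CSS splitting survives the error propagation. This CSS bookkeeping is the delicate ingredient, and plausibly explains why only a partial result (Theorem~\ref{thm:d_eff 4D HGP}) is currently available in the (quantum $\times$ quantum) setting, where the projection argument must be run on both axes simultaneously.
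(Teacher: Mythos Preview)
Your projection-to-rows/columns strategy misses the structural insight that actually drives the paper's proofs, and as stated it has a gap.

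For the HGP case, the input ``gadgets'' are classical permutations $\sigma_i$ with no circuit at all --- the subsystem circuit $w_i$ is the induced action on parity checks, not a gadget of the input code. So the hypothesis ``input gadgets preserve effective distance'' is vacuous here, and there is no nontrivial input gadget to project onto. More seriously, your step ``restrict the undetectable error to a column and obtain a logical of the input code'' fails in general: stabilizers of the HGP code mix left-sector columns with right-sector rows (via the $\ident\otimes h_2^\transpose$ and $h_1^\transpose\otimes\ident$ blocks), so a column restriction need not lie in the input code's normalizer. The paper's argument is entirely different and much simpler. It observes that the gadget \eqref{eq:HGP gadgets} acts as a \emph{pure permutation} on the left sector $\mathrm{\Lambda}_{\rm L}$ and as a circuit only on the right sector $\mathrm{\Lambda}_{\rm R}$. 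Lemma~\ref{lem:2D HGP logical support 2} then gives the key fact: every nontrivial (dressed) logical has weight $\geq d$ \emph{already on $\mathrm{\Lambda}_{\rm L}$ alone}. Hence whatever the circuit does on $\mathrm{\Lambda}_{\rm R}$ is irrelevant --- you still need $\geq d$ independent faults on $\mathrm{\Lambda}_{\rm L}$, where only permutations act.

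For the (quantum $\times$ classical) case the same idea recurses: the lifted gadget $\tilde U_{\rm Q}=(U\otimes\ident)\oplus(W\otimes\ident)$ applies the input HGP gadget $U$ to each column of $\tilde{\mathrm{\Lambda}}_{\rm L}$, and within each column $U$ permutes the $\mathrm{\Lambda}_{\rm L}$ rows while running a circuit on the $\mathrm{\Lambda}_{\rm R}$ rows. The substantive work is Proposition~\ref{prop:3D HGP restricted logical sector weight}, which shows that every nontrivial logical of the product code has weight $\geq \tilde d$ when restricted to the $\mathrm{\Lambda}_{\rm L}$ rows of $\tilde{\mathrm{\Lambda}}_{\rm L}$. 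This is \emph{not} proved by a clean per-column projection; its $Z$-sector argument requires a careful case analysis of how ``top-type'' versus ``right-type'' $Z$-stabilizers can shuffle row support between column regions (see Figure~\ref{fig:3D left sector partition}). Once Proposition~\ref{prop:3D HGP restricted logical sector weight} is in hand, Theorem~\ref{thm:d_eff 3D HGP 2} is immediate: the $\mathrm{\Lambda}_{\rm L}$ rows see only permutations, so $\geq \tilde d$ faults are needed there.

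In short, the idea you need is not ``project faults onto copies of the input gadget and invoke the hypothesis recursively,'' but rather ``identify the qubit subsystem on which the lifted gadget acts purely by permutation, and prove a sector-restricted minimum-weight bound for logicals on that subsystem.'' The latter is where the real content lives, and it does not reduce to your projection step.
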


The above theorem says that even when our automorphism gadgets involve circuits that can spread errors, these correlated errors do not affect our homological product code in the most adversarial way. Along the way, we prove some technical statements about logical sector weights in repeated homological products with classical input codes (Proposition \ref{prop:3D HGP restricted logical sector weight}) and quantum input codes (Proposition \ref{prop:distance-bounds-subsystem-hom-prod-code}), which may be of independent interest.

We also explore the rich literature of classical linear codes and survey various code families that best suit our framework. These codes can be fed into the homological product to either produce a new CSS code or upgrade the logical gadgets of an existing CSS code.
Table~\ref{tab:codes_summary} presents a summary of the classical codes we study in this work, including information about their parameters, automorphism groups, and method of implementation. To the best of our knowledge, the only previously studied examples of homological product codes with automorphism gates were hypergraph products of quasicyclic codes~\cite{xu2024fast}. For near-term implementation, the most promising code families we have identified are group-algebra codes and cycle codes. Both of these code constructions are capable of producing LDPC codes with constant rate and growing distance. For classical codes based on group algebras, we show that they always possess a Tanner graph automorphism group isomorphic to the underlying group in the algebra, even when the group is non-abelian. For the cycle codes, any automorphism of the underlying graph is a Tanner graph automorphism. Importantly, the Tanner graph automorphism group for the cycle codes can sometimes be much larger than the length of the code, unlike the lower bound that we have for the group-algebra codes. An interesting observation with these two code constructions is that they often produce rank-deficient parity-check matrices. Nonetheless, the redundant parity checks are crucial to the Tanner graph automorphism structure;  although removing these redundant parity checks will not alter the code automorphism group, it will change both the structure of the homological product as well as the implementation of its corresponding automorphism gadgets.

\begin{table}[t]\renewcommand{\arraystretch}{1.4}
\centering
\begin{tabular}{c|c|c|c|c}
     Classical code $\mathcal{C}$ & Code parameters $[n,k,d]$ &   $\subseteq \mathrm{Aut}(\mathcal{C})$ & LDPC? & Aut gadget \\ 
     \hline     
     Cycle ($\mathsf{G}=(V,E)$) & $\big[\,\abs{V},\abs{E}-\abs{V}+1,\mathrm{girth}(\mathsf{G})\,\big]$ & $\mathrm{Aut}(\mathsf{G})$ & \cmark & perm \\

     Group-algebra $(\mathcal{G})$ & $[\, \abs{\mathcal{G}}, ?, ? \,]$ & $\mathcal{G}$ & \cmark & perm  \\

     $\mathcal{G}$-lift $(\ell,m)$ & $\big[\, m\abs{\mathcal{G}}, \geq (m-\ell)\abs{\mathcal{G}}, ? \,\big]$ & $\mathcal{G}$ & \cmark & perm \\

     Hamming & $[\,2^r-1, 2^r-r-1, 3\,]$ & $\mathrm{GL}_r(\mathbb{F}_2)$ & \xmark & perm + circuit \\

     Simplex & $[\,2^r-1, r, 2^{r-1}\,]$ & $\mathrm{GL}_r(\mathbb{F}_2)$ & \cmark & perm + circuit \\

     Reed-Muller $\mathrm{RM}(r,m)$ & $\big[\,2^m, \sum^r_{i=0} {m \choose i}, 2^{m-r}\,\big]$ & $\mathrm{GA}_m(\mathbb{F}_2)$ & \xmark & perm + circuit \\

     Punctured $\mathrm{RM}^*(r,m)$ & $\big[\,2^m-1, \sum^r_{i=0} {m \choose i}, 2^{m-r}-1 \,\big]$ & $\mathrm{GL}_m(\mathbb{F}_2)$ & \xmark & perm + circuit \\
\end{tabular}
\caption{A survey of classical linear codes and their relevant properties for the construction of automorphism gadgets.}
\label{tab:codes_summary}
\end{table}

Lastly, we present three explicit, finite-length quantum codes built from the hypergraph and homological product that possess interesting automorphism structures. In the first example, we show that the existence of automorphism gadgets in a $\llbracket 52,10,3 \rrbracket$ hypergraph product code can reduce the overhead of external resources when generating logical Clifford gates. In the second and third examples, we use automorphism gadgets in a $\llbracket 48,6,3 \rrbracket$ hypergraph product code and a $\llbracket 288,9,(d_X=16,d_Z=3) \rrbracket$ code to increase the addressability of transversal CZ and CCZ gates based on cohomology invariants \cite{breuckmann2024cups}.

\subsection{Discussion and outlook}

Our general-purpose recipes for automorphism gadgets are directly compatible with other fault-tolerant gadgets for homological product codes, such as distance-preserving single-ancilla syndrome extraction \cite{Manes_2025, tan2024hgp}, single-shot error correction \cite{Campbell_2019}, fold-transversal Clifford gates from $ZX$-dualities \cite{Quintavalle_2023, Breuckmann_2024_ZX}, extractor systems for logical Pauli measurements \cite{he2025extr}, and transversal Clifford and non-Clifford gates from cohomological invariants \cite{breuckmann2024cups, hsin2024}\footnote{We note that the more general construction in the last section of \cite{hsin2024} requires a post-modification by Freedman and Hastings \cite{Freedman_2021}, and it is unclear to us whether any automorphism gadgets can survive this transformation.}. In addition, it has been shown that the row-column structure of the parity checks of homological product codes is particularly amenable to implementation in neutral atom arrays \cite{Xu2024, constantinides2024}; since our gadgets follow the same row-column structure, their implementation should be similarly suitable in this platform.

We note that these gadgets also have an interpretation in condensed matter physics. A quantum stabilizer code implicitly defines a commuting and frustration-free Hamiltonian comprised of a uniform sum over all parity checks and whose ground state subspace is the codespace. If the code is LDPC and the distance is macroscopic (i.e. the code is degenerate) then we say the model is topological \cite{Bravyi_2010_top}. Excitations and energies are related to error syndromes and syndrome weights respectively. There has been a large body of recent works which analyzes qLDPC codes and their physical implications through this lens, ranging from code constructions \cite{rakovszky2023physics1, rakovszky2024physics2} to stability against perturbations \cite{lavasani2024stable, yin2024stable, deroeck2024stable} and thermal fluctuations \cite{Hong_2025, gamarnik2024slow, placke2024glass}. Furthermore, many fault-tolerant gadgets can be understood from the framework of higher-form symmetries \cite{zhu2024higher}. From this physical perspective, a code automorphism is a spectrum-preserving permutation symmetry of the code Hamiltonian, and a Tanner graph automorphism is an exact permutation symmetry of the Hamiltonian; e.g. translational invariance. Notably, this permutation symmetry can act nontrivially on the ground state subspace. We also point out that this notion of permutation symmetry acts on spatial positions and is different than the usual symmetries which act on internal degrees of freedom. Since we show that Tanner graph automorphisms lift to Tanner graph automorphisms in homological product codes, we provide a method to construct topological stabilizer models exhibiting nontrivial permutation symmetry structures beyond translational invariance (but may have long range).

There are a few outstanding open questions and future directions that we summarize below.

\begin{openprob}[Effective distance preservation for quantum input codes]
    Do the automorphism gadgets of (quantum $\times$ quantum) homological product codes inherit the same effective distance preservation from their input gadgets?
\end{openprob}

We could only achieve a partial result (Theorem \ref{thm:d_eff 4D HGP}) on the effective fault distance of our automorphism gadgets in the (quantum $\times$ quantum) homological product case. This problem is related to the lower bound on the minimum distances of generic (quantum $\times$ quantum) homological product codes, which has not yet been shown to match the upper bound \cite{Zeng_2020}. It would be interesting to see if our effective distance lower bound could be tightened to match these upper bounds. This would allow us to feed in multiple quantum CSS codes into the homological product without having to worry about their underlying gadgets decreasing the effective distance of the code.

\begin{openprob}[Full logical Clifford group]
    Does there exist any high-rate, LDPC homological product code family that can realize the full logical Clifford group using a combination of transversal, fold-transversal and constant-depth automorphism gadgets?
\end{openprob}

By ``high rate'', we do not mean constant rate, but rather simply $k = \omega(1)$ in order to discount surface and color codes, which are also LDPC but have $k=O(1)$. Note that transversal, fold-transversal and constant-depth automorphism gadgets (assuming permutations are free) all only cost $O(1)$ spacetime overhead in their implementation and so are extremely desirable from a computational standpoint. A somewhat discouraging observation is that when we take a \textsf{k}-fold homological product of \textsf{k} input codes, the total number of automorphism gadgets is a \textsf{k}-fold product that can scale at most exponentially with \textsf{k};\footnote{Noting that the largest possible automorphism group size on $k$ logical qubits is $\abs{\mathrm{GL}_k(\mathbb{F}_2)} \propto 2^{k^2}$.} however, the number of logical qubits $k$ also increases exponentially with \textsf{k}, and so the size of the Clifford group modulo the Pauli group $\mathcal{C}_k / \mathcal{P}_k \simeq \mathrm{Sp}_{2k}(\mathbb{F}_2)$ scales \emph{doubly} exponentially with \textsf{k}. Fortunately, some previous works have demonstrated that (non-LDPC) quantum codes with seemingly small automorphism groups can be enlarged to fill the entire logical Clifford group when supplemented with only a few additional transversal and fold-transversal gates \cite{Grassl_2013, gong2024, malcolm2025}\footnote{One can directly transfer the results of \cite{malcolm2025} to the usual hypergraph product, but the resulting automorphism gadgets will not have constant depth as far as we are aware.}. It has yet to be seen if any qLDPC stabilizer code family satisfies the same property. We hope that the framework presented in this paper leads to conducive progress along this research front.

\begin{openprob}[Full universality without distillation]
    Does there exist any high-rate, LDPC homological product code family that can natively realize universal logic using a combination of transversal, fold-transversal, constant-depth automorphism gadgets, and measurements without magic state distillation?
\end{openprob}

No-go theorems on transversal gates tell us that universality cannot be achieved solely with the first three operations \cite{Eastin_2009, Webster_2022}. We added the word ``natively'' to discount the protocols that selectively teleport logical qubits from a homological product code to a surface code and perform all logic using surface-code gadgets \cite{Xu2024}. It has been recently shown that a \textsf{k}-fold hypergraph product code (of \textsf{k} classical codes) can admit transversal gates in the \textsf{k}th level of the Clifford hierarchy \cite{breuckmann2024cups, hsin2024}; it has been further proven that this lower bound is tight \cite{fu2025nogo}. Given that universal computation without distillation is possible with 3D surface codes \cite{Vasmer_2019_3D}, perhaps similar techniques could be useful for high-rate homological product codes.

Lastly, we comment on possible generalizations of our framework to other product constructions such as lifted products \cite{panteleev2021quantum} and balanced products \cite{Breuckmann_2021_BP}. Both of these constructions can be viewed as taking the usual hypergraph product of two input codes sharing a common symmetry and then factoring out a  ``diagonal'' subgroup of the resulting product symmetry group, with the balanced product being more general. For example, taking a balanced product of two classical group-algebra codes (Section \ref{sec:group-algebra codes}) with the same base symmetry (Tanner graph automorphism) group results in a two-block group-algebra quantum code \cite{2BGA} if we factor out the full diagonal subgroup. If the base symmetry group is abelian, e.g. in the bivariate bicycle codes \cite{Bravyi_2024}, then it is still a subgroup of the automorphism group of the resulting code. If the base group is non-abelian, then we generically only retain its center after the symmetry reduction \cite{2BGA}, which can be significantly smaller than the original base group itself. However, if our non-abelian base group contains a nontrivial normal subgroup, then we can choose to factor out by this smaller subgroup rather than the full group itself. The remaining structure associated with the quotient group then resembles the usual hypergraph product, and our input Tanner graph automorphisms carry over with respect to this quotient group. It would be interesting to explore whether any structure is preserved for the more generic automorphism gadgets that are not Tanner graph automorphisms.

\subsection{Outline}

The rest of the work is structured as follows. In Section~\ref{sec:preliminaries}, we introduce automorphisms of classical and quantum CSS codes as well as the main quantum code construction studied in this work: the homological product. We then review and discuss in Section~\ref{sec:bounds} several limitations to the automorphism group and resulting codes in both the classical and quantum setting, with special emphasis to codes on graphs. 
In Section~\ref{sec:aut gadgets}, we showcase how automorphisms of classical and quantum error-correcting codes lift to logical gadgets in their homological products. In Section~\ref{sec:fault tolerance}, we prove effective distance preservation for two out of three of our gadgets and describe how they can be incorporated into a fault-tolerant architecture. In Section~\ref{sec:classical families}, we survey a handful of classical error-correcting codes that possess interesting automorphism groups and examine how they fit into our framework. In Section~\ref{sec:examples}, we present three explicit quantum codes where we show that automorphism gadgets can introduce new logical gates or enhance existing logical gates.


\section{Preliminaries}
\label{sec:preliminaries}

\subsection{Notation}

Let $[n]$ denote the set $\{1, 2, \ldots, n\}$. Let $\mathbf{x} \in \mathbb{F}_2^{n}$ be an $n$-dimensional column vector whose Hamming weight $\abs{\mathbf{x}}$ is given by the number of non-zero entries in the vector. In the case where the entries of an $n$-dimensional column vector can be arranged in some grid of size $a \times b$, we denote $\mathbb{M}[\mathbf{x}] \in \mathbb{F}_2^{a \times b}$ as the matrix representation, or reshaping, of the vector $\mathbf{x}$ that respects the grid arrangement of $\mathbf{x}$. We sometimes drop the $\mathbb{M}$ notation for brevity when it is clear that we are referring to the matrix representation of the vector. In addition, we define $\abs{\,\cdot\,}_r$ and $\abs{\,\cdot\,}_c$ as the number of rows and columns respectively of some matrix with non-zero entries. In other words, $\abs{\mathbb{M}[\mathbf{x}]}_r$ is the number of row vectors in $\mathbb{M}[\mathbf{x}]$ where there exists some column with a non-zero entry; with slight abuse of notation, we will sometimes write $\abs{\mathbf{x}}_r \equiv \abs{\mathbb{M}[\mathbf{x}]}_r$ when the underlying 2D arrangement is obvious. We denote the submatrix of the matrix representation of $\mathbf{x}$ with the set of rows with row indices $A \subseteq [a]$ and columns with column indices $B \subseteq [b]$ as $\mathbb{M}[\mathbf{x}][A, B] = \mathbf{x}[A,B]$. We can similarly let $\mathbf{x}[A] \in \mathbb{F}_2^{|A|}$ for $A \subseteq [n]$ denote the restriction of the column vector $\mathbf{x}$ to the row indices in $A$. Let $\oplus$ and $\otimes$ denote the direct sum and tensor product, respectively. It should also be clear from context as to whether an addition operation is done modulo 2. When we are simultaneously discussing both classical and quantum codes, we reserve lowercase letters for quantities related to the classical code and uppercase letters for those of the quantum code, unless stated otherwise.

\subsection{Classical linear codes and automorphisms}\label{sec:code-automorphisms}

A classical binary $[n, k, d]$ linear code $\mathcal{C}$ is defined to be a $k$-dimensional subspace of an $n$-dimensional vector space $\mathbb{F}_2^n$ such that the Hamming weight of any nonzero codeword $\mathbf{c} \in \mathcal{C}$ satisfies $|\mathbf{c}| \geq d$. We typically say that $k$ \emph{logical} bits are encoded in $n$ \emph{physical}, or data, bits.
In particular, $\mathcal{C}$ can be identified as the row space of a generator matrix $G$ where the rows of $G$ form a \emph{logical basis} for $\mathcal{C}$.
Alternatively, $\mathcal{C}$ can also be identified by a (non-unique) parity-check matrix $H$ such that $\mathcal{C} = \ker H$ i.e., $H\mathbf{c} = 0$ for all $\mathbf{c} \in \mathcal{C}$ and equivalently $HG^\transpose=0$.
An automorphism on a classical code is defined to be a permutation of coordinates, or bit positions, that stabilizes the codespace. 
When $\mathcal{C}$ is linear, then it implies that the generator matrix $G$ is related to the permuted $G'$ under a linear transformation. To be precise, let $\sigma$ be an $n\times n$ permutation matrix. Then in order for the rows of $G' = G\sigma$ and $G$ to span the same codespace (row space), there must exist an invertible $k \times k$ matrix $V$ such that $G' = VG$. In other words, we have the following \emph{automorphism condition}:
\begin{align}\label{eq:aut condition}
    G\sigma = VG \, ,
\end{align}
with $\sigma \in \mathrm{S}_n$ and $V \in \mathrm{GL}_k(\mathbb{F}_2)$. Conversely, if there exists such a $\sigma$ and $V$ satisfying \eqref{eq:aut condition}, we say $\sigma$ is an automorphism of $\mathcal{C}$. Since $V$ can potentially map codewords to other codewords, we say it enacts a \emph{logical operation} on the code. Since the composition of permutations is another permutation, and the composition of invertible matrices is an invertible matrix, if $\sigma, \sigma'$ are two code automorphisms, then so is $\sigma\sigma'$. In addition, since $\sigma$ and $V$ are both invertible, we can multiply \eqref{eq:aut condition} on the right by $\sigma^{-1}$ and on the left by $V^{-1}$ to get $G\sigma^{-1} = V^{-1}G$, which tells us that $\sigma^{-1}$ is also a code automorphism. Combined with the associativity of matrix multiplication, we see that the automorphisms $\{ \sigma \}$ form a finite subgroup, which we we call the automorphism group or $\mathrm{Aut}(\mathcal{C})$ for short, of the permutation group $\mathrm{S}_n$. Similarly, the collection of corresponding logical operations $\{ V \}$ forms a subgroup, which we call the \emph{logical} automorphism group $\mathcal{A}(\mathcal{C})$, within the group of all logical operations $\mathcal{L}$. The automorphism condition \eqref{eq:aut condition} induces a correspondence $\phi : \mathrm{Aut}(\mathcal{C}) \mapsto \mathcal{A}(\mathcal{C})$ through the generator matrix $G$.

\begin{prop}
    $\phi : \mathrm{Aut}(\mathcal{C}) \to \mathcal{A}(\mathcal{C})\,;\, \sigma \mapsto V$ is a group homomorphism.
\end{prop}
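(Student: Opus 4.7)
The plan is to prove this in two short steps: first show that $\phi$ is well-defined as a function (the image $V$ is uniquely determined by $\sigma$ and by $G$), and then verify the multiplicative property by direct matrix manipulation of the automorphism condition \eqref{eq:aut condition}.

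For well-definedness, suppose $\sigma \in \mathrm{Aut}(\mathcal{C})$ admits two candidate matrices $V, V' \in \mathrm{GL}_k(\mathbb{F}_2)$ satisfying $G\sigma = VG = V'G$. Then $(V-V')G = 0$. Since the rows of $G$ form a basis of the $k$-dimensional code $\mathcal{C}$, the matrix $G$ has full row rank $k$; equivalently, the linear map $A \mapsto AG$ on $k\times k$ matrices is injective, so $V = V'$. This guarantees that $\phi(\sigma) := V$ is unambiguous once $G$ is fixed, and that $\phi$ is a well-defined set map from $\mathrm{Aut}(\mathcal{C})$ to $\mathcal{A}(\mathcal{C}) \subseteq \mathrm{GL}_k(\mathbb{F}_2)$.

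For the homomorphism property, fix any $\sigma_1, \sigma_2 \in \mathrm{Aut}(\mathcal{C})$ with $\phi(\sigma_i) = V_i$, so that
\begin{equation*}
    G\sigma_1 = V_1 G \quad \text{and} \quad G\sigma_2 = V_2 G.
\end{equation*}
Applying these identities in sequence and using associativity of matrix multiplication,
\begin{equation*}
    G(\sigma_1\sigma_2) = (G\sigma_1)\sigma_2 = V_1(G\sigma_2) = V_1 V_2 G.
\end{equation*}
Since $V_1 V_2 \in \mathrm{GL}_k(\mathbb{F}_2)$, this exhibits $V_1 V_2$ as a matrix satisfying the automorphism condition for $\sigma_1\sigma_2$, and by the uniqueness established in the previous step it must equal $\phi(\sigma_1\sigma_2)$. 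Hence $\phi(\sigma_1\sigma_2) = \phi(\sigma_1)\phi(\sigma_2)$. (As a sanity check, the identity permutation clearly maps to the identity in $\mathrm{GL}_k(\mathbb{F}_2)$, and the relation $G\sigma^{-1} = V^{-1}G$ derived in the excerpt shows that inverses are respected as well, consistent with the homomorphism property.)

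The only mildly subtle point is the uniqueness argument for $V$, which quietly relies on $G$ having full row rank; I would state this explicitly since it is precisely what fails if one tries to replicate the statement using a parity-check matrix $H$ that has redundant rows (a situation that will matter later in the paper when Tanner graph automorphisms are discussed). The rest is a one-line matrix computation with no real obstacles.
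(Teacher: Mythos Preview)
Your proof is correct and follows essentially the same approach as the paper: both establish well-definedness via the full row rank of $G$ and then verify multiplicativity by a direct matrix computation. Your version is arguably cleaner, since you use the injectivity of $A \mapsto AG$ directly rather than introducing a non-unique right-inverse $G^{-1}$ as the paper does, but the underlying argument is the same.
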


\begin{proof}
    We first show that $\phi$ is a well-defined mapping. Since $G$ has full (row) rank by definition, there exists a (non-unique) right-inverse $G^{-1}$ that we can choose for our construction of $\phi$. All that remains is to show that for any $\sigma \in \mathrm{Aut}(\mathcal{C})$, we have $\phi(\sigma) \in \{V\} \subseteq \mathcal{A}(\mathcal{C})$. Suppose there exist $V, V' \in \mathcal{A}(\mathcal{C})$. Using the fact described above the proposition statement, we can compose $\sigma$ with its inverse and use \eqref{eq:aut condition} to get $G\sigma\sigma^{-1} = G = V' V^{-1} G$. By multiplying $G^{-1}$ on the right in the previous expression, we obtain $\ident = V' V^{-1}$ and hence $V = V'$ as desired. To show that $\phi$ is a group homomorphism, it suffices to show that $\phi(\sigma \sigma') = \phi(\sigma) \phi(\sigma')$:
    \begin{align}
        \phi(\sigma \sigma') &= VG\sigma'G^{-1} = VV'GG^{-1} = VV' = \phi(\sigma) \phi(\sigma') \, .
    \end{align}
\end{proof}

For a linear code $\mathcal{C}$, we can also construct its orthogonal complement, or dual, $\mathcal{C}^\perp$. A generator matrix of $\mathcal{C}^\perp$ is a parity-check matrix of $\mathcal{C}$. Since $\sigma$ is an orthogonal transformation satisfying $\sigma\sigma^\transpose = \ident$, if $\sigma \in \mathrm{Aut}(\mathcal{C})$, then we also have that $\sigma \in \mathrm{Aut}(\mathcal{C}^\perp)$ and vice versa, which implies that $\mathrm{Aut}(\mathcal{C})$ and $\mathrm{Aut}(\mathcal{C}^\perp)$ are isomorphic groups, i.e. $\mathrm{Aut}(\mathcal{C}) \simeq \mathrm{Aut}(\mathcal{C}^\perp)$. Thus, all the aforementioned statements about $\mathrm{Aut}(\mathcal{C})$ and $\mathcal{A}(\mathcal{C})$ also hold for $\mathrm{Aut}(\mathcal{C}^\perp)$ and $\mathcal{A}(\mathcal{C}^\perp)$. In particular, we have the dual group homomorphism $\phi^\perp : \mathrm{Aut}(\mathcal{C}^\perp) \mapsto \mathcal{A}(\mathcal{C}^\perp)$ and the dual automorphism condition
\begin{align}\label{eq:dual aut condition}
    H\sigma = WH
\end{align}
with $W \in \mathrm{GL}_m(\mathbb{F}_2)$.

The quantum codes we are interested in will be defined with respect to the parity-check matrices of the input codes (rather than the generator matrices). As such, a stricter notion of code automorphism called \emph{Tanner graph automorphism} will be useful later, and so we present it here.

\begin{defn}[Graph automorphism]\label{defn:graph automorphism}
    Given a (hyper)graph $\mathsf{G} = (V,E)$, an automorphism of $\mathsf{G}$ is defined as a permutation of the vertices $\pi(V)$ such that for a collection of vertices $\{ v_j \} \subset V$, we have $\pi\big( \{ v_j \} \big) \in E$ if and only if $\big( \{ v_j \} \big) \in E$.
\end{defn}

In other words, two graphs $\mathsf{G}$ and $\mathsf{G}'$ are isomorphic when one can be obtained from the other upon relabeling vertices and edges. For a linear code, its parity-check matrix $H \in \mathbb{F}^{m\times n}_2$ gives the vertex-edge incidence matrix of its Tanner graph. An automorphism of the Tanner graph given by $H$ is then given by the relation
\begin{align}\label{eq:graph automorphism}
    H = \sigma_{\rm e} H \sigma_{\rm v} \, ,
\end{align}
where $\sigma_{\rm e}$ and $\sigma_{\rm v}$ are permutation matrices whose respective rows and columns encode the edge and vertex relabeling. Comparing \eqref{eq:graph automorphism} with the dual automorphism condition \eqref{eq:dual aut condition}, we see that Tanner graph automorphisms are automorphisms where $W$ is a permutation matrix\footnote{In \cite{sayginel2025}, these are also called ``matrix automorphisms''.}. We denote $\mathcal{T} \subseteq \mathcal{A}$ the subgroup of code automorphisms corresponding to Tanner graph automorphisms. For a given generator matrix $G$, we can always construct a parity-check matrix for which all code automorphisms are Tanner graph automorphisms:

\begin{prop}[All code automorphisms as Tanner graph automorphisms]
\label{prop:all graph auts}
    Every binary linear code $\mathcal{C}$ admits a parity-check matrix $H$ such that \eqref{eq:graph automorphism} holds for all $\sigma_{\rm v} \in \mathrm{Aut}(\mathcal{C})$.
\end{prop}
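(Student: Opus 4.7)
The plan is to exhibit such an $H$ explicitly by taking a maximally redundant parity-check matrix whose rows are precisely the full list of dual codewords. With this choice, any column permutation that belongs to $\mathrm{Aut}(\mathcal{C})$ is forced to permute the rows as well, which is exactly the Tanner graph condition.

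Concretely, first I would define $H \in \mathbb{F}_2^{2^{n-k} \times n}$ to be the matrix whose rows are all codewords of $\mathcal{C}^\perp$ (listed in some fixed but arbitrary order; whether one includes the zero codeword is immaterial). Then I would verify that $H$ is a valid parity-check matrix for $\mathcal{C}$: since the rows of $H$ span $\mathcal{C}^\perp$, we have $\ker H = (\mathcal{C}^\perp)^\perp = \mathcal{C}$, as required. The matrix is of course highly rank-deficient, but the statement of the proposition only requires $\ker H = \mathcal{C}$.

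Next I would use the fact, established earlier in the excerpt, that $\mathrm{Aut}(\mathcal{C}) \simeq \mathrm{Aut}(\mathcal{C}^\perp)$, so every $\sigma_{\rm v} \in \mathrm{Aut}(\mathcal{C})$ also stabilizes $\mathcal{C}^\perp$ setwise. Right-multiplication of $H$ by $\sigma_{\rm v}$ permutes the entries of each row, and since every row is a dual codeword, each permuted row is again a dual codeword, hence another row of $H$. Because $\sigma_{\rm v}$ is invertible, distinct rows are sent to distinct rows, so there is a unique permutation $\tau$ of the row index set for which $H \sigma_{\rm v} = P_\tau H$, where $P_\tau$ is the associated permutation matrix. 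Setting $\sigma_{\rm e} = P_\tau^{-1}$ yields $\sigma_{\rm e} H \sigma_{\rm v} = H$, which is exactly \eqref{eq:graph automorphism}. Since this argument works for every $\sigma_{\rm v} \in \mathrm{Aut}(\mathcal{C})$, the same $H$ witnesses the Tanner graph property for the entire automorphism group simultaneously.

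There is no real obstacle here; the content of the argument is simply the choice of $H$. The only points that require minor care are (i) checking that including the full dual-codeword list still yields a parity-check matrix (which reduces to $(\mathcal{C}^\perp)^\perp = \mathcal{C}$ for binary linear codes), and (ii) noting that the induced row action $\tau$ is well-defined because $\sigma_{\rm v}$ is a bijection on $\mathbb{F}_2^n$ that preserves $\mathcal{C}^\perp$. One might also remark that this construction is wasteful in practice — the row count is exponential in $n-k$ — so the proposition is best viewed as an existence statement, and pruning $H$ to obtain a smaller $H$ with the same property is a separate (and in general harder) question.
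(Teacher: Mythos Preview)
Your proposal is correct and essentially identical to the paper's own proof: the paper also takes $H$ to be the matrix whose rows are all nonzero vectors of $\mathcal{C}^\perp$ and then observes that any $\sigma_{\rm v} \in \mathrm{Aut}(\mathcal{C}) \simeq \mathrm{Aut}(\mathcal{C}^\perp)$ must permute these rows. The only cosmetic difference is that the paper explicitly excludes the zero row (yielding $2^{n-k}-1$ rows), whereas you note that including it is immaterial.
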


\begin{proof}
    The rows of any parity-check matrix of $\mathcal{C}$ must span $\mathcal{C}^\perp$, which has dimension $n-k$. Let $H \in \mathbb{F}^{(2^{n-k}-1)\times n}_2$ be a parity-check matrix whose rows consist of all $2^{n-k}-1$ nonzero vectors in $\mathcal{C}^\perp$. Since $\sigma_{\rm v} \in \mathrm{Aut}(\mathcal{C}) \simeq \mathrm{Aut}(C^\perp)$, it must preserve $\mathcal{C}^\perp$ by definition and so can only permute the $2^{n-k}-1$ rows of $H$ (note that the all-zero vector remains invariant and is not included in $H$).
\end{proof}

Although Proposition \ref{prop:all graph auts} provides us a method to make all code automorphisms Tanner graph automorphisms, the cost is a dense parity-check matrix with an exponentially large number of rows. In the product constructions we will be interested in, the number of physical qubits as well as the weight of the stabilizer checks will be directly tied to the row and column weights of the input parity-check matrices, and so it will be pertinent to ensure that these weights do not blow up for the sake of fault tolerance.

\subsection{Quantum CSS codes and automorphisms}

A $\llbracket n,k,d \rrbracket$ quantum stabilizer code encodes $k$ logical qubits in $n$ data qubits and can detect errors with weight less than $d$. The codespace is defined by its stabilizer group $\mathcal{S}$, an abelian subgroup of the Pauli group $\mathcal{P}_n$ on $n$ qubits \cite{gottesman1997thesis}. A generating set of $\mathcal{S}$, containing Pauli check operators, is typically used to detect and correct errors. The codespace is defined as the simultaneous +1 eigenspace of these check operators (and hence the entire stabilizer group). It is customary to define a stabilizer code in this ``dual'' picture, rather than listing out the logical Pauli operators themselves in analogy to a generator matrix, since a logical Pauli may have many different equivalent representations related by an element of the stabilizer group, a phenomenon called stabilizer degeneracy. A quantum Calderbank-Shor-Steane (CSS) code is a stabilizer code whose check operators are purely $X$-type and $Z$-type Paulis \cite{Calderbank_1996, Steane_1996}. As such, its stabilizer group also factorizes into $X$-type and $Z$-type subgroups. For a quantum CSS code, we can write its $X$-type and $Z$-type parity checks in terms of a binary $2n\times 2n$ block-diagonal matrix
\begin{align}\label{eq:CSS H}
    H_{\rm css} = \begin{pmatrix}
        H_X & \mathbf{0} \\
        \mathbf{0} & H_Z
    \end{pmatrix} \, ,
\end{align}
where $H_X$ and $H_Z$ label the supports of the $X$-type and $Z$-type parity checks respectively. The interpretation of \eqref{eq:CSS H} is that the first $n$ columns label $X$-type support and the last $n$ columns $Z$-type support. 

A permutation $\sigma \in \mathrm{S}_n$ of the qubits now corresponds to the operator $\sigma \oplus \sigma = \mathrm{diag}(\sigma,\sigma)$. The (dual) automorphism condition for CSS codes can then be expressed as
\begin{align}
    \begin{pmatrix}
        W_X & \mathbf{0} \\
        \mathbf{0} & W_Z
    \end{pmatrix}\begin{pmatrix}
        H_X & \mathbf{0} \\
        \mathbf{0} & H_Z
    \end{pmatrix} = \begin{pmatrix}
        H_X & \mathbf{0} \\
        \mathbf{0} & H_Z
    \end{pmatrix}\begin{pmatrix}
        \sigma & \mathbf{0} \\
        \mathbf{0} & \sigma
    \end{pmatrix} \, ,
\end{align}
or
\begin{subequations}\label{eq:CSS aut conditions}
\begin{align}
    W_X H_X &= H_X \sigma  \\
    W_Z H_Z &= H_Z \sigma \, .
\end{align}
\end{subequations}

\subsection{Chain complexes}\label{sec:chain_complexes}
In this section, we introduce some basic definitions of chain complexes that are useful for our subsequent analysis.

\begin{defn}[Chain Complexes]\label{def:chain-complexes}
  A chain complex $\mathcal{C}_\ast$ over a field $\mathbb{F}_q$ consists of a sequence of $\mathbb{F}_q$-vector spaces $\left(C_i\right)_{i \in \mathbb{Z}}$. These vector spaces are related to each other by linear boundary maps $\left(\partial_i^\mathcal{C}: C_i \to C_{i-1}\right)_{i \in \mathbb{Z}}$ satisfying $\partial_{i-1}^\mathcal{C} \circ \partial_i^\mathcal{C} = 0$ for all $i \in \mathbb{Z}$.
  When it is clear from context, we omit the superscript and subscript for the linear boundary maps. 
  By choosing an appropriate basis, we refer to the basis elements of $C_i$ as $i$-cells and an arbitrary vector in $C_i$ as an $i$-chain.
  If there exists bounds $\ell < m \in \mathbb{Z}$ such that for all $i < \ell$ and $i > m$ have $C_i = 0$, then we may truncate the sequence and say that $\mathcal{C}$ is the $(m-\ell + 1)$-term chain complex
  \[\mathcal{C}_\ast = \left(C_m \xrightarrow[]{\partial_m} C_{m-1} \xrightarrow[]{\partial_{m-1}} \ldots \xrightarrow[]{\partial_{\ell + 1}} C_\ell\right).\]
  We furthermore define the following (standard) vector spaces for $i \in \mathbb{Z}$:
  \begin{align}
    \text{the space of } i\text{-cycles: } Z_i\left(\mathcal{C}\right) &\coloneqq \ker\left(\partial_i\right) \subseteq C_i,\\
    \text{the space of } i\text{-boundaries: } B_i\left(\mathcal{C}\right) &\coloneqq \mathrm{im}\left(\partial_{i+1}\right) \subseteq C_i,\\
    \text{the space of } i\text{-homology: } \mathcal{H}_i\left(\mathcal{C}\right) &\coloneqq Z_i\left(\mathcal{C}\right)/B_i\left(\mathcal{C}\right).
  \end{align}

  The cochain complex $\mathcal{C}^\ast$ associated to $\mathcal{C}_\ast$ is the chain complex with vector spaces $\left(C^i \coloneqq C_i\right)_{i \in \mathbb{Z}}$ and linear boundary maps given by the coboundary maps $\left(\delta_i = \partial_{i+1}^\transpose : C^i \to C^{i+1}\right)_{i \in \mathbb{Z}}$ obtained by transposing all the boundary maps of $\mathcal{C}_\ast$. 
  Similarly, by choosing an appropriate basis, we refer to the basis elements of $C^i$ as the $i$-cocells and an arbitrary vector in $C^i$ as an $i$-cochain.
  Thus, the cochain complex is defined as such:
  \[\mathcal{C}^\ast = \left(C^m \xleftarrow[]{\delta_{m-1}} C^{m - 1} \xleftarrow[]{\delta_{m-2}} \ldots \xleftarrow[]{\delta_{\ell}} C^\ell\right).\]
  We can analogously define the spaces of cohomology $\mathcal{H}^i\left(\mathcal{C}\right) = Z^i(\mathcal{C})/B^i(\mathcal{C})$, cocycles $Z^i\left(\mathcal{C}\right) = \ker\left(\delta_i\right)$, and coboundaries $B^i\left(\mathcal{C}\right) = \mathrm{im}\left(\delta_{i-1}\right)$.
\end{defn}

Classical linear codes can be described by 2-term chain complexes where the two vector spaces are the spaces of bits and checks respectively such that these spaces are related to each other by a linear boundary map that can be written as the check matrix $H$.
It is also a well-known fact that quantum CSS codes can be described with a 3-term chain complex by associating the $X$ stabilizers, qubits, and $Z$ stabilizers with the three consecutive vector spaces in a 3-term chain complex. 
An easy way to see the connection is to relate the condition $\partial_{i-1} \circ \partial_i = 0$ to $H_X H_Z^\transpose = 0$. 
We now state an important definition that relates the $X$ and $Z$ distances of a quantum CSS code to its associated chain complex.

\begin{defn}
  For a chain complex $\mathcal{C}$, the $i$-systolic distance $d_i(\mathcal{C})$ and the $i$-cosystolic distance $d^i(\mathcal{C})$ are defined as
  \[d_i(\mathcal{C}) = \min_{c \in Z_i(\mathcal{C})\setminus B_i(\mathcal{C})}|c|,\quad\quad d^i(\mathcal{C}) = \min_{c \in Z^i(\mathcal{C})\setminus B^i(\mathcal{C})}|c|.\]
\end{defn}

Suppose we associate the $X$ stabilizers, qubits, and $Z$ stabilizers to the $\mathbb{F}_2$-vector spaces $C_0, C_1$, and $C_2$, then the $X$ and $Z$ logical operator representatives are given by the basis elements of the $1$-cohomology space and $1$-homology space respectively. The $X$ and $Z$ distances of the quantum code are then $d^1(\mathcal{C})$ and $d_1(\mathcal{C})$.

The framework of chain complexes provides us with a diagrammatic way to write the (dual) automorphism condition for classical linear codes \eqref{eq:dual aut condition}. Interpreting the $m \times n$ parity-check matrix $H: B\mapsto S$ as a linear map from the binary vector spaces of bit flips $B$ to syndromes $S$, a code automorphism is a pair of matrices $\sigma \in \mathrm{S}_n$ and $W \in \mathrm{GL}_m(\mathbb{F}_2)$ such that the following diagram is commutative:
\begin{equation}
\begin{tikzcd}
B \arrow[r, "H"] \arrow[d, "\sigma"'] & S \arrow[d, "W"] \\
B \arrow[r, "H"]                      & S               
\end{tikzcd}
\end{equation}
For quantum CSS codes, we have the binary vector spaces of $X$-syndromes, qubit errors and $Z$-syndromes, organized into a 3-term chain complex $S_X \xrightarrow{H^\transpose_X} Q \xrightarrow{H^{}_Z} S_Z$. The CSS automorphism condition \eqref{eq:CSS aut conditions} can then be rephrased as having matrices $\sigma \in \mathrm{S}_n$, $W_X \in \mathrm{GL}_{m_X}(\mathbb{F}_2)$ and $W_Z \in \mathrm{GL}_{m_Z}(\mathbb{F}_2)$ such that the following diagram is commutative:
\begin{equation}
\begin{tikzcd}
S_X \arrow[r, "H^\transpose_X"] \arrow[d, "W_X"'] & Q \arrow[r, "H^{}_Z"] \arrow[d, "\sigma"] & S_Z \arrow[d, "W_Z"] \\
S_X \arrow[r, "H^\transpose_X"]                   & Q \arrow[r, "H^{}_Z"]                     & S_Z                 
\end{tikzcd}
\end{equation}

\subsection{Homological product of chain complexes}
\label{subsec:homological-product}

This section states the basic notions of the homological product of chain complexes. Note that the homological product is also known as the tensor product of chain complexes.

\begin{defn}[Homological Product]\label{def:homological-product}
  For chain complexes $\mathcal{A}$ and $\mathcal{B}$, the homological product $\mathcal{C} = \mathcal{A} \otimes \mathcal{B}$ is the chain complex given by the vector spaces
  \[C_i \coloneqq \bigoplus_{j \in \mathbb{Z}} A_j \otimes B_{i-j}\]
  and the boundary maps
  \[\partial_i^\mathcal{C} \coloneqq \bigoplus_{j \in \mathbb{Z}}\left(\partial_j^\mathcal{A} \otimes \ident + \ident \otimes \partial_{i-j}^\mathcal{B}\right).\]
\end{defn}

We now state a well-known result about the homological product.

\begin{prop}[K\"unneth Formula]\label{prop:kunneth-formula}
  Let $\mathcal{A}$ and $\mathcal{B}$ be chain complexes over a field $\mathbb{F}_q$, each with a finite numbers of nonzero terms.
  Then for every $i \in \mathbb{Z}$,
  \[\mathcal{H}_i(\mathcal{A} \otimes \mathcal{B}) \cong \bigoplus_{j \in \mathbb{Z}} \mathcal{H}_j(\mathcal{A}) \otimes \mathcal{H}_{i -j}(\mathcal{B}).\]
  Furthermore, for $a \in Z_j(\mathcal{A})$ and $b \in Z_{i-j}(\mathcal{B})$, the isomorphism above maps 
  \[a \otimes b + B_i(\mathcal{A} \otimes \mathcal{B}) \mapsto \left(a+B_j(\mathcal{A})\right)\otimes (b+B_{i-j}(\mathcal{B})).\]
\end{prop}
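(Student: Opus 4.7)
The plan is to construct the natural cross-product map explicitly, then show it is an isomorphism by decomposing each input complex into elementary pieces --- a decomposition available precisely because we work over a field.

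First, I would define the candidate map. For cycles $a \in Z_j(\mathcal{A})$ and $b \in Z_{i-j}(\mathcal{B})$, the product boundary satisfies
\begin{equation*}
\partial^\mathcal{C}_i(a \otimes b) = (\partial^\mathcal{A}_j a) \otimes b + a \otimes (\partial^\mathcal{B}_{i-j} b) = 0,
\end{equation*}
so $a \otimes b \in Z_i(\mathcal{A}\otimes\mathcal{B})$. If $a = \partial^\mathcal{A}_{j+1} a'$ is a boundary, then $a \otimes b = \partial^\mathcal{C}_{i+1}(a' \otimes b)$, and symmetrically in the $b$ slot. Hence the assignment $(a + B_j(\mathcal{A})) \otimes (b + B_{i-j}(\mathcal{B})) \mapsto a \otimes b + B_i(\mathcal{A} \otimes \mathcal{B})$ yields a well-defined linear map
\begin{equation*}
\chi \colon \bigoplus_{j \in \mathbb{Z}} \mathcal{H}_j(\mathcal{A}) \otimes \mathcal{H}_{i-j}(\mathcal{B}) \;\longrightarrow\; \mathcal{H}_i(\mathcal{A} \otimes \mathcal{B}),
\end{equation*}
which is precisely the map in the statement; it remains to show $\chi$ is an isomorphism.

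Next I would use the key linear-algebraic input: over the field $\mathbb{F}_q$, every short exact sequence of vector spaces splits. Applying this in each degree to the sequences $0 \to Z_j(\mathcal{A}) \to A_j \to B_{j-1}(\mathcal{A}) \to 0$ and $0 \to B_j(\mathcal{A}) \to Z_j(\mathcal{A}) \to \mathcal{H}_j(\mathcal{A}) \to 0$ and piecing the splittings together yields a chain isomorphism $\mathcal{A} \cong \widetilde{\mathcal{H}}(\mathcal{A}) \oplus D(\mathcal{A})$, where $\widetilde{\mathcal{H}}(\mathcal{A})$ is the complex with $\widetilde{\mathcal{H}}(\mathcal{A})_j = \mathcal{H}_j(\mathcal{A})$ and zero differentials, and $D(\mathcal{A})$ is a direct sum of acyclic ``disk'' complexes $\mathbb{F}_q \xrightarrow{\mathrm{id}} \mathbb{F}_q$ supported in two adjacent degrees. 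Each disk admits an explicit chain contraction, and tensoring such a contraction with the identity on $\mathcal{B}$ is again a chain contraction, so $D(\mathcal{A}) \otimes \mathcal{B}$ is acyclic; the symmetric argument handles $\widetilde{\mathcal{H}}(\mathcal{A}) \otimes D(\mathcal{B})$.

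To conclude, I would combine these ingredients. Both the homological product and the homology functor distribute over direct sums, and acyclic summands vanish under $\mathcal{H}_i$, so
\begin{equation*}
\mathcal{H}_i(\mathcal{A} \otimes \mathcal{B}) \;\cong\; \mathcal{H}_i\!\left(\widetilde{\mathcal{H}}(\mathcal{A}) \otimes \widetilde{\mathcal{H}}(\mathcal{B})\right).
\end{equation*}
The right-hand side is the homology of a tensor product of two complexes with zero differentials, whose total differential is identically zero, so its degree-$i$ homology is literally $\bigoplus_j \mathcal{H}_j(\mathcal{A}) \otimes \mathcal{H}_{i-j}(\mathcal{B})$. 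The finiteness hypothesis on $\mathcal{A}$ and $\mathcal{B}$ guarantees all direct sums are finite. The main obstacle is the final verification that the composite abstract isomorphism agrees on the nose with the canonical cross-product $\chi$ from the first paragraph: this requires choosing the splittings so that each cycle $a \in Z_j(\mathcal{A})$ projects onto its own homology class in $\widetilde{\mathcal{H}}(\mathcal{A})$, a routine but essential diagram chase that pins down $\chi$ as the intended natural map rather than merely some isomorphism.
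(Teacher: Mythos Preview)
The paper does not supply a proof of this proposition; it is introduced as ``a well-known result about the homological product'' and simply stated. Your argument is the standard field-coefficient proof (split each complex into its homology summand plus acyclic disk summands, then observe that the cross-product map $\chi$ becomes the obvious identification on the homology-times-homology piece), and it is correct. The one point worth tightening is the final compatibility step: since your splittings are chosen so that the projection $A_j \to \mathcal{H}_j(\mathcal{A})$ sends each cycle to its own class, the chain isomorphism $\mathcal{A}\otimes\mathcal{B} \cong \widetilde{\mathcal{H}}(\mathcal{A})\otimes\widetilde{\mathcal{H}}(\mathcal{B}) \oplus (\text{acyclic})$ carries $a\otimes b$ to $([a],[b])$ plus an acyclic component, and passing to homology kills the latter --- so the induced isomorphism is literally $\chi^{-1}$, no further chase needed.
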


The K\"unneth formula allows us to compute the number of logical qubits in the homological product code based on properties of the input codes. By analyzing the rank of the homology group, we can determine the number of logical qubits and the structure of their logical Pauli operators.

We note that the definitions stated in Sections~\ref{sec:chain_complexes} and \ref{subsec:homological-product} are with respect to the multi-sector theoretic version of chain complexes. Some readers might be familiar with the single-sector theoretic version introduced in \cite{Bravyi_2014_HGP} which contains an unbounded chain complex where every vector space and boundary map is \emph{identical}. While the single-sector theoretic version simplifies some of the homology analysis and allows the associated quantum code to break the $O(\sqrt{n})$ distance barrier, the code ceases to have good distance bounds once we impose the LDPC condition on it~\cite{golowich2024quantum}. In order to have meaningful fault tolerance~\cite{Gottesman_2014} and interesting automorphisms that arise from the inheritance of \emph{distinct} boundary maps, we restrict our study to homological product codes of the multi-sector theoretic version defined in Sections~\ref{sec:chain_complexes} and \ref{subsec:homological-product}.

\subsection{Hypergraph product codes}\label{sec:HGP codes}

One of the first constructions of quantum CSS codes from classical linear codes with large rate and distance is the hypergraph product (HGP)~\cite{Tillich_2014}. 
The HGP construction is no different from taking the homological product of two 2-term chain complexes that correspond to two classical linear codes.
Consider two classical linear codes with parameters $[n_1,k_1,d_1]$ and $[n_2, k_2, d_2]$ and parity-check matrices $h_1$ and $h_2$, each of dimension $m_i \times n_i$. Taking a Kronecker product of their chain complexes, we obtain the product complex
\begin{equation}\label{eq:HGP complex}
    \begin{tikzcd}
        & {B_1 \otimes S_2} &&& {S_Z} \\
        {B_1 \otimes B_2} && {S_1 \otimes S_2} && Q \\
        & {S_1 \otimes B_2} &&& {S_X}
        \arrow["{\ident \otimes h^{}_2}", from=2-1, to=1-2]
        \arrow["{h^\transpose_1 \otimes \ident}"', from=2-3, to=1-2]
        \arrow["{H_Z}", from=2-5, to=1-5]
        \arrow["{h^\transpose_1 \otimes \ident}", from=3-2, to=2-1]
        \arrow["{\ident \otimes h^{}_2}"', from=3-2, to=2-3]
        \arrow["{H^\transpose_X}", from=3-5, to=2-5]
    \end{tikzcd}
\end{equation}
where $S_X, Q, S_Z$ are the binary vector spaces of $X$-syndromes, qubit errors and $Z$-syndromes respectively.
The CSS parity-check matrices of the corresponding HGP code $\operatorname{HGP}(h_1,h_2)$ can then be read off from \eqref{eq:HGP complex}:
\begin{subequations}\label{eq:HGP H_X, H_Z}
\begin{align}
    H_X &= \big( h^{}_1 \otimes \ident_{n_2} \;\big|\; \ident_{m_1} \otimes h^\transpose_2 \big)  \label{eq:HGP H_X} \\
    H_Z &= \big( \ident_{n_1} \otimes h^{}_2 \;\big|\; h^\transpose_1 \otimes \ident_{m_2} \big) \, . \label{eq:HGP H_Z}
\end{align}
\end{subequations}
The $X$-type and $Z$-type stabilizer checks mutually commute because $H^{}_X H^\transpose_Z = 2h^{}_1 \otimes h^\transpose_2 = 0$ over $\mathbb{F}_2$. The resulting quantum code $\operatorname{HGP}(h_1,h_2)$ has parameters \cite{Tillich_2014}
\begin{subequations}\label{eq:HGP code parameters}
\begin{align}
    n &= n_1n_2 + m_1m_2  \\
    k &= k^{}_1k^{}_2 + k^\transpose_1k^\transpose_2  \\
    d_Z &= \min\big(d^{}_1,d^\transpose_2\big)  \\
    d_X &= \min\big(d^{}_2,d^\transpose_1\big) \, ,
\end{align}
\end{subequations}
with the quantum code distance $d = \min(d_X,d_Z)$.
Geometrically, the HGP code can be arranged in a rectangular layout; see Fig. \ref{fig:HGP layout}. Adopting the naming convention from \cite{ReShape_decoder}, we call the $n^2$ qubits on the left side of \eqref{eq:HGP H_X, H_Z} \emph{left} qubits, $\mathrm{\Lambda}_{\rm L}$, and the other $m^2$ qubits \emph{right} qubits, $\mathrm{\Lambda}_{\rm R}$. In this form, it can be seen that the $Z$-type stabilizer checks consist of qubits from a single row of $\mathrm{\Lambda}_{\rm L}$ and a single column of $\mathrm{\Lambda}_{\rm R }$. $X$-type stabilizer checks correspondingly have qubits from a single row in $\mathrm{\Lambda}_{\rm R}$ and a single column in $\mathrm{\Lambda}_{\rm L}$. The 2D surface (toric) code can be viewed as a HGP code of two classical 1D repetition codes, each embedded on a line (ring). When $h_1$ and $h_2$ have full rank, their transpose codes are trivial ($k^\transpose_1 = k^\transpose_2 = 0$) and so the above code parameters simplify to $k = k_1k_2$ and $d = \min(d_1,d_2)$.

\begin{figure}
    \centering
    \includegraphics[width=0.4\textwidth]{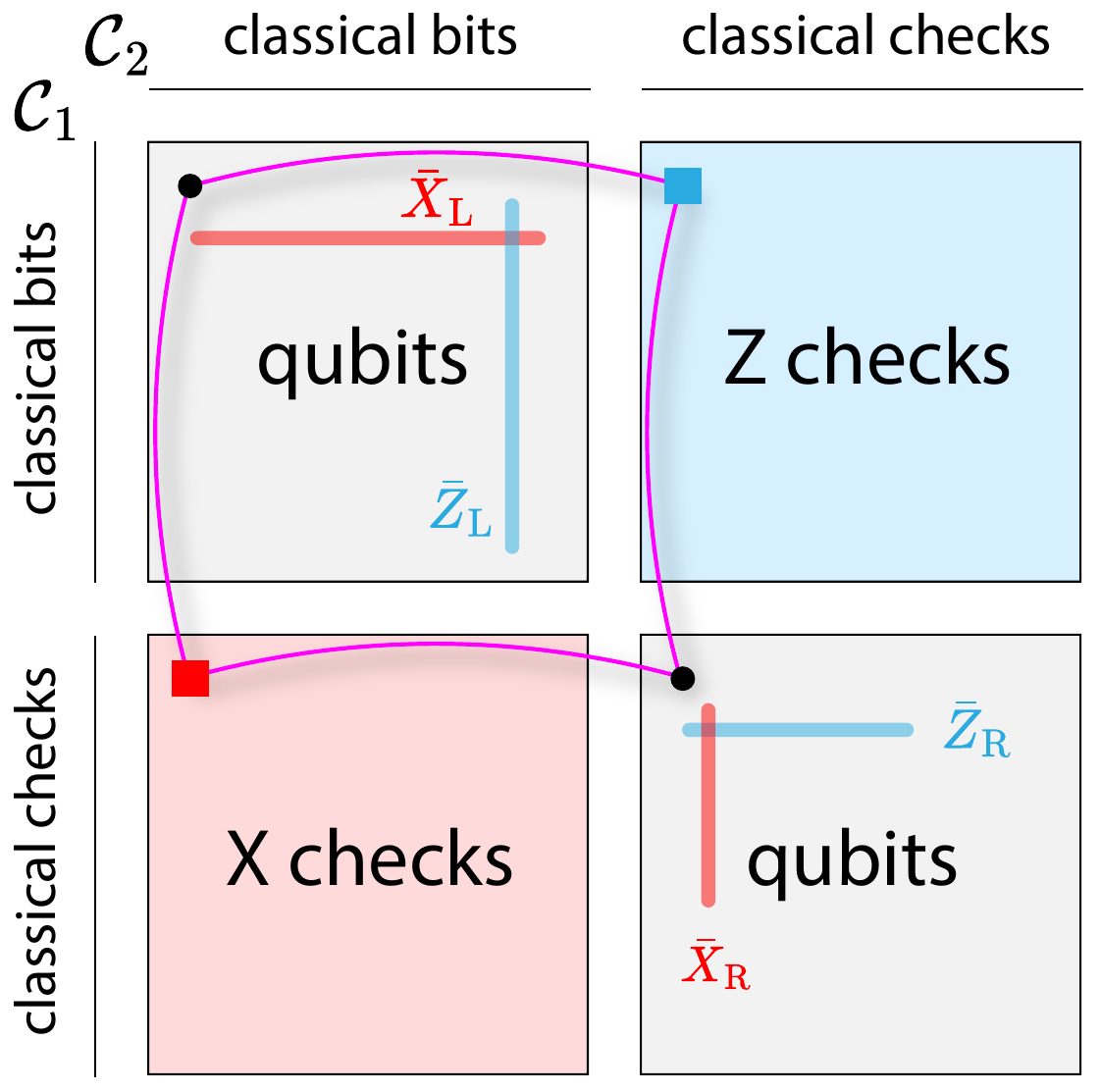} \hfill
    \includegraphics[width=0.55\textwidth]{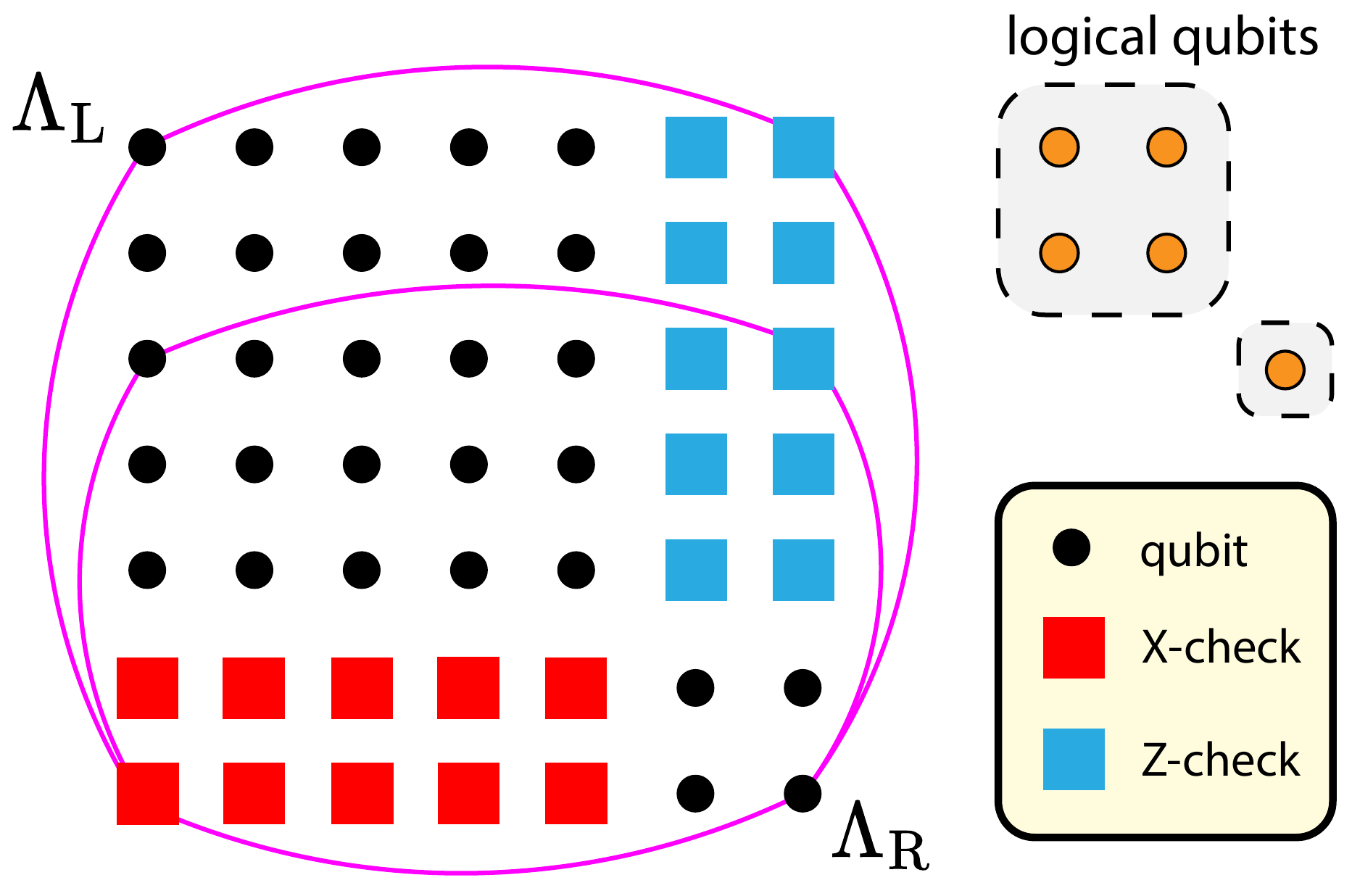}
    \caption{The rectangular layout of a hypergraph product code is depicted. Magenta edges denote a subset of qubit-check connections. The logical qubits are partitioned into those which reside in the left sector $\mathrm{\Lambda}_{\rm L}$ or right sector $\mathrm{\Lambda}_{\rm R}$ of data qubits.}
    \label{fig:HGP layout}
\end{figure}

From the classical parity-check matrices $h_1$ and $h_2$, we can also construct their associated generator matrices $g_1$ and $g_2$, which label a basis of their corresponding codewords and satisfy $hg^\transpose = 0$. In the HGP code, similar to the data qubits, the logical qubits can be partitioned into $k_1k_2$ left logical qubits and $k^\transpose_1k^\transpose_2$ right logical qubits. A canonical basis for the left logical $\overline{X}$ and $\overline{Z}$ operators of the HGP code can then be chosen as \cite{ReShape_decoder, Quintavalle_2023}
\begin{subequations}\label{eq:HGP G_X, G_Z}
\begin{align}
    G_{Z,\text{L}} &= \big( g_1 \otimes \{ \mathbf{e}_i \} \;|\; \mathbf{0} \big)  \label{eq:HGP G_Z}  \\
    G_{X,\text{L}} &= \big( \{ \mathbf{e}_j \} \otimes g_2 \;|\; \mathbf{0} \big)  \label{eq:HGP G_X}  \, ,
\end{align}
\end{subequations}
where $\{ \mathbf{e}_i \} \notin \im(h^\transpose_2)$ and $\{ \mathbf{e}_j \} \notin \im(h^\transpose_1)$ are $n$-dimensional unit vectors. This basis choice ensures that our different logical operators are not stabilizer equivalent.  Note that $\mathrm{rank}(h_1) = n_1-k_1$, and so $k_1$ instances of $\mathbf{e}_j$ are guaranteed to exist (and $k_2$ likewise for $\mathbf{e}_i$). Hence \eqref{eq:HGP G_X, G_Z} is a valid Pauli basis for all $k_1k_2$ left logical qubits. The canonical basis for the right logical qubits follows similarly but with the transpose codes. By putting the classical generator matrices into standard form, i.e. $g = (\ident_k \,|\, A)$, we can further ensure that logical $\overline{X}$ and $\overline{Z}$ operators either have no intersection or intersect at a single qubit indexed by $(i,j)$ in \eqref{eq:HGP G_X, G_Z}; we can use this single qubit intersection to label our $k_1k_2$ left logical qubits, see the top right corner of Figure \ref{fig:HGP layout}. Note that this further refinement is not strictly necessary since for any row of \eqref{eq:HGP G_X}, one can always find a corresponding combination of rows of \eqref{eq:HGP G_Z} which anticommutes with it while commuting with all other rows. Observe that the canonical logical $\overline{Z}$ operators are supported on single columns of $\mathrm{\Lambda}_{\rm L}$, and the canonical logical $\overline{X}$ operators are supported on single rows of $\mathrm{\Lambda}_{\rm L}$. The same holds true for the right logical qubits but on $\mathrm{\Lambda}_{\rm R}$. Note that in the canonical basis, left logical operators have minimum weight $\min(d_1,d_2)$ while right logical operators have minimum weight $\min(d^\transpose_1,d^\transpose_2)$. In fact, if we ignore the right logical qubits and treat them as gauge qubits, the code distance simplifies to $d = \min(d_1,d_2)$. This previous statement can be formalized by examining the stabilizer structure on the left and right sectors, which we relegate to Section \ref{sec:FT 2D HGP}.

For the classical codes we will be interested in, often $d^\transpose$ will be smaller than $d$, and so it will be beneficial to ignore the right logical qubits to not diminish code distance, at the cost of a reduced rate. For the remainder of this paper, we will focus on the left logical qubits unless stated otherwise.

\subsection{Higher-fold homological product codes}\label{sec:homological product codes}

In this section, we discuss the construction of higher-fold homological product codes as well as review some of their well-known facts. We note that there are two distinct constructions of homological product codes:
\begin{enumerate}
    \item Homological product between a quantum CSS code and a classical linear code
    \item Homological product between two quantum CSS codes
\end{enumerate}
We discuss these two constructions in the following two sections separately.

\subsubsection{Homological product: quantum $\times$ classical}

\begin{figure}[t]
    \centering
    \includegraphics[width=0.35\textwidth]{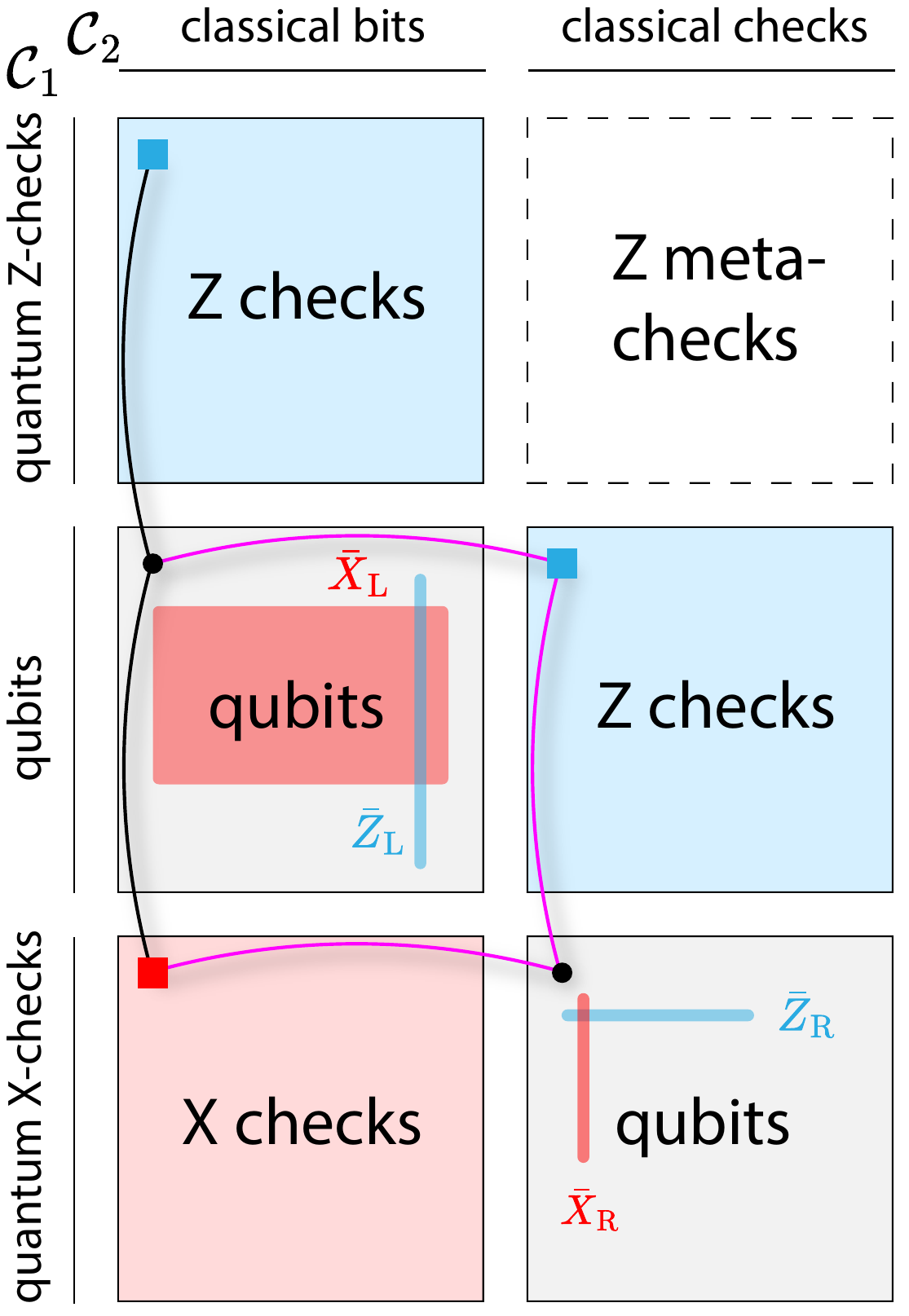}\hfill
    \includegraphics[width=0.6\textwidth]{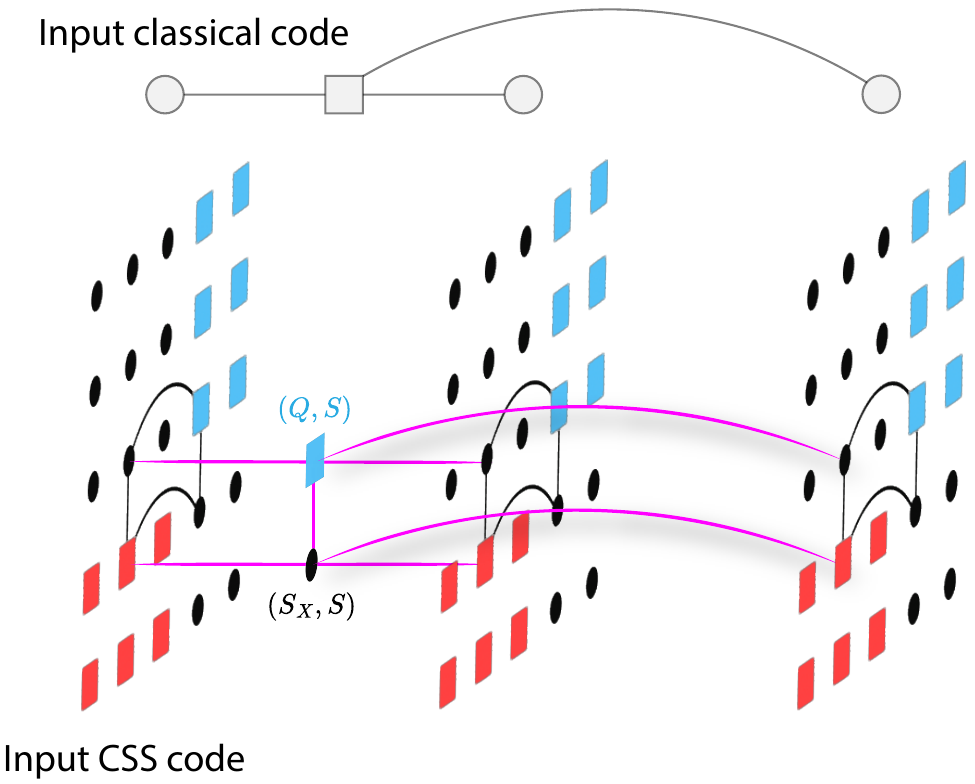}
    \caption{\textbf{Left:} The layout of a (quantum $\times$ classical) homological product code is depicted. \textbf{Right:} New qubits and checks are labeled according to their associated spaces in the product complex \eqref{eq:3D HGP complex}, and their new connections are colored in magenta.}
    \label{fig:3D HGP layout}
\end{figure}

For an input $\llbracket n_{\rm Q}, k_{\rm Q}, (d_X,d_Z) \rrbracket$ quantum CSS code with parity-check matrices $H_X, H_Z \in \mathbb{F}^{m_{\rm Q}\times n_{\rm Q}}_2$ and a $[n_{\rm c}, k_{\rm c}, d_{\rm c}]$ classical code with parity-check matrix $h \in \mathbb{F}^{m_{\rm c}\times n_{\rm c}}_2$, we can take the Kronecker product of their underlying chain complexes and associate the product complex with a \emph{homological product code} \cite{Freedman_2014, Bravyi_2014_HGP, Zeng_2019, Campbell_2019}, which can be thought of as higher-dimensional/fold generalizations of the usual hypergraph product. For example, the 3D surface code can be interpreted as a homological product of the 2D surface code with a 1D classical repetition code. Alternatively, it could also be viewed as a 3-fold hypergraph product of classical repetition codes i.e., a 3D hypergraph product code. Similar to the freedom in choosing which logical Pauli type extends to a membrane in the 3D surface code, we will have two choices for constructing the homological product code. Without loss of generality, we will restrict to the case where we extend the logical $\overline{Z}$ operators of the quantum code. The $\overline{X}$ case follows suit upon reversing the direction of interpretation of the product complex. The tensor product complex takes the form
\begin{equation}\label{eq:3D HGP complex}
\begin{tikzcd}
    & {(S_Z,S)} &&& {\tilde{R}_Z} \\
    {(S_Z,B)} && {(Q,S)} && {\tilde{S}_Z} \\
    {(Q,B)} && {(S_X,S)} && \tilde{Q} \\
    & {(S_X, B)} &&& {\tilde{S}_X}
    \arrow["{\ident \otimes h}", from=2-1, to=1-2]
    \arrow["{H_Z \otimes \ident}"', from=2-3, to=1-2]
    \arrow["{\tilde{M}_Z}", from=2-5, to=1-5]
    \arrow["{H_Z \otimes \ident}", from=3-1, to=2-1]
    \arrow["{\ident \otimes h}", from=3-1, to=2-3]
    \arrow["{H^\transpose_X \otimes \ident}"', from=3-3, to=2-3]
    \arrow["{\tilde{H}_Z}", from=3-5, to=2-5]
    \arrow["{H^\transpose_X \otimes \ident}", from=4-2, to=3-1]
    \arrow["{\ident \otimes h}"', from=4-2, to=3-3]
    \arrow["{\tilde{H}^\transpose_X}", from=4-5, to=3-5]
\end{tikzcd}
\end{equation}
and the new CSS parity-check matrices are given by
\begin{subequations}\renewcommand*{\arraystretch}{1.3}
\label{eq:3D HGP H_X,H_Z}
\begin{align}
    \tilde{H}_X &= \big(\, H_X \otimes \ident \;\big|\; \ident \otimes h^\transpose \,\big)  \label{eq:3D HGP H_X}  \\
    \tilde{H}_Z &= \left(\begin{array}{c|c}
        H_Z \otimes \ident \;\; & \mathbf{0}  \\
        \ident \otimes h & \; H^\transpose_X \otimes \ident
    \end{array}\right) \, .   \label{eq:3D HGP H_Z}
\end{align}
\end{subequations}
Geometrically, the Tanner graph of the homological product code resembles that of the Euclidean graph product between the two input Tanner graphs but with nodes relabeled according to \eqref{eq:3D HGP complex}; see Figure \ref{fig:3D HGP layout} for an illustration. When the input parity-check matrices have full rank, the parameters for the homological product code can be computed through the homology groups \cite{Zeng_2019}:
\begin{subequations}\label{eqn:homological_classical_code_params}
\begin{align} 
    \tilde{n} &= n_{\rm Q}n_{\rm c} + m_Zm_{\rm c}  \\
    \tilde{k} &= k_{\rm Q}k_{\rm c} \label{eqn:kunneth_example_1}  \\
    \tilde{d}_X &= d_X d_{\rm c} \\
    \tilde{d}_Z &= d_Z \, .
\end{align}
\end{subequations}

Note that in Eq.~\ref{eqn:kunneth_example_1}, we have used the K\"unneth formula stated in Proposition~\ref{prop:kunneth-formula} to compute $\tilde{k}$. Let us denote the (co)homology groups of the quantum code and classical code with $\mathcal{H}[{\rm Q}]$ and $\mathcal{H}[{\rm c}]$ respectively. Recall that we have adopted the convention for the logical $\overline{X}$ operator representatives of the CSS code to be given by the cohomology group $\mathcal{H}^1[{\rm Q}]$. By allowing the bits and checks of the classical code to be assigned to the 0-cells and 1-cells of the associated 2-term chain complex, we can observe that the codespace of the classical code is given by $\mathcal{H}^0[{\rm c}]$.
Thus, a straightforward application of the K\"unneth formula tells us that the resulting product complex has the following cohomology group:
\[\tilde{\mathcal{H}}^1 \cong \mathcal{H}^1[\text{Q}] \otimes \mathcal{H}^0[\text{c}].\]
Since the rank of $\tilde{\mathcal{H}}^1$ gives us the number of logical qubits in the homological product code, we can simply sum the product of the ranks of the cohomology groups of the constituent codes to give us Eq.~\ref{eqn:kunneth_example_1}.

From \eqref{eq:3D HGP H_X,H_Z}, we see that, similar to the HGP case, the qubits partition into a left and a right sector. Since the input parity-check matrices are full rank, we only have left-sector logical qubits. For these logical qubits, we can write a canonical basis for logical $\overline{X}$ and $\overline{Z}$ operators:
\begin{subequations}\label{eq:3D HGP G_X,G_Z}
\begin{align}
    \tilde{G}_{Z,{\rm L}} &= \left(\, G_Z \otimes \{ \mathbf{e}_i \} \;\big|\; \mathbf{0} \,\right) \label{eq:3D HGP G_Z} \\
    \tilde{G}_{X,{\rm L}} &= \left(\, G_X \otimes g \;\big|\; \mathbf{0} \,\right)  \, , \label{eq:3D HGP G_X}
\end{align}
\end{subequations}
where $G_X$ and $G_Z$ are a symplectic logical basis for the input quantum code and $\{ \mathbf{e}_i \} \notin \mathrm{rs}(h)$. If the input quantum CSS code is a HGP code, then we can designate $G_Z$ and $G_X$ as the canonical logical basis \eqref{eq:HGP G_X, G_Z} as well and index all logical qubits with 3-dimensional coordinate triples $(i,j,k)$.

We point out that our construction of the homological product code described above has implicitly assumed that the quantum CSS code corresponds to a 3-term chain complex and that the parity-check matrix $h$ of the classical code has full rank. However, in general, it is known that we are able to associate a quantum CSS code to \emph{any} 3 consecutive terms in a chain complex, which in its entirety could consist of more than 3 terms; the same goes for the classical code but with 2 terms. Notable examples include some of the quantum CSS codes that are amenable to single-shot error correction constructed by Campbell~\cite{Campbell_2019}. These CSS codes are constructed from the middle 3 terms in a 5-term chain complex as shown below:
\[\mathcal{C} = \left(R_Z \xrightarrow[]{\partial_3 = M_Z^\transpose} S_Z \xrightarrow[]{\partial_2 = H_Z^\transpose} Q \xrightarrow[]{\partial_1 = H_X} S_X \xrightarrow[]{\partial_{0} = M_X} R_X\right)\]
where we have associated the constitutent vector spaces $C_{3}, C_2, C_1, C_{0}, C_{-1}$ with $R_Z, S_Z, Q, S_X, R_X$ i.e., the space of relations between $Z$ checks, the $Z$ checks, qubits, $X$ checks, and the relations between $X$ checks.
A significant advantage of longer chain complexes is the appearance of ``metacheck'' matrices $M_Z$ and $M_X$. These metacheck matrices can be interpreted as additional ``codes'' on the error syndromes, which can be used to detect and correct syndrome measurement errors.
In this case, when we take the homological product between such CSS codes with a classical code, we end up with a 6-term chain complex. When we apply Proposition~\ref{prop:kunneth-formula} to evaluate $\tilde{\mathcal{H}}^1$ for the new product complex, we now observe contributions from summands other than $\mathcal{H}^1[{\rm Q}] \otimes \mathcal{H}^0[{\rm c}]$ that include
\begin{align}
    \mathcal{H}^0[{\rm Q}] \otimes \mathcal{H}^1[{\rm c}] = \frac{\ker \delta_0[{\rm Q}]}{\im \delta_{-1}[{\rm Q}]} \otimes \frac{\ker \delta_1[{\rm c}]}{\im \delta_0[{\rm c}]} = \frac{\ker H_X^\transpose}{\im M_X^\transpose} \otimes \frac{\ker h^\transpose}{\im m^\transpose} \, .
\end{align}
Formerly, $M_X^\transpose$ did not exist when we only had a 3-term chain complex and $\im m^\transpose$ was trivial when the classical code had no redundant checks.
The additional contribution from the other summands implies that we have additional logical qubits that arise from these additional cohomology group elements that we call ``spurious'' cohomologies.
Because the structure of these spurious cohomologies depends on the redundancy of checks in both the quantum and classical codes, the cosystolic distances for these spurious cohomologies can vary.
By choosing a canonical basis for these spurious cohomologies as well as the cohomologies that we hope to keep, i.e., $\mathcal{H}^1[Q] \otimes \mathcal{H}^0[c]$, we can show that we can treat the spurious cohomologies as gauge logical qubits that do not deprecate the distance of our resulting quantum code even when they dress our other logical operators. The intuition is similar to the case with the HGP: the gauge operators corresponding to the spurious cohomologies are supported in a different subsystem of the data qubits compared to the logical operators.
We provide a more detailed analysis for this phenomenon in Lemma~\ref{lem:dressed-logical-ops} in Section~\ref{sec:fault tolerance}.

\subsubsection{Homological product: quantum $\times$ quantum}
\label{sec:quantumxquantum}

Suppose we are given two quantum CSS codes with parameters that are denoted as $\llbracket n_{\rm Q}, k_{\rm Q}, (d_X,d_Z) \rrbracket$ and $\llbracket n'_{\rm Q}, k'_{\rm Q}, (d'_X,d'_Z) \rrbracket$ and corresponding parity-check matrices $H_X, H_Z, H'_X, H_Z'$.
Similar to the case for the homological product between a quantum CSS code and a classical linear code, we can take the Kronecker product of their underlying chain complexes and associate the product complex with a (quantum $\times$ quantum) \emph{homological product code} \cite{Freedman_2014, Bravyi_2014_HGP}. Because each quantum CSS code is associated to a 3-term chain complex (from our simplifying assumption discussed in the previous section), the product complex now becomes a 5-term chain complex. The 5-term chain gives us a some flexibility to choose which set of three consecutive vector spaces to place the $X$-syndromes, qubits, and $Z$-syndromes. For the purpose of reducing the number of cases that we have to analyze, we assume that the two 3-term chain complexes associated to the two quantum CSS codes have the following form:
\[\mathcal{C} = \left(S_Z \xrightarrow[]{\partial_{1} = H_Z^\transpose} Q \xrightarrow[]{\partial_{0} = H_X} S_X\right),\]
where $C_{1}, C_{0}, C_{-1}$ are the vector spaces for the $Z$ checks, qubits, and $X$ checks respectively. After taking the homological product of the 3-term chain complexes, we let the new $Z$ checks, qubits, and $X$ checks be associated to the vector spaces $\tilde{C}_{1}, \tilde{C}_{0}, \tilde{C}_{-1}$ of the product complex $\tilde{\mathcal{C}}$.
The product complex takes the form
\begin{equation}\label{eq:4D HP complex}
\begin{tikzcd}
    & {(S^{}_Z,S'_Z)} &&& {\tilde{R}_Z} \\
    {(S^{}_Z,Q')} && {(Q,S'_Z)} && {\tilde{S}_Z} \\
    {(S^{}_Z,S'_X)} & {(Q,Q')} & {(S^{}_X,S'_Z)} && \tilde{Q} \\
    {(Q, S'_X)} && {(S_X, Q')} && {\tilde{S}_X} \\
    & {(S^{}_X, S'_X)} &&& {\tilde{R}_X} 
    \arrow["{\ident \otimes H'_Z}", from=2-1, to=1-2]
    \arrow["{H_Z \otimes \ident}"', from=2-3, to=1-2]
    \arrow["{\ident \otimes H_X^{\prime\transpose}}", from=3-1, to=2-1]
    \arrow["{\ident \otimes H'_Z}", from=3-2, to=2-3]
    \arrow["{H_Z \otimes \ident}"', from=3-2, to=2-1]
    \arrow["{H^\transpose_X \otimes \ident}"', from=3-3, to=2-3]   
    \arrow["{H_Z \otimes \ident}", from=4-1, to=3-1]
    \arrow["{\ident \otimes H'_Z}"', from=4-3, to=3-3]
    \arrow["{\ident \otimes H^{\prime\transpose}_X}"', from=4-1, to=3-2]
    \arrow["{H_X^\transpose \otimes \ident}", from=4-3, to=3-2]
    \arrow["{H_X^\transpose \otimes \ident}", from=5-2, to=4-1]
    \arrow["{\ident \otimes H_X^{\prime\transpose}}"', from=5-2, to=4-3]
    \arrow["{\tilde{M}_Z}", from=2-5, to=1-5]
    \arrow["{\tilde{H}_Z}", from=3-5, to=2-5]
    \arrow["{\tilde{H}^\transpose_X}", from=4-5, to=3-5]
    \arrow["{\tilde{M}^\transpose_X}", from=5-5, to=4-5]
\end{tikzcd}
\end{equation}
and the new CSS parity-check matrices are
\begin{subequations}\renewcommand*{\arraystretch}{1.3}
\label{eq:4D HP H_X,H_Z}
\begin{align}
    \tilde{H}_X &= \left(\begin{array}{c|c|c}
        H_Z^\transpose \otimes \ident \;\; & \;\ident \otimes H'_X \;\;& \mathbf{0}  \\
        \mathbf{0} \;\; & H_X \otimes \ident \;\; & \; \ident \otimes H_Z^{\prime\transpose}
    \end{array}\right), \,   \label{eq:4D HP H_X}  \\
    \tilde{H}_Z &= \left(\begin{array}{c|c|c}
        \ident \otimes H_X^{\prime\transpose} \;\; & H_Z \otimes \ident \;\;& \mathbf{0}  \\
        \mathbf{0} \;\; & \;\ident \otimes H'_Z \;\; & \; H_X^\transpose \otimes \ident
    \end{array}\right) \, ,   \label{eq:4D HP H_Z}
\end{align}
\end{subequations}
where the three partitions in the parity-check matrices correspond to the physical qubits in the subspaces $(S^{}_Z, S'_X), (Q, Q'), (S^{}_X, S'_Z)$ respectively; see Figure \ref{fig:4D HGP layout} for an illustration. We refer to these subsets of physical qubits as left (L), middle (M), and right (R) respectively.
Focusing on the middle logical qubits, we can write a canonical basis for logical $\overline{X}$ and $\overline{Z}$ operators:
\begin{subequations}\label{eq:4D HP G_X,G_Z}
\begin{align}
    \tilde{G}_{Z,{\rm M}} &= \left(\, \mathbf{0} \;\big|\; G^{}_Z \otimes G'_Z \;\big|\; \mathbf{0} \,\right), \label{eq:4D HGP G_Z} \\
    \tilde{G}_{X,{\rm M}} &= \left(\, \mathbf{0} \;\big|\; G^{}_X \otimes G'_X \;\big|\; \mathbf{0} \,\right).  \label{eq:4D HGP G_X}
\end{align}
\end{subequations}
One can straightforwardly verify that this basis is symplectic when the input bases are symplectic. We defer the discussion of the exact code parameters for the homological product code to Section~\ref{sec:code-params-quantum-times-quantum} because the code distances for such codes are not sufficiently well understood in the most general case.

\begin{figure}[t]
    \centering
    \includegraphics[width=0.55\textwidth]{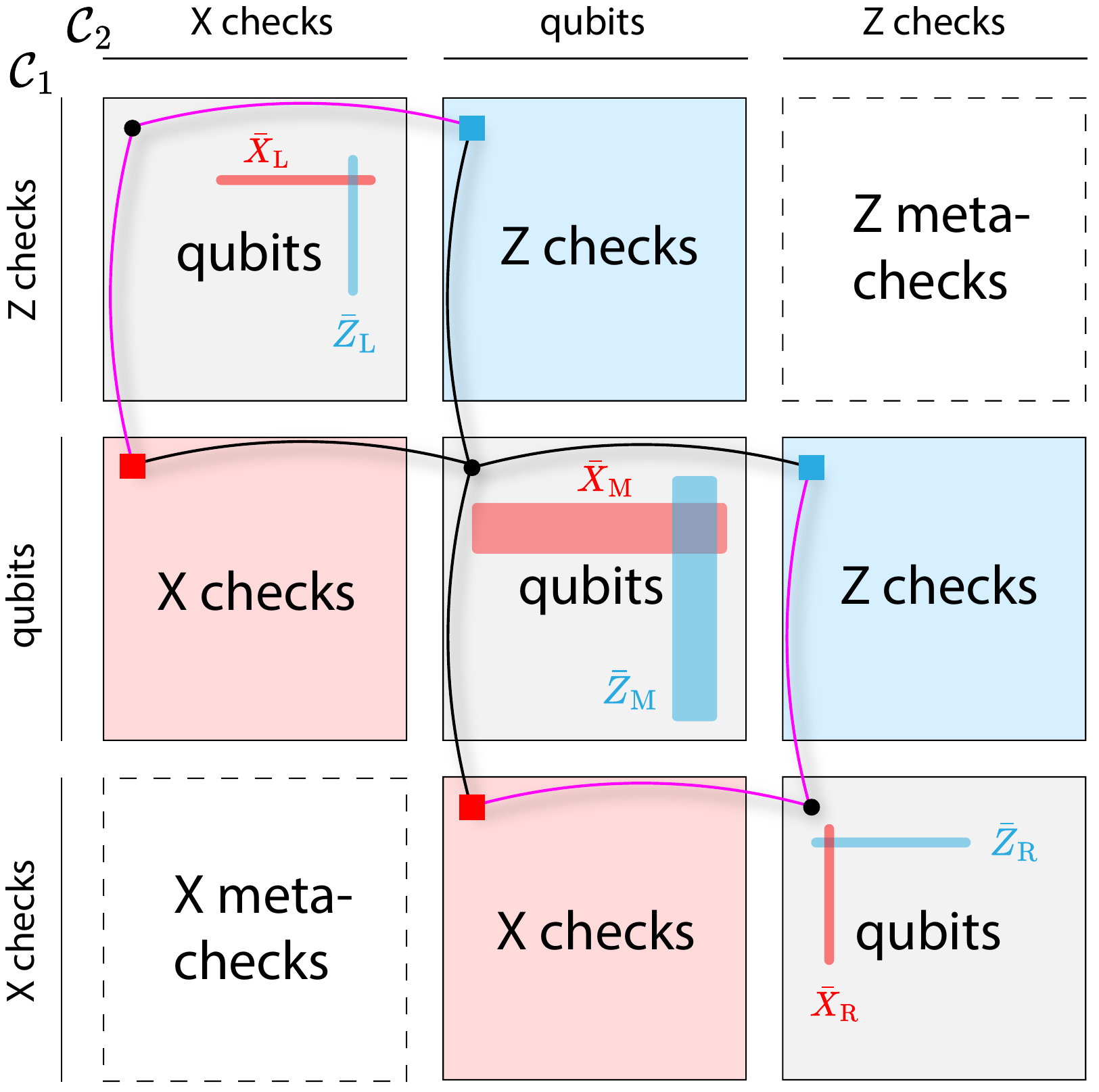}
    \caption{The layout of a (quantum $\times$ quantum) homological product code is depicted. New qubit-check connections are colored in magenta.}
    \label{fig:4D HGP layout}
\end{figure}


\section{Generic bounds and limitations}
\label{sec:bounds}

In this section, we review some limitations of automorphism gates and then derive some broad results that will guide us in tailoring classical codes to our formalism later on in Section \ref{sec:classical families}. We note some of these results are likely ``folklore'' within the error-correction community, but we state them explicitly for completeness.

\subsection{Limitations on automorphisms of classical linear codes}

Recall the classical automorphism conditions:
\begin{subequations}\label{eq:aut conditions}
\begin{align}
    H\sigma &= WH  \label{eq:aut condition h}  \\
    G\sigma &= VG  \label{eq:aut condition g}
\end{align}
\end{subequations}
where $H$ is the classical parity check matrix, $G$ is the generator matrix, and $\sigma$ is a permutation of the physical bits. Also recall that the affine class refers to functions that represent affine transformations over $\mathbb{F}_2$, any of which can be expressed using only the XOR logical gate and constants. From this definition, we can see that the CNOT gate generates all affine transformations~\cite{aaronson2015gates}.

\begin{thm}[Logical action restricted to affine class]
\label{thm:affine restriction}
    Suppose we have a linear code $\mathcal{C}$. The logical action of any element of $\Aut(\mathcal{C})$ is limited to the affine class.
\end{thm}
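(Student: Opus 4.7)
The plan is to argue directly from the automorphism condition $G\sigma = VG$. Fix a logical message $\mathbf{m} \in \mathbb{F}_2^k$, which encodes into the codeword $\mathbf{c}^\transpose = \mathbf{m}^\transpose G$. Under any $\sigma \in \mathrm{Aut}(\mathcal{C})$, this codeword transforms as $\mathbf{c}^\transpose \sigma = \mathbf{m}^\transpose G \sigma = \mathbf{m}^\transpose V G$, which decodes to the new logical message $\mathbf{m}^\transpose V$. Equivalently, the induced map on logical messages is $\mathbf{m} \mapsto V^\transpose \mathbf{m}$. The well-definedness result for $\phi: \sigma \mapsto V$ proved earlier guarantees that $V \in \mathrm{GL}_k(\mathbb{F}_2)$ exists and is unique for every $\sigma$, so this logical action is a bona fide $\mathbb{F}_2$-linear automorphism of $\mathbb{F}_2^k$.

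Next, I would note that the affine class over $\mathbb{F}_2$ consists of all maps $\mathbf{m} \mapsto A\mathbf{m} + \mathbf{b}$ with $A \in \mathrm{GL}_k(\mathbb{F}_2)$ and $\mathbf{b} \in \mathbb{F}_2^k$, and that every $\mathbb{F}_2$-linear invertible map sits inside this class as the special case $\mathbf{b} = \mathbf{0}$. Invoking the cited fact that CNOT generates the affine class \cite{aaronson2015gates} then concludes that every logical action arising from $\mathrm{Aut}(\mathcal{C})$ can be implemented by (and is therefore ``limited to'') CNOT circuits on the logical bits.

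I anticipate no substantive obstacle: the theorem is essentially an immediate consequence of the fact that the automorphism condition forces $V \in \mathrm{GL}_k(\mathbb{F}_2)$, together with the inclusion $\mathrm{GL}_k(\mathbb{F}_2) \subset \mathrm{Aff}_k(\mathbb{F}_2)$. The only point requiring care is keeping the row-vs-column vector convention consistent between $G\sigma = VG$ and the induced map $\mathbf{m} \mapsto V^\transpose \mathbf{m}$. One could optionally remark that the logical action is in fact strictly linear rather than merely affine, since any permutation fixes the all-zero codeword and thus forces $\mathbf{b} = \mathbf{0}$; we phrase the result as ``affine'' because this is the natural class whose physical realization via CNOT is well-studied.
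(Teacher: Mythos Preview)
Your proposal is correct and follows essentially the same approach as the paper: both argue that the automorphism condition forces the logical action to be an invertible $\mathbb{F}_2$-linear map $V \in \mathrm{GL}_k(\mathbb{F}_2)$, which lies in the affine class generated by CNOT. The only presentational difference is that the paper makes the CNOT decomposition explicit by noting that $V$ factors into elementary row operations (row additions = CNOTs, row swaps = SWAPs), whereas you invoke the inclusion $\mathrm{GL}_k(\mathbb{F}_2) \subset \mathrm{Aff}_k(\mathbb{F}_2)$ together with the cited fact that CNOT generates the affine class; your additional remark that the action is strictly linear (since $\sigma$ fixes the zero codeword) is a nice refinement not stated in the paper.
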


\begin{proof}
    Let $G \in \mathbb{F}^{k\times n}_2$ be a generator matrix for $\mathcal{C}$, and let $U \in \mathbb{F}^{n\times n}_2$ be a permutation matrix corresponding to some element in $\Aut(\mathcal{C})$. Then the definition of $\Aut(\mathcal{C})$ requires $GU = WG$ for some invertible $W \in \mathbb{F}^{k\times k}_2$. Since $W$ is invertible, its logical action is reversible. Furthermore, $W$ can be decomposed into elementary row operations consisting of row addition and swaps. Row addition is equivalent to logical CNOT gates, with control and target defined by which row is replaced. Row swapping is equivalent to logical SWAPs, which can be decomposed into three CNOT gates. Thus, since $W$ can be fully decomposed into logical CNOT gates, its logical action falls into the affine class of classical reversible gates.
\end{proof}

An immediate corollary is that we cannot achieve a universal set of logical reversible gates on linear codes from physical permutations alone, since the affine class is nonuniversal \cite{aaronson2015gates}.

\begin{cor}[Classical automorphism gates are nonuniversal]
    For any linear code $\mathcal{C}$, the group of logical automorphism gates $\mathcal{A}(\mathcal{C})$ cannot form a universal set.
\end{cor}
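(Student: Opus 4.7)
The plan is to reduce the corollary directly to Theorem \ref{thm:affine restriction} together with the known nonuniversality of the affine class of classical reversible gates. By the theorem, every $V \in \mathcal{A}(\mathcal{C})$ lies in $\mathrm{GL}_k(\mathbb{F}_2)$ acting on the $k$ logical bits, i.e.\ is expressible as a composition of logical CNOTs (and in fact does not even require logical NOTs, since $V$ is linear, not merely affine, in the setup of Section~\ref{sec:code-automorphisms}). Hence $\mathcal{A}(\mathcal{C})$ sits inside the affine subgroup of reversible Boolean functions on $k$ bits, and the only thing left to argue is that this subgroup is strictly smaller than the full group of reversible transformations on $k$ bits.

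To finish, I would invoke the classification of reversible gate classes \cite{aaronson2015gates}: the affine class is a proper (in fact maximal) subclass of reversible Boolean functions, because no circuit built only from CNOT and NOT can realize a Toffoli gate --- the Toffoli computes a Boolean function of algebraic degree $2$, whereas every affine map has algebraic degree at most $1$, and this degree bound is preserved under composition. If a more self-contained argument is preferred, one can replace the citation with a simple counting proof: $|\mathrm{GL}_k(\mathbb{F}_2)| = \prod_{i=0}^{k-1}(2^k - 2^i)$, which is much smaller than $|\mathrm{S}_{2^k}| = (2^k)!$ for $k \geq 3$, so $\mathcal{A}(\mathcal{C})$ cannot exhaust the group of reversible transformations on $k$ bits and therefore cannot be universal.

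The whole argument is essentially mechanical once Theorem \ref{thm:affine restriction} is in hand, so I do not anticipate a genuine obstacle. The only mild subtlety is pinning down what ``universal'' means: one should clarify whether logical ancilla bits are allowed, since some definitions of classical universality permit borrowing constants. The nonuniversality of the affine class persists even in the presence of unrestricted ancillas initialized to constants --- this is exactly the content of the affine-class row in the classification of \cite{aaronson2015gates} --- so no further work is required on this front.
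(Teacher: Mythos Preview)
Your proposal is correct and matches the paper's approach exactly: the paper treats this as an immediate corollary of Theorem~\ref{thm:affine restriction} together with the nonuniversality of the affine class from \cite{aaronson2015gates}, with no further argument given. Your additional degree/counting arguments and the remark about ancillas are all valid elaborations, but the paper itself offers none of this extra justification.
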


Another consequence of Theorem \ref{thm:affine restriction} is that the number of distinct logical operations achieved by physical permutations is upper bounded by the number of invertible matrices $W$, which is the size of the general linear group of degree $k$ over $\mathbb{F}_2$, $\mathrm{GL}_k(\mathbb{F}_2)$. 
Note that the determinant map is trivial over $\mathbb{F}_2$, and so we also have $\mathrm{GL}_k(\mathbb{F}_2) \simeq \mathrm{SL}_k(\mathbb{F}_2) \simeq \mathrm{PSL}_k(\mathbb{F}_2)$ since the only scalars are 0 and 1 in $\mathbb{F}_2$.

\begin{cor}[Upper bound on number of logical gates]
\label{cor:upper bound gates}
    For any linear code $\mathcal{C}$ encoding $k$ logical bits, we have $\abs{\mathcal{A}} \leq \abs{\mathrm{GL}_k(\mathbb{F}_2)}$.
\end{cor}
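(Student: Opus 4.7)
The plan is to observe that this corollary is essentially immediate from the definitions once we recall the automorphism condition. Starting from Eq.~\eqref{eq:aut condition g}, namely $G\sigma = VG$, the logical action $V$ was by construction required to be an element of $\mathrm{GL}_k(\mathbb{F}_2)$, since $G$ has full row rank and both sides must generate the same codespace. The earlier proposition established that the map $\phi : \mathrm{Aut}(\mathcal{C}) \to \mathcal{A}(\mathcal{C})$ is a well-defined group homomorphism, so $\mathcal{A}(\mathcal{C})$ is an honest subgroup of $\mathrm{GL}_k(\mathbb{F}_2)$.

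From here the bound is a direct application of Lagrange's theorem: any subgroup of a finite group has order dividing (and hence at most) that of the ambient group, so $\abs{\mathcal{A}(\mathcal{C})} \leq \abs{\mathrm{GL}_k(\mathbb{F}_2)}$. Alternatively, one can skip the group-theoretic reasoning and just note the set-theoretic inclusion $\mathcal{A}(\mathcal{C}) \subseteq \mathrm{GL}_k(\mathbb{F}_2)$, which suffices for the inequality of cardinalities.

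It is also worth tying the statement back to Theorem~\ref{thm:affine restriction}: that theorem tells us the logical action of any code automorphism lies in the affine class, and the subclass of affine transformations that are simultaneously invertible and linear (i.e.\ have no translation part, as required for a linear code whose generator matrix is fixed) is precisely $\mathrm{GL}_k(\mathbb{F}_2)$. So the corollary can be read as the quantitative counterpart to Theorem~\ref{thm:affine restriction}.

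There is no real obstacle here; the only thing to be slightly careful about is making sure the reader sees that $V$ is automatically invertible (because $\sigma$ is invertible and $G$ has full row rank), so that the subgroup inclusion is legitimate. Once that is pointed out, the bound is immediate and the proof can be written in two or three lines.
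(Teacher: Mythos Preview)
Your proposal is correct and follows essentially the same approach as the paper: the corollary is stated there as an immediate consequence of the observation (preceding the corollary and implicit in Theorem~\ref{thm:affine restriction}) that each logical action $V$ lies in $\mathrm{GL}_k(\mathbb{F}_2)$, so $\mathcal{A}(\mathcal{C}) \subseteq \mathrm{GL}_k(\mathbb{F}_2)$ and the cardinality bound follows. Your write-up simply makes the subgroup inclusion and the invocation of Lagrange (or set inclusion) explicit, which is fine and matches the paper's reasoning.
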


Suppose we desire a linear code with $\abs{\mathcal{A}} \propto \abs{\mathrm{GL}_k(\mathbb{F}_2)}$, i.e. approaching the upper bound of Corollary \ref{cor:upper bound gates}. Since the total number of distinct permutations of $n$ bits is $\abs{\mathrm{S}_n} = n!$, for a given code dimension $k$, we can derive a lower bound on the minimum code length $n$, or alternatively an upper bound on $k$ for given $n$, by comparing to the size of $\mathrm{GL}_k(\mathbb{F}_2)$.

\begin{cor}\label{cor:k upper bound GL(k,2)}
    For any linear code with $\abs{\mathcal{A}} \propto \abs{\mathrm{GL}_k(\mathbb{F}_2)}$, we must have $k = O\big(\sqrt{n\log n}\big)$.
\end{cor}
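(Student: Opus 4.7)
The plan is to chain three ingredients: the homomorphism $\phi$ from $\mathrm{Aut}(\mathcal{C})$ into the logical automorphism group, the fact that $\mathrm{Aut}(\mathcal{C})$ sits inside $\mathrm{S}_n$, and standard estimates on $|\mathrm{GL}_k(\mathbb{F}_2)|$ and $n!$.

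First, I would invoke the homomorphism $\phi : \mathrm{Aut}(\mathcal{C}) \to \mathcal{A}(\mathcal{C})$ established earlier in the excerpt. Since $\phi$ is a surjection onto $\mathcal{A}$ (by construction of $\mathcal{A}$ as the image) we have $|\mathcal{A}| \leq |\mathrm{Aut}(\mathcal{C})|$. Because $\mathrm{Aut}(\mathcal{C}) \leq \mathrm{S}_n$, this yields $|\mathcal{A}| \leq n!$. The hypothesis $|\mathcal{A}| \propto |\mathrm{GL}_k(\mathbb{F}_2)|$ then gives $c\,|\mathrm{GL}_k(\mathbb{F}_2)| \leq n!$ for some constant $c>0$.

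Next, I would estimate both sides on a logarithmic scale. On one hand,
\[
|\mathrm{GL}_k(\mathbb{F}_2)| \;=\; \prod_{i=0}^{k-1}\bigl(2^k-2^i\bigr) \;=\; 2^{k^2}\prod_{i=1}^{k}\bigl(1-2^{-i}\bigr),
\]
and since $\prod_{i\geq 1}(1-2^{-i})$ converges to a positive constant, $\log_2 |\mathrm{GL}_k(\mathbb{F}_2)| = k^2 - \Theta(1)$. On the other hand, by Stirling, $\log_2 n! = n\log_2 n - \Theta(n)$. Feeding these into $\log_2 c + \log_2 |\mathrm{GL}_k(\mathbb{F}_2)| \leq \log_2 n!$ gives
\[
k^2 \;\leq\; n\log_2 n + O(n),
\]
and therefore $k = O\bigl(\sqrt{n\log n}\bigr)$, as claimed.

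There is no real obstacle here; the proof is essentially a counting argument. The only subtlety worth being careful about is interpreting the informal notation $|\mathcal{A}| \propto |\mathrm{GL}_k(\mathbb{F}_2)|$ as the one-sided bound $|\mathcal{A}| \geq c\,|\mathrm{GL}_k(\mathbb{F}_2)|$ (the reverse direction is automatic by Corollary~\ref{cor:upper bound gates}), and invoking the right asymptotic for $|\mathrm{GL}_k(\mathbb{F}_2)|$; the rest is just Stirling.
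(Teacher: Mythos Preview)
Your proof is correct and follows essentially the same approach as the paper: bound $|\mathcal{A}| \leq |\mathrm{Aut}(\mathcal{C})| \leq n!$ via the homomorphism $\phi$, then compare the asymptotics $|\mathrm{GL}_k(\mathbb{F}_2)| = \Theta(2^{k^2})$ and $n! = 2^{O(n\log n)}$ to conclude $k = O(\sqrt{n\log n})$. Your write-up is actually more careful than the paper's, since you spell out the product formula for $|\mathrm{GL}_k(\mathbb{F}_2)|$ and flag the one-sided interpretation of $\propto$.
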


\begin{proof}
    Every $W \in \mathrm{im}(\phi) \simeq \mathcal{A}$ needs to have a distinct preimage in $\Aut(\mathcal{C})$ for $\phi$ to be a valid homomorphism. Thus, we have $\abs{\mathcal{A}} = \abs{\Aut(\mathcal{C}) / \ker(\phi)} \leq \abs{\Aut(\mathcal{C})} \leq \abs{\mathrm{S}_n}$ with the first inequality saturated if and only if $\phi$ is injective, i.e. $\ker(\phi) = \ident$. The asymptotic size of $\mathrm{S}_n$ is $n! = 2^{O(n\log n)}$ by Stirling's formula, and the asymptotic size of $\mathrm{GL}_k(\mathbb{F}_2)$ is $\mathrm{\Theta}\big(2^{k^2}\big)$. Equating the two group sizes, we arrive at the statement of the Corollary.
\end{proof}

We note that Corollary \ref{cor:k upper bound GL(k,2)} is likely loose, since it has been shown that all equidistant linear codes are sequences of dual-Hamming codes \cite{Bonisoli_1984}, which have $k = O(\log n)$. The equidistant property is a necessary condition for $\mathcal{A} = \mathrm{GL}_k(\mathbb{F}_2)$ since permutations preserve the weight of codewords.

When the code distance is strictly greater than 2, we can show that $\mathcal{A}(\mathcal{C})$ is contained in $\mathcal{A}(\mathcal{C}^\perp)$, which we will denote $\mathcal{A}^\perp$ for short.

\begin{thm}[Dual automorphism bound]
\label{thm:dual aut bound}
    For any linear code $\mathcal{C}$ with code distance $d\geq 3$, we have $\mathcal{A} \subseteq \mathcal{A}^\perp \simeq \mathrm{Aut}(\mathcal{C}^\perp)$.
\end{thm}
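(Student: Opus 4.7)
The plan is to leverage the distance hypothesis to make $\phi^\perp$ injective, so that the dual logical automorphism group $\mathcal{A}^\perp$ inherits the full structure of $\mathrm{Aut}(\mathcal{C}^\perp) \simeq \mathrm{Aut}(\mathcal{C})$, while the primal homomorphism $\phi$ may still collapse this structure. The containment will then be realized as a canonical factorization of $\phi$ through $\phi^\perp$.

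First, I analyze $\ker(\phi^\perp)$. A permutation $\sigma$ lies in this kernel precisely when $H\sigma = H$, which means that the $i$-th column $h_i$ of $H$ equals $h_{\sigma^{-1}(i)}$ for every $i$. If $\sigma$ swaps two distinct coordinates $i \neq j$, this forces $h_i = h_j$, and therefore $H(e_i + e_j) = 0$, making $e_i + e_j$ a weight-$2$ element of $\mathcal{C}$. This contradicts $d \geq 3$, so every non-identity $\sigma \in \mathrm{Aut}(\mathcal{C}^\perp)$ must have $W \neq \ident$, i.e., $\ker(\phi^\perp) = \{\ident\}$. Combining this with the surjectivity of $\phi^\perp$ onto $\mathcal{A}^\perp$ and the identification $\mathrm{Aut}(\mathcal{C}) \simeq \mathrm{Aut}(\mathcal{C}^\perp)$ from Section \ref{sec:code-automorphisms}, one obtains $\mathcal{A}^\perp \simeq \mathrm{Aut}(\mathcal{C}^\perp)$ as in the statement.

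Second, to obtain the containment, I use the kernel inclusion $\ker(\phi^\perp) = \{\ident\} \subseteq \ker(\phi)$ to let $\phi$ factor through the quotient $\mathrm{Aut}(\mathcal{C}) / \ker(\phi^\perp) \simeq \mathcal{A}^\perp$. Since each $W \in \mathcal{A}^\perp$ has a unique preimage $\sigma \in \mathrm{Aut}(\mathcal{C})$, the assignment $W \mapsto \phi(\sigma) = V$ defines a surjective group homomorphism $\bar{\phi} : \mathcal{A}^\perp \twoheadrightarrow \mathcal{A}$, exhibiting every logical gate $V \in \mathcal{A}$ as the image under $\bar{\phi}$ of some $W \in \mathcal{A}^\perp$, with $|\mathcal{A}| \leq |\mathcal{A}^\perp|$.

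The main obstacle is interpretational rather than computational. As abstract groups, $\mathcal{A} \subseteq \mathrm{GL}_k(\mathbb{F}_2)$ and $\mathcal{A}^\perp \subseteq \mathrm{GL}_m(\mathbb{F}_2)$ naturally sit in different ambient groups, so the containment in the theorem statement is best read as ``$\mathcal{A}$ is a canonical quotient of $\mathcal{A}^\perp$'' by the surjection $\bar{\phi}$ constructed above; the literal set-theoretic containment only arises after identifying both sides with (quotients of) $\mathrm{Aut}(\mathcal{C})$. The substantive technical content is therefore the injectivity of $\phi^\perp$ under $d\geq 3$, and once this is in hand the remainder is a routine application of the first isomorphism theorem --- with the symmetric statement $\ker(\phi) = \{\ident\}$ holding under the dual hypothesis $d^\perp \geq 3$, which would then promote the surjection to the isomorphism of Corollary \ref{cor:dual aut equal}.
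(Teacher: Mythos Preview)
Your proof is correct and follows essentially the same route as the paper: both arguments establish that $\phi^\perp$ is injective by observing that a nontrivial $\sigma$ with $H\sigma = H$ would force two distinct columns of $H$ to coincide, producing a weight-$2$ codeword in contradiction with $d\geq 3$; the paper packages this as ``the columns of $\ident+\sigma''$ lie in $\ker H$ and have weight $\leq 2$'', which is the same observation. Your treatment is in fact somewhat more explicit than the paper's about what the ``containment'' $\mathcal{A}\subseteq\mathcal{A}^\perp$ really means (a canonical surjection $\bar\phi:\mathcal{A}^\perp\twoheadrightarrow\mathcal{A}$ rather than a literal set inclusion), which is a useful clarification; one minor wording quibble is that ``$\sigma$ swaps two distinct coordinates'' should read ``$\sigma$ moves some coordinate $i$ to $j=\sigma^{-1}(i)\neq i$'', since a nontrivial permutation need not contain a transposition.
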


\begin{proof}
    Recall that for any linear code, we have $\mathrm{Aut}(\mathcal{C}) \simeq \mathrm{Aut}(\mathcal{C}^\perp) = \mathrm{Aut}(h)$ by default. So in order to show that $\mathcal{A}$ is contained in $\mathcal{A}^\perp$, we essentially need to show that there cannot exist two distinct physical permutations $\sigma,\sigma'$ that correspond to the same row-transformation $w$ on $h$. We will prove this statement by contradiction. Suppose we have two distinct permutations $\sigma,\sigma' \in \mathrm{S}_n$ with $h\sigma = h\sigma' = wh$. Define the permutation matrix $\sigma'' = \sigma'\sigma^\transpose \neq \ident$. Then we have $h\sigma'' = w^{-1}w h = h$, and upon rearranging, $h(\ident + \sigma'') = 0$, which implies that the columns of $\ident+\sigma''$ are in $\ker h = \mathcal{C}$. However, the weight of any column in $\ident+\sigma''$ is at most 2, which contradicts our assumption that the code distance $d\geq2$. Thus, distinct $\sigma,\sigma' \in \mathrm{Aut}(\mathcal{C}^\perp)$ must correspond to distinct $w,w' \in \mathcal{A}^\perp$, which shows that $\mathrm{Aut}(\mathcal{C}^\perp) \subseteq \mathcal{A}^\perp$. Since $\mathcal{A}^\perp \subseteq \mathrm{Aut}(\mathcal{C}^\perp)$ by definition, we conclude that $\mathcal{A}^\perp \simeq \mathrm{Aut}(\mathcal{C}^\perp)$, i.e. $\ker\phi^\perp$ is trivial. Combining this last statement with the fact that $\mathcal{A} \subseteq \mathrm{Aut}(\mathcal{C}) \simeq \mathrm{Aut}(\mathcal{C}^\perp)$, we arrive at the statement of the Theorem.
\end{proof}

By swapping the roles of $\mathcal{C}$ and $\mathcal{C}^\perp$ in Theorem \ref{thm:dual aut bound}, we can achieve a strict equality if the dual distance also satisfies $d^\perp \geq 3$. In other words, both $\ker\phi$ and $\ker\phi^\perp$ are trivial.

\begin{cor}[Automorphism equivalence]
\label{cor:dual aut equal}
    If both the minimum distance and dual distance are greater than 2, then $\mathrm{Aut}(\mathcal{C}) \simeq \mathcal{A}(\mathcal{C}) \simeq \mathcal{A}(\mathcal{C}^\perp) \simeq \mathrm{Aut}(\mathcal{C}^\perp)$.
\end{cor}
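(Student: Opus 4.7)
The plan is to observe that this corollary follows essentially immediately from Theorem \ref{thm:dual aut bound} applied symmetrically to both $\mathcal{C}$ and $\mathcal{C}^\perp$, together with the universal isomorphism $\mathrm{Aut}(\mathcal{C}) \simeq \mathrm{Aut}(\mathcal{C}^\perp)$ coming from the fact that permutation matrices are orthogonal. The only real content is noticing that the argument in Theorem \ref{thm:dual aut bound} is stated asymmetrically (it uses $d \geq 3$ to constrain $\phi^\perp$), but since the roles of $\mathcal{C}$ and $\mathcal{C}^\perp$ are interchangeable, the same reasoning applied to $\mathcal{C}^\perp$ (whose dual is $\mathcal{C}$ and whose minimum distance is $d^\perp \geq 3$) constrains $\phi$ in the analogous way.

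Concretely, I would proceed as follows. First, apply Theorem \ref{thm:dual aut bound} directly to $\mathcal{C}$: the hypothesis $d \geq 3$ ensures that the dual correspondence $\phi^\perp : \mathrm{Aut}(\mathcal{C}^\perp) \to \mathcal{A}^\perp$ has trivial kernel, hence $\mathrm{Aut}(\mathcal{C}^\perp) \simeq \mathcal{A}(\mathcal{C}^\perp)$. Second, swap the roles of the code and its dual: the code $\mathcal{C}^\perp$ has minimum distance $d^\perp \geq 3$ by hypothesis and its own dual is $(\mathcal{C}^\perp)^\perp = \mathcal{C}$, so Theorem \ref{thm:dual aut bound} applied to $\mathcal{C}^\perp$ yields that the homomorphism $\phi : \mathrm{Aut}(\mathcal{C}) \to \mathcal{A}(\mathcal{C})$ has trivial kernel, hence $\mathrm{Aut}(\mathcal{C}) \simeq \mathcal{A}(\mathcal{C})$.

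Finally, I would chain these two isomorphisms with the identification $\mathrm{Aut}(\mathcal{C}) \simeq \mathrm{Aut}(\mathcal{C}^\perp)$, which was noted earlier in the preliminaries (any permutation $\sigma$ fixing $\mathcal{C}$ also fixes $\mathcal{C}^\perp$ since $\sigma^\transpose = \sigma^{-1}$). Putting these together gives the full chain
\begin{equation*}
\mathrm{Aut}(\mathcal{C}) \;\simeq\; \mathcal{A}(\mathcal{C}) \;\simeq\; \mathrm{Aut}(\mathcal{C}^\perp) \;\simeq\; \mathcal{A}(\mathcal{C}^\perp),
\end{equation*}
which, after reordering the middle two terms, is exactly the statement of the corollary.

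There is no real obstacle here; the only minor thing to be careful about is that Theorem \ref{thm:dual aut bound} is phrased in terms of the parity-check matrix $h$ and its code automorphisms acting as row transformations, so applying it to $\mathcal{C}^\perp$ requires noting that a parity-check matrix for $\mathcal{C}^\perp$ is a generator matrix for $\mathcal{C}$ (equivalently, the roles of $h$ and $g$, and of $W$ and $V$, interchange). Once that is observed, the dual bound argument transfers verbatim.
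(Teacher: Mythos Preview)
Your proposal is correct and mirrors the paper's own argument: the paper simply remarks that by swapping the roles of $\mathcal{C}$ and $\mathcal{C}^\perp$ in Theorem~\ref{thm:dual aut bound}, both $\ker\phi$ and $\ker\phi^\perp$ are trivial when $d,d^\perp\geq 3$, yielding the four-way equivalence. Your care in noting that passing to $\mathcal{C}^\perp$ interchanges the roles of $h$ and $g$ (and of $W$ and $V$) is precisely the observation needed to make the swap legitimate.
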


As we will see later in the examples (Section \ref{sec:classical families}), it will sometimes be easier to characterize $\mathcal{A}^\perp$ or $\mathrm{Aut}(\mathcal{C}^\perp)$ than $\mathcal{A}(\mathcal{C})$ itself, which provides us with a convenient way to compute $\mathcal{A}(\mathcal{C})$ when both $d,d^\perp \geq 3$.

\subsection{Limitations on automorphisms of simply connected graphs}

In Section~\ref{sec:code-automorphisms}, we briefly discussed a subgroup of code automorphisms known as Tanner graph automorphisms, which will be useful for our automorphism gadgets later on in Section~\ref{sec:aut gadgets}.
In this section, we review a well-known limitation for the automorphisms of simple, connected graphs before developing a bound for the size of the automorphism group of a popular family of simple, connected graphs. These bounds will constrain the automorphism groups of codes whose Tanner graphs are isomorphic to simple, connected graphs.
We only work with finite, simple, connected graphs that have vertex degree at least 2.

We begin by stating some important definitions regarding path automorphisms, $s$-transitive graphs and $s$-regular graphs.

\begin{defn}[Path automorphism]
    Let paths $P$ and $Q$ of some graph $\mathsf{G}$ be defined as two ordered sets:
    \[P = \left\{v_{p_0}, e_{p_1}, v_{p_1}, \ldots, e_{p_n}, v_{p_{n+1}}\right\},\]
    \[Q = \left\{v_{q_0}, e_{q_1}, v_{w_1}, \ldots, e_{q_n}, v_{q_{n+1}}\right\},\]
    where $v_{p_i}, v_{q_j} \in V(\mathsf{G})$ and $e_{p_i}, e_{q_j} \in E(\mathsf{G})$.
    Then, an automorphism of $\mathsf{G}$ such that $f(P) = Q$ is defined such that $f$
maps the $i$\textsuperscript{th} term of $P$ onto the $i$\textsuperscript{th} term of $Q$, for each index $i$. 
\end{defn}

\begin{defn}[$s$-transitive graph~{\cite[Section 7.5]{tutte1966connectivity}}]
    A graph $\mathsf{G}$ is $s$-transitive, for a given positive integer $s$, if it has at least one path of length $s$ and if for each ordered pair $(P, Q)$ of paths of length $s$ of $\mathsf{G}$, there exists an automorphism $f$ of $\mathsf{G}$ such that $f(P) = Q$.
\end{defn}

Note that when $s = 0$, the graph $\mathsf{G}$ is vertex-transitive i.e., there exists some graph automorphism $f$ such that $f(u) = v$ for any pair of vertices $u, v \in V(\mathsf{G})$. In addition, it is known that an $s$-transitive graph is also a $(s-1)$-transitive graph for $s \geq 1$ as long as the graph has vertex degree at least 2.

\begin{defn}[$s$-regular graph~{\cite[Section 7.7]{tutte1966connectivity}}]\label{defn:s-regular}
    A graph $\mathsf{G}$ is $s$-regular if it is $s$-transitive and for each ordered
pair $(P, Q)$ of paths of length $s$, there is only one automorphism $f$ of $\mathsf{G}$ such that $f(P) = Q$. 
\end{defn}

We emphasize that this is different from the conventional definition of regularity in graphs, where $k$-regular means that each vertex has the same degree $k$. Here, regularity is a statement on the number of automorphisms per a pair of paths $(P,Q)$. 

We are particularly interested in cubic graphs where each vertex has degree 3 because we are interested in the regime where the base code has weight-3 checks so that the quantum code will similarly have small constant-size weight checks when taking products of such codes. In addition, cubic graphs are very well-studied and a lot of work has been done to understand their automorphism groups. The following theorems are useful in limiting the girth and the size of the automorphism group for simply connected graphs with the above properties.

\begin{thm}[Girth bound by regularity~{\cite[Theorem III]{tutte1947family}}]
    Let $\mathsf{G}$ be a $s$-regular cubic graph. Then, the girth $g$ of $\mathsf{G}$ is bounded by the following:
    \[g \geq 2s-2.\]
\end{thm}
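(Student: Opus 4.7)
The plan is to proceed by contradiction. Assume $g \leq 2s - 3$ and exploit the rigidity guaranteed by $s$-regularity (Definition~\ref{defn:s-regular}) to manufacture a non-identity automorphism fixing some $s$-path pointwise, which contradicts the uniqueness clause in the definition.

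First, I would record two structural consequences of $s$-regularity on a cubic graph $\mathsf{G}$. Let $\mathcal{P}_s$ denote the set of (non-backtracking) paths of length $s$. Because $\mathsf{G}$ is cubic and paths do not repeat an edge consecutively, the number of $s$-paths starting at any fixed vertex is exactly $3 \cdot 2^{s-1}$: three choices for the first edge and two non-backtracking choices at each of the subsequent $s-1$ steps. By $s$-transitivity, $\mathrm{Aut}(\mathsf{G})$ acts transitively on $\mathcal{P}_s$, and by $s$-regularity this action is simply transitive. In particular, the pointwise stabilizer of any $s$-path is trivial, so any automorphism that fixes an entire $s$-path must be the identity on $\mathsf{G}$. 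This is the lever I will use.

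Next, fix a shortest cycle $C = v_0 v_1 \cdots v_{g-1} v_0$, with $g \leq 2s-3$. Since $g < 2s-2$, paths of length $s$ issuing from $v_0$ along the cycle must either wrap around $C$ or overlap it substantially, and the key idea is that the cycle structure lets me construct two distinct $s$-paths $P$ and $Q$ whose images differ only on a short initial (or terminal) segment while agreeing on a subpath long enough to still contain an $s$-subpath after translation. Concretely, I would try the two $s$-paths that start at $v_0$ and traverse $C$ in opposite senses; these paths meet again after at most $\lceil g/2 \rceil$ steps, and because $s \geq \lceil g/2 \rceil + 1$ the continuations can be chosen to agree. Applying $s$-transitivity gives an automorphism $\alpha$ mapping $P$ to $Q$, and analyzing its action on the shared portion of the two paths yields a non-trivial element fixing an $s$-path, which is the desired contradiction.

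The main obstacle, and the delicate part of the argument, is the explicit construction of $P$ and $Q$ together with the verification that the common substructure is really long enough to force the stabilizer contradiction under the sharp inequality $g \leq 2s-3$; this is exactly the threshold at which Tutte's bound is tight. I expect to split into cases according to the parity of $g$, since two paths meeting on a cycle behave differently when $g$ is even (they meet at the antipodal vertex) versus odd (they meet across an antipodal edge). Once the construction is in place, the concluding step is a clean appeal to simple transitivity of $\mathrm{Aut}(\mathsf{G})$ on $\mathcal{P}_s$.
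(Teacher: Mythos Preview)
The paper does not prove this statement; it is quoted from Tutte without argument, so there is no in-paper proof to compare your proposal against.

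On its own merits, your plan identifies the right lever---the pointwise stabilizer of any $s$-arc is trivial under $s$-regularity---but the construction you sketch does not reach it. If $P$ and $Q$ leave $v_0$ in opposite senses around $C$ and are made to agree from their meeting point onward, the automorphism $\alpha$ with $\alpha(P)=Q$ (as ordered sequences) fixes only their common \emph{suffix}, which has length $s-g/2$ in the even case and is not even well defined in the odd case near the threshold (the two arcs never occupy the same vertex at the same step until each has gone all the way around $C$). Either way the fixed portion is strictly shorter than $s$, so you obtain at best a non-trivial automorphism fixing an arc of length less than $s$, which does not contradict $s$-regularity. Bridging the gap from ``fixes something shorter than $s$'' to ``fixes an $s$-arc'' is the entire content of the theorem, and your outline does not address it. Tutte's actual route is different in kind: he analyzes the unique non-trivial involution in the stabilizer of each $(s-1)$-arc (which must swap the two non-backtracking extensions at each endpoint) and shows that chaining these involutions along a cycle of length at most $2s-3$ forces an inconsistency.
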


Recall that the girth of the graph is related to the distance of the classical code associated to the graph. This implies that the distance of the code is lower bounded by the regularity of its underlying graph.

\begin{thm}[Upper bound on transitivity~{\cite[Theorem 18.6]{biggs1993algebraic}}]\label{thm:transitivity-upper-bound}
    If a graph $\mathsf{G}$ is simply connected, cubic, and $t$-transitive, then $t \leq 5$.
\end{thm}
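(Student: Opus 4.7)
The plan is to follow the classical group-theoretic argument due to Tutte, working with the action of $\Gamma = \operatorname{Aut}(\mathsf{G})$ on the set of $s$-arcs of $\mathsf{G}$, where an $s$-arc is an ordered walk $\alpha = (v_0, v_1, \ldots, v_s)$ with $v_i v_{i+1} \in E(\mathsf{G})$ and $v_{i-1} \neq v_{i+1}$ (no immediate backtracking). The basic counting lemma is that a connected cubic graph has exactly $3 \cdot 2^{s-1}|V(\mathsf{G})|$ many $s$-arcs, so if $\mathsf{G}$ is $s$-transitive then orbit-stabilizer pins $|\Gamma_\alpha|$ down to $|\Gamma|/(3 \cdot 2^{s-1}|V(\mathsf{G})|)$. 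In this language, $t$-transitivity asserts that $\Gamma$ acts transitively on $s$-arcs for every $s \leq t$, which is the reformulation I would use throughout.

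First I would build the descending stabilizer chain $\Gamma_{\alpha_0} \geq \Gamma_{\alpha_1} \geq \cdots \geq \Gamma_{\alpha_s}$ by extending an $i$-arc $\alpha_i$ to an $(i{+}1)$-arc $\alpha_{i+1}$. Cubicity forces the successive index $[\Gamma_{\alpha_i} : \Gamma_{\alpha_{i+1}}]$ to lie in $\{1, 2\}$, because at the terminal vertex of $\alpha_i$ there are only two non-backtracking ways to continue. Index $2$ at step $i$ is equivalent to $(i{+}1)$-transitivity, so $s$-transitivity is tantamount to having all $s$ successive indices equal to $2$. Next I would examine the local edge-vertex amalgam $(\Gamma_v, \Gamma_e)$ for an incident pair $(v,e)$. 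Since $\Gamma_v$ permutes the three edges at $v$ and $\Gamma_e$ permutes the two endpoints of $e$, one obtains natural homomorphisms $\Gamma_v \to S_3$ and $\Gamma_e \to S_2$; transitivity on arcs of length $\geq 1$ forces both to be surjective, and higher transitivity progressively constrains the kernels.

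The heart of the argument is a classification of cubic amalgams: the pair $(\Gamma_v, \Gamma_e)$, together with how they are identified on $\Gamma_v \cap \Gamma_e$, determines the local structure of a cubic $s$-arc-transitive graph, and Goldschmidt's theorem gives exactly seven isomorphism types, corresponding to $s \in \{1, 2, 2, 3, 4, 4, 5\}$. I would either invoke that classification directly or, in a self-contained route, reproduce Tutte's original induction: assuming $6$-transitivity, the pointwise stabilizer $F_\alpha$ of a $6$-arc embeds into $\mathbb{Z}_2^{5}$ by recording, at each of the five internal vertices, which of the two side-branches is swapped. Forcing this embedding to be surjective via the step-wise index-$2$ condition above gives $|F_\alpha| = 32$, while independently computing $|\Gamma_\alpha|$ from the amalgam structure yields a smaller upper bound, the contradiction.

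The main obstacle, and the delicate content of the proof, is the final conflict at $s = 6$: one must show that the propagation constraints imposed on an element of $F_\alpha$ as one reads off its action vertex-by-vertex along the arc are not all independent, and that the resulting algebraic relations become unsatisfiable exactly at $s = 6$. This is where the specific constant $5$ emerges — Tutte's enumeration of the admissible local actions at an internal arc vertex, respecting cubic incidence and compatible with the edge-flip in $\Gamma_e$, produces a finite combinatorial list whose longest realizable arc has length $5$. I would reproduce this enumeration (or cite Goldschmidt's amalgam theorem as a black box) to complete the bound $t \leq 5$.
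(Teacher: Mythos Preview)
The paper does not provide its own proof of this theorem; it is simply quoted as a known result from Biggs's textbook \cite[Theorem 18.6]{biggs1993algebraic} and used as a black box in the subsequent Theorem~\ref{thm:upper-bound-regular-graph}. So there is no ``paper's proof'' to compare against.

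Your outline is a reasonable sketch of the classical Tutte argument (which is indeed what Biggs presents), but a few details are off. The key contradiction in Tutte's proof is not that a $6$-arc stabilizer $F_\alpha$ embeds into $\mathbb{Z}_2^5$ and is forced to be surjective; rather, one shows that if $\mathsf{G}$ is $t$-transitive but not $(t{+}1)$-transitive then the vertex stabilizer $\Gamma_v$ has order exactly $3\cdot 2^{t-1}$, and then a careful analysis of how elements of $\Gamma_v$ act on short arcs through $v$ (via the ``shunt'' automorphisms that advance an arc by one step) produces an explicit list of possible group structures for $\Gamma_v$, exhausted by $t\le 5$. Goldschmidt's amalgam classification is a valid alternative route and does yield the seven types you mention, but invoking it is considerable overkill for this statement and postdates Tutte's elementary argument by three decades. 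If you want a self-contained write-up, follow Tutte's original 1947 paper or Biggs \S18 directly rather than trying to reconstruct the contradiction from the stabilizer-chain indices alone---the combinatorial relations you allude to at the end are the entire content of the proof and cannot be left as a sketch.
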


\begin{thm}[Size of automorphism group of $s$-regular cubic graphs~{\cite[Theorem VI]{tutte1947family}}]\label{thm:s-regular-cubic-graph-automorphism}
    Let $\mathsf{G}$ be a $s$-regular cubic graph with $n$ vertices. Then, 
    \[\left|\mathrm{Aut}(\mathsf{G})\right| = 2^{s-1} \cdot 3n.\]
\end{thm}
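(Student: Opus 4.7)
The strategy is a standard orbit--stabilizer argument on the action of $\mathrm{Aut}(\mathsf{G})$ on the set $\mathcal{P}_s$ of directed paths of length $s$ in $\mathsf{G}$, followed by a combinatorial count of $|\mathcal{P}_s|$ that leverages the cubic structure. First, I would observe that $\mathrm{Aut}(\mathsf{G})$ acts naturally on $\mathcal{P}_s$: given $f \in \mathrm{Aut}(\mathsf{G})$ and $P = \{v_0, e_1, v_1, \ldots, e_s, v_s\} \in \mathcal{P}_s$, the image $f(P) = \{f(v_0), f(e_1), f(v_1), \ldots, f(e_s), f(v_s)\}$ is again a path of length $s$, since $f$ preserves incidences.

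Next, I would invoke the two defining properties of $s$-regularity. The $s$-transitivity clause asserts that for every ordered pair $(P,Q) \in \mathcal{P}_s \times \mathcal{P}_s$ there exists $f \in \mathrm{Aut}(\mathsf{G})$ with $f(P) = Q$, so the action has a single orbit. The uniqueness clause in Definition~\ref{defn:s-regular} states that this $f$ is unique, which applied to the pair $(P,P)$ forces the only automorphism fixing $P$ to be the identity; hence $\mathrm{Stab}(P) = \{\ident\}$. Orbit--stabilizer then yields
\begin{equation*}
    |\mathrm{Aut}(\mathsf{G})| = |\mathcal{P}_s| \cdot |\mathrm{Stab}(P)| = |\mathcal{P}_s|.
\end{equation*}

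The remaining task is to count $|\mathcal{P}_s|$ in a cubic graph. A directed path of length $s$ is determined by a choice of initial vertex ($n$ options), a choice of first incident edge ($3$ options, since $\mathsf{G}$ is cubic), and at each of the subsequent $s-1$ vertices a choice of outgoing edge distinct from the one just traversed ($2$ options each). Multiplying gives
\begin{equation*}
    |\mathcal{P}_s| = n \cdot 3 \cdot 2^{s-1} = 2^{s-1} \cdot 3n,
\end{equation*}
which matches the desired formula once combined with the previous display.

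The main obstacle is justifying that each such non-backtracking sequence is a genuine path, i.e., that no vertex is revisited and the $2$ candidate edges at each intermediate step are truly distinct and usable. For $s \in \{1,2\}$ this is immediate in a simple graph. For $s \geq 3$ one invokes the girth bound of the preceding theorem, $g \geq 2s-2$: a revisit within a non-backtracking walk of length $s$ would produce a closed non-backtracking walk of length at most $s$, which must contain a cycle of length at most $s < 2s-2 \leq g$, contradicting the girth. Once this is settled, the counting is exact and the orbit--stabilizer identity delivers $|\mathrm{Aut}(\mathsf{G})| = 2^{s-1}\cdot 3n$.
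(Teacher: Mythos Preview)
The paper does not prove this theorem; it is quoted directly from Tutte's 1947 paper and used as a black box. Your orbit--stabilizer argument is correct and is essentially Tutte's original proof: the defining clauses of $s$-regularity say precisely that $\mathrm{Aut}(\mathsf{G})$ acts regularly (transitively with trivial stabilizers) on the set of length-$s$ paths, so $|\mathrm{Aut}(\mathsf{G})|$ equals the number of such paths, which you count correctly as $3n\cdot 2^{s-1}$.

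One remark on your final paragraph: in Tutte's own terminology the relevant objects are ``$s$-arcs'', i.e.\ non-backtracking walks, with no requirement that vertices be distinct. Under that reading the count $n\cdot 3\cdot 2^{s-1}$ is immediate and the girth bound is not needed. The paper's Definition of ``path'' is ambiguous on this point, so your extra care is reasonable; your use of the preceding girth theorem $g\geq 2s-2$ to rule out revisits for $s\geq 3$ is valid (a first revisit in a non-backtracking walk produces a genuine cycle of length at most $s$ and at least $3$, contradicting $g\geq 2s-2>s$), and the cases $s\in\{1,2\}$ are indeed trivial in a simple graph.
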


Given the above two theorems, we now provide a useful theorem for upper bounding the size of the automorphism group of a simple, connected, cubic, $s$-regular, and $t$-transitive graph with $n$ vertices. We emphasize that the bound is independent of $s$ and $t$.

\begin{thm}[Upper bound on size of automorphism group of simply connected cubic graphs]\label{thm:upper-bound-regular-graph}
Let $\mathsf{G}$ be a simply connected, cubic, $s$-regular, $t$-transitive graph with $n$ vertices. Then,
    \[\left|\mathrm{Aut}(\mathsf{G})\right| \leq 48n.\]
\end{thm}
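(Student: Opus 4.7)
The plan is to chain together the two cited theorems about cubic graphs (Theorems \ref{thm:transitivity-upper-bound} and \ref{thm:s-regular-cubic-graph-automorphism}) via the observation that $s$-regularity forces $s$-transitivity. Concretely, I would first invoke Definition \ref{defn:s-regular}, which says that an $s$-regular graph is in particular $s$-transitive (the defining property, just with an additional uniqueness clause on the automorphism realizing each path-to-path map). Thus the hypothesis that $\mathsf{G}$ is simply connected, cubic, and $s$-regular already places it within the scope of Theorem \ref{thm:transitivity-upper-bound}, which immediately yields the bound $s \leq 5$. Note that the $t$-transitivity hypothesis is not actually needed beyond this, since $s$-regularity alone supplies the transitivity that the upper bound applies to.

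Next I would plug $s \leq 5$ directly into the exact count given by Theorem \ref{thm:s-regular-cubic-graph-automorphism}, which states that for an $s$-regular cubic graph on $n$ vertices,
\[
    \left|\mathrm{Aut}(\mathsf{G})\right| = 2^{s-1} \cdot 3n.
\]
Since $2^{s-1}$ is monotonically increasing in $s$, the maximum is achieved at $s = 5$, giving
\[
    \left|\mathrm{Aut}(\mathsf{G})\right| \;=\; 2^{s-1}\cdot 3n \;\leq\; 2^{4}\cdot 3n \;=\; 48n,
\]
which is the desired bound.

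There is no real obstacle here: the statement is essentially a corollary of the two preceding theorems, and the only verification required is that the hypotheses of Theorem \ref{thm:transitivity-upper-bound} are met (simply connected, cubic, $s$-transitive), which follows from the assumptions plus the definitional implication $s\text{-regular} \Rightarrow s\text{-transitive}$. The one subtlety worth flagging in the write-up is that the bound $|\mathrm{Aut}(\mathsf{G})| \leq 48n$ is independent of the specific values of $s$ and $t$ appearing in the hypotheses, as emphasized in the theorem statement; this independence is precisely what makes the bound useful when applying it later to Tanner-graph-based codes whose transitivity parameters may not be explicitly known.
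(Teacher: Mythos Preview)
Your proposal is correct and follows essentially the same route as the paper: use Definition~\ref{defn:s-regular} to pass from $s$-regularity to $s$-transitivity, apply Theorem~\ref{thm:transitivity-upper-bound} to conclude $s\le 5$, and then substitute into the exact count of Theorem~\ref{thm:s-regular-cubic-graph-automorphism}. Your remark that the separate $t$-transitivity hypothesis is redundant is accurate and worth noting.
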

\begin{proof}
    Thm.~\ref{thm:transitivity-upper-bound} tells us that the simply connected, cubic, regular graph $\mathsf{G}$ can be at most $5$-transitive, implying that $\mathsf{G}$ can be at most $5$-regular by Def.~\ref{defn:s-regular}. From a straightforward application of Thm.~\ref{thm:s-regular-cubic-graph-automorphism}, we obtain the theorem statement.
\end{proof}

Based on Thm.~\ref{thm:upper-bound-regular-graph}, we see that the size of the automorphism group of a code with an underlying graph that is cubic, simply connected, $s$-transitive, and $t$-transitive can at most scale linearly with the number of checks.
This implies that codes constructed from such graphs are not great candidates in the asymptotic limit if we are interested in having a large number of automorphisms.

\subsection{Limitations on automorphisms of quantum CSS codes}

For a classical linear code, an automorphism of its codespace automatically implies an automorphism of its dual. For quantum stabilizer codes, logical Pauli operators related by an element of the stabilizer act equivalently on the codespace, and so distinct logical operations are grouped into equivalence classes under the code's stabilizer group. As a consequence, for a given choice of logical basis operators, there is not a unique ``generator matrix'' and so the classical automorphism condition \eqref{eq:aut condition}, which is defined with respect to a generator matrix, is not a useful test for automorphism. What is useful, on the other hand, is the dual automorphism condition \eqref{eq:dual aut condition}, which naturally takes into account stabilizer degeneracy. There are some additional subtleties related to notions of strong and weak automorphisms for which we refer the reader to \cite{hao2021} for more details. In this work, we focus on CSS codes, whose analyses is much simpler and does not contain such subtleties. Since physical permutations only map Pauli strings to other Pauli strings, it is clear that the logical automorphism group for any quantum stabilizer code is strictly contained in the Clifford group. Furthermore, since permutations cannot change the Pauli type, for a CSS code the logical automorphism group is contained within the affine group, and so Corollary \ref{cor:upper bound gates} naturally applies as well.

A recent work by Guyot and Jaques provides generic upper bounds and limitations on the number of distinct automorphism gates $\abs{\mathcal{A}}$ for any CSS code \cite{guyot2025}. When we use a CSS code as input to the homological product, the number of inherited automorphism gadgets will be constrained by their result. We quote their relevant results here and encourage the reader to explore their paper for the technical details and proofs.

\begin{thm}[Upper bound on distinct automorphism gates for CSS codes; Theorem 2 of \cite{guyot2025}]
    Let $C = \mathrm{CSS}(A,B)$ be an $\llbracket n,k,d\rrbracket$ code with $d \ge 3$. The number of distinct logical operations that can be implemented by permuting qubits in the code is upper-bounded by $\abs{\mathcal{A}} \leq \frac{n!}{k_{\max}!}$, where $k_{\max} = \max(\dim(A), \dim(B)).$
\end{thm}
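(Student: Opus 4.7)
The plan is to upper bound $|\mathcal{A}|$ via a coset argument, by finding a subgroup $\mathrm{Sym}(S) \leq \mathrm{S}_n$ of order $k_{\max}!$ whose intersection with $\mathrm{Aut}(C)$ acts trivially on the logicals. Once this is established, any two elements of $\mathrm{Aut}(C)$ lying in the same right coset of $\mathrm{Sym}(S)$ induce the same logical action, so the number of distinct logical actions is at most the number of right cosets of $\mathrm{Sym}(S)$ in $\mathrm{S}_n$, namely $n!/k_{\max}!$.

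Without loss of generality, assume $\dim A = k_{\max}$ and let $S \subseteq [n]$ be an information set for $A$, i.e., a subset of $k_{\max}$ coordinates such that the restriction $\pi_S : A \to \mathbb{F}_2^S$ is a bijection; such $S$ always exists since $A$ has full rank $k_{\max}$. Let $\mathrm{Sym}(S)$ denote the subgroup of $\mathrm{S}_n$ permuting the coordinates in $S$ and fixing the complement $[n] \setminus S$ pointwise, so $|\mathrm{Sym}(S)| = k_{\max}!$. For the coset argument, note that if $\sigma_1, \sigma_2 \in \mathrm{Aut}(C)$ lie in a common right coset of $\mathrm{Sym}(S)$, then $\tau \coloneqq \sigma_2^{-1}\sigma_1 \in \mathrm{Aut}(C) \cap \mathrm{Sym}(S)$, and the claimed triviality of $\tau$'s logical action would give $\phi(\sigma_1) = \phi(\sigma_2)$.

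The key step is to show that every $\tau \in \mathrm{Aut}(C) \cap \mathrm{Sym}(S)$ acts trivially on the logical qubits. Writing $A$ in systematic form $(I_{k_{\max}} \mid P)$ aligned to $S$, the rows of the generator matrix are codewords $v_s = e_s \oplus p_s$ for $s \in S$, where $p_s$ is the ``parity shadow'' on $[n]\setminus S$. The condition that $\tau$ preserves $A$ while fixing the complement $[n]\setminus S$ forces $\tau|_S$ to permute elements only within orbits of constant $p_s$; hence for any $s, s'$ in a common orbit, $v_s + v_{s'} = e_s + e_{s'}$ is a weight-2 codeword of $A$. By the hypothesis $d_Z \geq 3$, no nontrivial logical $X$ has weight 2, so $e_s + e_{s'} \in B^\perp$. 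For arbitrary $c \in A$, the difference $\tau(c) - c$ is supported on $S$ and decomposes across the orbits of $\tau|_S$, with each orbit-restriction carrying even Hamming weight and thus expressible as an $\mathbb{F}_2$-sum of the weight-2 $X$-stabilizers $e_s + e_{s'}$. Therefore $\tau(c) - c \in B^\perp$ for every $c \in A$, so $\tau$ acts trivially on the $X$-logicals $A/B^\perp$. Triviality on $Z$-logicals then follows from an inner-product argument: for $b \in B$ and $c \in A$, the identity $\tau(b) \cdot \tau(c) = b \cdot c$ (from $\tau^\transpose \tau = I$) together with $\tau(c) - c \in B^\perp$ forces $\tau(b) - b \in A^\perp$.

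The main obstacle is the orbit-decomposition argument in the key step, which requires that the $\mathbb{F}_2$-span of the weight-2 $X$-stabilizers $\{e_s + e_{s'} : s, s' \in O\}$ coincides with the space of all even-weight vectors supported on each orbit $O$. This is an elementary linear-algebra fact, but it is precisely where the hypothesis $d \geq 3$ does the essential work; without it, weight-2 elements of $A$ could represent nontrivial logical $X$ operators and the argument for $\tau(c) - c \in B^\perp$ would fail.
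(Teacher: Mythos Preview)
The paper does not prove this theorem: it is quoted verbatim as Theorem~2 of \cite{guyot2025}, with the remark ``we quote their relevant results here and encourage the reader to explore their paper for the technical details and proofs.'' So there is no in-paper proof to compare against.

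Your argument is self-contained and correct. The coset-counting idea is clean: the crucial lemma is that any $\tau \in \mathrm{Aut}(C) \cap \mathrm{Sym}(S)$ is logically trivial, and your chain of deductions for this is sound. The step forcing $p_{\tau(s)} = p_s$ (via the systematic form) is exactly what pins $\tau|_S$ to orbits of constant parity-shadow, and then $d \geq 3$ forces every $e_s + e_{s'}$ with $s,s'$ in a common orbit into the $X$-stabilizer group $B^\perp$. The even-weight decomposition on each orbit is the standard fact that the even-weight code on a set is generated by its weight-2 words. Your inner-product argument for the $Z$-side is also correct: once $\tau(c) + c \in B^\perp$ for all $c \in A$, orthogonality of $\tau$ together with $\tau(B) = B$ gives $(\tau(b)+b)\cdot c = 0$ for all $c \in A$, hence $\tau(b)+b \in A^\perp$.

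Two cosmetic points: (i) as written, $\tau = \sigma_2^{-1}\sigma_1$ corresponds to \emph{left} cosets of $\mathrm{Sym}(S)$, not right cosets; this is harmless since left and right coset counts agree. (ii) You invoke ``$d_Z \geq 3$'' to rule out weight-2 $X$-logicals, but that is really $d_X$; since the hypothesis is $d = \min(d_X, d_Z) \geq 3$ this does not matter.
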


\begin{cor}[Corollary 5 of \cite{guyot2025}]
    The following logical gates are not permutation addressable on CSS codes with the following rates:
    \begin{itemize}
        \item SWAP gates for codes with an asymptotic rate greater than $1/3$
        \item Any 2-qubit gate for codes with with an asymptotic rate greater than $3/4$
        \item CNOT gates for codes with an asymptotic rate of $\mathrm{\Omega}\left( 
        \sqrt{\frac{\log n}{n}} \right)$
    \end{itemize}
\end{cor}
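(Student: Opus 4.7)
The plan is to apply the upper bound $\abs{\mathcal{A}} \leq n!/k_{\max}!$ from the preceding Theorem 2 of \cite{guyot2025} together with the CSS identity $\dim(A) + \dim(B) = n + k$, which forces $k_{\max} = \max(\dim(A),\dim(B)) \geq (n+k)/2$ and hence $k_{\max}/n \geq (1+R)/2$ for rate $R = k/n$. For each bullet, I would identify the minimum size of $\mathcal{A}$ required for the gate class to be addressable, then compare against this rate-dependent upper bound using Stirling's formula to extract the asymptotic threshold on $R$.

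For SWAPs, addressability of every logical SWAP requires $\mathcal{A}$ to contain the full symmetric group $\mathrm{S}_k$, so $\abs{\mathcal{A}} \geq k!$. Writing $n!/k_{\max}! = \binom{n}{k_{\max}} (n-k_{\max})!$ and using $\log\binom{n}{k_{\max}} = O(n)$ while $\log\bigl(k!/(n-k_{\max})!\bigr) = \Theta\bigl((k + k_{\max} - n) \log n\bigr)$ whenever $k > n - k_{\max}$, the condition $k! \leq n!/k_{\max}!$ fails to leading order precisely when $k + k_{\max} > n$. Substituting $k_{\max} = (n+k)/2$ gives $k > n/3$, i.e.\ $R > 1/3$. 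For CNOTs, addressability requires $\mathcal{A}$ to generate $\mathrm{GL}_k(\mathbb{F}_2)$, whose order is $\Theta(2^{k^2})$. Imposing $2^{k^2} \leq n!/k_{\max}! \leq n^{n - k_{\max}}$ and taking logarithms yields $k^2 \log 2 \lesssim \tfrac{(1-R)n}{2}\log n$, so $R \leq O\bigl(\sqrt{\log n / n}\bigr)$; equivalently, CNOTs are blocked whenever $R = \Omega\bigl(\sqrt{\log n / n}\bigr)$.

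The ``any 2-qubit gate'' case is where I expect the main obstacle to lie. Here the claim is that at $R > 3/4$ no permutation implements any nontrivial 2-qubit logical gate, which cannot be proved merely by comparing group orders. My approach would be structural: a permutation $\sigma$ inducing a nontrivial 2-qubit logical gate must act trivially on $k - 2$ logical qubits, so by the CSS automorphism conditions \eqref{eq:CSS aut conditions} the induced row-maps $W_X$ and $W_Z$ must stabilize correspondingly large subspaces of $A$ and $B$. A joint dimension-counting argument using $\dim(A) + \dim(B) = (1+R)n$ should then pinpoint the threshold $R > 3/4$, at which the union of the stabilized subspaces exceeds the full set of $n$ qubits and leaves no remaining freedom for $\sigma$ to act nontrivially. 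The principal technical difficulty is translating ``$\sigma$ fixes $k - 2$ logical qubits'' into a bit-level constraint while carefully accounting for stabilizer degeneracy; this is what makes the argument genuinely structural rather than a routine cardinality comparison.
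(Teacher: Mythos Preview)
The paper does not prove this statement. It is quoted verbatim as Corollary~5 of \cite{guyot2025}, and the surrounding text is explicit: ``We quote their relevant results here and encourage the reader to explore their paper for the technical details and proofs.'' There is therefore no in-paper argument to compare your proposal against.

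For what it is worth, your cardinality route for the SWAP and CNOT bullets is the natural one and the leading-order Stirling comparison checks out: with $k_{\max}\geq(n+k)/2$, the condition $k!\cdot k_{\max}!>n!$ yields $k+k_{\max}>n$ to leading order in $\log n$, i.e.\ $R>1/3$, and the comparison $2^{k^2}\leq n^{n-k_{\max}}\leq n^{(n-k)/2}$ gives $k=O(\sqrt{n\log n})$ as stated. For the ``any 2-qubit gate'' bullet you are right that a cardinality argument alone cannot work, but your structural sketch is not yet an argument: you have not fixed what ``permutation addressable'' means here (identity on the remaining $k-2$ logical qubits, or merely a nontrivial action on the designated pair with arbitrary action elsewhere), and you have not shown how the dimension count $\dim(A)+\dim(B)=(1+R)n$ actually produces the specific constant $3/4$. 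To complete that case you would need to go to \cite{guyot2025} for their precise definition and the mechanism that singles out $3/4$.
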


Since all the product constructions that we use produce CSS codes, they will all be constrained by these upper bounds. However, we will see that this generic upper bound can be loose for the product constructions, which demand additional constraints on the input codes for there to be nontrivial code automorphisms.


\section{Automorphism gadgets} \label{sec:aut gadgets}

In this section, we derive our automorphism gadgets for all the homological product constructions mentioned in Section \ref{sec:HGP codes} and \ref{sec:homological product codes}. We will determine both the physical actions on the data qubits as well as the encoded actions on the logical qubits.
Note that for CSS codes, we only need to analyze the $X$-sector because commutation conditions will completely determine the corresponding transformation on the $Z$-sector. Specifically, for an invertible matrix $U \in \mathrm{GL}_n(\mathbb{F}_2)$ whose columns tabulate the transformation of all $n$ $X$-type Paulis, $(U^{-1})^\transpose = (U^\transpose)^{-1} \equiv U^{-\transpose}$ determines the corresponding action on the $n$ $Z$-type Paulis. This is because any Clifford transformation should preserve the symplectic norm $\Omega = \begin{pmatrix}
    \mathbf{0} & \ident_{n} \\ \ident_{n} & \mathbf{0}
\end{pmatrix}$ i.e. $A\Omega A^\transpose=\Omega$, where $A$ is a $2n\times2n$ matrix with $U$ and $U^{-\transpose}$ on the block diagonal. Preserving the symplectic norm $\Omega$ is equivalent to preserving the Pauli algebra on the qubits.

\subsection{Hypergraph product codes}

\begin{figure}[t]
    \centering
    \includegraphics[width=0.95\textwidth]{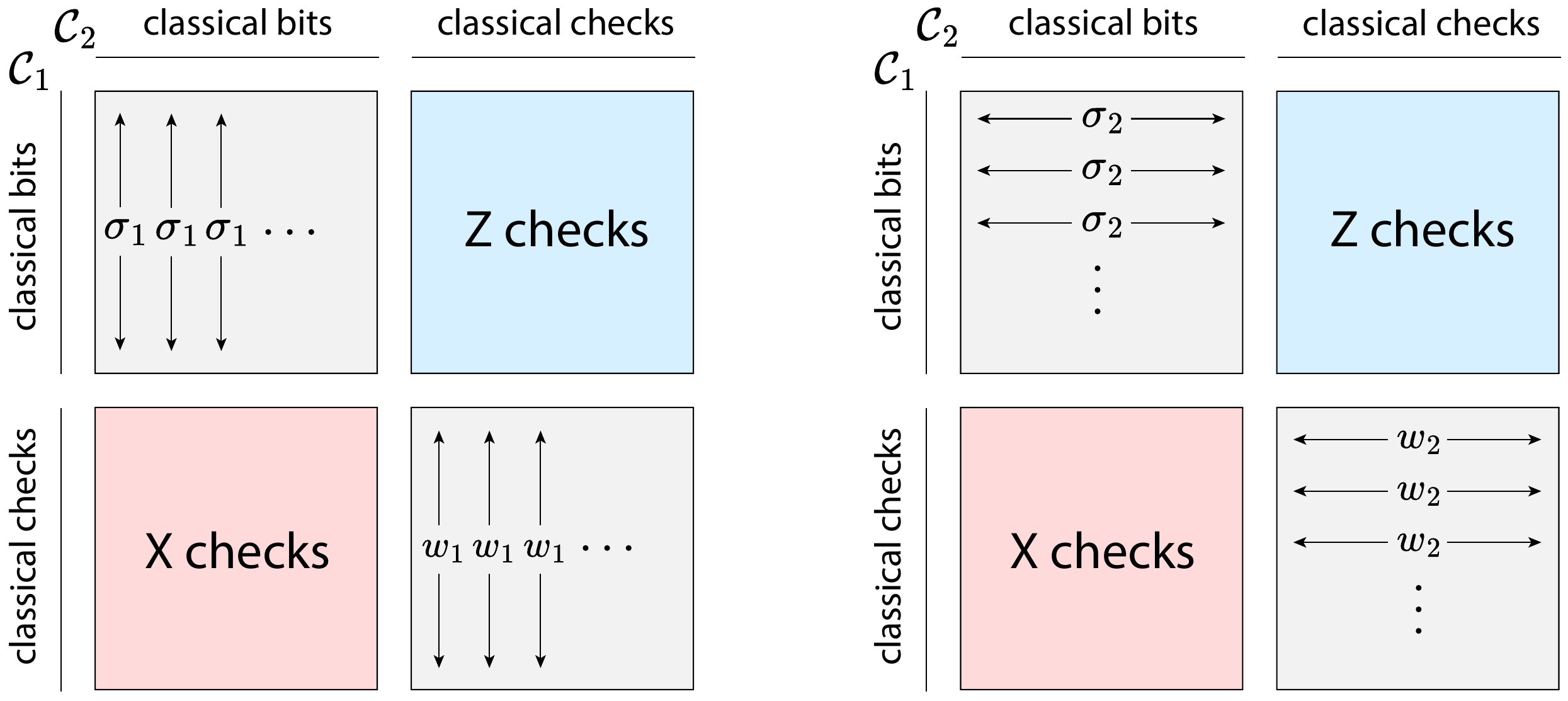}
    \caption{The physical implementations of the HGP automorphism gadgets \eqref{eq:HGP gadgets} are depicted. An automorphism of the first input code $\mathcal{C}_1$ results in a gadget acting within columns of the HGP code, and an automorphism of the second input code $\mathcal{C}_2$ results in a gadget acting within the rows.}
    \label{fig:2D HGP Aut gadget}
\end{figure}

We begin by reviewing the construction in (\cite{LRESC}, Appendix D), which introduced the concept of logical gate inheritance for hypergraph product codes. At a high level, the idea is to leverage the tensor product structure of the HGP to derive logical gadgets stemming from those of the classical input codes. For our purposes, we will focus on the gadgets corresponding to automorphisms of the classical input codes. Denoting $\oplus$ as the direct sum of matrices (as opposed to Kronecker sum), we define the unitary operators
\begin{subequations}\label{eq:HGP gadgets}
\begin{align}
U_1 &\equiv (\sigma_1 \otimes \ident) \oplus (w_1 \otimes \ident) = \begin{pmatrix}
    \sigma_1 \otimes \ident & \mathbf{0} \\ 
    \mathbf{0} & w_1 \otimes \ident
\end{pmatrix} \, , \label{eq:HGP U_1} \\
U_2 &\equiv (\ident \otimes \sigma_2) \oplus \big(\ident \otimes w_2^{-\transpose}\big) = \begin{pmatrix}
    \ident \otimes \sigma_2 & \mathbf{0} \\ 
    \mathbf{0} & \ident \otimes w_2^{-\transpose}
\end{pmatrix} \, .  \label{eq:HGP U_2}
\end{align}
\end{subequations}
We can interpret $U_1$ as acting $\sigma_1 \otimes \ident$ on the left qubits and $w_1 \otimes \ident$ on the right qubits where $\sigma_1$ is some permutation that satisfies $h_1 \sigma_1 = w_1 h_1$. 
Likewise, $\sigma_2$ is chosen to be some permutation for $U_2$ such that $h_2 \sigma_2 = w_2 h_2$.
Interpreting the tensor product as (rows $\otimes$ columns), we see that $U_1$ acts the permutation $\sigma_1$ on all left rows and $w_1$ on all right rows of the HGP code in parallel. $U_2$ follows suit but on the columns; see Figure \ref{fig:2D HGP Aut gadget} for an illustration. To check that $U_1$ preserves the stabilizer group of the HGP code, it suffices to examine its actions on $H_X$ and $H_Z$. For $H_X$ we have
\begin{align}\label{eq:H_XU_1}
    H_X U_1 &= \big( h_1\sigma \otimes \ident \;|\; w_1\otimes h_2^\transpose \big)  \notag \\
    &= \big( w_1h_1 \otimes \ident \;|\; w_1\otimes h_2^\transpose \big)  \notag \\
    &= (w_1 \otimes \ident) \big( h_1 \otimes \ident \;|\; \ident \otimes h_2^\transpose \big)  \notag \\
    &= W_1 H_X \, ,
\end{align}
where $W_1 \equiv w_1 \otimes \ident$. The action on $Z$-type Pauli operators is given by $U^{-\transpose}_1$, and so for $H_Z$ we have
\begin{align}\label{eq:H_ZU_1}
    H^{}_Z U^{-\transpose}_1 &= \big( \sigma_1 \otimes h_2 \;|\; h_1^\transpose w_1^{-\transpose} \otimes \ident \big)  \notag \\
    &= \big( \sigma_1 \otimes h_2 \;|\; (w_1^{-1}h_1)^\transpose \otimes \ident \big)  \notag \\
    &= \big( \sigma_1 \otimes h_2 \;|\; (h_1\sigma_1^{-1})^\transpose \otimes \ident \big)  \notag \\
    &= \big( \sigma_1 \otimes h_2 \;|\; \sigma_1 h_1^\transpose \otimes \ident \big)  \notag \\
    &= (\sigma_1 \otimes \ident)\big( \ident \otimes h_2 \;|\; h_1^\transpose \otimes \ident \big)  \notag \\
    &= W'_1 H^{}_Z  \, ,
\end{align}
where $W'_1 \equiv \sigma_1 \otimes \ident$, and in the first and fourth line we have used the fact that $\sigma_1^{-1} = \sigma_1^\transpose$ for a permutation matrix. \eqref{eq:H_XU_1} and \eqref{eq:H_ZU_1} show that the row spaces of $H_X$ and $H_Z$ remain invariant under $U_1$ \eqref{eq:HGP U_1}, and thus the stabilizer group is preserved. The proof for $U_2$ follows suit upon swapping the sides of the tensor product.

The logical action of $U_1$ \eqref{eq:HGP U_1} can be obtained by examining its action on our canonical logical basis \eqref{eq:HGP G_X, G_Z}. For now, we focus on the action on the left logical qubits. The logical $\overline{Z}$ operators \eqref{eq:HGP G_Z} transform according to
\begin{align}\label{eq:G'_Z}
    G^{}_Z U^{-\transpose}_1 &= \big( g_1\sigma_1 \otimes \{ \mathbf{e}_j \} \;\big|\; \mathbf{0} \big)  \notag \\
    &= \big( v_1g_1 \otimes \{ \mathbf{e}_j \} \;\big|\; \mathbf{0} \big)  \notag \\
    &= (v_1\otimes\ident) \, G_Z   \notag \\
    &= V'_1 G^{}_Z \, .
\end{align}
Hence, we see that within each canonical column of logical qubits, the logical $\overline{Z}$ operators transform the same way as the classical codewords, according to $v$. The action on the logical $\overline{X}$ operators is 
\begin{align}
    G_X U_1 &= \big( \{ \mathbf{e}_i \}\sigma_1 \otimes g_2 \;\big|\; \mathbf{0} \big) = \big( \{ \mathbf{e}_{\sigma_1(i)} \} \otimes g_2 \;\big|\; \mathbf{0} \big) \equiv G'_X \, .
\end{align}
It is not so clear from the above equation what the relation is between $G'_X$ and $G^{}_X$, but we emphasize that on the logical level it is completely determined by \eqref{eq:G'_Z} due to commutation relations between $\overline{X}$s and $\overline{Z}$s. Denote $\overline{V} \in \mathbb{F}^{k^2\times k^2}_2$ as the logical action of $V$ after accounting for stabilizer degeneracy. The Pauli algebra (i.e. the symplectic condition) demands $\overline{V}'_1 \overline{V}^\transpose_1 = \ident$, which fixes $\overline{V}'_1 = \overline{V}^{-\transpose}_1$. The analysis for $U_2$ follows similarly but amongst the rows of logical qubits. 

Overall, the logical $\overline{X}$ actions are given by
\begin{subequations}\label{eq:HGP gadgets logical action}
\begin{align}
    \overline{V}_1 &= v_1^{-\transpose} \otimes \ident  \\
    \overline{V}_2 &= \ident \otimes v_2 \, .
\end{align}
\end{subequations}
Since $[\overline{V}_1, \overline{V}_2] = 0$, the group of inherited logical automorphism gates is $\mathcal{A}_1 \times \mathcal{A}_2$. We comment that although the group generated by \eqref{eq:HGP gadgets logical action} is basis-invariant, the particular structure of these logical gates on the logical qubits will depend on the presentation of the generator matrices of the classical input codes.

Note that $w \in \mathrm{GL}_k(\mathbb{F}_2)$ \eqref{eq:aut conditions} is in general not a permutation matrix, and so our $U_1$ \eqref{eq:HGP U_1} and $U_2$ \eqref{eq:HGP U_2} do not necessarily belong to $\mathrm{Aut}\big( \mathrm{HGP}(\mathcal{C}) \big)$. In the special instances where $w \in \mathrm{S}_k$ is a permutation matrix, then we do indeed have an exact automorphism of the HGP code; i.e. $W^{}_1,W'_1,W^{}_2,W'_2 \in \mathrm{S}_n$. These special instances are precisely the Tanner graph automorphisms of Def. \ref{defn:graph automorphism}. Nonetheless, in the general case, we can decompose an arbitrary $w \in \mathrm{GL}_k(\mathbb{F}_2)$ into elementary operations consisting of permutations and CNOT gates.

\begin{thm}[Number of inherited logical gates for HGP codes]
    Let $\mathrm{HGP}(h_1,h_2)$ be a HGP code, and let $\mathcal{A}_1$ and $\mathcal{A}_2$ be the automorphism groups of the classical input codes. The number of distinct automorphism gadgets, according to \eqref{eq:HGP gadgets} and \eqref{eq:HGP gadgets logical action}, is $\abs{\mathcal{A}_1} \abs{\mathcal{A}_2}$.
\end{thm}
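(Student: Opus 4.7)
The plan is to recast the counting problem as checking injectivity of a group homomorphism $\Phi : \mathcal{A}_1 \times \mathcal{A}_2 \to \mathrm{GL}_{k_1 k_2}(\mathbb{F}_2)$ built from the two classical logical automorphism groups, and then to verify that over $\mathbb{F}_2$ the Kronecker structure makes $\Phi$ injective.

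First, I would consolidate what the derivations in \eqref{eq:H_XU_1}--\eqref{eq:G'_Z} already establish: each $\sigma_i \in \mathrm{Aut}(\mathcal{C}_i)$ induces a gadget $U_i$ that preserves the stabilizer group of $\mathrm{HGP}(h_1,h_2)$, with logical action on the left logical qubits equal to $\overline{V}_i$ from \eqref{eq:HGP gadgets logical action}. Because $U_1$ and $U_2$ act on disjoint tensor factors, the combined gadget has logical action $\overline{V}_1 \overline{V}_2 = v_1^{-\transpose} \otimes v_2$. Combined with the group-homomorphism property of each $\phi_i$ from Section~\ref{sec:code-automorphisms}, the assignment $\Phi : (v_1, v_2) \mapsto v_1^{-\transpose} \otimes v_2$ is a well-defined group homomorphism whose image is exactly the set of logical actions produced by the gadgets \eqref{eq:HGP gadgets}. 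The theorem therefore reduces to showing that $\Phi$ is injective.

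For the injectivity, I would verify that $\ker \Phi$ is trivial. Suppose $v_1^{-\transpose} \otimes v_2 = \ident_{k_1 k_2}$. Viewing the Kronecker product blockwise with $(i,j)$-block equal to $(v_1^{-\transpose})_{ij}\, v_2$, the off-diagonal blocks must vanish; since $v_2$ is invertible and hence nonzero, this forces $(v_1^{-\transpose})_{ij} = 0$ for $i \neq j$. The diagonal-block equations $(v_1^{-\transpose})_{ii}\, v_2 = \ident_{k_2}$ then force $(v_1^{-\transpose})_{ii} = 1$ (the unique nonzero scalar over $\mathbb{F}_2$) and $v_2 = \ident_{k_2}$, whence $v_1 = \ident_{k_1}$. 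This uniqueness of Kronecker decomposition is precisely where working over $\mathbb{F}_2$ is essential: over a larger field the scalar ambiguity $A \otimes B = (\lambda A) \otimes (\lambda^{-1} B)$ would be nontrivial and the argument would need refinement.

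The main obstacle, more of a care than a difficulty, is making sure ``distinct gadget'' really means distinct logical Clifford action and not merely distinct physical unitary: stabilizer degeneracy could in principle let two pairs $(v_1, v_2) \neq (v_1', v_2')$ implement the same logical operation. This is excluded by the canonical left-logical basis in \eqref{eq:HGP G_X, G_Z} with classical generator matrices chosen in standard form, so that each left logical qubit carries a unique grid label $(i, j)$ and each canonical logical Pauli has a fixed non-stabilizer representative. With this basis fixed, the matrix $v_1^{-\transpose} \otimes v_2$ is literally the logical matrix element, and the injectivity of $\Phi$ gives $\abs{\mathrm{im}\,\Phi} = \abs{\mathcal{A}_1}\abs{\mathcal{A}_2}$ distinct logical gadgets, proving the theorem.
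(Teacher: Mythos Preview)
Your proof is correct and follows the same approach as the paper: the derivations preceding the theorem establish that the combined logical action on the left logical qubits is $v_1^{-\transpose} \otimes v_2$ and that $[\overline{V}_1, \overline{V}_2] = 0$, from which the paper simply asserts the group of inherited gates is $\mathcal{A}_1 \times \mathcal{A}_2$ without further argument. You make the implicit injectivity step explicit via the Kronecker-product uniqueness argument over $\mathbb{F}_2$, which the paper does not spell out.
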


\begin{thm}[Lower bound on automorphism group of HGP codes]
    Let $\mathrm{HGP}(h_1,h_2)$ be a HGP code, and let $\mathcal{T}_1$ and $\mathcal{T}_2$ be the Tanner graph automorphism groups of $h_1$ and $h_2$ respectively. Then we have $\mathcal{T}_1 \times \mathcal{T}_2 \subset \mathrm{Aut}\big(\mathrm{HGP}(h_1,h_2)\big)$.
\end{thm}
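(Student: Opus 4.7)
The plan is to specialize the HGP gadget construction from \eqref{eq:HGP gadgets} to the case where the input code automorphisms are Tanner graph automorphisms. The key observation is that when $w_1$ and $w_2$ in the automorphism conditions \eqref{eq:aut conditions} are themselves permutation matrices, the resulting Clifford gadgets $U_1, U_2$ collapse to honest permutations of the HGP data qubits, hence belong to $\mathrm{Aut}\big(\mathrm{HGP}(h_1,h_2)\big)$ without any ``$+$ circuit'' contribution.

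Concretely, I would fix $\sigma_1 \in \mathcal{T}_1$ together with its witnessing permutation $w_1$ satisfying $h_1 \sigma_1 = w_1 h_1$, and form $U_1 = (\sigma_1 \otimes \ident) \oplus (w_1 \otimes \ident)$ as in \eqref{eq:HGP U_1}. Tensor products and direct sums of permutation matrices are permutations, so $U_1 \in \mathrm{S}_{n_1 n_2 + m_1 m_2}$. For $\sigma_2 \in \mathcal{T}_2$ with $h_2 \sigma_2 = w_2 h_2$, the extra identity $w_2^{-\transpose} = w_2$ (valid because $w_2$ is a permutation) collapses \eqref{eq:HGP U_2} to $U_2 = (\ident \otimes \sigma_2) \oplus (\ident \otimes w_2)$, again a permutation. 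The stabilizer-preservation calculations \eqref{eq:H_XU_1} and \eqref{eq:H_ZU_1} (with symmetric analogues for $U_2$) already establish that $U_1, U_2$ normalize the HGP stabilizer group, so no additional work is needed on that front.

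To finish, I would verify the direct-product group structure. The assignments $\sigma_i \mapsto U_i$ are homomorphisms because composing two Tanner graph automorphisms yields $h_i(\sigma_i \sigma_i') = w_i w_i' h_i$, so the $w_i$ factors multiply in the same order. The operators $U_1$ and $U_2$ commute because they act on opposite tensor factors within each direct summand, giving $U_1 U_2 = (\sigma_1 \otimes \sigma_2) \oplus (w_1 \otimes w_2) = U_2 U_1$, which produces a homomorphism $\mathcal{T}_1 \times \mathcal{T}_2 \to \mathrm{Aut}\big(\mathrm{HGP}(h_1,h_2)\big)$. Injectivity is immediate: if $U_1 U_2$ is the identity permutation then $\sigma_1 \otimes \sigma_2 = \ident$ forces both $\sigma_1$ and $\sigma_2$ to be trivial. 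The only mild bookkeeping concern is the interpretation of $\mathcal{T}_i$; identifying it with the subgroup of $\mathrm{Aut}(\mathcal{C}_i)$ consisting of permutations that admit a permutation-matrix witness $w_i$ makes the argument entirely mechanical, and I do not anticipate any genuine obstacle beyond this convention.
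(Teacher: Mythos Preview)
Your proposal is correct and follows essentially the same approach as the paper: specialize the gadgets \eqref{eq:HGP gadgets} to the case where $w_1,w_2$ are permutations, observe that $U_1,U_2$ then become honest qubit permutations, and invoke the stabilizer-preservation computations \eqref{eq:H_XU_1}--\eqref{eq:H_ZU_1} to conclude they are code automorphisms. The paper states the theorem as an immediate consequence of this discussion without a separate proof; your explicit verification of the homomorphism property, commutativity of $U_1$ and $U_2$, and injectivity goes slightly beyond what the paper spells out but is all straightforward and correct.
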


So far, we have focused on automorphism gadgets that act as physical permutations on the left sector of data qubits. A complementary pair of automorphism gadgets can be derived using the transpose of the input codes, which corresponds to physical permutations that satisfy the automorphism condition \eqref{eq:dual aut condition} on the transposed parity-check matrices. These ``right-sector'' automorphism gadgets generically act as a permutation on the right sector and a circuit on the left sector. For brevity, we will only focus on the left-sector automorphism gadgets, as we typically have $k_1k_2 \gg k^\transpose_1k^\transpose_2$ in \eqref{eq:HGP code parameters}; however see Section \ref{sec:K4 HGP copy-cup} for an example where we use the right-sector automorphism gadgets. If we restrict to the subgroup of Tanner graph automorphisms in the input codes, then the left-sector and right-sector automorphism gadgets are isomorphic.

\subsection{Homological product codes: quantum $\times$ classical}

Having shown the expressions for automorphism gate inheritance in the hypergraph product, we now inductively derive analogous expressions for higher-fold homological products, comprised of a quantum CSS code and another classical code. The resulting homological product code can then be subsequently tensored again with another classical code to produce another homological product code and so forth.

Suppose the input CSS code possesses a logical gadget of the form $H_X U = W H_X$ and $H^{}_Z U^{-\transpose} = W' H_Z$, and the input classical code possesses an automorphism of the form $h\sigma = wh$. Then their homological product code will inherit the logical gadgets given by the actions of
\begin{subequations}\label{eq:3D HGP gadgets}
\begin{align}
    \tilde{U}_{\rm c} &\equiv \big(\ident \otimes \sigma\big) \oplus \big(\ident \otimes w^{-\transpose}\big)  \label{eq:3D HGP U_C}  \\
    \tilde{U}_{\rm Q} &\equiv \big( U \otimes \ident \big) \oplus \big(W \otimes \ident \big)  \label{eq:3D HGP U_Q}
\end{align}
\end{subequations}
on the Pauli $X$ operators, where the subscripts Q and c label the logical gadgets inherited from those of the input quantum and classical codes respectively. We will now show that the above physical transformations \eqref{eq:3D HGP gadgets} preserve the stabilizer group given by \eqref{eq:3D HGP H_X,H_Z}. Starting with $\tilde{U}_{\rm c}$ \eqref{eq:3D HGP U_C}, its action on $\tilde{H}_X$ \eqref{eq:3D HGP H_X} is
\begin{align}\label{eq:3D HGP H_XU_C}
    \tilde{H}_X \tilde{U}_{\rm c} &= \big(\, H_X \otimes \sigma \;\big|\; \ident \otimes h^\transpose w^{-\transpose} \,\big)  \notag \\
    &= \big(\, H_X \otimes \sigma \;\big|\; \ident \otimes \sigma h^\transpose \,\big)  \notag \\
    &= (\ident \otimes \sigma) \, \tilde{H}_X  \notag \\
    &= \tilde{W}_{\rm c} \tilde{H}_X \, ,
\end{align}
where in the last line we have defined $\tilde{W}_{\rm c} \equiv \ident \otimes \sigma$. The action on $\tilde{H}_Z$ \eqref{eq:3D HGP H_Z} is
\begin{align}\label{eq:3D HGP H_ZU_C}
    \tilde{H}^{}_Z \tilde{U}^{-\transpose}_{\rm c} &= \left(\begin{array}{c|c}
        H_Z \otimes \sigma \;\; & \mathbf{0}  \\
        \ident \otimes h\sigma & \; H^\transpose_X \otimes w
    \end{array}\right)  \notag \\
    &= \left(\begin{array}{c|c}
        H_Z \otimes \sigma \;\; & \mathbf{0}  \\
        \ident \otimes wh & \; H^\transpose_X \otimes w
    \end{array}\right)  \notag \\
    &= (\ident \otimes \sigma \;|\; \ident \otimes w ) \, \tilde{H}_Z  \notag \\
    &= \tilde{W}'_{\rm c} \tilde{H}_Z \, ,
\end{align}
where in the last line we have defined $\tilde{W}'_{\rm c} \equiv (\ident \otimes \sigma \;|\; \ident \otimes w )$. The combination of \eqref{eq:3D HGP H_XU_C} and \eqref{eq:3D HGP H_ZU_C} demonstrate that $\tilde{U}_{\rm c}$ \eqref{eq:3D HGP U_C} preserves the codespace.

Now we move onto $\tilde{U}_{\rm Q}$ \eqref{eq:3D HGP U_Q}. Its action on $\tilde{H}_X$ \eqref{eq:3D HGP H_X} is
\begin{align}\label{eq:3D HGP H_XU_Q}
    \tilde{H}_X \tilde{U}_{\rm Q} &= \big(\, H_X U \otimes \ident \;\big|\; W \otimes h^\transpose \,\big)  \notag \\
    &= \big(\, W H_X \otimes \ident \;\big|\; W \otimes h^\transpose \,\big)  \notag \\
    &= \big(W \otimes \ident \big) \, \tilde{H}_X  \notag \\
    &= \tilde{W}_{\rm Q} \tilde{H}_X
\end{align}
where in the last line we have defined $\tilde{W}_{\rm Q} \equiv W \otimes \ident$. The action on $\tilde{H}_Z$ \eqref{eq:3D HGP H_Z} is
\begin{align}\label{eq:3D HGP H_ZU_Q}
    \tilde{H}^{}_Z \tilde{U}^{-\transpose}_{\rm Q} &= \left(\begin{array}{c|c}
        H_Z U^{-\transpose} \otimes \ident \;\; & \mathbf{0}  \\
        U^{-\transpose} \otimes h & \; H^\transpose_X W^{-\transpose} \otimes \ident
    \end{array}\right)  \notag \\
    &= \left(\begin{array}{c|c}
        W' H_Z \otimes \ident \;\; & \mathbf{0}  \\
        U^{-\transpose} \otimes h & \; U^{-\transpose} H^\transpose_X \otimes \ident
    \end{array}\right)  \notag \\
    &= \big(\, W' \otimes \ident \;\big|\; U^{-\transpose} \otimes \ident \,\big) \, \tilde{H}_Z  \notag \\
    &= \tilde{W}'_{\rm Q} \tilde{H}^{}_Z \, ,
\end{align}
where in the last line we have defined $\tilde{W}'_{\rm Q} \equiv \big(\, W' \otimes \ident \;\big|\; U^{-\transpose} \otimes \ident \,\big)$. The combination of \eqref{eq:3D HGP H_XU_Q} and \eqref{eq:3D HGP H_ZU_Q} show that $\tilde{U}_{\rm Q}$ \eqref{eq:3D HGP U_Q} preserves the codespace.

Let us summarize the structure of all the left-action $\tilde{W}$s we have derived thus far:
\begin{subequations}
\begin{align}
    \tilde{W}_{\rm c} &= \ident \otimes \sigma  \\
    \tilde{W}'_{\rm c} &= (\ident \otimes \sigma \;|\; \ident \otimes w )  \\
    \tilde{W}_{\rm Q} &= W \otimes \ident  \\
    \tilde{W}'_{\rm Q} &= \big(\, W' \otimes \ident \;\big|\; U^{-\transpose} \otimes \ident \,\big)
\end{align}
\end{subequations}
Examining the above equations, we observe that $\tilde{W}_{\rm c}$ is a Kronecker product of permutation matrices and so is a permutation matrix itself. Additionally, if $w, W, U$ are constant-depth circuits, then so are $\tilde{W}'_{\rm c}, \tilde{W}_{\rm Q}, \tilde{W}'_{\rm Q}$ since they all involve Kronecker products of the former with the identity operator.
In particular, if our input quantum CSS code is a HGP code from the previous section, then we can use either $W = W_1 = w_0 \otimes \ident$ or $W = W_2 = \ident \otimes \sigma_0$ (subscript 0 to not be confused with the new $w$ and $\sigma$ defined above). If the input classical code has a low-depth implementation for some $w$, then the inherited automorphism gadget in the resulting homological product code can be implemented in the same depth as well. Furthermore, Tanner graph automorphisms of the input classical code can combine with automorphisms of the input quantum code to produce exact automorphisms of the resulting homological product code.

Analogous to the HGP scenario, we can derive the corresponding logical $\overline{X}$ actions by examining the action of \eqref{eq:3D HGP gadgets} on our canonical logical basis \eqref{eq:3D HGP G_X,G_Z}:
\begin{subequations}\label{eq:3D HGP gadgets logical actions}
\begin{align}
    \tilde{\overline{V}}_C &= \ident \otimes v \\
    \tilde{\overline{V}}_Q &= \overline{V} \otimes \ident \, .
\end{align}
\end{subequations}
Since $\tilde{\overline{V}}_C$ and $\tilde{\overline{V}}_Q$ commute, the group of logical gates is a direct product of those of the input quantum and classical codes.

\subsection{Homological product codes: quantum $\times$ quantum}\label{sec:auto-gadget-quantum-times-quantum-hom-prod-code}

We can just as easily extend the analysis to the case where we take the homological product of two CSS codes. As noted in~\eqref{eq:4D HP complex}, the resulting quantum code can be represented as a 3-term chain complex out of the larger 5-term chain complex. One can also obtain a 5-term chain complex through the repeated homological product, i.e. classical $\times$ classical $\times$ classical $\times$ classical; however, that construction does not yield all instances achievable by quantum $\times$ quantum. In particular, it may be the case that the two quantum codes are not themselves product codes.

Suppose we are given two quantum CSS codes $\mathcal{C}, \mathcal{C}'$ defined by parity check matrices $H_X, H_Z$ and $H'_X, H_Z'$, respectively, which possess logical gadgets of the form
\begin{subequations}\label{eq:4D HGP conditions}
    \begin{align}
        H_XU_1 &= W_1H_X \\
        H_ZU_1^{-\transpose} &= \hat{W}_1H_Z \\
        H_X' U_2 &= W_2H_X' \\
        H_Z' U_2^{-\transpose} &= \hat{W}_2H_Z'
    \end{align}
\end{subequations}
Then the quantum code resulting from the homological product of $\mathcal{C}$ and $\mathcal{C}'$ will inherit logical automorphism gadgets defined as
\begin{subequations}\label{eq:4D HGP gadgets}
\begin{align}
    \tilde{U}_{1} &\equiv \big(\hat{W}_1^{-\transpose} \otimes \ident\big) \oplus \big(U_1 \otimes \ident\big) \oplus \big(W_1 \otimes \ident \big)  \label{eq:4D HGP U_Q1}  \\
    \tilde{U}_{2} &\equiv \big(\ident \otimes W_2\big) \oplus \big(\ident \otimes U_2 \big) \oplus \big(\ident \otimes \hat{W}_2^{-\transpose}  \big)  \label{eq:4D HGP U_Q2}.
\end{align}
\end{subequations}
Once again, we will show that these unitaries preserve the stabilizer group given by \eqref{eq:4D HP H_X,H_Z}:
\begin{subequations}
\begin{align}
    \tilde{H}_X\tilde{U}_1 &= \left(\begin{array}{c|c|c}
        H_Z^\transpose\hat{W}_1^{-\transpose} \otimes \ident \;\; & \;U_1 \otimes H'_X \;\;& \mathbf{0}  \\
        \mathbf{0} \;\; & H_XU_1 \otimes \ident \;\; & \; W_1 \otimes H_Z^{\prime\transpose}
    \end{array}\right) \,  \notag  \\
     &= \left(\begin{array}{c|c|c}
        U_1H_Z^\transpose \otimes \ident \;\; & \;U_1 \otimes H'_X \;\;& \mathbf{0}  \\
        \mathbf{0} \;\; & W_1H_X \otimes \ident \;\; & \; W_1 \otimes H_Z^{\prime\transpose}
    \end{array}\right) \,  \notag \\
    &= \big(\, U \otimes \ident \;\big|\; W \otimes \ident \,\big) \, \tilde{H}_X  \notag \\
    &= \tilde{W}_1 \tilde{H}^{}_X \, \label{eq:4D HGP H_XU_Q1} ,  \\
    \tilde{H}_Z \tilde{U}_{1}^{-\transpose} &= \left(\begin{array}{c|c|c}
        \hat{W}_1 \otimes H_X^{\prime\transpose} \;\; & H_ZU_1^{-\transpose} \otimes \ident \;\;& \mathbf{0}  \\
        \mathbf{0} \;\; & \;U_1^{-\transpose} \otimes H'_Z \;\; & \; H_X^\transpose W_1^{-\transpose} \otimes \ident
    \end{array}\right) \,   \notag \\
    &= \left(\begin{array}{c|c|c}
        \hat{W}_1 \otimes H_X^{\prime\transpose} \;\; & \hat{W}_1 H_Z \otimes \ident \;\;& \mathbf{0}  \\
        \mathbf{0} \;\; & \; U_1^{-\transpose} \otimes H'_Z \;\; & \; U_1^{-\transpose} H_X^\transpose \otimes \ident
    \end{array}\right) \notag \\
    &= \big(\, \hat{W}_1 \otimes \ident \;\big|\; U_1^{-\transpose} \otimes \ident \,\big) \, \tilde{H}_Z  \notag \\
    &= \tilde{W}'_1 \tilde{H}^{}_Z \, , \label{eq:4D HGP H_ZU_Q1} 
\end{align}
\end{subequations}
The combination of \eqref{eq:4D HGP H_XU_Q1} and \eqref{eq:4D HGP H_ZU_Q1} shows that $\tilde{U}_1$ \eqref{eq:4D HGP U_Q1} preserves the codespace. We now look at $\tilde{U}_{2}$ \eqref{eq:4D HGP U_Q2}, the action of which yields
\begin{subequations}
\begin{align}
    \tilde{H}_X \tilde{U}_{2} &= \left(\begin{array}{c|c|c}
        H_Z^\transpose \otimes W_2 \;\; & \; \ident \otimes H'_X U_2 \;\;& \mathbf{0}  \\
        \mathbf{0} \;\; & H_X \otimes U_2 \;\; & \; \ident \otimes H_Z^{\prime\transpose} \hat{W}_2^{-\transpose}
    \end{array}\right) \,  \notag  \\
     &= \left(\begin{array}{c|c|c}
        H_Z^\transpose \otimes W_2 \;\; & \; \ident \otimes W_2 H'_X \;\;& \mathbf{0}  \\
        \mathbf{0} \;\; & H_X \otimes U_2 \;\; & \; \ident \otimes U_2 H_Z^{\prime\transpose}
    \end{array}\right) \,  \notag \\
    &= \big(\, \ident \otimes W_2 \;\big|\; \ident \otimes U_2 \,\big) \, \tilde{H}_X  \notag \\
    &= \tilde{W}_{2} \tilde{H}^{}_X \, ,\label{eq:4D HGP H_XU_Q2}  \\
    \tilde{H}_Z \tilde{U}_{2}^{-\transpose} &= \left(\begin{array}{c|c|c}
        \ident \otimes H_X^{\prime\transpose} W_2^{-\transpose} \;\; & H_Z \otimes U_2^{-\transpose} \;\;& \mathbf{0}  \\
        \mathbf{0} \;\; & \;\ident \otimes H'_ZU_2^{-\transpose} \;\; & \; H_X^\transpose  \otimes \hat{W}_2
    \end{array}\right) \,   \notag \\
    &= \left(\begin{array}{c|c|c}
        \ident \otimes U_2^{-\transpose} H_X^{\prime\transpose} \;\; & H_Z \otimes U_2^{-\transpose} \;\;& \mathbf{0}  \\
        \mathbf{0} \;\; & \;\ident \otimes \hat{W}_2 H'_Z \;\; & \; H_X^\transpose  \otimes \hat{W}_2
    \end{array}\right) \notag \\
    &= \big(\, \ident \otimes U_2^{-\transpose} \;\big|\; \ident \otimes \hat{W}_2 \,\big) \, \tilde{H}_Z  \notag \\
    &= \tilde{W}'_{2} \tilde{H}^{}_Z \, ,\label{eq:4D HGP H_ZU_Q2}
\end{align}
\end{subequations}
And so with \eqref{eq:4D HGP H_XU_Q2} and \eqref{eq:4D HGP H_ZU_Q2} we have shown that $\tilde{U}_{Q'}$ \eqref{eq:4D HGP U_Q2} preserves the codespace. We can make similar statements about the resulting $\tilde{W}$s as we did in the quantum $\times$ classical case, namely that they preserve circuit depth and are also permutations if the input gadgets are Tanner graph automorphisms.

The action of the automorphism gadgets \eqref{eq:4D HGP gadgets} on the canonical logical basis of the middle sector \eqref{eq:4D HP G_X,G_Z} follows an analogous tensor product structure compared to both the HGP \eqref{eq:HGP gadgets logical action} and the (quantum $\times$ classical) homological product \eqref{eq:3D HGP gadgets logical actions} cases:
\begin{subequations}\label{eq:4D HGP gadgets logical actions}
\begin{align}
    \tilde{\overline{V}}_1 &= \overline{V}_1 \otimes \ident \\
    \tilde{\overline{V}}_2 &= \ident \otimes \overline{V}_2 \, .
\end{align}
\end{subequations}
As a reminder, both $\tilde{\overline{V}}_1, \tilde{\overline{V}}_2 \in \tilde{\mathcal{A}}$ are elements of the logical automorphism gadget group.


\section{Fault tolerance of automorphism gadgets}
\label{sec:fault tolerance}

In this section, we analyze the fault tolerance of our logical automorphism gadgets constructed in Section \ref{sec:aut gadgets}. Because we have described three different types of constructions for homological product codes, i.e. classical $\times$ classical, quantum $\times$ classical, and quantum $\times$ quantum, we divide this section into three different subsections that analyze the fault tolerance of the automorphism gadgets for these product codes. For a vector $\mathbf{x} \in \mathbb{F}^{n}_2$, let $\mathbf{x}_L$ and $\mathbf{x}_R$ denote the restrictions of $\mathbf{x}$ to the physical qubits in the left and right sectors respectively. For the case of the quantum $\times$ quantum homological product code, we let $\mathbf{x}_M$ to denote the restriction to the physical qubits in the middle sector. For example, the middle physical qubit sector of a quantum $\times$ quantum homological product code corresponds to the qubit subspace $(\mathrm{Q}, \mathrm{Q}')$ and can be laid out in a grid fashion that is represented by a matrix in $\mathbb{F}_2^{n^{}_{\rm Q} \times n'_{\rm Q}}$. Refer to Figure~\ref{fig:schematic-quantum-quantum} for a diagrammatic representation of these grids and matrices.

For our noise model of interest, we will assume that all single-qubit gates (including idling), multi-qubit gates and measurements are subject to failure. On the other hand, we will assume that physical permutations do \emph{not} spread errors. This last assumption is based on the observation that for architectures with movable qubits, the SWAP gates can be performed by physical motion rather than 2-qubit gates that can cause correlated errors. Often, we can simply relabel the qubits in software rather than perform the physical permutations themselves, as has been typically done in recent experiments~\cite{burton2024, hong2024, reichardt2024}. The physical permutations only need to be performed when performing other fault-tolerant gadgets such as an interblock transversal gate or an intrablock fold-transversal gate; see Sec. \ref{sec:alleviating correlated errors} for more details. Combining the above assumptions on the noise with the structure of logical operators in homological product codes, we will show that our automorphism gadgets possess a form of inherent fault tolerance with respect to circuit-level noise.

\subsection{Fault tolerance for hypergraph product codes}
\label{sec:FT 2D HGP}

For simplicity, we will only analyze logical qubits living in the left sector of physical qubits, as these are the only ones we care about for computation.  We will leverage the following results in the literature regarding the structure of logical operators in hypergraph product codes. Let $|\cdot|_{\rm L}$ denote the weight on the left sector $\mathrm{\Lambda}_{\rm L}$ of data qubits.

\begin{lem}[Hypergraph product logical sector weight; Prop. 2 of \cite{ReShape_decoder}]
\label{lem:2D HGP logical support}
    Suppose we have a HGP code with parity-check matrices $H_X, H_Z$ of the form \eqref{eq:HGP H_X, H_Z} from the hypergraph product of $[n_1,k_1,d_1]$ and $[n_2,k_2,d_2]$ classical linear codes. Let $\mathbf{x}_{\rm L} \in \ker H_Z \backslash \operatorname{rs} H_X$ denote the support of a nontrivial left-sector logical $\overline{X}$ operator, and let $\mathbf{z}_{\rm L} \in \ker H_Z \backslash \operatorname{rs} H_X$ denote the support of a nontrivial left-sector logical $\overline{Z}$ operator. Let $\hat{\mathbf{x}}_{\rm L}$ and $\hat{\mathbf{z}}_{\rm L}$ denote their canonical representations according to \eqref{eq:HGP G_X, G_Z}. Then we have
    \begin{align}\label{eq:HGP logical sector weight}
        \abs{\mathbf{x}_{\rm L}}_{\rm L} \geq d_2 \quad,\quad \abs{\mathbf{z}_{\rm L}}_{\rm L} \geq d_1 \, .
    \end{align}
\end{lem}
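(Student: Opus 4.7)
The plan is to leverage the Kronecker structure of the HGP parity-check matrices \eqref{eq:HGP H_X, H_Z} so that each bound reduces to a codeword weight in one of the input classical codes. By the symmetry of the hypergraph product under swapping $h_1 \leftrightarrow h_2$, I will write out the $\overline{Z}$ case giving $|\mathbf{z}_L|_L \geq d_1$ in detail, and the $\overline{X}$ bound follows from a transposed version.

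First I would turn the commutation condition $H_X \mathbf{z}_L = 0$ into a matrix identity. Writing the left- and right-sector restrictions of the support vector $\mathbf{z}_L$ as matrices $Z \in \mathbb{F}_2^{n_1 \times n_2}$ and $Z' \in \mathbb{F}_2^{m_1 \times m_2}$ under the reshaping $\mathbb{M}[\cdot]$, the two blocks of $H_X = \big(h_1 \otimes \ident \;\big|\; \ident \otimes h_2^\transpose\big)$ collapse into
\begin{equation}\label{eq:proof HGP constraint}
h_1 Z = Z' h_2.
\end{equation}

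Next I would use the nontriviality of $\mathbf{z}_L$ as a logical to pick out a witness codeword of $\mathcal{C}_2 = \ker h_2$. Since $\mathbf{z}_L$ is a nontrivial left-sector logical $\overline{Z}$, it must have odd symplectic product with some canonical left-sector $\overline{X}$ basis element $\mathbf{e}_j \otimes g_{2, i'}$ from \eqref{eq:HGP G_X, G_Z}. Evaluating this product gives $(Z\mathbf{t})_j = 1$ with $\mathbf{t} \coloneqq g_{2, i'}^\transpose$, so $Z \mathbf{t} \neq 0$. Because every row of $g_2$ is a codeword of $\mathcal{C}_2$, we have $\mathbf{t} \in \ker h_2$, whereupon \eqref{eq:proof HGP constraint} implies
\[ h_1 (Z \mathbf{t}) = Z' (h_2 \mathbf{t}) = 0, \]
i.e.\ $Z \mathbf{t}$ is a nonzero codeword of $\mathcal{C}_1$ and therefore has weight at least $d_1$.

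Finally, a zero row of $Z$ forces the corresponding entry of $Z \mathbf{t}$ to vanish, so the number of nonzero rows of $Z$ is at least $|Z \mathbf{t}| \geq d_1$. The left-sector Hamming weight $|\mathbf{z}_L|_L$ equals the total number of nonzero entries of $Z$, which is at least its number of nonzero rows, so $|\mathbf{z}_L|_L \geq d_1$. For $\overline{X}$, transposing $H_Z \mathbf{x}_L = 0$ yields $h_2 X^\transpose = (X')^\transpose h_1$, placing $X^\transpose$ in the role of $Z$ with $h_1 \leftrightarrow h_2$ swapped; using a row of $g_1$ as the witness codeword then bounds the number of nonzero columns of $X$ below by $d_2$ and gives $|\mathbf{x}_L|_L \geq d_2$.

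The main subtlety lies in the witness step: the condition of being a nontrivial logical (rather than merely a non-stabilizer element of the kernel) must be converted into a concrete $\mathbf{t} \in \mathcal{C}_2$ with $Z \mathbf{t} \neq 0$. This conversion leans on the specific shape of the canonical basis \eqref{eq:HGP G_X, G_Z}, in which each left-sector $\overline{X}$ is a product of a standard basis vector (selecting a row of the left sector) and a codeword of $\mathcal{C}_2$; the codeword then functions as a probe that detects nontrivial logical action through a single pivot qubit. Without this product form, there would be no guarantee that a codeword of $\mathcal{C}_2$ can directly witness the logical content of $\mathbf{z}_L$.
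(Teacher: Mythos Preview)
The paper does not prove this lemma itself; it is quoted as Proposition~2 of \cite{ReShape_decoder} and used as a black box. Your argument is correct and is essentially the standard proof: reshape the left/right sectors into matrices, turn the commutation condition $H_X\mathbf{z}_{\rm L}=0$ into $h_1Z=Z'h_2$, then use anticommutation with a canonical left-sector $\overline{X}$ to extract a witness $\mathbf{t}\in\ker h_2$ with $Z\mathbf{t}\neq 0$, forcing $Z\mathbf{t}\in\mathcal{C}_1$ and hence $\abs{Z\mathbf{t}}\ge d_1$ nonzero rows in $Z$. The only point worth tightening in writing is the justification that a nontrivial \emph{left-sector} logical $\overline{Z}$ must anticommute with at least one row of $G_{X,{\rm L}}$; this follows from \eqref{eq:HGP G_X, G_Z} being a symplectic basis for the left logical qubits, which the paper states explicitly just after those equations.
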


A consequence of Lemma \ref{lem:2D HGP logical support} is that the weight of any logical operator on the left sector cannot be decreased upon appending stabilizers. From the HGP code parameters \eqref{eq:HGP code parameters} as well as the description of the HGP automorphism gadgets \eqref{eq:HGP gadgets}, we note that we will sometimes want to ignore the right-sector logical qubits and treat them as gauge qubits which hold no logical information. When we ignore the right-sector logical qubits, our original HGP code, which was a CSS stabilizer code, will now become a CSS \emph{subsystem} code.

\begin{defn}[Left-sector hypergraph product code]
\label{defn:left-sector HGP code}
    Given a HGP code with canonical logical operators \eqref{eq:HGP G_X, G_Z}, the \emph{left-sector} HGP code is the CSS subsystem code obtained upon designating the right-sector logical qubits as gauge qubits.
\end{defn}

Since we are simply ignoring the right-sector logical qubits, the number of data qubits remains the same. The number of logical qubits, however, now decreases to $k=k_1k_2$ (we have lost the $k^\transpose_1k^\transpose_2$ right-sector logical qubits). It is not hard to see that Lemma \ref{lem:2D HGP logical support} remains unchanged upon ignoring right-sector logical qubits because their canonical operators do not have any support in the left-sector.

\begin{lem}[Extension of Lemma \ref{lem:2D HGP logical support}]
\label{lem:2D HGP logical support 2}
    For the case where the classical input codes have rank-deficient parity-check matrices, Lemma \ref{lem:2D HGP logical support} applies to all logical qubits in the left-sector HGP code given by Definition \ref{defn:left-sector HGP code}.
\end{lem}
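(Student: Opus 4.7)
The plan is to reduce the extension to the original Lemma \ref{lem:2D HGP logical support} by showing that the additional gauge freedom introduced by promoting right-sector logicals to gauge operators does not allow one to reduce the left-sector weight any further than stabilizer equivalence already permits. The key insight is that when right-sector logical operators are written in their canonical form (analogous to \eqref{eq:HGP G_X, G_Z} but with the roles of the classical codes replaced by their transpose codes), they are supported entirely on the right sector $\mathrm{\Lambda}_{\rm R}$ and vanish on the left sector $\mathrm{\Lambda}_{\rm L}$.

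My first step would be to write down these canonical right-sector logicals explicitly and verify that they are supported only on $\mathrm{\Lambda}_{\rm R}$. This is a direct consequence of the symmetric structure of the HGP: the right-sector logical $\overline{X}$ and $\overline{Z}$ operators are built from the generator matrices of the transpose codes and unit vectors in $\mathrm{\Lambda}_{\rm R}$, mirroring the left-sector construction but with the two sectors swapped.

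Next, I would show that any gauge operator $\mathbf{g}$ of the subsystem code of a given Pauli type decomposes as $\mathbf{g} = \mathbf{s} + \mathbf{l}^{\rm R}$, where $\mathbf{s}$ lies in the original stabilizer group and $\mathbf{l}^{\rm R}$ is a combination of the canonical right-sector logicals. Since $(\mathbf{l}^{\rm R})_{\rm L} = 0$, multiplying any logical representative $\mathbf{x}$ by such a $\mathbf{g}$ changes its left-sector restriction to $(\mathbf{x} + \mathbf{s})_{\rm L}$, identical to the effect of multiplication by the pure stabilizer $\mathbf{s}$ alone. Minimizing $|\mathbf{x}'|_{\rm L}$ over gauge equivalence therefore collapses to minimizing over stabilizer equivalence, and Lemma \ref{lem:2D HGP logical support} immediately supplies the desired lower bounds $d_2$ and $d_1$.

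The only mild bookkeeping subtlety, which I do not expect to be a serious obstacle, will be to carefully argue that any logical representative of the subsystem code can be reduced to a pure left-sector logical of the original stabilizer code before applying the bound, so that Lemma \ref{lem:2D HGP logical support} applies verbatim. This follows at once from absorbing any right-sector component of a representative into the canonical right-sector gauge operators, an operation that by construction does not touch $\mathrm{\Lambda}_{\rm L}$. Overall, I expect the extension to be an essentially formal consequence of the orthogonality between the canonical left-sector and right-sector logical bases rather than requiring any new weight-counting calculations.
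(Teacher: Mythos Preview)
Your proposal is correct and essentially identical to the paper's proof: both write the canonical right-sector logical basis explicitly (supported entirely on $\mathrm{\Lambda}_{\rm R}$), decompose an arbitrary dressed logical as $\hat{\mathbf{x}}_{\rm L} + \mathbf{s}_X + \hat{\mathbf{x}}_{\rm R}$, observe that $\abs{\hat{\mathbf{x}}_{\rm R}}_{\rm L} = 0$, and conclude $\abs{\mathbf{x}}_{\rm L} = \abs{\hat{\mathbf{x}}_{\rm L} + \mathbf{s}_X}_{\rm L} \geq d_2$ (resp.\ $d_1$) by Lemma~\ref{lem:2D HGP logical support}.
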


\begin{proof}
    Similar to the left-sector logical qubits \eqref{eq:HGP G_X, G_Z}, a canonical Pauli basis for the right-sector logical qubits can be written as
    \begin{subequations}
    \begin{align}
        G_{Z,{\rm R}} &= \big(\, \mathbf{0} \,\big|\, \{\mathbf{e}_i\} \otimes g_{2,\transpose} \,\big)  \\
        G_{X,{\rm R}} &= \big(\, \mathbf{0} \,\big|\, g_{1,\transpose} \otimes \{\mathbf{e}_j\} \,\big) \, ,
    \end{align}
    \end{subequations}
    where $g_{1,\transpose}$ and $g_{2,\transpose}$ are generator matrices for the transpose codes (with parity-check matrices $h^\transpose_1$ and $h^\transpose_2$), and $\{\mathbf{e}_i\} \notin \operatorname{rs}h^\transpose_1$ and $\{\mathbf{e}_j\} \notin \operatorname{rs}h^\transpose_2$. When we designate the right-sector logical qubits as gauge operators, the above logical operators can ``dress'' our left-sector logical operators in addition to the code stabilizers. A generic logical $\overline{X}$ operator now takes the form
    \begin{align}
        \mathbf{x}  = \hat{\mathbf{x}}_{\rm L} + \mathbf{s}_X + \hat{\mathbf{x}}_{\rm R} \, ,
    \end{align}
    where $\hat{\mathbf{x}}_{\rm L}$ is a canonical left-sector logical $\overline{X}$ operator, $\mathbf{s}_X$ is an element of the $X$-stabilizer subgroup, and $\hat{\mathbf{x}}_{\rm R}$ is a canonical right-sector logical $\overline{X}$ operator. Now, since $\hat{\mathbf{x}}_{\rm R}$ has no support on the left sector, i.e. $\abs{\hat{\mathbf{x}}_{\rm R}}_{\rm L} = 0$, we have that
    \begin{align}
        \abs{\mathbf{x}}_{\rm L} = \abs{\hat{\mathbf{x}}_{\rm L} + \mathbf{s}_X}_{\rm L} \geq d_2 \, ,
    \end{align}
    where we used Lemma \ref{lem:2D HGP logical support} in the last inequality. An analogous argument holds for the logical $\overline{Z}$ operators.
\end{proof}

\begin{prop}[Distances of left-sector hypergraph product codes]
\label{prop:left-sector HGP distances}
    For a left-sector HGP code with CSS parity-check matrices \eqref{eq:HGP H_X, H_Z} and classical input code parameters $[n_1,k_1,d_1]$ and $[n_2,k_2,d_2]$, the $X$ and $Z$ distances are given by
    \begin{align}\label{eq:left-sector HGP distances}
        d_X = d_2 \quad,\quad d_Z = d_1 \, .
    \end{align}
\end{prop}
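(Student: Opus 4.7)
The plan is to prove both $d_X = d_2$ and $d_Z = d_1$ by establishing matching lower and upper bounds, treating the $X$-distance in detail since the $Z$-distance follows by symmetry (swap the roles of the two input codes and the transpose).

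For the lower bound on $d_X$, I would invoke Lemma~\ref{lem:2D HGP logical support 2} directly. A dressed logical $\overline{X}$ operator of the left-sector subsystem code is a representative $\mathbf{x} \in \ker H_Z$ modulo the $X$-stabilizers \emph{and} the right-sector gauge logicals. Lemma~\ref{lem:2D HGP logical support 2} tells us that \emph{any} such nontrivial representative satisfies $\abs{\mathbf{x}}_{\rm L} \geq d_2$. Since the total Hamming weight $\abs{\mathbf{x}} \geq \abs{\mathbf{x}}_{\rm L}$, this immediately yields $d_X \geq d_2$. The analogous statement with $\abs{\mathbf{z}}_{\rm L} \geq d_1$ gives $d_Z \geq d_1$.

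For the matching upper bound, I would construct explicit minimum-weight logical representatives using the canonical basis \eqref{eq:HGP G_X, G_Z}. Let $\mathbf{c}_2 \in \mathcal{C}_2$ be a codeword of weight exactly $d_2$. Since $\mathbf{c}_2 \in \mathrm{rs}(g_2)$, we may write $\mathbf{c}_2 = \sum_i \beta_i g_2[i,:]$ for some coefficients $\beta_i$. Picking any unit vector $\mathbf{e}_j \notin \mathrm{rs}(h_1^\transpose)$ (guaranteed to exist by $k_1 \geq 1$), the vector $\hat{\mathbf{x}}_{\rm L} = (\mathbf{e}_j \otimes \mathbf{c}_2 \,|\, \mathbf{0})$ is the corresponding $\mathbb{F}_2$-combination $\sum_i \beta_i G_{X,{\rm L}}[(j,i),:]$ of canonical logical $\overline{X}$ rows from \eqref{eq:HGP G_X}. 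This is a nontrivial bare logical $\overline{X}$ of the left-sector code, with Hamming weight $\abs{\mathbf{e}_j}\cdot\abs{\mathbf{c}_2} = d_2$. Thus $d_X \leq d_2$. The construction of a weight-$d_1$ logical $\overline{Z}$ representative from a weight-$d_1$ codeword of $\mathcal{C}_1$ and \eqref{eq:HGP G_Z} proceeds identically, giving $d_Z \leq d_1$.

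The only subtle point worth flagging is verifying that the exhibited representative $\hat{\mathbf{x}}_{\rm L}$ is genuinely nontrivial in the subsystem code, i.e.\ not equivalent to the identity modulo stabilizers \emph{and} right-sector gauge operators. Triviality in the subsystem sense would require $\hat{\mathbf{x}}_{\rm L} \in \mathrm{rs}(H_X) + \mathrm{rs}(G_{X,{\rm R}})$; but the canonical right-sector $X$-logicals $G_{X,{\rm R}}$ are supported entirely on $\mathrm{\Lambda}_{\rm R}$, so their left-sector restriction vanishes, and the same left-restriction argument underlying Lemma~\ref{lem:2D HGP logical support 2} rules out $\hat{\mathbf{x}}_{\rm L}$ being an $X$-stabilizer. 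Hence $\hat{\mathbf{x}}_{\rm L}$ represents a genuine logical, and combining both bounds establishes $d_X = d_2$ and $d_Z = d_1$ as claimed.
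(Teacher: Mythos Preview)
Your proposal is correct and follows essentially the same approach as the paper: invoke Lemma~\ref{lem:2D HGP logical support 2} for the lower bounds $d_X \geq d_2$, $d_Z \geq d_1$, then exhibit explicit weight-$d_2$ and weight-$d_1$ logical representatives from the canonical basis \eqref{eq:HGP G_X, G_Z} for the matching upper bounds. Your version is simply more explicit about constructing the minimum-weight representative as a linear combination of canonical rows and about verifying nontriviality in the subsystem sense, details the paper leaves implicit.
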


\begin{proof}
    Lemma \ref{lem:2D HGP logical support 2} implies that $d_X \geq d_2$ and $d_Z \geq d_1$. At the same time, our canonical logical basis \eqref{eq:HGP G_X, G_Z} tells us that there exists a logical $\overline{X}$ with weight $d_2$ and a logical $\overline{Z}$ with weight $d_1$, and so we also have $d_X \leq d_2$ and $d_Z \leq d_1$. Thus, we arrive at \eqref{eq:left-sector HGP distances}.
\end{proof}

We are now ready to show that our automorphism gadgets preserve the full distance of left-sector HGP codes under our noise assumptions.

\begin{thm}[Effective distance preservation of automorphism gadgets in left-sector hypergraph product codes]
\label{thm:d_eff 2D HGP}
    Suppose physical qubit permutations do not spread errors, and we have a $\llbracket n,k,d \rrbracket$ left-sector HGP code according to Definition \ref{defn:left-sector HGP code} and Proposition \ref{prop:left-sector HGP distances} with automorphism gadgets given by \eqref{eq:HGP gadgets}. Then the effective fault distance of the automorphism gadgets is $d_{\rm eff} = d$. In other words, $d$ elementary faults are required to effect a nontrivial logical operation in the code.
\end{thm}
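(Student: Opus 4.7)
The plan is to reduce the effective-distance claim to the static weight bound of Lemma~\ref{lem:2D HGP logical support 2} by showing that, under the stated noise model, every elementary circuit fault contributes at most one qubit to the left-sector weight of the accumulated data error. Since Proposition~\ref{prop:left-sector HGP distances} gives $d = \min(d_1,d_2)$ and Lemma~\ref{lem:2D HGP logical support 2} guarantees that every nontrivial logical Pauli representative of the left-sector subsystem code (dressed by arbitrary stabilizers and right-sector gauge operators) still has left-sector weight at least $d$, the combination immediately produces $d_{\rm eff} \geq d$. The matching upper bound $d_{\rm eff} \leq d$ follows from the existence of the canonical weight-$d$ logical representatives in \eqref{eq:HGP G_X, G_Z}.

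The heart of the argument is a Heisenberg-picture propagation analysis inside the gadget. Writing $U_1 = (\sigma_1 \otimes \ident) \oplus (w_1 \otimes \ident)$ from \eqref{eq:HGP U_1}, the gadget is strictly block-diagonal with respect to the left--right sector split: the left sector experiences only the permutation $\sigma_1 \otimes \ident$, while the Clifford circuit $w_1 \otimes \ident$ is confined to the right sector. If a single fault $E$ is injected at some circuit time and $V$ denotes the remainder of the gadget executed after $E$, then $V$ is itself built from block-diagonal pieces, so $V E V^{\dagger}$ is supported in the same sector as $E$. Left-sector faults, occurring only during permutations, propagate to single-qubit left-sector errors by the no-spread assumption, while right-sector faults (including two-qubit gate faults inside $w_1$) propagate to right-sector errors that contribute zero to the left-sector weight. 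The analysis for $U_2$ is identical by the symmetric structure of \eqref{eq:HGP U_2}. Consequently, inside the gadget each elementary fault contributes at most one to the left-sector weight of the propagated error.

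Combining this per-fault bound with the surrounding syndrome extraction, every elementary fault contributes at most one to the left-sector weight of the residual data error after ideal decoding, where we invoke the known effective-distance preservation of single-ancilla HGP syndrome extraction \cite{Manes_2025, tan2024hgp} for faults outside the gadget. The accumulated left-sector weight is thus bounded above by the number of faults, and Lemma~\ref{lem:2D HGP logical support 2} forces this number to be at least $d$ for any nontrivial logical action on the left-sector logical qubits. The main technical obstacle I anticipate is the interface between the gadget and the adjacent syndrome-extraction rounds, since a fault late inside $U_1$ leaves behind a possibly high-weight right-sector error that must be detected and corrected without corrupting the left sector; I would handle this cleanly by inserting an ideal measurement round immediately after the gadget and observing that a purely right-sector error commutes with every canonical left-sector logical Pauli, so its correction is irrelevant to the left-sector logical action and only the left-sector weight budget above controls $d_{\rm eff}$.
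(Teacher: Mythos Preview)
Your proposal is correct and follows essentially the same approach as the paper: both arguments hinge on the block-diagonal structure of \eqref{eq:HGP gadgets} (permutation on the left sector, circuit on the right), the no-spread assumption for permutations, and the left-sector weight lower bound from Lemma~\ref{lem:2D HGP logical support 2} together with Proposition~\ref{prop:left-sector HGP distances}. Your treatment is more detailed than the paper's three-sentence proof, in particular your explicit Heisenberg-picture per-fault accounting and your discussion of the interface with the surrounding syndrome-extraction rounds; the paper simply asserts that at least $d$ faults must land on the left sector and leaves the boundary with error correction implicit.
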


\begin{proof}
    The proof essentially follows from Lemma \ref{lem:2D HGP logical support 2} and Proposition \ref{prop:left-sector HGP distances}, which say that the logical support on the left sector of physical qubits cannot be less than the code distance. As a consequence, at least $d$ elementary faults must occur on the left sector in order to effect a nontrivial logical operation on the code. Since our HGP automorphism gadgets \eqref{eq:HGP gadgets} only permute the left sector, they do not spread errors within the left sector and thus preserve the distance of the code.
\end{proof}

\subsection{Fault tolerance for homological product codes: quantum $\times$ classical}\label{sec:ft-auto-hom-prod-code}

For the case of classical $\times$ quantum homological product codes, the analysis is largely similar. We first begin by stating a recent result of Tan and Stambler~\cite{tan2024hgp} before connecting it with the fault tolerance of our automorphism gadget for the quantum $\times$ classical homological product code.

\begin{lem}[Homological product logical sector weight; Lemma 26 of \cite{tan2024hgp}]
\label{lem:3D HGP logical support}
    Suppose we have a (quantum $\times$ classical) homological product code with a classical input code with distance $d_{\rm c}$ and a quantum CSS input code with distances ($d_X,d_Z$). In addition, suppose that the parity-check matrix of the classical input code has full rank; i.e. there are no redundant checks. Let $\mathbf{x} \in \ker \tilde{H}_Z \backslash \operatorname{rs} \tilde{H}_X$ denote the support of a nontrivial logical $\overline{X}$ operator, and let $\mathbf{z} \in \ker \tilde{H}_Z \backslash \operatorname{rs} \tilde{H}_X$ denote the support of a nontrivial logical $\overline{Z}$ operator. Then we have
    \begin{align}
        \abs{\mathbf{x}}_{\rm L} \geq d_X d_{\rm c} \quad,\quad \abs{\mathbf{z}}_{\rm L} \geq d_Z \, .
    \end{align}
\end{lem}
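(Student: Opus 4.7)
The plan is to exploit the tensor-product structure of the homological product by reshaping the left-sector support into a matrix $\mathbb{M}[\cdot] \in \mathbb{F}_2^{n_{\rm Q} \times n_{\rm c}}$ (rows indexed by quantum qubits, columns by classical bits) and to extract logical information by contracting with classical codewords on one side and with quantum logicals on the other. The first step is to unpack the kernel conditions $\tilde{H}_Z \mathbf{x} = 0$ and $\tilde{H}_X \mathbf{z} = 0$ in block form using \eqref{eq:3D HGP H_X,H_Z}. This yields $H_Z \mathbb{M}[\mathbf{x}_L] = 0$ together with $\mathbb{M}[\mathbf{x}_L]\, h^\transpose = H_X^\transpose \mathbb{M}[\mathbf{x}_R]$ on the $\overline{X}$ side, and $H_X \mathbb{M}[\mathbf{z}_L] = \mathbb{M}[\mathbf{z}_R]\, h$ on the $\overline{Z}$ side; no assumption on $\mathbf{x}_R$ or $\mathbf{z}_R$ is made, so the bound holds for any representative of the logical class.

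For the $\overline{Z}$ bound, I would right-multiply the constraint by any classical codeword $\mathbf{c} \in \ker h$; since $h\mathbf{c} = 0$, this gives $\mathbb{M}[\mathbf{z}_L]\, \mathbf{c} \in \ker H_X$. Nontriviality of $\mathbf{z}$ produces a canonical logical $\overline{X}$ of the form $\bar{X}_Q \otimes g$ with $g$ a classical codeword (by the basis \eqref{eq:3D HGP G_X,G_Z}), and the symplectic pairing unpacks to $\bar{X}_Q^\transpose\, \mathbb{M}[\mathbf{z}_L]\, g = 1$. Taking $\mathbf{c} = g$ then shows $\mathbb{M}[\mathbf{z}_L]\, g$ is a nontrivial quantum $\overline{Z}_Q$ representative in $\ker H_X$, so $|\mathbb{M}[\mathbf{z}_L]\, g| \geq d_Z$. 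Since every nonzero row of $\mathbb{M}[\mathbf{z}_L]$ contributes at most one entry to this product, we conclude $|\mathbf{z}_L|_L \geq |\mathbb{M}[\mathbf{z}_L]\, g| \geq d_Z$.

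For the $\overline{X}$ bound I would dualize by left-multiplying the second $\overline{X}$ constraint with $\bar{Z}_Q^\transpose$ for some quantum $\overline{Z}$ representative; using $\bar{Z}_Q^\transpose H_X^\transpose = (H_X \bar{Z}_Q)^\transpose = 0$ yields $\bar{Z}_Q^\transpose \mathbb{M}[\mathbf{x}_L]\, h^\transpose = 0$, so $\bar{Z}_Q^\transpose \mathbb{M}[\mathbf{x}_L]$ is itself a classical codeword. Nontriviality of $\mathbf{x}$ selects a canonical logical $\overline{Z}$ of the form $\bar{Z}_Q \otimes \mathbf{e}_i$ with which it anticommutes, so this classical codeword has a nonzero entry at position $i$ and therefore has weight $\geq d_{\rm c}$. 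For each of the $\geq d_{\rm c}$ nonzero column indices $j$, the column $\mathbb{M}[\mathbf{x}_L]_{\cdot, j}$ lies in $\ker H_Z$ (by the first constraint) but anticommutes with $\bar{Z}_Q$, hence is a nontrivial $\overline{X}_Q$ representative of weight $\geq d_X$. Summing the column weights over the distinguished columns gives $|\mathbf{x}_L|_L \geq d_X d_{\rm c}$.

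The main obstacle is the $\overline{X}$ case, where one has to ensure both that the contracted classical codeword $\bar{Z}_Q^\transpose \mathbb{M}[\mathbf{x}_L]$ is genuinely nonzero (which requires locating a canonical $\overline{Z}$ that witnesses the nontriviality of $\mathbf{x}$) and that each of its $\geq d_{\rm c}$ distinguished columns is individually a nontrivial $\overline{X}_Q$ representative rather than a mere stabilizer. The full-rank assumption on $h$ is exactly what makes the canonical $\overline{Z}$ basis simplify to $\bar{Z}_Q \otimes \mathbf{e}_i$ and renders $\mathcal{H}^1[\text{c}]$ trivial; if $h$ had redundant checks, spurious cohomology contributions from $\mathcal{H}^0[\text{Q}] \otimes \mathcal{H}^1[\text{c}]$ would enlarge the canonical basis and force the dressed-logical-operator analysis foreshadowed by Lemma~\ref{lem:dressed-logical-ops}.
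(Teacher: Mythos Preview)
Your argument is correct. The paper does not supply its own proof of this lemma; it is quoted as Lemma~26 of \cite{tan2024hgp} and used as a black box (for instance in the proof of Proposition~\ref{prop:3D HGP restricted logical sector weight}), so there is no in-paper proof to compare against directly.

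For context, the related distance bounds that \emph{are} proved in the paper (Propositions~\ref{prop:3D HGP restricted logical sector weight} and~\ref{prop:distance-bounds-subsystem-hom-prod-code}) proceed by a different route: they decompose stabilizer elements by type and column region, or puncture and shorten the input parity-check matrices so that the K\"unneth homology collapses. Your contraction method---reshape $\mathbf{x}_{\rm L},\mathbf{z}_{\rm L}$ into $n_{\rm Q}\times n_{\rm c}$ matrices, then contract with a classical codeword on one side or a quantum logical on the other to annihilate the right-sector term and reduce to the input codes---is more direct for this particular statement and makes the role of the full-rank hypothesis transparent. The puncturing/shortening machinery, by contrast, is what the paper leans on to push into the (quantum $\times$ quantum) setting of Proposition~\ref{prop:distance-bounds-subsystem-hom-prod-code}, where no single classical codeword is available to contract against.
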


Again, we observe that the weight of any logical operator on the left sector cannot be decreased upon appending stabilizers. A technical assumption of the above lemma is that the classical input code has a full-rank parity-check matrix. For our purposes, often this will not be the case, i.e. there are linearly dependent parity checks; see the later Section \ref{sec:classical families} for such examples. As a consequence, one might worry about additional logical qubits resulting from the spurious (co)homologies in the product construction that could potentially spoil the logical sector weight property of Lemma \ref{lem:3D HGP logical support}. Fortunately, and similar to Definition \ref{defn:left-sector HGP code} in the HGP case, we will now show that these additional logical qubits can be chosen to live on the right sector and do not affect Lemma \ref{lem:3D HGP logical support} when restricted to the left-sector.

\begin{defn}[Left-sector homological product code]
\label{defn:left-sector 3D homological product code}
    Given a (quantum $\times$ classical) homological product code with canonical logical operators \eqref{eq:3D HGP G_X,G_Z}, the \emph{left-sector} homological product code is the CSS subsystem code obtained upon designating the right-sector logical qubits as gauge qubits.
\end{defn}

\begin{lem}[Extension of Lemma \ref{lem:3D HGP logical support}]
\label{lem:3D HGP logical support 2}
    For the case where the classical input code has a rank-deficient parity-check matrix, Lemma \ref{lem:3D HGP logical support} still holds for the left-sector homological product code given by Definition \ref{defn:left-sector 3D homological product code}.
\end{lem}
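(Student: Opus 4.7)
The plan is to follow the same template as the proof of Lemma~\ref{lem:2D HGP logical support 2}: exhibit canonical representatives of the spurious logical qubits that are supported entirely on the right sector $(S_X,S)$ of data qubits, and then show that dressing by these gauge operators cannot alter the left-sector weight of any logical representative. Combined with the left-sector weight bound of Lemma~\ref{lem:3D HGP logical support}, this will yield the claim.

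First I would enumerate the additional logical qubits produced by the rank deficiency. By the K\"unneth formula, each linear dependency among the rows of $h$ contributes a summand of the form $\mathcal{H}^0[\mathrm{Q}] \otimes \mathcal{H}^1[\mathrm{c}]$ to $\tilde{\mathcal{H}}^1$, yielding additional logical qubits beyond those in $\mathcal{H}^1[\mathrm{Q}] \otimes \mathcal{H}^0[\mathrm{c}]$. Reading off the product complex \eqref{eq:3D HGP complex}, representatives of cocycles in $\mathcal{H}^0[\mathrm{Q}] \otimes \mathcal{H}^1[\mathrm{c}]$ lie naturally in $(S_X,S)$, i.e. entirely on the right sector of data qubits. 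A dual analysis for logical $\overline{Z}$s identifies spurious cycles living in $(S_X,S)$ as well (the spurious $\overline{Z}$ contributions come from summands involving $\mathcal{H}_1[\mathrm{c}]$ paired appropriately with quantum homology). One can then pick a canonical symplectic basis $\{\hat{\mathbf{x}}_{\mathrm{R}}, \hat{\mathbf{z}}_{\mathrm{R}}\}$ for these spurious logical qubits with $|\hat{\mathbf{x}}_{\mathrm{R}}|_{\mathrm{L}} = |\hat{\mathbf{z}}_{\mathrm{R}}|_{\mathrm{L}} = 0$, and make them commute with the canonical left-sector operators \eqref{eq:3D HGP G_X,G_Z}, exactly as one does for the canonical left/right bases of the ordinary HGP.

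Once these gauge representatives are in hand, Definition~\ref{defn:left-sector 3D homological product code} lets us write any logical $\overline{X}$ representative of the subsystem code as
\[
\mathbf{x} = \hat{\mathbf{x}}_{\mathrm{L}} + \mathbf{s}_X + \hat{\mathbf{x}}_{\mathrm{R}} \, ,
\]
where $\hat{\mathbf{x}}_{\mathrm{L}}$ is a canonical left-sector logical as in \eqref{eq:3D HGP G_X}, $\mathbf{s}_X \in \operatorname{rs}\tilde{H}_X$, and $\hat{\mathbf{x}}_{\mathrm{R}}$ is in the gauge group generated by the spurious $\overline{X}$ representatives. Since $|\hat{\mathbf{x}}_{\mathrm{R}}|_{\mathrm{L}} = 0$, we have $|\mathbf{x}|_{\mathrm{L}} = |\hat{\mathbf{x}}_{\mathrm{L}} + \mathbf{s}_X|_{\mathrm{L}}$. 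The combination $\hat{\mathbf{x}}_{\mathrm{L}} + \mathbf{s}_X$ is a non-spurious logical representative in the original stabilizer code (i.e. its cohomology class lies in $\mathcal{H}^1[\mathrm{Q}] \otimes \mathcal{H}^0[\mathrm{c}]$), so the left-sector weight bound $|\hat{\mathbf{x}}_{\mathrm{L}} + \mathbf{s}_X|_{\mathrm{L}} \geq d_X d_{\mathrm{c}}$ follows from Lemma~\ref{lem:3D HGP logical support}. The argument for $\overline{Z}$ is analogous, using the right-sector gauge $\overline{Z}$ representatives and the bound $\geq d_Z$.

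The one technical hurdle that merits care is the construction in the first step: one must verify that a symplectic basis for the spurious logical qubits can indeed be chosen with full support inside $(S_X,S)$, and that these choices are independent of the canonical left-sector operators modulo stabilizers. For the $\overline{X}$ side this is essentially immediate from the chain-complex placement of $\mathcal{H}^0[\mathrm{Q}] \otimes \mathcal{H}^1[\mathrm{c}]$; for the $\overline{Z}$ side one has to identify the cycle representatives of the corresponding spurious homology classes and confirm that they can also be supported in $(S_X,S)$ rather than leaking back into $(Q,B)$ through the boundary maps $H_Z \otimes \ident$ and $\ident \otimes h$. A conservative way to ensure this is to apply the proof of Lemma~\ref{lem:3D HGP logical support} from \cite{tan2024hgp}, which is essentially an algebraic manipulation that slides stabilizer rows to minimize left-sector support, and observe that the extra logicals introduced by rank deficiency can be absorbed as gauge operators without interfering with this manipulation — effectively reducing the rank-deficient case to the full-rank case of Lemma~\ref{lem:3D HGP logical support} applied to the non-spurious logical sector.
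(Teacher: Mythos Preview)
Your proposal is correct and follows essentially the same approach as the paper: construct canonical representatives for the spurious (right-sector) logical qubits that are supported entirely in $(S_X,S)$, observe that dressing by them leaves the left-sector weight unchanged, and invoke Lemma~\ref{lem:3D HGP logical support}. The paper is more explicit at the one place you flag as a ``technical hurdle'': it simply writes down the right-sector basis $\tilde{G}_{Z,\mathrm{R}} = (\,\mathbf{0}\,|\,\{\mathbf{e}_i\}\otimes g_\transpose\,)$ and $\tilde{G}_{X,\mathrm{R}} = (\,\mathbf{0}\,|\,G_{X,\transpose}\otimes\{\mathbf{e}_j\}\,)$ with $\{\mathbf{e}_i\}\notin\operatorname{rs}H_X^\transpose$, $\{\mathbf{e}_j\}\notin\operatorname{rs}h^\transpose$, which makes the $\overline{Z}$-side support claim immediate and removes any need to appeal to the internals of the proof in \cite{tan2024hgp}.
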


\begin{proof}
    We begin by constructing a canonical Pauli basis for the right-sector logical operators. From the K\"unneth formula, we know that there will be $k^\transpose_{\rm c}\big( n_{\rm Q} - \operatorname{rank}H^\transpose_X \big)$ spurious logical qubits in this right sector. Define
    \begin{subequations}
    \begin{align}
        \tilde{G}_{Z,{\rm R}} &= \big( \;\mathbf{0}\;|\; \{\mathbf{e}_i\} \otimes g^{}_\transpose \;\big)  \label{eq:3D HGP G_ZR} \\
        \tilde{G}_{X,{\rm R}} &= \big( \;\mathbf{0}\;|\; G_{X,\transpose} \otimes \{\mathbf{e}_j\} \;\big)  \, , \label{eq:3D HGP G_XR}
    \end{align}
    \end{subequations}
    where $\{\mathbf{e}_i\} \notin \operatorname{rs}H^\transpose_X$ and $\{\mathbf{e}_j\} \notin \operatorname{rs}h^\transpose$, and $g^{}_\transpose$ and $G_{X,\transpose}$ are the generator matrices of the transpose codes that satisfy $h^\transpose g^\transpose_\transpose = 0$ and $H^\transpose_X G^\transpose_{X,\transpose} = 0$ respectively. By construction, \eqref{eq:3D HGP G_ZR} and \eqref{eq:3D HGP G_XR} form a canonical Pauli basis for the spurious logical qubits. The statement of the Lemma then follows by the simple observation that the left-sector and right-sector logical qubits' canonical bases act on disjoint data qubits, and so Lemma \ref{lem:3D HGP logical support} is unaffected even when we dress our left-sector logical qubits with the right-sector gauge qubits.
\end{proof}

\begin{prop}[Distances of left-sector homological product code]
\label{prop:left-sector homological product code distance}
    For a left-sector homological product code with CSS parity-check matrices \eqref{eq:3D HGP H_X,H_Z} and input code parameters $\llbracket n_{\rm Q},k_{\rm Q},(d_X,d_Z) \rrbracket$ and $[n_{\rm c},k_{\rm c},d_{\rm c}]$, the $X$ and $Z$ distances are given by
    \begin{align}\label{eq:left-sector homological product code distance}
        \tilde{d}_X = d_X d_{\rm c} \quad,\quad \tilde{d}_Z = d_Z \, .
    \end{align}
\end{prop}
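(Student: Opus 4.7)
The plan is to mirror the argument used in Proposition~\ref{prop:left-sector HGP distances} for hypergraph product codes. The proposition pairs a lower-bound ingredient (logical sector weight) with an upper-bound ingredient (explicit canonical representatives), and equality follows.

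For the lower bounds, I would invoke Lemma~\ref{lem:3D HGP logical support 2} directly. Since that lemma says that any representative $\mathbf{x}$ of a nontrivial logical $\overline{X}$ (respectively $\mathbf{z}$ for $\overline{Z}$) in the left-sector code satisfies $|\mathbf{x}|_{\rm L} \geq d_X d_{\rm c}$ (respectively $|\mathbf{z}|_{\rm L} \geq d_Z$), and the total weight $|\mathbf{x}| \geq |\mathbf{x}|_{\rm L}$, this immediately yields $\tilde{d}_X \geq d_X d_{\rm c}$ and $\tilde{d}_Z \geq d_Z$. This step is where all the real work has already been done (in Lemma~\ref{lem:3D HGP logical support 2}), so nothing new is needed here; the key point is that allowing right-sector gauge dressing does not decrease left-sector weight, so the bound genuinely applies to the subsystem code.

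For the matching upper bounds, I would exhibit explicit logical operators from the canonical basis in Eq.~\eqref{eq:3D HGP G_X,G_Z}. A canonical logical $\overline{Z}$ representative has support $G_Z \otimes \{\mathbf{e}_i\}$ on the left sector, so choosing a minimum-weight row of $G_Z$ (of weight $d_Z$) and a single unit vector $\mathbf{e}_i$ (of weight $1$) gives a valid representative of Hamming weight $d_Z \cdot 1 = d_Z$. Similarly, a canonical logical $\overline{X}$ representative has support $G_X \otimes g$; choosing a minimum-weight row of $G_X$ (weight $d_X$) and a minimum-weight row of $g$ (weight $d_{\rm c}$) produces a representative of Hamming weight $d_X d_{\rm c}$. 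Hence $\tilde{d}_X \leq d_X d_{\rm c}$ and $\tilde{d}_Z \leq d_Z$, and combining with the lower bounds finishes the proof.

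The only subtle step is verifying that the unit vector $\mathbf{e}_i \notin \operatorname{rs}(h)$ required in the canonical $\overline{Z}$ basis can indeed be chosen consistently with a minimum-weight row of $G_Z$; this is a basis selection issue, not a distance issue, and is already handled by the assumption that \eqref{eq:3D HGP G_Z} is a valid canonical basis (guaranteed whenever $k_{\rm c} \geq 1$). I do not anticipate any real obstacle: the argument is a direct two-sided sandwich once Lemma~\ref{lem:3D HGP logical support 2} is in hand, entirely analogous to Proposition~\ref{prop:left-sector HGP distances}.
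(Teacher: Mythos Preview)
Your proposal is correct and essentially identical to the paper's proof: the paper also invokes Lemma~\ref{lem:3D HGP logical support 2} for the lower bounds and appeals to the canonical basis \eqref{eq:3D HGP G_X,G_Z} for matching upper bounds. Your write-up is slightly more explicit about selecting minimum-weight rows, but the logic is the same sandwich argument.
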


\begin{proof}
    Lemma \ref{lem:3D HGP logical support 2} implies that $\tilde{d}_X \geq d_X d_{\rm c}$ and $\tilde{d}_Z \geq d_Z$. At the same time, our canonical logical basis \eqref{eq:3D HGP G_X,G_Z} tells us that there exists a logical $\overline{X}$ with weight $d_X d_{\rm c}$ and a logical $\overline{Z}$ with weight $d_Z$, and so we also have $\tilde{d}_X \leq d_X d_{\rm c}$ and $\tilde{d}_Z \leq d_Z$. Thus, we arrive at \eqref{eq:left-sector homological product code distance}.
\end{proof}

Lemma \ref{lem:3D HGP logical support} essentially gives us the fault tolerance of our automorphism gadgets acting on the left logical qubits, when the input gadgets are code automorphisms.

\begin{thm}[Effective distance preservation of automorphism gadgets in homological product codes, part 1] \label{thm:d_eff 3D HGP 1}
    Suppose physical qubit permutations do not spread errors, and we have a (quantum $\times$ classical) left-sector homological product code according to Definition \ref{defn:left-sector 3D homological product code} with automorphism gadgets given by \eqref{eq:3D HGP gadgets}. If the input gadgets of the input quantum CSS code are code automorphisms, i.e. $U \in \mathrm{S}_{n_{\rm Q}}$ in \eqref{eq:3D HGP gadgets}, then the resulting automorphism gadgets will be distance-preserving.
\end{thm}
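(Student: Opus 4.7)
The plan is to exploit the block structure of the gadgets \eqref{eq:3D HGP gadgets}, which factor as (action on left sector) $\oplus$ (action on right sector). Under the hypothesis $U \in \mathrm{S}_{n_{\rm Q}}$, the left-sector action of $\tilde{U}_{\rm Q}$ is $U \otimes \ident$ (a permutation), and the left-sector action of $\tilde{U}_{\rm c}$ is $\ident \otimes \sigma$ (also a permutation, since classical code automorphisms are always permutations). The right-sector actions, $W \otimes \ident$ and $\ident \otimes w^{-\transpose}$, may be nontrivial circuits, but they do not touch any left-sector qubit. Consequently, under the free-permutation assumption, an elementary fault occurring on the left sector propagates to at most a single-qubit Pauli error on the left sector at the end of the gadget, while any fault occurring on the right sector has its entire footprint confined to the right sector.

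Next, I would use Proposition \ref{prop:left-sector homological product code distance} to identify $\tilde{d} = \min(d_X d_{\rm c}, d_Z)$ as the distance of the left-sector homological product code, and Lemma \ref{lem:3D HGP logical support 2} to conclude that any nontrivial logical operator $E \in \ker \tilde{H}_Z \setminus \mathrm{rs}\, \tilde{H}_X$ (for the $X$ case, with the analogous statement for $Z$) satisfies $|E|_{\rm L} \geq d_X d_{\rm c}$, and similarly $|E|_{\rm L} \geq d_Z$ for nontrivial logical $\overline{Z}$ operators. Crucially, this bound holds for \emph{every} representative in the logical coset modulo stabilizers, and also modulo the right-sector gauge generators \eqref{eq:3D HGP G_ZR}--\eqref{eq:3D HGP G_XR} from the proof of Lemma \ref{lem:3D HGP logical support 2}, since those gauges are supported entirely on the right sector and thus do not alter left-sector weight.

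Combining these observations, suppose $t$ elementary faults occur during the gadget, of which $t_{\rm L}$ land on the left sector. The accumulated error decomposes as $E = E_{\rm L} + E_{\rm R}$ with $|E_{\rm L}|_{\rm L} \leq t_{\rm L} \leq t$, and if $E$ is a nontrivial logical operator in the left-sector subsystem code then $|E_{\rm L}|_{\rm L} \geq \tilde{d}$, giving $t \geq \tilde{d}$. The matching upper bound $d_{\rm eff} \leq \tilde{d}$ is immediate: the canonical left-sector logical representatives in \eqref{eq:3D HGP G_X,G_Z} have weight exactly $d_X d_{\rm c}$ or $d_Z$ on the left sector, and these can be realized by a single fault per supporting qubit before or after the gadget. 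Hence $d_{\rm eff} = \tilde{d}$.

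The main obstacle I anticipate is the bookkeeping needed to properly justify that right-sector faults cannot, even after stabilizer dressing that mixes the two sectors, conspire to produce a nontrivial logical action with total left-sector weight strictly smaller than $\tilde{d}$. The cleanest way to handle this is to invoke Lemma \ref{lem:3D HGP logical support 2} directly on the coset $E + \mathrm{rs}\,\tilde{H}_X + \langle \tilde{G}_{X,\mathrm{R}} \rangle$: the minimum left-sector weight over the entire coset is still bounded below by $\tilde{d}$, and this minimum is at most $|E_{\rm L}|_{\rm L}$ since $E$ itself is a member of the coset. A secondary subtlety, worth stating explicitly in the final writeup, is that the hypothesis $U \in \mathrm{S}_{n_{\rm Q}}$ makes $W_X$ (and correspondingly $W_Z'$) a permutation of syndromes but \emph{not necessarily} $\hat{W}_X$ on the right-sector qubits via $W \otimes \ident$; this is fine because only the left sector requires permutation-only action for the argument to go through.
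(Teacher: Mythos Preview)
Your proposal is correct and follows essentially the same approach as the paper: both arguments observe that the left-sector action of each gadget is a pure permutation under the hypothesis $U\in\mathrm{S}_{n_{\rm Q}}$, and then invoke Lemma~\ref{lem:3D HGP logical support 2} to conclude that any nontrivial logical operator must carry at least $\tilde{d}$ weight on the left sector, so at least $\tilde{d}$ elementary faults are required. Your writeup is in fact more explicit than the paper's on the fault-counting bookkeeping (the decomposition $E=E_{\rm L}+E_{\rm R}$ and the coset argument handling right-sector faults and gauges), which is a welcome level of detail.
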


\begin{proof}
    We will analyze the two automorphism gadgets \eqref{eq:3D HGP gadgets} separately. For $\tilde{U}_{\rm c}$ \eqref{eq:3D HGP U_C}, note that its action on the left sector of physical qubits is a permutation. Since Lemma \ref{lem:3D HGP logical support 2} says that the left-sector logical weight cannot be less than the code distance, the fault distance of $\tilde{U}_{\rm c}$ is equal to the code distance. The situation is analogous for $\tilde{U}_{\rm Q}$ \eqref{eq:3D HGP U_Q} because the $U \otimes \ident$ acting on the left-sector is also a permutation. Again, Lemma \ref{lem:3D HGP logical support 2} tells us that the left-sector logical weight cannot decrease, and so at least $\tilde{d}_X = d_{\rm Q}d_{\rm c}$ faults are required to effect a nontrivial logical $\overline{X}$, and at least $\tilde{d}_Z = d_{\rm Q}$ faults are required to effect a nontrivial logical $\overline{Z}$.
\end{proof}

The trickier case is when the input gadget of the input CSS code is not a strict code automorphism but rather a hybrid gadget ($U \notin \mathrm{S}_{n_{\rm Q}}$) such the automorphism gadgets in Section \ref{sec:aut gadgets}. Recall that our automorphism gadgets generically have physical operations that include both a permutation and a circuit, and so they fall outside of the assumptions of Theorem \ref{thm:d_eff 3D HGP 1}. If our input CSS code was a HGP code or another homological product code with automorphism gadgets, it would be nice if we could lift those gadgets to the combined homological product code. Fortunately, we can leverage the product structure once again to show that this is indeed the case. In fact, our proof technique is closely related to that of Theorem~4.2 in Ref.~\cite{evra2022decodable}. Let $|\cdot|_{\mathrm{\Lambda}_{\rm L}}$ denote the weight on the $\mathrm{\Lambda}_{\rm L} \otimes \ident \subset \tilde{\mathrm{\Lambda}}_{\rm L}$ subsystem, corresponding to the rows belonging to $\mathrm{\Lambda}_{\rm L}$ of the input quantum code.

\begin{prop}[Homological product restricted logical sector weight]
\label{prop:3D HGP restricted logical sector weight}
    Suppose we have a (quantum $\times$ classical) homological product code \eqref{eq:3D HGP H_X,H_Z} where the input CSS code is either a left-sector HGP code by Definition \ref{defn:left-sector HGP code} or a left-sector (quantum $\times$ classical) homological product code by Definition \ref{defn:left-sector 3D homological product code}; denote this input left sector as $\mathrm{\Lambda}_{\rm L}$. Suppose the input CSS code has distances $(d_X,d_Z)$, and the input classical code has distance $d_{\rm c}$. Let $\mathbf{x} \in \ker \tilde{H}_Z \backslash \operatorname{rs} \tilde{H}_X$ denote the support of a nontrivial logical $\overline{X}$ operator, and let $\mathbf{z} \in \ker \tilde{H}_Z \backslash \operatorname{rs} \tilde{H}_X$ denote the support of a nontrivial logical $\overline{Z}$ operator. Then we have
    \begin{align}
        \abs{\mathbf{x}}_{\mathrm{\Lambda}_{\rm L}} \geq d_X d_{\rm c} \quad,\quad \abs{\mathbf{z}}_{\mathrm{\Lambda}_{\rm L}} \geq d_Z \, .
    \end{align}
\end{prop}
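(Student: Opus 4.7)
The plan is to reshape the restriction of any logical operator to the outer left sector $\tilde{\mathrm{\Lambda}}_{\rm L} = (Q, B)$ as a matrix whose rows are indexed by input-code qubits (split as $\mathrm{\Lambda}_{\rm L} \sqcup \mathrm{\Lambda}_{\rm R}$) and whose columns are indexed by classical bits, and then reduce the analysis to the input code's left-sector distance via Lemma~\ref{lem:2D HGP logical support 2} (or Lemma~\ref{lem:3D HGP logical support 2} when the input is itself a product). Write $X_L, Z_L \in \mathbb{F}_2^{n_{\rm Q} \times n_{\rm c}}$ for the matrix reshapings of the restrictions of $\mathbf{x}$ and $\mathbf{z}$ to $\tilde{\mathrm{\Lambda}}_{\rm L}$, so that $|\mathbf{x}|_{\mathrm{\Lambda}_{\rm L}}$ and $|\mathbf{z}|_{\mathrm{\Lambda}_{\rm L}}$ count nonzero entries in the rows indexed by $\mathrm{\Lambda}_{\rm L}$.

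For the $\overline{X}$ bound, I would first unpack $\tilde{H}_Z \mathbf{x} = 0$ using the block form \eqref{eq:3D HGP H_Z}. The first block yields $H_Z X_L = 0$, so every column of $X_L$ lies in $\ker H_Z$ of the input code. The second block yields $X_L h^\transpose + H_X^\transpose X_R = 0$, so each column of $X_L h^\transpose$ lies in $\operatorname{rs} H_X$ of the input. Passing every column of $X_L$ to its class modulo input $X$-stabilizers and right-sector gauge $X$-operators yields a matrix $F$ whose entries are left-sector $X$-cohomology classes of the input, and the second constraint translates to $F h^\transpose = 0$, i.e.\ every row of $F$ is a codeword of $\ker h$. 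Nontriviality of $\mathbf{x}$ in the product's left sector forces $F \neq 0$, so some row is a nonzero classical codeword with weight at least $d_{\rm c}$; each of those $\geq d_{\rm c}$ columns of $F$ corresponds to a column of $X_L$ of nontrivial input $X$-cohomology class, which by the input lemma has weight at least $d_X$ on $\mathrm{\Lambda}_{\rm L}$. Summing over disjoint columns gives $|\mathbf{x}|_{\mathrm{\Lambda}_{\rm L}} \geq d_X d_{\rm c}$.

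For the $\overline{Z}$ bound, from $\tilde{H}_X \mathbf{z} = 0$ I would extract the coupling $H_X Z_L + Z_R h = 0$. Right-multiplying by any $c \in \ker h$ gives $H_X (Z_L c) = 0$, so $Z_L c \in \ker H_X$ of the input code for every classical codeword $c$. The key claim is that some $c$ makes $Z_L c$ a nontrivial left-sector $\overline{Z}$ of the input code; granted this, Lemma~\ref{lem:2D HGP logical support 2} or Lemma~\ref{lem:3D HGP logical support 2} gives $|Z_L c|_{\mathrm{\Lambda}_{\rm L}} \geq d_Z$, and each of the $\geq d_Z$ rows $i \in \mathrm{\Lambda}_{\rm L}$ with $(Z_L c)_i \neq 0$ must contain at least one nonzero entry of $Z_L$, producing $|\mathbf{z}|_{\mathrm{\Lambda}_{\rm L}} \geq d_Z$. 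To establish the claim, I would expand $\mathbf{z}$ modulo product-level stabilizers and right-sector gauges in the canonical basis \eqref{eq:3D HGP G_Z} as $\mathbf{z} \equiv \sum_{\ell, k} \alpha_{\ell, k} \, G_Z^{(\ell)} \otimes \mathbf{e}_{i_k}$, where $\{i_k\}$ are the classical information-bit positions and $\{G_Z^{(\ell)}\}$ is a basis of input left-sector $\overline{Z}$s. Then $Z_L c \equiv \sum_\ell \bigl( \sum_k \alpha_{\ell, k}\, c_{i_k} \bigr) G_Z^{(\ell)}$ modulo the input's trivial $\overline{Z}$ classes; as $c$ ranges over $\ker h$, the information-bit tuple $(c_{i_k})_k$ ranges over all of $\mathbb{F}_2^{k_{\rm c}}$, so any nonzero row $(\alpha_{\ell, \cdot})$ can be paired with a suitable $c$ to make the $G_Z^{(\ell)}$-coefficient nonzero.

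The main obstacle is the $\overline{Z}$ case. The $\overline{X}$ argument reduces cleanly to a per-column analysis because the coupling constraint passes to cohomology column-by-column, whereas the $\overline{Z}$ coupling mixes rows and columns and requires the auxiliary codeword-contraction step together with careful bookkeeping of product- and input-level gauge freedoms so that stabilizer or right-sector-gauge contributions are not mistakenly identified as nontrivial logical content.
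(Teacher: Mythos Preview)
Your argument is correct. For the $\overline{X}$ bound, both you and the paper reduce to a column-by-column application of the input left-sector lemma; the paper starts from the canonical representative $\hat{\mathbf{x}}$ and argues that product $X$-stabilizers act as independent copies of $H_X$ on the $\geq d_{\rm c}$ columns supporting $\hat{\mathbf{x}}$, while you work directly with an arbitrary $\mathbf{x}$, pass each column of $X_L$ to its input cohomology class to form $F$, and use $Fh^\transpose=0$ to locate at least $d_{\rm c}$ columns with nontrivial input-left content. These are the same idea, with your phrasing being a bit more systematic about why gauge contributions cannot spoil the column count.

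For the $\overline{Z}$ bound the routes genuinely diverge. The paper fixes the canonical $\hat{\mathbf{z}}$, partitions the columns of $\tilde{\mathrm{\Lambda}}_{\rm L}$ into the information-bit positions (region~1) and their complement (region~2), and then tracks sequentially how the two species of $Z$-stabilizers ($H_Z\otimes\ident$ versus $\ident\otimes h$, the latter acting row-wise in $\operatorname{rs} h$) can redistribute support between regions, finishing with a contradiction that exploits the parallel column structure of the top-type checks. Your codeword-contraction $Z_L\mapsto Z_Lc$ collapses the classical direction in a single stroke: you land on one input-code vector and invoke the input lemma once, with the existence of a suitable $c$ coming from surjectivity of the information-bit projection on $\ker h$. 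Your argument is shorter and more algebraic; the paper's is more hands-on and makes the geometry of how stabilizers shuffle weight between columns explicit. Both rely on the same bookkeeping that input-right-gauge contributions (tensored with classical codewords on the $X$ side, or with information-bit units on the $Z$ side) are absorbed into the product's gauge group, which you correctly flag as the delicate point and handle by checking that $S_Lc$ and $G_Lc$ land in input-trivial classes.
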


\begin{figure}[t]
    \centering
    \includegraphics[width=0.5\textwidth]{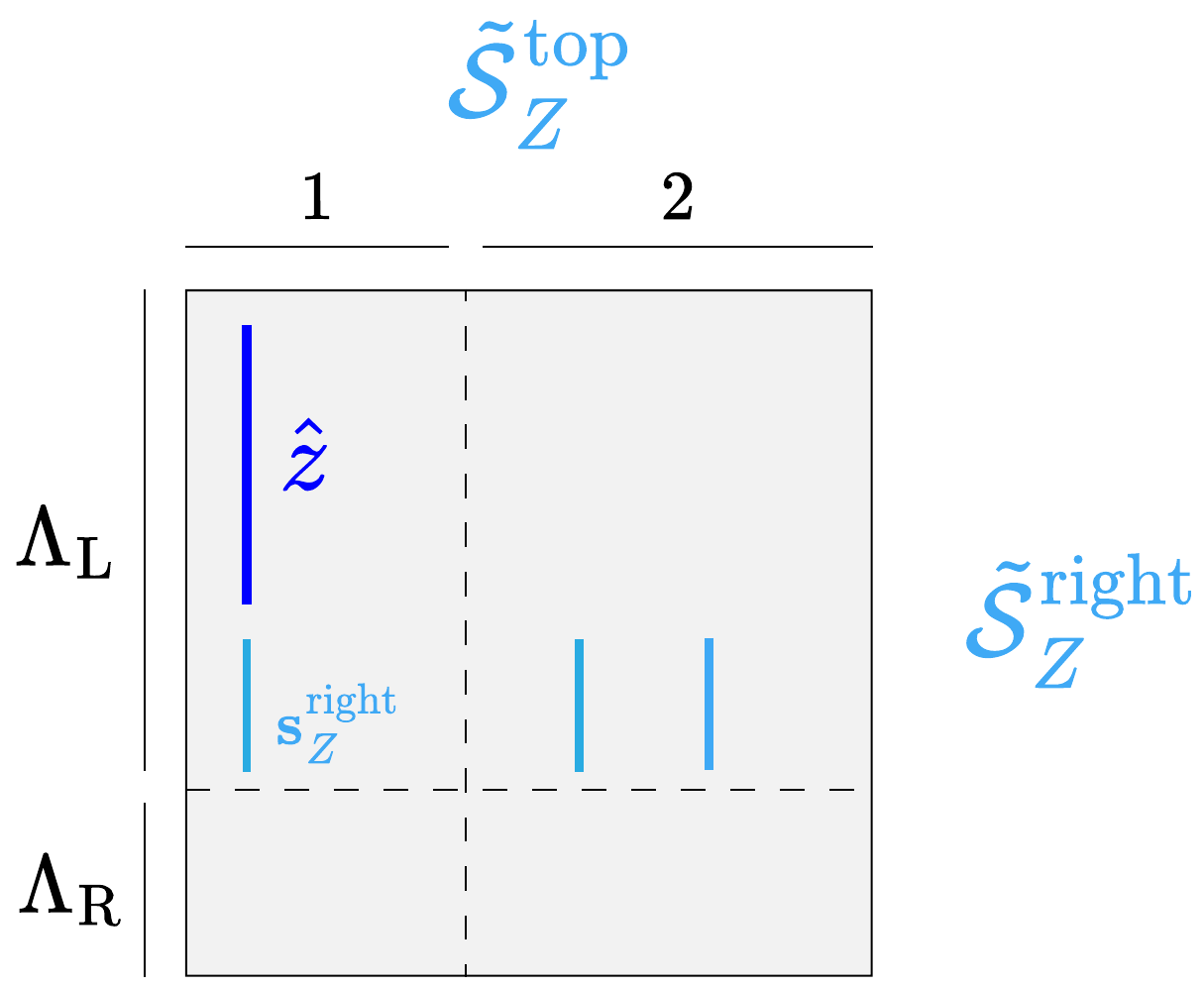}
    \caption{The partitioning of $\tilde{\mathrm{\Lambda}}_{\rm L}$ used for the proof of Proposition \ref{prop:3D HGP restricted logical sector weight}. $\hat{\mathbf{z}}$ is the support of a canonical logical $\overline{Z}$ operator, and $\mathbf{s}^\text{right}_Z$ is the support of an element of $\tilde{\mathcal{S}}^\text{right}_Z$.}
    \label{fig:3D left sector partition}
\end{figure}

\begin{proof}
    Before we proceed, let $\tilde{\mathrm{\Lambda}}_{\rm L}$ denote the left sector of the resulting homological product code, and let $\mathrm{\Lambda}_{\rm L}$ denote the left sector of the input HGP or homological product code. Let $\tilde{\mathbf{r}}(\cdot) \in \mathbb{F}^{n_{\rm Q}}_2$ denote the row support in $\tilde{\mathrm{\Lambda}}_{\rm L}$, and, within $\tilde{\mathrm{\Lambda}}_{\rm L}$, let $\tilde{\mathbf{r}}_{\mathrm{\Lambda}_{\rm L}}(\cdot)$ denote the \emph{restricted} row support on the $\mathrm{\Lambda}_{\rm L}$ rows of $\tilde{\mathrm{\Lambda}}_{\rm L}$, corresponding to the left sector of the input HGP or homological product code (recall that there are $n_{\rm Q} = |\mathrm{\Lambda}_{\rm L}| + |\mathrm{\Lambda}_{\rm R}|$ total rows in $\tilde{\mathrm{\Lambda}}_{\rm L}$).
    
    We first analyze the logical $\overline{Z}$ operators. Suppose we have a nontrivial logical $\overline{Z}$ operator with support $\mathbf{z}$ and canonical support $\hat{\mathbf{z}}$ given in \eqref{eq:3D HGP G_Z}. Note that there are two ``types'' of $Z$-checks that we need to analyze, given by the two row-blocks in \eqref{eq:3D HGP H_Z}: a ``top-type'' $Z$-check and a ``right-type''. Note that any $\mathbf{s}_Z \in \rs \tilde{H}_Z$ can be decomposed into a linear combination of top-type and right-type checks: $\mathbf{s}_Z = \mathbf{s}^{\rm top}_Z + \mathbf{s}^{\rm right}_Z$. Furthermore, we will decompose the top-type $Z$-checks into those that act in the columns indexed by $\{\mathbf{e}_i\} \notin \rs h$ (region 1) and its complement (region 2); see Figure \ref{fig:3D left sector partition} for a sketch of the above partitioning. By linearity we can analyze the total contribution of the stabilizer element by summing the independent contributions from each constituent type; as such we have that $\mathbf{z} = \hat{\mathbf{z}} + \mathbf{s}^{\rm top_1}_Z + \mathbf{s}^{\rm top_2}_Z + \mathbf{s}^{\rm right}_Z$. We begin with the contribution from $\mathbf{s}^{\rm top_1}_Z$. Recall that the top-type $Z$-checks act as $H_Z\otimes\ident$, i.e. independent copies of $H_Z$ among the columns of $\tilde{\mathrm{\Lambda}}_{\rm L}$. As a result, Lemma \ref{lem:3D HGP logical support 2} tells us that $|\tilde{\mathbf{r}}_{\mathrm{\Lambda}_{\rm L}}(\hat{\mathbf{z}}+\mathbf{s}^{\rm top_1}_Z)| \geq |\tilde{\mathbf{r}}_{\mathrm{\Lambda}_{\rm L}}(\hat{\mathbf{z}})| \geq d_Z$. Now, notice that $\hat{\mathbf{z}}+\mathbf{s}^{\rm top_1}_Z$ is still completely supported in the columns of region 1. Since these columns are not in $\rs h$ by definition, whenever we try to reduce the weight in any column in region 1 by appending a right-type $Z$-check, we must spawn additional support in new columns in region 2. As a result, the row weight cannot be decreased from these right-type $Z$-checks; i.e. we have $|\tilde{\mathbf{r}}_{\mathrm{\Lambda}_{\rm L}}(\hat{\mathbf{z}}+\mathbf{s}^{\rm top_1}_Z + \mathbf{s}^{\rm right}_Z)| \geq |\tilde{\mathbf{r}}_{\mathrm{\Lambda}_{\rm L}}(\hat{\mathbf{z}}+\mathbf{s}^{\rm top_1}_Z)| \geq |\tilde{\mathbf{r}}_{\mathrm{\Lambda}_{\rm L}}(\hat{\mathbf{z}})| \geq d_Z$. Finally, we will deal with $\mathbf{s}^{\rm top_2}_Z$. Note that after the previous step of including $\mathbf{s}^{\rm right}_Z$, if the row weight in region 1 is greater or equal to $d_Z$, then we are done because $\mathbf{s}^{\rm top_2}_Z$ has no support in region 1. So we only need to handle the case where the row weight in region 1 is decreased below $d_Z$ due to $\mathbf{s}^{\rm right}_Z$. By the previous argument, any row weight missing from region 1 must be compensated in region 2. Now, suppose $\mathbf{s}^{\rm top}_Z$ was able to remove this new row weight in region 2 such that the total row weight $|\tilde{\mathbf{r}}_{\mathrm{\Lambda}_{\rm L}}(\hat{\mathbf{z}}+\mathbf{s}^{\rm top_1}_Z + \mathbf{s}^{\rm right}_Z + \mathbf{s}^{\rm top_2}_Z)| = |\tilde{\mathbf{r}}_{\mathrm{\Lambda}_{\rm L}}(\mathbf{z})| < d_Z$. However, due to the parallel structure of $H_Z\otimes\ident$ in the top-type checks, it immediately implies the existence of a $\mathbf{s}^{\prime\mathrm{top}_1}_Z$ in region 1 such that $|\tilde{\mathbf{r}}_{\mathrm{\Lambda}_{\rm L}}(\hat{\mathbf{z}}+\mathbf{s}^{\rm top_1}_Z + \mathbf{s}^{\prime\mathrm{top}_1}_Z)| \equiv |\tilde{\mathbf{r}}_{\mathrm{\Lambda}_{\rm L}}(\mathbf{z}')| < d_Z$: simply take the checks in $\mathbf{s}^{\prime\mathrm{top}_1}_Z$ as well as their copies in the columns of region 1 and eliminate the row support in region 1 one column at a time, which is guaranteed if $\mathbf{s}^{\prime\mathrm{top}_2}_Z$ was able to remove row weight in region 2. Note that $\mathbf{z}' \equiv \hat{\mathbf{z}}+\mathbf{s}^{\rm top_1}_Z + \mathbf{s}^{\prime\mathrm{top}_1}_Z$ is related to $\hat{\mathbf{z}}$ by an element $\mathbf{s}^{\rm top_1}_Z + \mathbf{s}^{\prime\mathrm{top}_1}_Z \in \rs \tilde{H}_Z$ and so is a nontrivial logical $\overline{Z}$ operator itself. However, $|\tilde{\mathbf{r}}_{\mathrm{\Lambda}_{\rm L}}(\mathbf{z}')| < d_Z$ is in contradiction with Lemma \ref{lem:3D HGP logical support 2}, and so we must have $|\mathbf{z}|_{\mathrm{\Lambda}_{\rm L}} \geq |\tilde{\mathbf{r}}_{\mathrm{\Lambda}_{\rm L}}(\mathbf{z})| \geq d_Z$.

    For the $X$ sector, observe that all $X$-checks act as $H_X \otimes \ident$ in $\tilde{\mathrm{\Lambda}}_{\rm L}$ and hence as independent copies of $H_X$ among the columns, as illustrated in Figure \ref{fig:3D HGP layout}. Since each column is effectively an independent copy of the HGP code from the perspective of the $X$-checks, we can perform the analysis column by column. From \eqref{eq:3D HGP G_X}, recall that a nontrivial canonical $\overline{X}$ operator has support $\hat{\mathbf{x}}$ among $d_{\rm c}$ columns. Within each of these columns, $\hat{\mathbf{x}}$ is supported on at least $d_X$ rows. Now, Lemma \ref{lem:3D HGP logical support} tells us that $|\tilde{\mathbf{r}}_{\mathrm{\Lambda}_{\rm L}}(\hat{\mathbf{x}} + \mathbf{s}_X)| \geq d_X$ within each column for any $\mathbf{s}_X \in \rs \tilde{H}_X$. Since this inequality holds for each column independently, we conclude that $|\mathbf{x}|_{\mathrm{\Lambda}_{\rm L}} \geq d_X d_{\rm c}$.
\end{proof}

\begin{thm}[Effective distance preservation of automorphism gadgets in homological product codes, part 2] \label{thm:d_eff 3D HGP 2}
    Suppose physical qubit permutations do not spread errors, and we have a (quantum $\times$ classical) left-sector homological product code according to Definition \ref{defn:left-sector 3D homological product code} with automorphism gadgets given by \eqref{eq:3D HGP gadgets}. If the input quantum CSS code is a left-sector HGP code according to Definition \ref{defn:left-sector HGP code}, with distance-preserving automorphism gadgets given by \eqref{eq:HGP gadgets}, then the inherited automorphism gadgets in the homological product code will still be distance-preserving.
\end{thm}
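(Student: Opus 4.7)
The plan is to handle the two types of gadgets $\tilde{U}_{\rm c}$ and $\tilde{U}_{\rm Q}$ separately. The gadget $\tilde{U}_{\rm c} = (\ident \otimes \sigma) \oplus (\ident \otimes w^{-\transpose})$ acts as a pure permutation $\ident \otimes \sigma$ on the entire left sector $\tilde{\mathrm{\Lambda}}_{\rm L}$, irrespective of any internal structure in the input quantum code. By the same reasoning as in Theorem \ref{thm:d_eff 3D HGP 1}, permutations do not spread errors, so each elementary fault within $\tilde{\mathrm{\Lambda}}_{\rm L}$ contributes at most a single-qubit error at the output; meanwhile, Lemma \ref{lem:3D HGP logical support 2} bounds the weight of any nontrivial logical representative on $\tilde{\mathrm{\Lambda}}_{\rm L}$ by the code distance $\tilde{d}$. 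Hence $\tilde{U}_{\rm c}$ is distance preserving.

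The nontrivial case is $\tilde{U}_{\rm Q} = (U \otimes \ident) \oplus (W \otimes \ident)$, where $U = U_1$ or $U_2$ is a HGP gadget of the form \eqref{eq:HGP gadgets}. I would first observe that $U$ itself has the direct-sum structure $(\sigma_0 \otimes \ident) \oplus (w_0 \otimes \ident)$ (or the analogue for $U_2$), acting as a pure permutation on the \emph{inner} left sector $\mathrm{\Lambda}_{\rm L}$ and as a circuit on the inner right sector $\mathrm{\Lambda}_{\rm R}$ of the input HGP code. Consequently, $U \otimes \ident$ acts as a pure permutation on the restricted subsystem $\mathrm{\Lambda}_{\rm L} \otimes \ident \subset \tilde{\mathrm{\Lambda}}_{\rm L}$. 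The direct-sum decomposition of both $U$ and of $\tilde{U}_{\rm Q}$ itself prevents any cross-propagation: errors originating in $\mathrm{\Lambda}_{\rm R} \otimes \ident$ (where $w_0 \otimes \ident$ may spread them along rows or columns) cannot leak into $\mathrm{\Lambda}_{\rm L} \otimes \ident$, and errors in $\tilde{\mathrm{\Lambda}}_{\rm R}$ (where $W \otimes \ident$ acts) cannot leak into $\tilde{\mathrm{\Lambda}}_{\rm L}$ at all.

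Combining these observations, the weight of any output error on $\mathrm{\Lambda}_{\rm L} \otimes \ident$ is upper-bounded by the number of elementary faults occurring within $\mathrm{\Lambda}_{\rm L} \otimes \ident$, counting initial data errors, single-qubit idling errors during the permutation, and final data errors. The key technical ingredient is Proposition \ref{prop:3D HGP restricted logical sector weight}, which is tailor-made for this setup: it asserts that any nontrivial logical $\overline{X}$ (resp.\ $\overline{Z}$) representative of the left-sector homological product code has weight at least $d_X d_{\rm c} = \tilde{d}_X$ (resp.\ $d_Z = \tilde{d}_Z$) on $\mathrm{\Lambda}_{\rm L} \otimes \ident$, even after dressing by stabilizers and right-sector gauge operators. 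Therefore, enacting a nontrivial logical operation through $\tilde{U}_{\rm Q}$ requires at least $\tilde{d} = \min(\tilde{d}_X, \tilde{d}_Z)$ faults inside $\mathrm{\Lambda}_{\rm L} \otimes \ident$, so the effective fault distance saturates the code distance.

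The main obstacle is justifying that no indirect pathway allows errors in the circuit-acting sectors ($\mathrm{\Lambda}_{\rm R} \otimes \ident$ or $\tilde{\mathrm{\Lambda}}_{\rm R}$) to effectively enlarge the error support on $\mathrm{\Lambda}_{\rm L} \otimes \ident$. This is precisely where the \emph{nested} direct-sum structure inherited from the HGP construction becomes essential: both $U = (\sigma_0 \otimes \ident) \oplus (w_0 \otimes \ident)$ and $\tilde{U}_{\rm Q} = (U \otimes \ident) \oplus (W \otimes \ident)$ are block-diagonal with respect to the sector partition, so the physical circuit factorizes and no entangling gate ever couples $\mathrm{\Lambda}_{\rm L} \otimes \ident$ with its complement in the data qubits. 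Once this block-diagonality is made explicit, Proposition \ref{prop:3D HGP restricted logical sector weight} closes the argument.
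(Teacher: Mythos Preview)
Your proposal is correct and follows essentially the same approach as the paper: defer $\tilde{U}_{\rm c}$ to the pure-permutation case (Theorem \ref{thm:d_eff 3D HGP 1}), then for $\tilde{U}_{\rm Q}$ exploit the nested direct-sum structure of the HGP gadget $U$ to see that $U\otimes\ident$ acts as a pure permutation on the restricted subsystem $\mathrm{\Lambda}_{\rm L}\otimes\ident\subset\tilde{\mathrm{\Lambda}}_{\rm L}$, and close with Proposition \ref{prop:3D HGP restricted logical sector weight}. Your treatment is slightly more explicit than the paper's about the block-diagonality preventing cross-sector error propagation, but the logical skeleton is identical.
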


\begin{proof}
    We will restrict our attention to $\tilde{U}_{\rm Q}$ in \eqref{eq:3D HGP gadgets}, since $\tilde{U}_{\rm c}$ falls within the scope of Theorem \ref{thm:d_eff 3D HGP 1}. In \eqref{eq:3D HGP gadgets}, note that the left-sector action of $\tilde{U}_{\rm Q}$ is given by $U \otimes \ident$, which has the interpretation of applying $U$ among the left-sector columns in parallel. If $U$ is an automorphism gadget of the input HGP code, then it further factorizes according to \eqref{eq:HGP gadgets}. Let $\mathbf{z} \in \ker \tilde{H}_X \backslash \rs\tilde{H}_Z$ $\big( \mathbf{x} \in \ker \tilde{H}_Z \backslash \rs\tilde{H}_X \big)$ be the support of any nontrivial logical $\overline{Z}$ ($\overline{X}$) operator of the left-sector homological product code. Then Propositions \ref{prop:left-sector homological product code distance} and \ref{prop:3D HGP restricted logical sector weight} tell us that
    \begin{align}
        \abs{\mathbf{z}}_{\mathrm{\Lambda}_{\rm L}} \geq \tilde{d}_Z \quad,\quad \abs{\mathbf{x}}_{\mathrm{\Lambda}_{\rm L}} \geq \tilde{d}_X \, .
    \end{align}
    Since the input automorphism gadget only acts as a permutation within the $\mathrm{\Lambda}_{\rm L}$ rows of $\tilde{\mathrm{\Lambda}}_{\rm L}$, at least $\tilde{d}_Z$ elementary faults are required to incur a logical $\overline{Z}$ error. Similarly, at least $\tilde{d}_X$ elementary faults are required to incur a logical $\overline{X}$ error.
\end{proof}

\subsection{Fault tolerance for homological product codes: quantum $\times$ quantum}
\label{sec:FT quantum x quantum}

Before we begin our analysis on the fault tolerance of our automorphism gadgets for (quantum $\times$ quantum) homological product codes, we first discuss the code parameters of such homological product codes. These parameters would directly set the baseline for how we can expect our gadgets to perform fault-tolerantly.

We first state a few well-known techniques to modify the parity-check matrices of classical linear codes that we use to prove some facts regarding the parameters for our (quantum $\times$ quantum) homological product codes.

\begin{defn}[Punctured code]\label{defn:punctured-code}
    Let $H \in \mathbb{F}_2^{m \times n}$ be the parity-check matrix of a classical binary linear code $\mathcal{C}$. In addition, let $B \subseteq [n]$ be a set of column indices. Then, we say that the classical linear code $\mathcal{C}$ is punctured with respect to $B$ when its parity-check matrix is now given by the submatrix $H|_{B} \coloneqq H[[m], B] \in \mathbb{F}_2^{m \times |B|}$.    
\end{defn}

In other words, when we puncture a code with respect to a set of column indices, we are dropping all the columns in the matrix other than the columns corresponding to the set of indices.

\begin{defn}[Shortened code]\label{defn:shortened-code}
Let $G \in \mathbb{F}_2^{k \times n}$ be the generator matrix of a classical binary linear code $\mathcal{C}$. In addition, let $B \subseteq [n]$ be a set of column indices. Then, we say that the classical linear code $\mathcal{C}$ is shortened with respect to $B$ when its generator is now given by the submatrix $G|^{B} \coloneqq G[A, B] \in \mathbb{F}_2^{|A| \times |B|}$ for some $A \subseteq [k]$ such that $G[\{a\}, [n]\setminus B] = \mathbf{0}$ for any $a \in A$.    
\end{defn}

Because we mainly work with parity-check matrices instead of generator matrices, we abuse the definition and replace $G$ with $H$ in Definition~\ref{defn:shortened-code} when we say we shorten $H$ with respect to some set of column indices. Recall that if $H$ is the parity-check matrix of the classical code $\mathcal{C}$ that has a generator matrix $G$, then $H$ is the generator matrix of the classical code $\mathcal{C}^\perp$ that is dual to $\mathcal{C}$. Thus, one can also consider shortening $H$ as the equivalent of shortening the dual code $\mathcal{C}^\perp$. 

Lastly, we define how we can augment the set of parity-checks imposed by a parity-check matrix $H$ for a classical linear code $\mathcal{C}$.

\begin{defn}[Augmented checks]\label{defn:augmented-checks}
    Let $H \in \mathbb{F}_2^{m \times n}$ be the parity-check matrix of a classical binary linear code $\mathcal{C}$. Let $S  = \left\{\mathbf{v}^\transpose\;|\;\mathbf{v} \in \mathbb{F}_2^{n}\right\}$ be a set of row vectors that correspond to new additional parity-checks that act on $n$ bits. Then, the augmented parity-check matrix $H^{+S}$ is the new parity-check matrix of a classical linear code $\mathcal{C}'$ with the set of rows $S$ added to the set of rows in $H$. Unless otherwise stated, the set of rows $S$ is appended at the bottom of the matrix $H$ i.e.,
    \[H^{+S} \coloneqq \begin{pmatrix}
        H \\ S
    \end{pmatrix}.\]
\end{defn}

\subsubsection{Code parameters for homological product codes: quantum $\times$ quantum}\label{sec:code-params-quantum-times-quantum}

Recall that we are only interested in the middle logical qubits of the quantum $\times$ quantum homological product code by our construction in Section~\ref{sec:quantumxquantum}.
To be more explicit, we are interested in keeping the logical qubits corresponding to the cohomology $\mathcal{H}^0[{\rm Q}] \otimes \mathcal{H}^0[{\rm Q}']$ for quantum codes ${\rm Q}$ and ${\rm Q}'$ and treating the other cohomologies that correspond to $\mathcal{H}^i[{\rm Q}] \otimes \mathcal{H}^{-i}[{\rm Q}']$ as gauge qubits for $i \in \mathbb{Z} \setminus \{0\}$. By a simple evaluation of the rank of the cohomology $\mathcal{H}^0[{\rm Q}] \otimes \mathcal{H}^0[{\rm Q}']$, we see that the number of logical qubits for the homological product code $\tilde{{\rm Q}}$ is given by
\[\mathrm{rank}\,\left(\frac{\ker H_Z}{\mathrm{rs} H_X}\right) \cdot \mathrm{rank}\,\left(\frac{\ker H'_Z}{\mathrm{rs} H'_X}\right) = k^{}_{\rm Q} \cdot k'_{\rm Q}.\]
Assuming that ${\rm Q}$ and ${\rm Q}'$ are both associated to 3-term chain complexes, we have determined that this ``middle-sector'' homological product code $\tilde{{\rm Q}}$ corresponding to ${\rm Q}$ and ${\rm Q}'$ has the following parameters:
\begin{align}
    \tilde{n} &= n^{}_{\rm Q} \cdot n'_{\rm Q} + m_Z' \cdot m^{}_X + m^{}_Z \cdot m_X' \\
    \tilde{k} &= k^{}_{\rm Q} \cdot k'_{\rm Q}
\end{align}

Firstly, we make an argument in the following lemma that dressing our $\overline{X}$ and $\overline{Z}$ logical operators for our middle-sector homological product code with its gauge operators does not reduce the Hamming weight of our logical operators:

\begin{lem}[Dressed logical operators]\label{lem:dressed-logical-ops}
    Suppose we have a $\llbracket n_{\rm Q}, k_{\rm Q}, (d_X, d_Z)\rrbracket$ quantum code ${\rm Q}$ and a $\llbracket n'_{\rm Q}, k'_{\rm Q}, (d_X', d_Z')\rrbracket$ quantum code ${\rm Q}'$ with parity-check matrices $H_X, H_Z, H'_X, H'_Z$ respectively. Let $\tilde{{\rm Q}}$ be a middle-sector CSS code constructed by taking the homological product code between ${\rm Q}$ and ${\rm Q}'$. Then, dressing the $\overline{X}$ logical operators from $\mathcal{H}^0[{\rm Q}] \otimes \mathcal{H}^0[{\rm Q}']$ and the $\overline{Z}$ logical operators from $\mathcal{H}_0[{\rm Q}] \otimes \mathcal{H}_0[{\rm Q}']$ strictly increases their Hamming weights.
\end{lem}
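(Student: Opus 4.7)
The plan is to exploit the partition of the data qubits $\tilde{Q}$ into the Left, Middle, and Right sectors corresponding to the subspaces $(S_Z, S'_X)$, $(Q, Q')$, and $(S_X, S'_Z)$ respectively. By \eqref{eq:4D HP G_X,G_Z}, the canonical bare logical $\overline{X}$ and $\overline{Z}$ representatives coming from $\mathcal{H}^0[{\rm Q}] \otimes \mathcal{H}^0[{\rm Q}']$ and $\mathcal{H}_0[{\rm Q}] \otimes \mathcal{H}_0[{\rm Q}']$ are supported entirely in the Middle sector, with $\tilde{G}_{X,\mathrm{M}}$ and $\tilde{G}_{Z,\mathrm{M}}$ of the form $(\mathbf{0} \mid G \otimes G' \mid \mathbf{0})$. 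This first step pins down the canonical bare logical support.

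Next, I would invoke the K\"unneth formula (Proposition~\ref{prop:kunneth-formula}) to decompose
\[\tilde{\mathcal{H}}^0 \;\cong\; \bigoplus_{i+j=0} \mathcal{H}^i[{\rm Q}] \otimes \mathcal{H}^j[{\rm Q}'],\]
so that, aside from the kept summand $\mathcal{H}^0[{\rm Q}] \otimes \mathcal{H}^0[{\rm Q}']$, the spurious gauge summands for the $\overline{X}$ sector are $\mathcal{H}^{-1}[{\rm Q}] \otimes \mathcal{H}^1[{\rm Q}']$ and $\mathcal{H}^1[{\rm Q}] \otimes \mathcal{H}^{-1}[{\rm Q}']$. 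By the explicit form of the K\"unneth isomorphism stated in Proposition~\ref{prop:kunneth-formula}, representatives of classes in $\mathcal{H}^i[{\rm Q}] \otimes \mathcal{H}^{-i}[{\rm Q}']$ can be chosen inside $C^i[{\rm Q}] \otimes C^{-i}[{\rm Q}']$. For $i = -1$, this places canonical gauge $\overline{X}$ representatives inside $S_X \otimes S'_Z$, which is precisely the Right sector; for $i = 1$, it places them inside $S_Z \otimes S'_X$, which is the Left sector. A symmetric argument applied to $\tilde{\mathcal{H}}_0$ via the homological K\"unneth formula yields canonical gauge $\overline{Z}$ representatives supported in the Left and Right sectors.

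Since the Left, Middle, and Right sectors are pairwise disjoint qubit subsets, dressing a canonical bare logical $\hat{L}$ by any nontrivial canonical gauge operator $\mathbf{g}$ produces an operator whose support is the disjoint union of $\mathrm{supp}(\hat{L})$ in the Middle sector and $\mathrm{supp}(\mathbf{g})$ in the Left or Right sector, so $|\hat{L} + \mathbf{g}| = |\hat{L}| + |\mathbf{g}| > |\hat{L}|$, which yields the lemma. The main obstacle I anticipate is verifying that the K\"unneth representatives can genuinely be chosen within the stated graded components while remaining bona fide cocycles (cycles) for the product complex $\tilde{\mathcal{C}}$; this is a direct bookkeeping exercise using the explicit boundary and coboundary maps read off from \eqref{eq:4D HP complex}, together with the fact that a tensor product of (co)cycles is itself a (co)cycle in the total complex. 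A more delicate subtlety is that a general dressed representative could in principle be further modified by a stabilizer whose off-diagonal blocks in \eqref{eq:4D HP H_X,H_Z} mix the three sectors; the present lemma side-steps this by comparing canonical representatives, while promoting the statement to a full dressed-distance lower bound on arbitrary representatives would require a (quantum $\times$ quantum) analog of Proposition~\ref{prop:3D HGP restricted logical sector weight}, which remains open (cf.\ Theorem~\ref{thm:d_eff 4D HGP} and the open problem on effective distance preservation for quantum input codes).
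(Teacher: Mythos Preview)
Your proposal is correct and follows essentially the same approach as the paper's own proof: identify via the K\"unneth formula that the canonical gauge $\overline{X}$ (resp.\ $\overline{Z}$) representatives from $\mathcal{H}^i[{\rm Q}]\otimes\mathcal{H}^{-i}[{\rm Q}']$ with $i\neq 0$ live in $C^i[{\rm Q}]\otimes C^{-i}[{\rm Q}']$, which for $i=\pm 1$ are precisely the Left and Right qubit sectors, disjoint from the Middle sector carrying the kept logicals, so dressing can only add weight. Your closing remarks about the stabilizer-mixing subtlety and the limited scope of the lemma (canonical representatives only) are accurate and anticipate exactly what the paper defers to Proposition~\ref{prop:distance-bounds-subsystem-hom-prod-code}.
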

\begin{proof}
    We begin by formulating a canonical logical operator basis for the $X$ gauge operators. Recall that our set of $X$ gauge operators is the set of spurious cohomologies $\left\{\mathcal{H}^i[{\rm Q}] \otimes \mathcal{H}^{-i}[{\rm Q}']\right\}_{i \in \mathbb{Z} \setminus \{0\}}$. Each of these spurious cohomologies is supported on the physical qubits that lie in the vector space ${\rm Q}^i \otimes {\rm Q}^{\prime\,-i}$ for $i \neq 0$ respectively. Note that ${\rm Q}_i$ is simply the $i$\textsuperscript{th} $\mathbb{F}_2$-vector space in the chain complex ${\rm Q}$. In other words, when written in their canonical forms, these $X$ gauge operators do not share support with the logical $\overline{X}$ operators since the logical $\overline{X}$ operators are completely supported on the qubits in the vector space ${\rm Q}^0 \otimes {\rm Q}^{\prime\,0}$. Therefore, dressing the logical $\overline{X}$ operators with an arbitrary $X$ gauge operator only increases the support of the logical operator on a disjoint set of qubits. The argument for the dressed logical $\overline{Z}$ operators follows in the same way.
\end{proof}

Now, we proceed to state the following theorem that bounds the minimum distances of the (dressed) logical operators of the middle-sector homological product code $\tilde{{\rm Q}}$. In the theorem statement and its proof, we use $\tilde{{\rm Q}}$ to represent the quantum code and its associated chain complex interchangeably. It should be clear from the context which object we are referring to. Our proof of Proposition~\ref{prop:distance-bounds-subsystem-hom-prod-code} uses techniques introduced in Refs.~\cite{Tillich_2014, Zeng_2019, tan2024hgp}. 

\begin{prop}[Distance bounds for middle-sector homological product codes]\label{prop:distance-bounds-subsystem-hom-prod-code}
    Suppose we have a $\llbracket n_{\rm Q}, k_{\rm Q}, (d_X, d_Z)\rrbracket$ quantum code ${\rm Q}$ and a $\llbracket n'_{\rm Q}, k'_{\rm Q}, (d_X', d_Z')\rrbracket$ quantum code ${\rm Q}'$ with parity-check matrices $H_X, H_Z, H'_X, H'_Z$ respectively. Let $\tilde{{\rm Q}}$ be a subsystem CSS code constructed by taking the homological product code between ${\rm Q}$ and ${\rm Q}'$. Then, the 0-cosystolic and 0-systolic distances of the middle-sector homological product code are bounded by:
    \[\max(d^{}_X, d'_X) \leq d^0(\tilde{{\rm Q}}) \leq d^{}_X \cdot d'_X,\]
    \[\max(d^{}_Z, d'_Z) \leq d_0(\tilde{{\rm Q}}) \leq d^{}_Z \cdot d'_Z\]
    when we disregard the 0-cohomologies and 0-homologies introduced by the $\overline{X}$ and $\overline{Z}$ gauge operators.
    In addition, let $\mathbf{x}$ and $\mathbf{z}$ be arbitrary non-trivial dressed $\overline{X}$ and $\overline{Z}$ logical operators for $\tilde{{\rm Q}}$ respectively. Then, we have:
    \[\abs{\mathbf{x}_{\rm M}} \geq \max(d^{}_X, d'_X),\]
    \[\abs{\mathbf{z}_{\rm M}} \geq \max(d^{}_Z, d'_Z).\]
\end{prop}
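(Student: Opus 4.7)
The plan is to reduce to the $X$-type statement (the $Z$-type is symmetric in the homological product construction), obtain the upper bound on $d^0(\tilde{Q})$ by an explicit minimum-weight representative, and derive both remaining inequalities from the single stronger middle-sector claim $|\mathbf{x}_M| \geq \max(d_X, d'_X)$, which also yields $d^0(\tilde{Q}) \geq \max(d_X, d'_X)$ since $|\mathbf{x}| \geq |\mathbf{x}_M|$. The upper bound will be immediate: if $\mathbf{g}_X$ and $\mathbf{g}'_X$ are minimum-weight $X$-logical representatives of $Q$ and $Q'$, then the canonical form $(\mathbf{0}\mid \mathbf{g}_X \otimes \mathbf{g}'_X \mid \mathbf{0})$ of \eqref{eq:4D HP G_X,G_Z} is a non-trivial logical representative of $\tilde{Q}$ with weight $d_X \cdot d'_X$.

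For the lower bound, I will first invoke Lemma~\ref{lem:dressed-logical-ops}: because the gauge operators associated with the spurious cohomologies $\mathcal{H}^i[Q]\otimes\mathcal{H}^{-i}[Q']$ for $i\neq 0$ live on the left sector $(S_Z, S'_X)$ or the right sector $(S_X, S'_Z)$, dressing by them leaves the middle-sector support $\mathbf{x}_M$ unchanged. So it suffices to bound $\mathbf{x}_M$ modulo the $X$-stabilizers restricted to the middle column block of \eqref{eq:4D HP H_X}, which decomposes into two independent pieces: $\ident \otimes H'_X$ from the top row block and $H_X \otimes \ident$ from the bottom row block. After reshaping $\mathbf{x}_M$ into a matrix $N \in \mathbb{F}_2^{n_Q \times n'_Q}$, every equivalent middle-sector representative takes the form $N = M + A + B$, where $M = G_X^\transpose C G'_X$ is the canonical form for some non-zero $C \in \mathbb{F}_2^{k_Q \times k'_Q}$, every row of $A$ lies in $\operatorname{rs} H'_X$, and every column of $B$ lies in $\operatorname{rs} H_X$.

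The heart of the argument is a projection step. For any $\alpha \in \ker H_X \subseteq \mathbb{F}_2^{n_Q}$, I expand $\alpha^\transpose N = \alpha^\transpose M + \alpha^\transpose A + \alpha^\transpose B$. The $B$-term vanishes because the columns of $B$ lie in $\operatorname{rs} H_X$, which is orthogonal to $\ker H_X$ over $\mathbb{F}_2$. The $A$-term lies in $\operatorname{rs} H'_X$ as a sum of elements of $\operatorname{rs} H'_X$. The $M$-term equals $(C^\transpose G_X \alpha)^\transpose G'_X$. Choosing $\alpha$ to be any canonical $Z$-logical of $Q$, which lies in $\ker H_X$ and realizes $G_X \alpha = \mathbf{e}_i$ for any prescribed $i$, one may pick $i$ so that $C^\transpose \mathbf{e}_i \neq 0$ (possible since $C$ is non-zero); then $\alpha^\transpose N$ is a non-trivial $X$-logical representative of $Q'$, forcing $|\alpha^\transpose N| \geq d'_X$. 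Since each component of $\alpha^\transpose N$ sums entries within a single column of $N$, $|\alpha^\transpose N|$ is bounded by the number of non-zero columns of $N$, which is at most $|N|$, yielding $|\mathbf{x}_M| \geq d'_X$. The mirror argument, right-multiplying $N$ by a suitable $\beta \in \ker H'_X$ (any canonical $Z$-logical of $Q'$), gives $|\mathbf{x}_M| \geq d_X$, so $|\mathbf{x}_M| \geq \max(d_X, d'_X)$ as required.

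I anticipate the main obstacle to be the bookkeeping for the middle-sector stabilizer decomposition: one must verify that the freedom really decouples into independent row-wise and column-wise additions (i.e., that the cross-sector entries in \eqref{eq:4D HP H_X} impose no additional constraints on $A$ and $B$ once we restrict attention to the middle block), and that the projection vectors $\alpha$ and $\beta$ can simultaneously lie in $\ker H_X$ (resp.\ $\ker H'_X$) and have the prescribed logical content. Both facts ultimately rest on the symplectic non-degeneracy of the logical Pauli pairing modulo stabilizers in each input code, but correctly tracking qubits, checks, and boundary directions across the five-term product complex will require care.
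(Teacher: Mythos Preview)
Your proof is correct, and it takes a genuinely different route from the paper's. The paper proves $|\mathbf{x}_M|_r \geq d_X$ by contradiction through a puncturing/shortening construction: assuming the row support $R_M$ of $\mathbf{x}_M$ has size less than $d_X$, it punctures $H_Z$ and shortens $H_X$ to $R_M$, pads with zero rows to preserve block dimensions, and builds an auxiliary code $Q[R_M]$ with trivial $0$-cohomology. Taking the homological product with $Q'$ and invoking the K\"unneth formula then shows $\mathbf{x}$ decomposes as a stabilizer plus gauge in $\tilde{Q}[R_M]$, and a careful bookkeeping step lifts this decomposition back to $\tilde{Q}$, contradicting nontriviality.

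Your approach short-circuits all of this with a single projection: writing $N = M + A + B$ with $A = \Gamma_1 H'_X$ and $B = H_X^\transpose \Gamma_2$ coming from the two row-blocks of $\tilde{H}_X$, left-multiplying by a canonical $Z$-logical $\alpha$ of $Q$ kills $B$ (since $\alpha \in \ker H_X$), sends $A$ into $\operatorname{rs} H'_X$, and leaves a nontrivial $X$-logical of $Q'$ from $M$. This yields the column-weight bound directly, and the mirror argument the row-weight bound. Your anticipated obstacle about cross-sector constraints is not an issue: the two row-blocks of $\tilde{H}_X$ act on the middle block through independent coefficient vectors $\Gamma_1,\Gamma_2$, so $A$ and $B$ are genuinely unconstrained. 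The paper's approach is more homological and plugs into the puncturing machinery of \cite{Tillich_2014, Zeng_2019, tan2024hgp}, which may transport more readily to longer chain complexes; yours is more elementary, avoids any auxiliary code construction, and makes the role of the symplectic pairing between $G_X$ and $G_Z$ in each factor completely transparent.
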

\begin{proof}
    To prove the proposition's statement, we focus on establishing the distance bounds for the cohomology operators i.e., $\overline{X}$ logical operators. The arguments put forward will also hold for the homology operators i.e., $\overline{Z}$ logical operators.
    Recall that the logical operator basis for the middle-sector homological product code $\tilde{{\rm Q}}$ is given by Eq.~\ref{eq:4D HGP G_X} which we restate below for convenience:
    \[\tilde{G}_{X,{\rm M}} = \left(\, \mathbf{0} \;\big|\; G^{}_X \otimes G'_X \;\big|\; \mathbf{0} \,\right) = \left(\begin{array}{c|c|c}\mathbf{0} & \frac{\ker H_Z}{\mathrm{rs}\, H_X} \otimes \frac{\ker H_Z'}{\mathrm{rs}\,H_X' } & \mathbf{0} \end{array}\right).\]
    Take  will have support on at least $d^{}_X \cdot d_X'$ physical qubits from the expression above. Thus, we trivially obtain the upper bound $d^0(\tilde{{\rm Q}}) \leq d^{}_X \cdot d_X'$. 
    
    Next, we proceed to show that $\abs{\mathbf{x}_{\rm M}} \geq \max(d^{}_X, d_X')$ for an arbitrary nontrivial $\overline{X}$ logical operator $\mathbf{x}$. Note that this directly implies that $d^0(\tilde{{\rm Q}}) \geq \max(d_X, d_X')$ since $\abs{\mathbf{x}_M} \leq d^0(\tilde{{\rm Q}})$ in the worst case. To show that $\abs{\mathbf{x}_M} \geq \max(d_X, d_X')$, we prove that $\abs{\mathbf{x}_M}_r \geq d_X$ and $\abs{\mathbf{x}_M}_c \geq d_X'$ for a non-trivial $\overline{X}$ logical operator. 
    
    Suppose $\abs{\mathbf{x}_M}_r < d_X$. Let $R_M \subseteq [n_{\rm Q}]$ such that for all $r \in R_M$, there exists some $c \in [n_{\rm Q}']$ such that $\mathbf{x}_M[\{r\},\{c\}]$ is non-trivial. Note that $\abs{\mathbf{x}_M}_r = \abs{R_M}$.
    Now, let us puncture $H_Z$ with respect to $R_M$ to obtain $H_Z|_{R_M}$. Recall that this means that we have dropped all columns of $H_Z$ with indices that lie outside of $R_M$. We also shorten $H_X$ with respect to $R_M$ to obtain $H_X|^{R_M}$. Recall that this means that we have dropped all rows of $H_X$ with non-trivial support on columns in $R_M$ before dropping all columns in the resulting matrix with indices that lie outside of $R_M$. Suppose we have dropped $\ell$ rows from $H_X$ in the process of shortening. Then, we add $S$, i.e. $\ell$ rows of all zeros, to $H_X|^{R_M}$ at the indices where they were dropped to obtain a shortened, augmented check matrix $\left(H_X|^{R_M}\right)^{+S} \in \mathbb{F}_2^{m_X \times \abs{R_M}}$. Note that the orthogonality condition is still satisfied i.e., $H_X|^{R_M} H_Z|_{R_M}^{\transpose} = \left(H_X|^{R_M}\right)^{+S} H_Z|_{R_M}^{\transpose} = 0$. 
    
    Let us construct a chain complex (CSS code) ${\rm Q}[R_M]$ with the boundary maps $\left(H_X|^{R_M}\right)^{+S}$ and $H_Z|_{R_M}$. 
    Because $\abs{R_M} = \abs{\mathbf{x}_M}_r < d_X$, we have a trivial $0$-cohomology $\frac{\ker H_Z|_{R_M}}{\mathrm{rs}\,H_X|^{R_M}}$ for ${\rm Q}[R_M]$. Since $\mathrm{rs}\,H_X|^{R_M} = \left(H_X|^{R_M}\right)^{+S}$, we also obtain a trivial 0-cohomology group  $\frac{\ker H_Z|_{R_M}}{ \mathrm{rs}\,\left(H_X|^{R_M}\right)^{+S}}$ for our quantum code ${\rm Q}[R_M]$.
    Now, let us construct a new subsystem homological product code $\tilde{{\rm Q}}[R_M]$ from ${\rm Q}[R_M]$ and ${\rm Q}'$ that has check matrices $\tilde{H}_X[R_M]$ and $\tilde{H}_Z[R_M]$.   
    We point out that our non-trivial $\overline{X}$ logical operator $\mathbf{x}$ is well-defined on the physical qubits of $\tilde{{\rm Q}}[R_M]$ since the space of 0-cells in both $\tilde{{\rm Q}}$ and $\tilde{{\rm Q}}[R_M]$ are completely identical. In other words, the qubit vector space can be decomposed into the same sets of qubit subspaces. In addition, we have $\tilde{H}_Z[R_M] \mathbf{x} = 0$.

    Next, we claim that the subsystem homological product code $\tilde{{\rm Q}}[R_M]$ encodes no logical qubit. Using Proposition~\ref{prop:kunneth-formula} and discarding all spurious cohomologies other than $\mathcal{H}^0[{\rm Q}[R_M]] \otimes \mathcal{H}^0[{\rm Q}']$, we see that \[\mathrm{rank}\, \mathcal{H}^0[\tilde{{\rm Q}}[R_M]] = \mathrm{rank}\,\mathcal{H}^0[{\rm Q}[R_M]] \cdot \mathrm{rank}\,\mathcal{H}^0[{\rm Q}'] = 0 \cdot k' = 0.\]
    In other words, $\mathbf{x}$ can be obtained from a linear combination of checks from $\tilde{H}_X[R_M]$ and $X$ gauge operators from the discarded spurious 0-cohomologies. Then, 
    \[\mathbf{x}^\transpose = \alpha^\transpose \tilde{H}_X[R_M] + \mathbf{x}_{\mathrm{gauge}}^{\transpose}\]
    for some $\alpha \in \mathbb{F}_2^{\abs{R_M} m'_X + m_X n'_{\rm Q}}$ and some $\mathbf{x}_{\mathrm{gauge}}$ that is some linear combination of $\overline{X}$ gauge operators in the canonical basis.   
    Note that there are many choices of $\alpha$ since we have introduced $\ell$ rows of zeros when we augmented the shortened check matrix $H_X|^{R_M}$. We choose $\alpha$ to be any such vector with trivial support in the rows with indices that directly interact with the row indices of the shortened, augmented check matrix for those $\ell$ rows in the larger $\tilde{H}_X[R_M]$ check matrix.
    Using Lemma~\ref{lem:dressed-logical-ops}, we know that our spurious 0-cohomologies can be written in a canonical basis that has trivial support on the qubit vector subspace ${\rm Q}[R_M]^0 \otimes {\rm Q}^{\prime\,0}$. This implies that $\mathbf{x}_M$ is obtained completely from the linear combination of $X$ stabilizer generators in $\tilde{H}_X[R_M]$ i.e., $\mathbf{x}_M^\transpose = \alpha^\transpose \tilde{H}_X[R_M].$
    Let us decompose $\alpha$ into $\begin{pmatrix}
        \alpha[T] \\ \alpha[B]
    \end{pmatrix}$ for $\alpha[T] \in \mathbb{F}_2^{m_X n_{\rm Q}'}$ and $\alpha[B] \in \mathbb{F}_2^{\abs{R_M} m'_X}$. Using Eq.~\ref{eq:4D HP H_X}, we have 
    \[\alpha[B]^\transpose (\ident_{\abs{R_M}} \otimes H'_X) + \alpha[T]^\transpose (\left(H_X|^{R_M}\right)^{+S} \otimes \ident_{n'_{\rm Q}}) = \mathbf{x}_M^\transpose.\]
    Let us denote $\alpha' \in \mathbb{F}_2^{n_{\rm Q} m_X' + m_X n'_{\rm Q}}$ that has the decomposition $\begin{pmatrix}
        \alpha[T] \\ \alpha'[B]
    \end{pmatrix}$ for the same $\alpha[T]$ and $\alpha'[B] \in \mathbb{F}_2^{n_{\rm Q} m'_X}$ is obtained by padding $\alpha[B]$ appropriately with zeros at the row indices re-introduced when we increase the dimension from $|R_M|$ to $n_{\rm Q}$. It is not too hard to see that by our construction of the shortened, augmented check matrix $\left(H_X|^{R_M}\right)^{+S}$ and $\alpha$, we have
    \[\alpha^{\prime\,\transpose} \tilde{H}_X = \alpha'[B]^\transpose (\ident_{n_{\rm Q}} \otimes H'_X) + \alpha[T]^\transpose (H_X \otimes \ident_{n'_{\rm Q}}) = \mathbf{x}_M^\transpose,\]
    \[\alpha^{\prime\,\transpose} \tilde{H}_X + \mathbf{x}_{\mathrm{gauge}}^{\transpose} = \mathbf{x}^{\transpose}.\]
    Since $\mathbf{x}$ can be constructed from a linear combination of $X$ stabilizer generators and gauge operators when $\abs{\mathbf{x}_M}_r < d_X$, we have shown that $\abs{\mathbf{x}_M}_r \geq d_X$ for $\mathbf{x}$ to be a non-trivial $\overline{X}$ logical operator. The same proof strategy would also work for showing that $\abs{\mathbf{x}_M}_c \geq d'_X$. This gives us $\abs{\mathbf{x}_M} \geq \max(d_X, d_X')$ which naturally implies that $d^0(\tilde{{\rm Q}}) \geq \max(d_X, d_X')$ when we consider the relevant 0-cohomologies in our middle-sector homological product code. The arguments above can also be used to show the same bounds for the dressed $\overline{Z}$ logical operators i.e., 0-homologies.
\end{proof}

Proposition~\ref{prop:distance-bounds-subsystem-hom-prod-code} is related to the result described in Ref.~\cite{Zeng_2020}. In the work done by Zeng and Pryadko, they provided upper and lower bounds for the (co)systolic distances of the tensor product of chain complexes. However, they only stated the bounds for the case where one of the chain complexes only has two non-trivial vector spaces and explored the case where some of the qubit vector spaces are projected away. Projecting away these qubit vector subspaces is necessary in their case because the shortening of $H_X$ would have decreased the dimension of the qubit subspaces eg. ${\rm Q}^{1} \otimes {\rm Q}^{\prime\,-1}$ which could have potentially rendered the nontrivial logical $\overline{X}$ operator to be undefined in that qubit subspace if it was supported on the physical qubits that have now disappeared as a result of the shortening. 

In our result, we have kept all the qubit vector spaces and considered chain complexes of arbitrary lengths; however, we have also chosen to discard the spurious (co)homologies. We were able to sidestep the technical issue that Zeng and Pryadko had by considering the augmentation of the $H_X$ parity-check matrix to ensure that the dimensions of all qubit vector subspaces are preserved so that the nontrivial logical operator is still well-defined for the proof strategy to go through. Our result can also be understood as using two separate applications of Theorem~\ref{thm:d_eff 3D HGP 1} from Ref.~\cite{tan2024hgp}. In Figure~\ref{fig:schematic-quantum-quantum}, we have shaded two sets of six squares where one of the set of squares is shaded in blue and the other set of squares is shaded in red.
Each set of six squares imposes a requirement on the qubits in $M$ to have either column weight or row weight of $d_Z$ or $d_Z'$ in order for a Pauli $Z$ operator to be a nontrivial logical operator of the middle-sector homological product code. This intuition guided the proof of Proposition~\ref{prop:distance-bounds-subsystem-hom-prod-code}. While the presence of two separate square blocks of checks that act row-wise and column-wise on the a single square block of qubits was able to constrain any non-trivial logical operator such that its support on that single block of qubits has weight at least $d_Z \cdot d_Z'$ for the quantum $\times$ classical homological product code, this is not the case for the quantum $\times$ quantum case because we also have two separate square blocks of checks that can potentially ``erase'' the support on the qubits in that single block by propagating them to the other left and right qubit blocks in an adversarial way as shown in Figure~\ref{fig:schematic-quantum-quantum}. Therefore, it is rather challenging to remove the gap between the lower and upper bound on the (co)systolic distances for the middle-sector homological product code in the most general case.

\usetikzlibrary{patterns}
\begin{figure}
  \centering
  \begin{tikzpicture}
    \draw (0,0) rectangle (3,3); 
    \node at (1.5,1.5) {$\mathbf{R_X}$};
    \draw (3.4,3.4) rectangle (6.4,6.4); 
      \pattern[pattern=north east lines, pattern color=red] (3.4,3.4)--(3.4,6.4)--(6.4,6.4)--(6.4,3.4)--cycle;
              \pattern[pattern=north west lines, pattern color=blue] (3.4,3.4)--(3.4,6.4)--(6.4,6.4)--(6.4,3.4)--cycle;
    \node at (4.9,4.9) {$\mathbf{M}$ \textbf{qubits}};
    \draw (3.4,0) rectangle (6.4,3); 
    \pattern[pattern=north east lines, pattern color=red] (3.4,0)--(3.4,3)--(6.4,3)--(6.4,0)--cycle;
    \node at (4.9,1.5) {$\mathbf{M_X[B]}$};
		\draw (3.4,6.8) rectangle (6.4,9.8); 
        \pattern[pattern=north east lines, pattern color=red] (3.4,6.8)--(3.4,9.8)--(6.4,9.8)--(6.4,6.8)--cycle;
                \pattern[pattern=north west lines, pattern color=blue] (3.4,6.8)--(3.4,9.8)--(6.4,9.8)--(6.4,6.8)--cycle;
			\node at (4.9,8.3) {$\mathbf{M_Z[T]}$};
    \draw (6.8,3.4) rectangle (9.8,6.4); 
    \pattern[pattern=north east lines, pattern color=red] (6.8,3.4)--(6.8,6.4)--(9.8,6.4)--(9.8,3.4)--cycle;
        \pattern[pattern=north west lines, pattern color=blue] (6.8,3.4)--(6.8,6.4)--(9.8,6.4)--(9.8,3.4)--cycle;
    \node at (8.3,4.9) {$\mathbf{M_Z[B]}$};
    \draw (6.8,0) rectangle (9.8,3); 
    \pattern[pattern=north east lines, pattern color=red] (6.8,0)--(6.8,3)--(9.8,3)--(9.8,0)--cycle;
    \node at (8.3,1.5) {$\mathbf{R}$ \textbf{qubits}};
		\draw (6.8,6.8) rectangle (9.8,9.8); 
        \pattern[pattern=north east lines, pattern color=red] (6.8, 6.8)--(6.8,9.8)--(9.8,9.8)--(9.8,6.8)--cycle;
        \pattern[pattern=north west lines, pattern color=blue] (6.8, 6.8)--(6.8,9.8)--(9.8,9.8)--(9.8,6.8)--cycle;
			\node at (8.3,8.3) {$\mathbf{R_Z}$};
    \draw (0,3.4) rectangle (3,6.4); 
    \pattern[pattern=north west lines, pattern color=blue] (0, 3.4)--(3,3.4)--(3,6.4)--(0,6.4)--cycle;
    \node at (1.5,4.9) {$\mathbf{M_X[T]}$};
		\draw (0,6.8) rectangle (3,9.8); 
        \pattern[pattern=north west lines, pattern color=blue] (0, 6.8)--(3,6.8)--(3,9.8)--(0,9.8)--cycle;
    \node at (1.5,8.3) {$\mathbf{L}$ \textbf{qubits}};
    \draw (-0.5,0) edge[-] (-0.5,3); 
    \node at (-0.8,1.5) {$m_X$};
    \draw (-0.5,3.4) edge[-] (-0.5,6.4); 
    \node at (-0.8,4.9) {$n_{\rm Q}$};
		\draw (-0.5,6.8) edge[-] (-0.5,9.8); 
    \node at (-0.8,8.3) {$m_Z$};
    \draw (0,-0.5) edge[-] (3,-0.5); 
    \node at (1.5,-0.75) {$m_X'$};
    \draw (3.4,-0.5) edge[-] (6.4,-0.5); 
    \node at (4.9,-0.75) {$n_{\rm Q}'$};
    \draw (6.8,-0.5) edge[-] (9.8,-0.5); 
    \node at (8.3,-0.75) {$m_Z'$};
  \end{tikzpicture}
  \caption{Schematic for the quantum $\times$ quantum homological product code constructed from quantum CSS codes with 3-term chain complexes. 
  For each of the square grid in the figure, we can identify the relationship between the square grid and its adjacent neighbors from the linear (co)boundary maps inherited from the constitutent quantum codes.
  For example, the grid labeled with $M$ qubits contains $n_{\rm Q} \cdot n_{\rm Q}'$ physical qubits. For each of the $n_{\rm Q}$ rows, we have $X$ checks from $M_X[T]$ and $Z$ checks from $M_Z[B]$ acting on a row of $n'_{\rm Q}$ physical qubits in $M$.
  There are thus effectively $n_{\rm Q}$ copies of the quantum code ${\rm Q}'$ that can be identified with these $n_{\rm Q}$ rows. 
  $M_X[T]$ (corr. $M_Z[B]$) refers to the $X$ (corr. $Z$) stabilizer generators that form the top (corr. bottom) row of the generator matrix in Eq.~\ref{eq:4D HP H_X} (corr. Eq.~\ref{eq:4D HP H_Z}).
  The blue and red shaded squares in the figure allows us to ``approximate'' some of the interactions between the vector spaces of the product chain complex with two separate instances of a quantum $\times$ classical homological product code which each has six vector spaces.
 }
 \label{fig:schematic-quantum-quantum}
\end{figure}
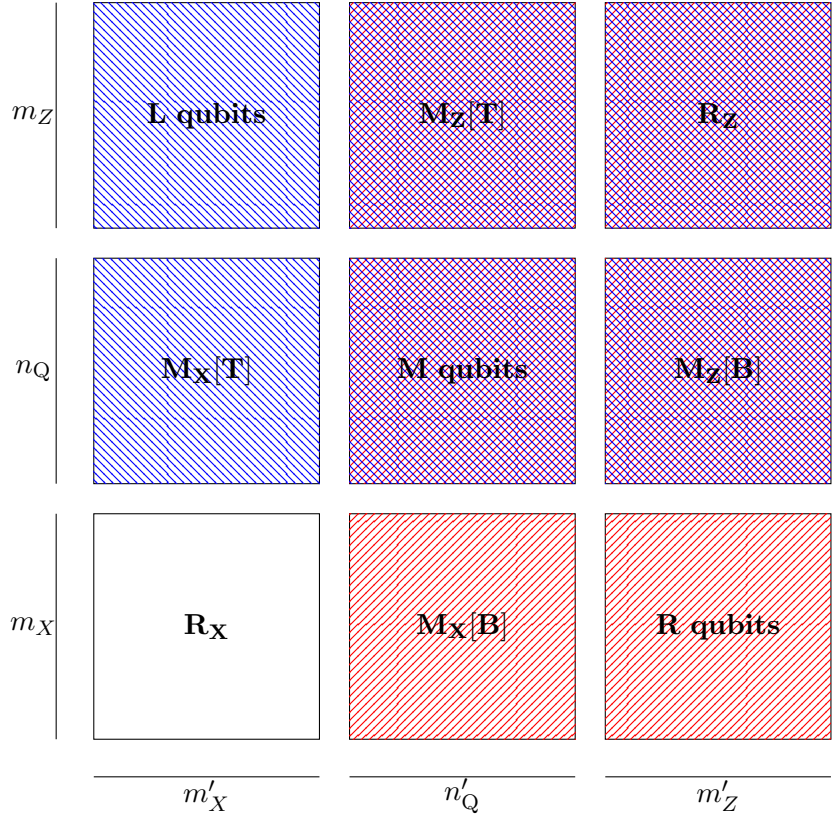

\subsubsection{Discussion on fault tolerance for quantum $\times$ quantum homological product codes}
Based on the discussion in Section~\ref{sec:code-params-quantum-times-quantum}, we know that any non-trivial dressed $\overline{X}$ logical operator of a middle-sector homological product code has to have support of weight at least $\max(d_X, d_X')$ on the $M$ qubits i.e., the vector subspace ${\rm Q}^0 \otimes {\rm Q}^{\prime\,0}$. In the event where the dressed distance of the middle-sector homological product code is indeed $\max(d_X, d_X')$, then we achieve the same kind of fault-tolerance discussed in the previous sections since the small circuits that can introduce faults only act on physical qubits that are not in the vector subspace ${\rm Q}^0 \otimes {\rm Q}^{\prime\,0}$. On the other hand, if the dressed distance of the middle-sector homological product code saturates the upper bound of $d_X \cdot d_X'$, then it is possible for our logical operators to have significant support outside of the vector subspace ${\rm Q}^0 \otimes {\rm Q}^{\prime\,0}$. This means that faults from the small circuits constructed in Section~\ref{sec:auto-gadget-quantum-times-quantum-hom-prod-code} that act on these other qubit subspaces can potentially contribute to logical errors in a way that is not inherently fault-tolerant.

We now use the bounds obtained in Proposition~\ref{prop:distance-bounds-subsystem-hom-prod-code} to explicitly bound the extent of the fault tolerance of our automorphism gadgets of our (quantum $\times$ quantum) middle-sector homological product code. We state it in the following theorem and proof for completeness.

\begin{thm}[Effective distance bounds for automorphism gadgets in (quantum $\times$ quantum) middle-sector homological product codes] \label{thm:d_eff 4D HGP}
    Suppose physical qubit permutations are noise-free, and we have a $\llbracket n,k,d \rrbracket$ middle-sector homological product code with automorphism gadgets given by \eqref{eq:4D HGP gadgets}. If the input gadgets of the input quantum CSS codes are distance-preserving, then the new automorphism gadgets of the middle-sector homological product code preserve the $X$ and $Z$ fault distances up to $\max(d_X, d_X')$ and $\max(d_Z, d_Z')$ respectively.
\end{thm}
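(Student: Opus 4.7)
The plan is to mirror the strategy of Theorems~\ref{thm:d_eff 2D HGP}, \ref{thm:d_eff 3D HGP 1}, and \ref{thm:d_eff 3D HGP 2}: couple the restricted logical-support bound of Proposition~\ref{prop:distance-bounds-subsystem-hom-prod-code} with the observation that on the middle sector $M$ the automorphism gadgets \eqref{eq:4D HGP gadgets} act as row- or column-wise copies of the input gadgets, which under the distance-preserving hypothesis reduce to plain qubit permutations that do not spread errors. Concretely, Proposition~\ref{prop:distance-bounds-subsystem-hom-prod-code} guarantees that any non-trivial dressed $\overline{X}$ (resp. $\overline{Z}$) representative $\mathbf{x}$ of the middle-sector code satisfies $\abs{\mathbf{x}_M} \geq \max(d_X, d'_X)$ (resp. $\max(d_Z, d'_Z)$); contrapositively, if the physical error $\mathbf{e}$ produced by the gadget has $\abs{\mathbf{e}_M} < \max(d_X, d'_X)$, then $\mathbf{e}$ cannot represent a non-trivial dressed $\overline{X}$, and analogously for $\overline{Z}$.

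Next I would read off from \eqref{eq:4D HGP U_Q1} and \eqref{eq:4D HGP U_Q2} that both $\tilde U_1$ and $\tilde U_2$ are block-diagonal with respect to the $L/M/R$ partition of the data qubits, so any fault inside the $L$ or $R$ block stays in that block and never leaks into $M$. Restricted to $M$, $\tilde U_1$ executes $U_1$ independently on each of the $n'_{\rm Q}$ columns and $\tilde U_2$ executes $U_2$ independently on each of the $n^{}_{\rm Q}$ rows. Adopting the reading of ``distance-preserving'' used in Theorems~\ref{thm:d_eff 3D HGP 1} and~\ref{thm:d_eff 3D HGP 2}, the input gadgets $U_1, U_2$ act as pure qubit permutations on the sectors of the input codes whose restricted logical weight saturates Proposition~\ref{prop:distance-bounds-subsystem-hom-prod-code} (the whole input code, in the middle-sector setting). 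Consequently $U_1 \otimes \ident$ and $\ident \otimes U_2$ act on $M$ as column-wise and row-wise permutations of physical qubits, and by the noise-free permutation hypothesis each elementary fault during either gadget corrupts at most one $M$-qubit, yielding $\abs{\mathbf{e}_M} \leq f$ after $f$ total faults. Combining with the previous paragraph, $f < \max(d_X, d'_X)$ forces $\abs{\mathbf{e}_M} < \max(d_X, d'_X)$ and contradicts the Proposition; hence the $X$ fault distance of either gadget is at least $\max(d_X, d'_X)$. Running the identical argument on $\tilde U^{-\transpose}_{1,2}$, which coincides with $\tilde U^{}_{1,2}$ on the permutation pieces and propagates $Z$ errors in precisely the same column- and row-wise fashion, delivers the $Z$ bound with $(d_Z, d'_Z)$.

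The hard part is not any single step but the looseness of the input bound itself: Proposition~\ref{prop:distance-bounds-subsystem-hom-prod-code} supplies only the weaker lower bound $\max(d_X, d'_X)$ rather than the product $d_X d'_X$ that one might hope for from the two-fold product structure, which is exactly why the theorem stops short of full effective-distance preservation. Tightening it would require a joint use of the row-wise $\ident \otimes H'_X$ and column-wise $H_X \otimes \ident$ stabilizer moves together with the gauge operators from the spurious cohomologies to simultaneously shrink both the row and column supports of $\mathbf{e}_M$, and is the content of Open Problem~1.
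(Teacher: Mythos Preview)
Your high-level plan matches the paper's: invoke Proposition~\ref{prop:distance-bounds-subsystem-hom-prod-code} to bound the middle-sector weight of any nontrivial dressed logical by $\max(d_X,d'_X)$ (resp.\ $\max(d_Z,d'_Z)$), then argue that the gadget's action on $M$ cannot manufacture that much weight from fewer faults. However, you have over-interpreted the hypothesis ``distance-preserving'' and this creates a real gap.

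You claim that, read in the sense of Theorems~\ref{thm:d_eff 3D HGP 1}--\ref{thm:d_eff 3D HGP 2}, the input gadgets $U_1,U_2$ act as \emph{pure permutations on the whole input code}, so that $U_1\otimes\ident$ and $\ident\otimes U_2$ are column- and row-wise permutations on all of $M$. That inference is incorrect. In Theorem~\ref{thm:d_eff 3D HGP 2} the input gadget is a permutation only on the \emph{left sector} $\mathrm{\Lambda}_{\rm L}$ of the input HGP code and a genuine circuit on $\mathrm{\Lambda}_{\rm R}$. Here $M=Q\otimes Q'$ involves all of $Q$, so $U_1\otimes\ident$ restricted to $M$ is a permutation on $\mathrm{\Lambda}_{\rm L}(Q)\otimes Q'$ but still a circuit on $\mathrm{\Lambda}_{\rm R}(Q)\otimes Q'\subset M$. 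The paper's proof explicitly notes that ``$U_1\otimes\ident$ acting on the middle physical qubits may not strictly be a permutation.'' Consequently your bound $\abs{\mathbf e_M}\le f$ does not hold: a single fault inside the circuit portion of a column can spread to several $M$-qubits in that column.

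What the paper does instead is use the distance-preserving hypothesis \emph{as a fault-distance statement} for $U_1$ itself: within any single column (a copy of the input code $Q$), fewer than $d_X$ faults during $U_1$ cannot produce a nontrivial logical of $Q$, even though errors may spread within that column. Combined with the row/column support bounds $\abs{\mathbf x_M}_r\ge d_X$, $\abs{\mathbf x_M}_c\ge d'_X$ established in the proof of Proposition~\ref{prop:distance-bounds-subsystem-hom-prod-code}, this is what yields the $\max(d_X,d'_X)$ fault-distance lower bound. Your argument would go through verbatim only under the stronger assumption $U_1\in\mathrm S_{n_{\rm Q}}$, $U_2\in\mathrm S_{n'_{\rm Q}}$ (the analogue of Theorem~\ref{thm:d_eff 3D HGP 1}), which is not what the theorem asserts.
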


\begin{proof}
    We will analyze one of the two automorphism gadgets \eqref{eq:4D HGP gadgets} for the $X$ fault distance because the same argument would hold for the other automorphism gadget as well as the $Z$ fault distance. Consider the gadget $\tilde{U}_{\rm Q}$ stated in \eqref{eq:4D HGP U_Q1}. The $U_1 \otimes \ident$ acting on the middle physical qubits may not strictly be a permutation. Again, Proposition~\ref{prop:distance-bounds-subsystem-hom-prod-code} tells us that the middle-sector logical weight is bounded from below by $\max(d_X, d_X')$ , and so we just need to also check that the $U \otimes \ident$ acting on the ``columns'' of the quantum CSS code does not spread errors in a dangerous way. By assumption, $U_1$ does not decrease the fault distance of the input quantum CSS code below its code distance. In other words, at least $\max(d_X, d_X')$ elementary faults are still required in the automorphism gadget to generate any dressed $\overline{X}$ logical operators \eqref{eq:4D HP G_X,G_Z} living in columns of the middle sector. Thus, at least $\max(d_X, d_X')$ faults are required to effect a nontrivial logical $\overline{X}$, and at least $\max(d_Z, d_Z')$ faults are required to effect a nontrivial logical $\overline{Z}$.
\end{proof}

In Ref.~\cite{Zeng_2020}, the following conjecture was proposed:

\begin{conj}[Systolic distances of tensor product of bounded chain complexes~{\cite[Restatement of Conjecture 18]{Zeng_2020}}] \label{conj:systolic-distances}
    The $i$-systolic distance $d_i(\mathcal{A} \otimes \mathcal{B})$ in a tensor product of any pair of bounded chain complexes $\mathcal{A}$ and $\mathcal{B}$ of vector spaces over a finite field is given by 
    \[d_i(\mathcal{A} \otimes \mathcal{B}) = \min_{j\in \mathbb{Z}}d_j(\mathcal{A})d_{i-j} (\mathcal{B}).\]
\end{conj}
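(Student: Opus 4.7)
The plan is to establish the conjecture as two matching inequalities; the upper bound is essentially a routine consequence of the K\"unneth formula, while the lower bound is the conjectural content and requires a substantial generalization of the shortening-and-augmenting technique developed in the proof of Proposition~\ref{prop:distance-bounds-subsystem-hom-prod-code}.

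For the upper bound, fix any $j \in \mathbb{Z}$ and choose $a \in Z_j(\mathcal{A})\setminus B_j(\mathcal{A})$ and $b \in Z_{i-j}(\mathcal{B})\setminus B_{i-j}(\mathcal{B})$ realizing $d_j(\mathcal{A})$ and $d_{i-j}(\mathcal{B})$. The Leibniz rule $\partial^{\mathcal{A}\otimes\mathcal{B}}(a\otimes b) = \partial a\otimes b + a\otimes\partial b = 0$ shows $a\otimes b\in Z_i(\mathcal{A}\otimes\mathcal{B})$, and by Proposition~\ref{prop:kunneth-formula} its class maps to $[a]\otimes [b]\in \mathcal{H}_j(\mathcal{A})\otimes \mathcal{H}_{i-j}(\mathcal{B})$, which is a nontrivial summand of $\mathcal{H}_i(\mathcal{A}\otimes\mathcal{B})$. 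Its Hamming weight is $|a||b| = d_j(\mathcal{A})d_{i-j}(\mathcal{B})$, and minimizing over $j$ gives $d_i(\mathcal{A}\otimes\mathcal{B}) \leq \min_j d_j(\mathcal{A}) d_{i-j}(\mathcal{B})$.

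For the lower bound, given a nontrivial $c \in Z_i(\mathcal{A}\otimes\mathcal{B})\setminus B_i(\mathcal{A}\otimes\mathcal{B})$, I would decompose $c = \sum_j c^{(j)}$ with $c^{(j)}\in C_j(\mathcal{A})\otimes C_{i-j}(\mathcal{B})$, viewed as a matrix indexed by the corresponding basis cells. The K\"unneth decomposition of the class $[c]$ guarantees at least one index $j$ for which the projection to $\mathcal{H}_j(\mathcal{A})\otimes\mathcal{H}_{i-j}(\mathcal{B})$ is nontrivial. Fix such a $j$ and let $R$ and $C$ be the nontrivial row and column supports of $c^{(j)}$. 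Apply the shortening-and-zero-augmenting construction of Proposition~\ref{prop:distance-bounds-subsystem-hom-prod-code} to every boundary map of $\mathcal{A}$ relevant at degree $j$ with respect to $R$ (and symmetrically to $\mathcal{B}$ at degree $i-j$ with respect to $C$), yielding shortened complexes $\mathcal{A}[R]$ and $\mathcal{B}[C]$ whose degree-$j$ (resp.\ $(i-j)$) homology must vanish whenever $|R| < d_j(\mathcal{A})$ (resp.\ $|C| < d_{i-j}(\mathcal{B})$). Pushing $c$ into the augmented product complex and carefully tracking the spurious K\"unneth summands then yields the row and column weight bounds $|c^{(j)}|_r \geq d_j(\mathcal{A})$ and $|c^{(j)}|_c \geq d_{i-j}(\mathcal{B})$.

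The hard part, and the reason the statement remains a conjecture, is precisely upgrading these row and column weight bounds to the product bound $|c^{(j)}|\geq d_j(\mathcal{A})d_{i-j}(\mathcal{B})$. A generic binary matrix with the stated row and column supports can have total Hamming weight as small as $\max(d_j(\mathcal{A}),d_{i-j}(\mathcal{B}))$, which is exactly the gap between the upper and lower bounds achieved by Proposition~\ref{prop:distance-bounds-subsystem-hom-prod-code}. To close this gap, the strategy I would pursue is to re-run the shortening argument \emph{row by row}: for each $r\in R$, treat the slice $c^{(j)}[r,\cdot]$ as an $(i-j)$-chain of a shortened $\mathcal{B}$ and argue that it must be a nontrivial cycle modulo boundaries which can be absorbed elsewhere. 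The main obstacle is that the coupled cross-term $\partial^{\mathcal{A}}\otimes\ident$ in the product differential can redistribute mass between distinct rows of $c^{(j)}$, so that the slice $c^{(j)}[r,\cdot]$ need not a priori be closed nor homologically isolable; overcoming this mixing — perhaps by constructing a filtration on $\mathcal{A}$ adapted to $R$ and analyzing the associated spectral sequence, or by dualizing to a pairing with a suitably chosen tensor-product cocycle whose support witnesses both $d^j(\mathcal{A})$ and $d^{i-j}(\mathcal{B})$ — is, in my view, the crux that has prevented a complete proof to date.
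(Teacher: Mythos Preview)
The statement in question is a \emph{conjecture} in the paper, not a theorem: the paper explicitly does not prove it, citing it as a restatement of Conjecture~18 in \cite{Zeng_2020} and remarking that extensive numerics have failed to find a counterexample. So there is no proof in the paper for your proposal to be compared against.

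That said, your proposal is an accurate diagnosis rather than a proof. Your upper-bound argument via the K\"unneth formula is correct and standard. Your lower-bound plan correctly identifies the obstruction: the shortening-and-augmenting technique of Proposition~\ref{prop:distance-bounds-subsystem-hom-prod-code} yields only the row and column weight bounds $|c^{(j)}|_r \geq d_j(\mathcal{A})$ and $|c^{(j)}|_c \geq d_{i-j}(\mathcal{B})$, hence $|c| \geq \max(d_j(\mathcal{A}), d_{i-j}(\mathcal{B}))$, which is precisely the lower bound the paper actually proves and the gap the paper explicitly leaves open. Your observation that the cross-term $\partial^{\mathcal{A}}\otimes\ident$ redistributes mass between rows --- so that individual row-slices need not be closed --- is exactly the mechanism the paper points to in Figure~\ref{fig:schematic-quantum-quantum} when it notes that the additional check blocks ``can potentially `erase' the support on the qubits in that single block by propagating them to the other left and right qubit blocks in an adversarial way.'' Your suggested attacks (filtration/spectral sequence, or pairing with a tensor-product cocycle) are reasonable directions but remain speculative; you have not closed the gap, and neither has the paper.
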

When we consider the middle-sector version of our homological product code, we observe that the conjectured $0$-(co)systolic distances in Ref.~\cite{Zeng_2020} matched the upper bound for the distance of our middle-sector homological product code as stated in Proposition~\ref{prop:distance-bounds-subsystem-hom-prod-code}. In addition, Zeng and Pryadko noted in their paper that they performed extensive numerical simulations and were unable to find a single instance where Conjecture~\ref{conj:systolic-distances} was not true. Thus, it could very much just be a proof technique issue that is preventing us from showing $d^0(\tilde{{\rm Q}}) = d_X \cdot d_X'$ and $d_0(\tilde{{\rm Q}}) = d_Z \cdot d_Z'$ in Proposition~\ref{prop:distance-bounds-subsystem-hom-prod-code}. Because the proof that lower bounds the weight of any dressed $\overline{X}$ and $\overline{Z}$ logical operator in the ${\rm Q}^0 \otimes {\rm Q}^{\prime\,0}$ physical qubit vector space is closely related to the lower bound on the effective distance as a result of the automorphism gadget for our quantum$\times$ quantum homological product code, a proof for $d^0(\tilde{{\rm Q}}) = d_X \cdot d_X'$ and $d_0(\tilde{{\rm Q}}) = d_Z \cdot d_Z'$ could potentially imply that any dressed logical operator would also have weight at least $d_X \cdot d_X'$ or $d_Z\cdot d_Z'$ in the ${\rm Q}^0 \otimes {\rm Q}^{\prime\,0}$ physical qubit vector space. Thus, we conjecture the following:

\begin{conj}[Effective distance preservation of automorphism gadgets in (quantum $\times$ quantum) middle-sector homological product codes] \label{conj:fault-tolerance}
    Suppose physical qubit permutations do not spread errors, and we have a $\llbracket n,k,d \rrbracket$ middle-sector homological product code with automorphism gadgets given by \eqref{eq:4D HGP gadgets}. If the gadgets of the input quantum CSS codes preserve the effective distance, then so will the new automorphism gadgets of the homological product code.
\end{conj}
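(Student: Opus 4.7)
The plan is to follow the structure of the proofs of Theorems~\ref{thm:d_eff 2D HGP} and \ref{thm:d_eff 3D HGP 2}, proceeding in two stages. In stage one, I would strengthen Proposition~\ref{prop:distance-bounds-subsystem-hom-prod-code} so that every nontrivial dressed $\overline{X}$ (resp.\ $\overline{Z}$) logical operator $\mathbf{x}$ of the middle-sector homological product code satisfies the tight middle-sector weight bound $|\mathbf{x}_M| \geq d_X d_X'$ (resp.\ $|\mathbf{z}_M| \geq d_Z d_Z'$), matching the upper bound from Proposition~\ref{prop:distance-bounds-subsystem-hom-prod-code}; this reduces the conjecture to essentially the $i = 0$ case of the Zeng--Pryadko Conjecture~\ref{conj:systolic-distances}. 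In stage two, I would lift the per-column effective distance preservation of the input gadgets $U_1, U_2$ to a global effective distance preservation of $\tilde U_1, \tilde U_2$, using the block-diagonal factorization of $\tilde U_1$ in \eqref{eq:4D HGP U_Q1} across the $\mathrm{L}, \mathrm{M}, \mathrm{R}$ sectors together with its column-wise Kronecker factorization within each sector.

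For stage two, the key observation is that each elementary fault during $\tilde U_1$ is localized to a single column of a single sector and can only propagate within that column, since $\tilde U_1$ contains no inter-sector or inter-column gates. I would distribute the $f$ faults across columns as $\{f^M_c\}_c$ on the $\mathrm{M}$ sector (and analogously for $\mathrm{L}, \mathrm{R}$) and define $C^* = \{c : f^M_c \geq d_X\}$, so that $|C^*| < d_X'$ whenever $f < d_X d_X'$. For each $c \notin C^*$, the effective distance hypothesis on $U_1$ forces the column-$c$ portion of the $\mathrm{M}$-output error to have trivial $\overline{X}_{{\rm Q}}$ class on its copy of the input code, so the column-wise $X$-stabilizer generators of $\tilde{{\rm Q}}$ drawn from the bottom row of \eqref{eq:4D HP H_X} can absorb it at the cost of modifying $\mathbf{e}_R$, which does not alter the logical equivalence class of $\mathbf{e}$. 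The reduced representative then has middle-sector column support contained in $C^*$, and comparison with the column-support bound $|\mathbf{x}_M|_c \geq d_X'$ already proven in Proposition~\ref{prop:distance-bounds-subsystem-hom-prod-code} produces the desired contradiction; the symmetric argument for $\tilde U_2$ and for the $\overline{Z}$ direction proceeds identically via the top row of \eqref{eq:4D HP H_X}.

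The principal obstacle is stage one. A direct extension of the single-direction shortening-and-augmentation argument used in Proposition~\ref{prop:distance-bounds-subsystem-hom-prod-code} yields only the weaker $\max(d_X, d_X')$ bound, because puncturing simultaneously along both the row and column directions of $\mathbf{x}_M$ loses track of how the $\mathrm{L}$- and $\mathrm{R}$-sector dressings can interact to cancel weight in the middle. A promising route is to filter the 5-term product chain complex by K\"unneth degree and apply a spectral sequence argument to isolate the $\mathcal{H}^0[{\rm Q}] \otimes \mathcal{H}^0[{\rm Q}']$ summand and bound its representatives separately from the discarded spurious cohomologies. A secondary difficulty in stage two is the cross-sector compensation case, in which $\mathrm{L}$-sector faults produce an $\mathbf{e}_L \neq 0$ whose $\tilde H_Z$-syndrome is absorbed globally by $\mathrm{M}$-sector errors: here the column-$c$ $\mathrm{M}$-output need not be ${\rm Q}$-undetectable even when $f^M_c < d_X$, so the column-wise cancellation step requires either a sharpened per-column hypothesis on $U_1$ that also constrains detectable outputs, or an independent counting argument showing that each such compensating $\mathrm{L}$-fault must be accompanied by at least $d_X$ $\mathrm{M}$-faults in the affected columns.
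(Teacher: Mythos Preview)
The paper does not prove this statement at all: it is presented explicitly as Conjecture~\ref{conj:fault-tolerance}, motivated by the partial result Theorem~\ref{thm:d_eff 4D HGP} and by numerical evidence for the Zeng--Pryadko Conjecture~\ref{conj:systolic-distances}. There is therefore no ``paper's own proof'' to compare against, and the authors list this very question as Open Problem~1 in their discussion.

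Your proposal is candid about its status, and you have correctly located the essential obstruction: your stage one is precisely the $i=0$ instance of Conjecture~\ref{conj:systolic-distances} for the middle-sector subsystem code, which the paper treats as open. The spectral-sequence suggestion is not yet an argument---the K\"unneth filtration already underlies Proposition~\ref{prop:kunneth-formula} and Lemma~\ref{lem:dressed-logical-ops}, and by itself gives no control over the \emph{minimum-weight} representative within a homology class; the difficulty, as the paper's Figure~\ref{fig:schematic-quantum-quantum} discussion explains, is that the $\mathrm{L}$- and $\mathrm{R}$-sector stabilizer blocks can adversarially transfer middle-sector weight in both directions simultaneously, which is exactly what a K\"unneth-degree filtration does not see.

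Your stage two also has a genuine gap even granting stage one. The hypothesis ``$U_1$ preserves effective distance $d_X$'' only tells you that fewer than $d_X$ faults in column~$c$ cannot produce a \emph{nontrivial logical} of ${\rm Q}$; it does not force the column-$c$ output to lie in $\ker H_Z$, so it need not have a well-defined $\overline{X}_{\rm Q}$ class at all, and hence cannot in general be absorbed by the column-wise $X$-stabilizers $H_X\otimes\ident$ from the bottom row of \eqref{eq:4D HP H_X}. You flag this as the ``cross-sector compensation'' issue, but it arises already without any $\mathrm{L}$-faults: a detectable column error can be compensated globally by errors in \emph{other} $\mathrm{M}$-columns via the row-wise checks $\ident\otimes H'_Z$, not only via the $\mathrm{L}$ sector. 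Closing this would seem to require a strengthened hypothesis on $U_1$ (bounding the weight of the output up to stabilizers, not merely ruling out logicals), which is stronger than what the conjecture assumes.
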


\subsection{Alleviating correlated errors}
\label{sec:alleviating correlated errors}

Theorems~\ref{thm:d_eff 2D HGP}, \ref{thm:d_eff 3D HGP 1}, \ref{thm:d_eff 3D HGP 2} and \ref{thm:d_eff 4D HGP} are statements regarding the fault tolerance of our automorphism gadgets.
Although, in the most general case, we have to implement a quantum circuit on a subset of the physical qubits, for HGP and left-sector homological product codes this subsystem circuit does not spread errors in a dangerous way that decreases the effective circuit-level distance of the code, under the assumption of free permutations. Of course, in an actual quantum processor, physical qubit permutations may introduce noise due to crosstalk and heating from movement operations. In addition, idling errors that were negligible for short times may accumulate and become non-negligible when accounting for the time to implement the subsystem circuit portion of our automorphism gadgets. Because of these reasons, we emphasize that the effective distance preservation stated in Theorems \ref{thm:d_eff 2D HGP} and \ref{thm:d_eff 3D HGP 1} is \emph{only relevant} when our simplified noise model can well-approximate the actual noise in the machines. In addition to idling errors, a potentially deep circuit on the right sector may cause small errors to propagate to large ones and confuse our classical decoding algorithm. Although our gadgets are distance-preserving, and so these large errors are \emph{detectable}, it could be the case that there exists a logical operator with a large fraction of its support on the right sector. If the right sector has a large density of correlated errors, a simple minimum-weight decoder may potentially infer a wrong correction and incur a logical error.

The first, and perhaps most pressing, question to ask is when do we need to actually implement the subsystem circuit $W$? Suppose, for instance, that our fault-tolerant architecture is comprised of multiple code blocks that store all of our logical qubits for computation, and we are performing logical gates using a combination of automorphism gadgets, measurement (e.g. lattice surgery) gadgets, state-injection gadgets and transversal gates. For simplicity, we will examine the following cases where an automorphism gadget on a code block is immediately proceeded by:
\begin{enumerate}
    \item A transversal measurement of the entire code block
    \item A logical Pauli measurement
    \item An interblock transversal gate
\end{enumerate}

In case 1, a transversal measurement involves measuring every data qubit in the block in the $X$ or $Z$ basis. Such a measurement can be used to fault-tolerantly measure all logical qubits in the block in the $\overline{X}$ or $\overline{Z}$ bases and works for any CSS code. The statistics of the transversal measurement is both invariant to any permutation as well as any CNOT circuit. The key observation is that any permutation or CNOT circuit does not change the Pauli type of an observable, and so the outcome of any observable consisting of Pauli $X$s or Pauli $Z$s before the automorphism gadget can be reconstructed by tracking how the observable would change through the gadget and combining the appropriate single-qubit measurement outcomes. For left-sector logical Pauli operators in our homological product codes, this tracking would involve a simple relabeling. So up to an in-software relabeling of the data qubits, we do not need to physically implement the automorphism gadget at all prior to the transversal measurement; in a sense, the transversal measurement ``absorbs'' the automorphism gadget. 

For case 2, we would like to fault-tolerantly measure a logical Pauli operator, such as a single $\overline{X}_i$ or a joint $\overline{X}_i\overline{X}_j$, without disturbing the uninvolved logical qubits. Such an addressable logical Pauli measurement can be used to design addressable logical Clifford gates or perform targeted state injection. For intrablock logical operators, we can use either the general-purpose ``adapters'' of \cite{swaroop2025} or ``extractors'' of \cite{he2025extr}; for interblock logical operators, we can also use the related ``bridge'' systems presented in both references. For these schemes, the specific ancillary system that is attached to the code block depends on the weight(s) of the logical operator(s) being measured as well as the structure of the code stabilizers within its neighborhood. Similar to the first case, when we have an automorphism gadget preceding our logical measurement, we can simply track the support of our logical operator through the permutation (recall that the gadget only acts as a permutation on the left sector) and attach the ancillary system to its new support. Equivalently, we can perform the physical permutation on the left sector and ignore the $W$ on the right sector, which would perform the desired logical transformation (in the canonical basis) but change the underlying stabilizer code. The weight of the logical operator does not change through the permutation, and so the only adjustment to the ancillary system would come from the modified stabilizer structure. The only reason to implement the subsystem circuit $W$ would be if one desires to use the \emph{same} ancillary system to measure the logical operators both prior to and following the automorphism gadget.

For case 3, we envision that we have multiple blocks of the same code, and we would like to perform a transversal gate between the blocks. There are two situations to consider: when the other blocks are in known states versus unknown states. When the other blocks are in known states, as is the case with state injection, we can ``offload'' the automorphism gadget to the ancillary blocks as follows. Similar to the previous case, we can perform the relabeling part of the automorphism gadget on the left sector and ignore the circuit part on the right sector; however, when we do this relabeling, we will deform the underlying stabilizer code. In order for the transversal gate to work properly, the ancillary blocks will also need to be initialized (or distilled) according to the deformed code. Since the deformed code's logical operators have the same weight as the original code, we do not anticipate the distillation overhead on the deformed code to be substantially greater than that on the original code. For the second situation where the other blocks are in unknown states, as is the case when they are data/computational blocks themselves, the above trick no longer works. We will then generally need to perform the full automorphism gadget in order to properly realize the action of the transversal gate.

Now suppose that implementing the subsystem circuit $W$ becomes unavoidable for reasons such as those mentioned above. We briefly mention a few ways to mitigate potential error propagation when $W$ is not shallow. The proposed methods all come with their own space and time overheads, and any particular choice will boil down to the specific choice of code and automorphism gadget. The first is to introduce additional ``flag'' qubits that detect the spread of correlated errors and adjust our decoder accordingly, as has been done for fault-tolerant syndrome extraction of non-LDPC codes \cite{Chao_2018}. With this method, one will also need to carefully schedule the flag qubits so that their inclusion does not decrease the effective distance. The second method involves finding a low-depth collection of $\{w\} \subseteq \mathcal{A}^\perp$ which generates the entire group $\mathcal{A}^\perp$; recall that $\mathcal{A}^\perp \simeq \mathcal{A}$ when $d,d^\perp>2$ from Corollary \ref{cor:dual aut equal}. Any element of $\mathcal{A}^\perp$ can then be compiled into a sequence of low-depth components interleaved with intermediate error correction (see Section \ref{sec:Hamming codes} on Hamming codes for an explicit example). Since each component has low depth, the ``light cone'' of correlated errors is bounded, and so we can catch and correct them during the intermediate error correction cycles between the components. The cost of this second method is the additional time that comes with this compilation scheme. A third method also involves intermediate error correction, but we relax the requirement that the intermediate steps return back to the original codespace. Instead, we will keep track of how the stabilizer checks (and hence code) change through $W$ and perform error correction using the ``deformed'' codes, a procedure known as pieceable fault tolerance \cite{Hill_2013, Yoder_2016}; note that since $W$ is Clifford, the deformed codes are still CSS. We can then perform intermediate error correction with respect to the new stabilizer checks using correlated decoding \cite{Cain_2024}. The drawback with this approach is that we have no control over the weight of these new stabilizer checks, which could become rather large in the bulk of the $W$ circuit.

Lastly, we briefly mention a potential decoding strategy that can complement the methods mentioned above. The strategy involves measuring an enlarged set of checks, or stabilizer generators, that lets us distinguish between elementary errors on the left sector and correlated errors on the right sector. We will analyze the case for HGP codes, but we note that an analogous transformation can be made for the homological product codes. Given a parity-check matrix $H \in \mathbb{F}^{m \times n}_2$ and a matrix $B \in \mathbb{F}^{\ell\times m}_2$, we can construct a new parity-check matrix $H' \equiv BH \in \mathbb{F}^{\ell\times n}_2$ with $\rs H' \subseteq \rs H$. The new parity checks are related to the original checks through $B$. For the HGP $H_X$ matrix \eqref{eq:HGP H_X}, we will choose $B = \ident \otimes g_2$. Then we have
\begin{align}
    H'_X = BH^{}_X = \big(\, h_1 \otimes g_2 \,|\, \ident \otimes g^{}_2h^\transpose_2 \,\big) = \big(\, h_1 \otimes g_2 \,|\, \mathbf{0} \,\big) \, ,
\end{align}
where we have used the fact that $g^{}_2 h^\transpose_2 = 0$ by definition of $g_2$. A similar procedure can be used to construct new $Z$-checks using $B=g_1\otimes\ident$. Notice that these new checks are strictly supported in the left sector of the HGP code, and their extra syndrome information could potentially be used to distinguish between left-sector and right-sector errors in addition to the original checks in $H_X$. In fact, these new checks are precisely the stabilizer generators for the subsystem HGP \cite{bacon2006} and subsystem homological product codes \cite{Zeng_2020}, which can be viewed as a projection of the original HGP or homological product code onto the left sector of data qubits. However, the weight of these new checks are proportional to the classical code distances due to presence of $g_1$ ($g_2$) in $H'_X$ ($H'_Z$), and as such additional resources will be required to extract their syndromes in a fault-tolerant manner.


\section{Classical codes with automorphisms}\label{sec:classical families}

In the previous section, we saw that classical and quantum code automorphisms induce fault-tolerant logical operations in the corresponding homological product codes. Due to the flexibility of the homological product, \emph{any} choice of classical or quantum input codes can be tensored together, and so constructing a suitably symmetric homological product code effectively boils down to looking at symmetric input codes. In this section, we will explore various families of classical codes with rich automorphism structures that could potentially be used in our framework. Note that computing the automorphism group of a generic linear code is related to the \textsc{Permutation Code Equivalence} problem, which admits a polynomial-time reduction to \textsc{Graph Isomorphism} \cite{Petrank_1997}, for which the current fastest algorithm runs in quasipolynomial time \cite{Babai_2016, helfgott2017}. Nonetheless, when a code admits special structure, such as a Reed-Muller code, its automorphism group can be efficiently computed.

\subsection{Cycle codes of graphs}\label{sec:cycle codes}

Given a simple, connected graph $\mathsf{G} = (V,E)$, the cycle code $\mathcal{C}(\mathsf{G})$ is defined as the linear code whose parity-check matrix $H \in \mathbb{F}^{\abs{V}\times\abs{E}}_2$ is the edge-vertex incidence matrix of $\mathsf{G}$ \cite{Kasami_1961, Hakimi_1968}. In particular, we place bits on the edges of $\mathsf{G}$ and parity checks on the vertices. Using the fact that each edge must be attached to two vertices, the connectedness of $\mathsf{G}$ implies that the set of edges $\big\{e^{(i)}_j\big\}_j$ that lie in the neighborhood of any vertex $v_i \in V$ can be generated by ``adding (modulo 2)'' all the edges $\big\{e^{(k)}_\ell\big\}_\ell$ that lie in the neighborhood of all other vertices $v_k \in V \setminus \{v_i\}$. In other words, the cycle code $\mathcal{C}(\mathsf{G})$ has $\abs{V} - 1$ linearly independent checks since any check can be obtained from a linear combination of the other checks. 

Because a cycle code is defined according to a graph, many of its parameters can be efficiently computed from properties of this graph. If $\Delta$ is the maximum degree of $\mathsf{G}$, then $H$ is $(2,\Delta)$-LDPC, and we have $\abs{E} \leq \Delta\abs{V}/2$ with equality if $\mathsf{G}$ is regular. The transpose code with parity checks on the edges and bits on the vertices is simply the repetition code, which is another way to see the global linear dependency among the rows of $H$. It follows that the code dimension is $k = \abs{E} - \abs{V} + 1$, which is also the rank of the fundamental group $\pi_1(\mathsf{G})$ of the graph. Logical codewords have support on closed loops or cycles in $\mathsf{G}$. It follows that the code distance is given by the girth, or minimum cycle length of $\mathsf{G}$, which can be efficiently computed in $O\big(\abs{V}\abs{E}\big)$ operations via breadth-first search. Furthermore, since the Tanner graph of $\mathcal{C}$ is essentially given by $\mathsf{G}$, automorphisms of $\mathsf{G}$ are automatically Tanner graph automorphisms of $\mathcal{C}$; in other words we have $\mathrm{Aut}(\mathsf{G}) \simeq \mathcal{T}(\mathcal{C}) \subseteq \mathrm{Aut}(\mathcal{C})$. Unfortunately, finding the automorphism group of a generic graph can only be done in quasipolynomial time \cite{Babai_2016, helfgott2017} and so may become computationally intractable at large $n$. For some specific graphs, however, there can be simple arguments to deduce their automorphism groups.

\begin{figure}[t]
    \centering
    \includegraphics[width=0.85\textwidth]{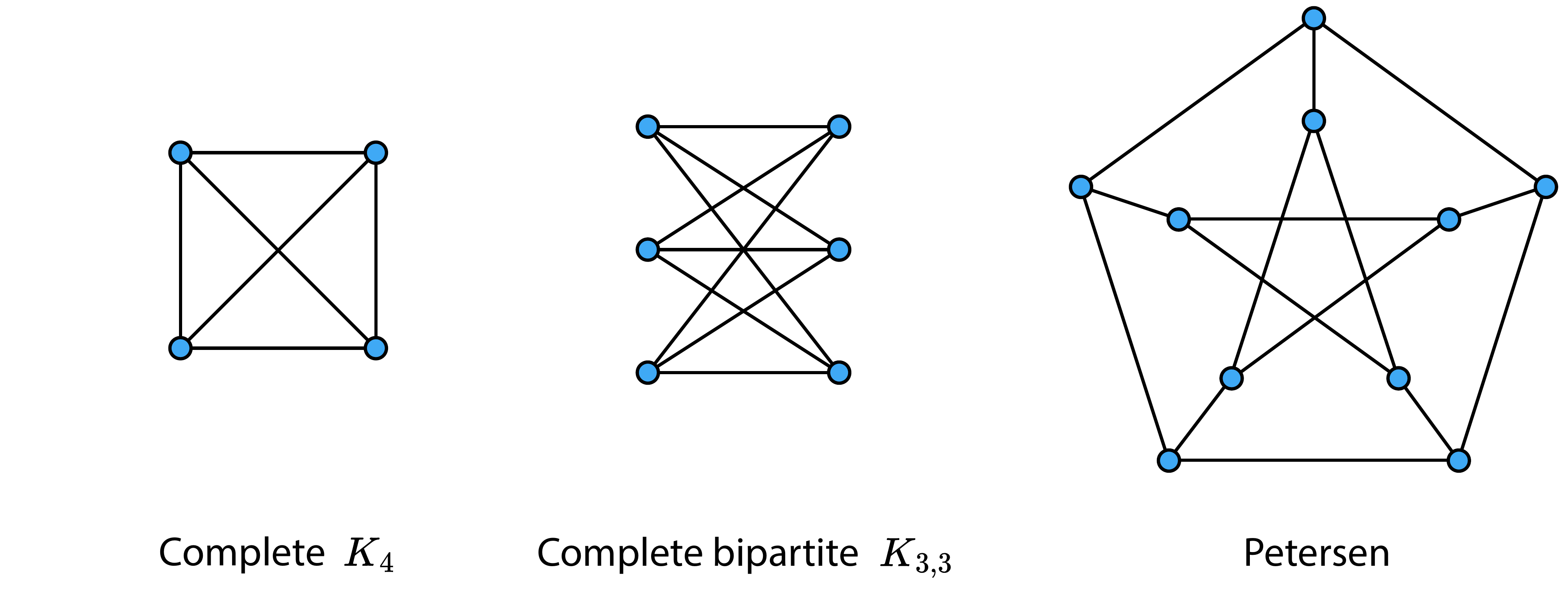}
    \caption{The graphs of the cycle codes in Table \ref{tab:cycle codes} are drawn.}
    \label{fig:example graphs}
\end{figure}

As an example, a parity-check matrix for the complete graph on 4 vertices $K_4$ (see the leftmost graph in Figure \ref{fig:example graphs} for an illustration) is given by
\begin{align}\label{eq:H_K4}
    H_{K_4} =
    \underbrace{
    \begin{pmatrix}
        1 & 1 & 0 & 0 & 1 & 0 \\
        0 & 1 & 1 & 0 & 0 & 1 \\
        0 & 0 & 1 & 1 & 1 & 0 \\
        1 & 0 & 0 & 1 & 0 & 1
    \end{pmatrix}
    }_\text{edges} \scalebox{1.5}{$\Bigg\rbrace$} \text{vertices}
    \, .
\end{align}
Note that each column contains two 1s since each edge must be attached to two vertices, which also means that the sum of all rows is 0 (mod 2), i.e. each row can be obtained by summing all other rows. Thus, there are three linearly independent checks.
Since $K_4$ is a complete graph, every vertex is connected to every other vertex, and so any permutation of the vertices leaves the graph unchanged. That is to say, the graph automorphism condition \eqref{eq:graph automorphism} is satisfied for any $4\times 4$ permutation matrix acting to the left of \eqref{eq:H_K4}.

Since we are interested in highly symmetric graphs with large automorphism groups, we will typically restrict to $\Delta$-regular graphs; note that two vertices need to have the same degree in order to map to each other through an automorphism. For a $\Delta$-regular graph with $n$ edges, there are $2n/\Delta$ vertices, and so the corresponding cycle code has rate $k/n = 1-2/\Delta + 1/n$, which is nonvanishing for $n\rightarrow\infty$ as long as $\Delta > 2$. Note that $\Delta=2$ corresponds to the ring graph which gives the $[n,1,n]$ repetition code. Unfortunately, as soon as $\Delta>2$, the Moore bound on girth tells us that
\begin{align}\label{eq:Moore bound}
    g = d \leq 2\log_{\Delta-1}n + O(1) \, .
\end{align}
As a consequence, if we try to construct a constant-rate LDPC cycle code, our code distance will be at most logarithmic in $n$. We know that \eqref{eq:Moore bound} is close to being tight since the family of Ramanujan spectral expanders \cite{Lubotzky_1988} obey $g \geq \frac{4}{3}\log_{\Delta-1}n + O(1)$. Although the distance of constant-rate cycle codes are not quite satisfactory from an asymptotic standpoint, they can be rather competitive at small code sizes. Table \ref{tab:cycle codes} lists three examples with $n\leq 15$, taken from the Forster census of symmetric graphs with uniform degree 3 (cubic) up to 512 vertices \cite{foster1988}. Each of the listed codes also has $d^\perp > 2$, and so Corollary \ref{cor:dual aut equal} tells us that each graph automorphism corresponds to a distinct logical operation.

\begin{table}[t]\renewcommand{\arraystretch}{1.4}
\centering
\begin{tabular}{c|c|c|c|c|c}
Cubic graphs              & $[n,k,d]$  & $d^\perp$ & $\mathrm{Aut}(\mathsf{G}) \simeq \mathcal{T}$                                      & $\abs{\mathrm{Aut}(\mathsf{G})}$ & $\mathrm{Aut}(\mathcal{C})$ \\ \hline
Complete $K_4$               & $[6,3,3]$  & 3         & $\mathrm{S}_4$                                         & 24 & $\mathrm{S}_4$                      \\
Complete bipartite $K_{3,3}$ & $[9,4,4]$  & 3         & $\mathrm{S}_3 \times \mathrm{S}_3 \rtimes \mathbb{Z}_2$ & 72  &  $\mathrm{S}_3 \times \mathrm{S}_3 \rtimes \mathbb{Z}_2$                    \\
Petersen                     & $[15,6,5]$ & 3         & $\mathrm{S}_5$                                         & 120   &  $\mathrm{S}_5$                
\end{tabular}
\caption{Cycle code parameters for a selection of small cubic graphs with the largest known automorphism groups (for their size) are displayed. For these examples, the graph automorphism groups are the full code automorphism groups, i.e. $\mathcal{T} \simeq \mathrm{Aut}(\mathcal{C}) \simeq \mathcal{A}$, verified using the \textsf{GUAVA} package in GAP \cite{GUAVA_GAP}.}
\label{tab:cycle codes}
\end{table}

For concreteness, we briefly discuss how we computed the graph automorphisms for the complete bipartite $K_{3,3}$ graph and the Petersen graph. For the complete bipartite $K_{3,3}$ graph, we can permute the three vertices on each side of the bipartition independently, giving us $\mathrm{S}_3 \times \mathrm{S}_3$.
In addition, we can reflect the graph along the bipartition which effectively swaps the vertices of the two halves.
Because reflections do not commute with the permutations that may take place on each side of the bipartition, we end up with a semi-direct product between $\mathrm{S}_3 \times \mathrm{S}_3$ and $\mathbb{Z}_2$.
In the case of the Petersen graph, consider a bijective mapping that maps each of the 10 vertices in the graph to some unique ordered pair of indices in $\{1, 2, 3, 4, 5\}$ such that a vertex only shares an edge with another vertex if and only if their pairs of indices do not share any index. For example, the vertex that is associated with $(2, 3)$ does not share an edge with the vertex that is mapped to $(3, 5)$ but shares an edge with the vertices that are associated with $(1, 4), (1, 5)$ and $(4, 5)$. It is not too hard to see that any permutation in $\mathrm{S}_5$ preserves this relationship since we utilize all possible pairs of indices in $\{1, 2, 3, 4, 5\}$ with our 10 vertices. With a bit more work, one can show that there are no additional automorphisms beyond $\mathrm{S}_5$ \cite{foster1988}.

\subsection{Group-algebra codes}\label{sec:group-algebra codes}

Perhaps the most straightforward way to construct classical LDPC codes with automorphism groups is with the help of a group algebra. Specifically, given a finite group $\mathcal{G}$ of order $\abs{\mathcal{G}}=n$, an element of the (binary) group algebra $a \in \mathbb{F}_2[\mathcal{G}]$ takes the form
\begin{align}\label{eq:group algebra element}
    a = \sum_{i=1}^{n} c_i g_i \;,\quad c_i \in \mathbb{F}_2 \;,\; g_i \in \mathcal{G} \, .
\end{align}
An intuitive way to think of the group algebra is that the coefficients $c_i$ interact through addition while the group elements $g_i$ control multiplication. To get a parity-check matrix out of \eqref{eq:group algebra element}, we simply replace each $g_i$ with a binary matrix representation $\mathbb{B}[g_i]$, which is typically taken to be the (left) regular representation; the classical parity-check matrix is then given by $H \equiv \mathbb{B}[a]$. In the regular representation, each $\mathbb{B}[g_i]$ is an $n \times n$ permutation matrix whose rows and columns index the group elements and whose entries tabulate the group's Cayley multiplication table. Because the row and column weight of a permutation matrix is always 1, the row and column weights of $H$ will be upper-bounded by the number of nonzero terms in the sum \eqref{eq:group algebra element}.

Suppose $\mathcal{G}$ is abelian. Then it is easy to see that for any classical group-algebra code $\mathcal{C}$ built from $\mathbb{F}_2[\mathcal{G}]$, we have $\mathcal{G} \subset \mathrm{Aut}(\mathcal{C})$ since for any $g_j \in \mathcal{G}$,
\begin{align}\label{eq:ag=ga}
    a g_j = \bigg( \sum_{i=1}^n c_i g_i \bigg) g_j = g_j \sum_{i=1}^n c_i g_i = g_j a
\end{align}
using the distributive property and the fact that $g_ig_j = g_jg_i$ for an abelian group. Plugging in the regular representation $H = \mathbb{B}[a]$ into \eqref{eq:ag=ga}, we immediately see that
\begin{align}\label{eq:HB=BH}
    H \mathbb{B}[g_j] = \mathbb{B}[g_j] H  \, ,
\end{align}
and upon comparing with the automorphism condition \eqref{eq:aut conditions}, we conclude that $\sigma = W = \mathbb{B}[g_j]$. Since $W=\sigma$ is also a permutation matrix, these transformations are graph automorphisms (Def. \ref{defn:graph automorphism}) and hence lift to exact automorphisms of the HGP code.

A commonly used family of abelian group-algebra codes are the cyclic codes, where we take $\mathcal{G} = \mathbb{Z}_n$. Since the elements of $\mathbb{Z}_n$ can all be generated by the unit translation $x$, it is customary to express \eqref{eq:group algebra element} in terms of a degree $\leq n$ polynomial in $x$. For example, a parity-check matrix for the repetition code is given by the polynomial $a = 1+x$. Taking $\sigma = W = \mathbb{B}[x]$, we see that the HGP operators \eqref{eq:HGP gadgets} act as unit translations of all qubits in the $\hat{x}$ and $\hat{y}$ directions (with periodic boundaries). This particular automorphism ``translation'' gadget has been recently used to lower the spacetime overhead for fault-tolerant computation using HGP codes \cite{xu2024fast}. In our framework, we see that this automorphism gadget arises from Tanner graph automorphisms associated with group-algebra codes over $\mathbb{F}_2[\mathbb{Z}_n]$.

\begin{table}[t]\renewcommand{\arraystretch}{1.4}
\centering
\begin{tabular}{c|c|c|c|c|c}
    Base group $\mathcal{G}$   &  $H$   & $[n=\abs{\mathcal{G}},k,d]$  & $d^\perp$   & $\mathrm{Aut}(\mathcal{C})$ & $\abs{\mathrm{Aut}(\mathcal{C})}$  \\ \hline
    $\mathbb{Z}_7$ & $1+x+x^3$ & $[7,3,4]$ & 3 & $\mathrm{GL}_3(\mathbb{F}_2)$ & 168  \\
    $\mathrm{D}_6 \simeq \mathbb{Z}_6 \rtimes \mathbb{Z}_2$ & $1+r+sr^{-1}$ & $[12,4,6]$ & 3 & $(\mathrm{A}^2_4 \rtimes \mathbb{Z}_2) \rtimes \mathbb{Z}_2$ & 576  \\
    $\mathrm{D}_8 \simeq \mathbb{Z}_8 \rtimes \mathbb{Z}_2$ & $1+r^2+r^3+sr^{-1}$ & $[16,6,6]$ & 4 & $\mathbb{Z}^4_2 \rtimes \mathrm{S}_6$ & 11520
\end{tabular}
\caption{Parameters for a few small abelian and non-abelian group-algebra codes are displayed. The first code is the dual Hamming or simplex code of length 7. For the dihedral group $\mathrm{D}_\ell$ of order $2\ell$, we use the left-regular representation according to the presentation $\mathrm{D}_\ell = \langle r,s \;|\; r^\ell=s^2=(rs)^2=1 \rangle$. Automorphism groups are computed using the \textsf{GUAVA} package in GAP \cite{GUAVA_GAP}.}
\label{tab:group-algebra codes}
\end{table}

For non-abelian $\mathcal{G}$, it is not immediately clear how we can get an analogue of \eqref{eq:HB=BH} since we will not be able to commute $\mathbb{B}[g_j]$ through $H$ in general. However, we can employ the following trick which has been previously used to construct both small-scale \cite{2BGA} as well as large-scale \cite{Panteleev_2022_good, Dinur_2022_LTC} LDPC codes. Since $\mathcal{G}$ is non-abelian, we need to distinguish between the left-regular and right-regular representations, corresponding to left and right group-multiplication respectively. Denote $L[\cdot]$ and $R[\cdot]$ the left-regular and right-regular representations accordingly. Then we can still use \eqref{eq:group algebra element}, but we define $h = L[a]$ to be the left-regular representation of $a$. Then $R[g]$ for any $g \in \mathcal{G}$ commutes with $h$ due to the associativity of group multiplication, i.e. for $g_1g_2g_3$ the order of multiplying $g_1$ ($g_3$) on the left (right) does not matter.
Table \ref{tab:group-algebra codes} lists group-algebra codes arising from the abelian group $\mathbb{Z}_7$ as well as the non-abelian groups $\mathrm{D}_6$ and $\mathrm{D}_8$.

Although their construction is relatively straightforward, group-algebra codes have their number of physical bits equal to the size of the group of interest, $n = \abs{\mathcal{G}}$, as a consequence of the regular representation. For example, if one wishes to use $\mathcal{G} = \mathrm{S}_\ell$, then the number of physical bits is $n=\ell!$. As we will see in the later examples, there exist other classical code constructions where the number of physical bits can be much less than the size of the group.

\subsection{Lifted group-algebra codes}

We can generalize group-algebra codes by using a matrix of polynomial instead of a single (scalar) polynomial, yielding lifted group-algebra codes.
Lifting is a common technique used in the literature on classical coding theory to generate large LDPC codes from smaller known LDPC codes. Given an arbitrary Tanner graph that corresponds to some classical LDPC code $\mathcal{C}$, we can lift it to an LDPC code with block length that is $\ell$ times larger.
The resulting Tanner graph is called an $\ell$-\emph{lift} or $\ell$-fold \emph{cover graph} for the base Tanner graph. An easy way to interpret the result of the lift is to consider replacing each vertex $v$ of the base graph with $\ell$ copies $v_1, \ldots, v_\ell$ of the same vertex and each edge $e$ that connects $v$ and $v'$ with $\ell$ copies of the same edge $e_1, \ldots, e_\ell$. In the $\ell$-\emph{lift}, the edge $e_i$ connects the vertices $v_i$ and $v'_{\pi(i)}$ for some permutation $\pi \in S_\ell$. We emphasize that the permutation $\pi$ does not need to be the same for the different edges in the base graph. 

Consider \emph{shift $\ell$-lifts}, which are lifts based on a cyclic subgroup $\Gamma_\ell$ generated by the permutation $(1, 2, \ldots, \ell) \in S_\ell$.
Before we elaborate on the algebraic structure of the parity check matrices that are obtained as a result of a shift $\ell$-lift, we first define what an $\ell\times\ell$ circulant matrix is. We reuse the notation provided in Ref.~\cite{panteleev2021quantum}.

\begin{defn}[$\ell\times\ell$ Circulant Matrix]\label{defn:circulant_matrix}
    An $\ell\times \ell$ circulant matrix $M$ over the field $\mathbb{F}_q$ is the following:
    \[M = \begin{pmatrix}
        a_0 & a_{\ell - 1} & \ldots & a_1 \\
        a_1 & a_0 & \ldots & a_2 \\
        \vdots & \vdots & \ddots & \vdots \\
        a_{\ell - 1} & a_{\ell - 2} & \ldots & a_0
    \end{pmatrix},\]
    where $a_i \in \mathbb{F}_q$ for all $i \in [\ell - 1] \cup \{0\}$.
    The matrix $M$ can also be decomposed into the following form:
    \[M = a_0\ident_{\ell} + a_1 P + \ldots + a_{\ell - 1} P^{\ell - 1},\]
    where $\ident_\ell$ is the $\ell\times\ell$ identity matrix and $P$ is the permutation matrix representing the right cyclic shift $(1, 2, \ldots, \ell) \in S_\ell$ i.e., 
    \[P = \begin{pmatrix}
        0 & 0 & \ldots & 1 \\
        1 & 0 & \ldots & 0 \\
        0 & 1 & \ldots & 0 \\
        \vdots & \vdots & \ddots & \vdots \\
        0 & 0 & \ldots & 0
    \end{pmatrix}.\]
\end{defn}

It is not too hard to see that the ring of $\ell \times \ell$ circulant matrices over $\mathbb{F}_q$ is isomorphic to the ring of polynomials over $\mathbb{F}_q$ modulo the polynomial $x^{\ell} - 1$ i.e., $\mathbb{F}_q[x] / (x^\ell - 1)$, given $P^\ell = \ident_\ell$.
Thus, we can succinctly express the matrix $M$ in the form of an $(\ell-1)$-degree $\mathbb{F}_q$-polynomial: 
\[m = a_0 + a_1 x + \ldots + a_{\ell - q}x^{\ell - 1}.\]
We denote $M \coloneqq \mathbb{B}(m)$ where $\mathbb{B}$ simply denotes the block matrix form of the polynomial. We also abuse notation by allowing $\mathbb{B}$ to act on matrices such that it changes the polynomial in each matrix entry into its block matrix form.

Now, suppose we are given a parity check matrix $H_0 \in \mathbb{F}_2^{c \times n}$.
We can obtain a new parity check matrix $H \in \mathbb{F}_2^{c\ell \times n\ell}$ that corresponds to the shift $\ell$-lift of the Tanner graph for $H_0$. The matrix $H$ is a quasi-cyclic matrix i.e., a block matrix where each block is an $\ell \times \ell$ circulant matrix as defined in Def.~\ref{defn:circulant_matrix}. It is thus convenient to express the block matrices of $H$ as the following:
To construct $H$, we can first construct the matrix $H_{poly}$ over the ring of polynomials $\mathbb{F}_2[x] / (x^\ell - 1)$ from $H_0$.
To do that, we simply multiply each entry $\left(H_0\right)_{ij}$ of $H_0$ with some polynomial $m_{ij} \in \mathbb{F}_2[x] / (x^\ell - 1)$ to obtain $H_{poly} \in \left(\mathbb{F}_2[x] /(x^{\ell} - 1) \right)^{c \times n}$.
We then use the isomorphism between $\mathbb{F}_2[x] / (x^{\ell} - 1)$ and the $\ell \times \ell$ circulant matrices described in Def.~\ref{defn:circulant_matrix} to construct the individual block matrices $\mathbb{B}\left(\left(H_0\right)_{ij} m_{ij}\right)$ that replaces each polynomial $\left(H_0\right)_{ij} m_{ij}$.
This allows us to obtain $\mathbb{B}\left(H_{poly}\right) = H$.

We include an example below where we set $\ell = 4$:
\begin{align}
    H_0 &= \begin{pmatrix}
        1 & 1 & 0 \\
        1 & 0 & 1
    \end{pmatrix}, \\
    m &= \begin{pmatrix}
        x & 1 + x^2 & x^3 \\
        0 & x & x^2
    \end{pmatrix}, \\
    H_{\poly} &= \begin{pmatrix}
        x & 1 + x^2 & 0 \\
        0 & 0 & x^2
    \end{pmatrix}, \\
    H &= \mathbb{B}\left(H_{\poly}\right) = \left(\begin{array}{cccc|cccc|cccc}
        0 & 0 & 0 & 1 & 1 & 0 & 1 & 0 & 0 & 0 & 0 & 0 \\
        1 & 0 & 0 & 0 & 0 & 1 & 0 & 1 & 0 & 0 & 0 & 0\\
        0 & 1 & 0 & 0 & 1 & 0 & 1 & 0 & 0 & 0 & 0 & 0\\
        0 & 0 & 1 & 0 & 0 & 1 & 0 & 1 & 0 & 0 & 0 & 0\\
        \hline
        0 & 0 & 0 & 0 & 0 & 0 & 0 & 0 & 0 & 0 & 1 & 0\\
        0 & 0 & 0 & 0 & 0 & 0 & 0 & 0 & 0 & 0 & 0 & 1\\
        0 & 0 & 0 & 0 & 0 & 0 & 0 & 0 & 1 & 0 & 0 & 0\\
        0 & 0 & 0 & 0 & 0 & 0 & 0 & 0 & 0 & 1 & 0 & 0 
    \end{array}\right). 
\end{align}

Without assuming any underlying automorphisms for $H_0$, the shift $\ell$-lift for $H_0$, that is, $H$, is equipped with a set of automorphisms that arise from the lift. 
Consider the permutation of the following form:
\[\pi = \bigoplus_{i = 1}^{n} \sigma,\; \sigma \in \Gamma_\ell.\]
In other words, the permutation $\pi$ is a direct sum of permutations of the subgroup $\Gamma_\ell \subseteq S_\ell$ generated from $(1, 2, 3, \ldots, \ell)$.
The matrix representation for $\pi$ is the following:
\[\pi = \begin{pmatrix} \sigma & 0 & \ldots & 0 \\ 0 & \sigma & \ldots & 0 \\ \vdots & \vdots & \ddots & \vdots \\ 0 & 0 & \ldots & \sigma\end{pmatrix} = \begin{pmatrix} x^{s} & 0 & \ldots & 0 \\ 0 & x^{s} & \ldots & 0 \\ \vdots & \vdots & \ddots & \vdots \\ 0 & 0 & \ldots & x^{s} \end{pmatrix},\]
for some $s \in [\ell - 1] \cup \{0\}$.
It is not difficult to see that
\[H\pi = \left(\bigoplus_{i = 1}^{c} \sigma\right)H \eqqcolon WH\] is
due to the fact that $\Gamma_\ell$ is an abelian group.
And similar to the case for the group-algebra codes, because $W$ is also a permutation matrix, these transformations are graph automorphisms (Def. \ref{defn:graph automorphism}) and hence lift to exact automorphisms of the HGP code.

It is useful to consider the case where the group $\mathcal{G}$ is non-abelian.
In fact, non-abelian lifts were essential for the construction of asymptotically good quantum LDPC codes~\cite{Panteleev_2022_good}.
Similar to the case for group-algebra codes, we face the same obstacle that prevents us from commuting $\pi$ through $H$ when $\mathcal{G}$ is non-abelian.
However, it is easy to see that the trick used to circumvent the obstacle for the group-algebra codes can also be used in the case for the lifted codes.
We simply utilize both the left and right actions of $\mathcal{G}$ which trivially commute by the associativity of group multiplication.

\subsection{$[2^r-1, 2^r-r-1, 3]$ Hamming codes}
\label{sec:Hamming codes}

Hamming codes are one of the first linear codes studied after the simple repetition code. They have minimum distance $d=3$ and dimension $k = n - \log_2(n+1)$. They are known as perfect codes because they saturate the Hamming (sphere-packing) bound and thus partition the physical space $\mathbb{F}^n_2$ into sets (balls) of correctable errors. The parity-check matrix of a Hamming code of level $r$ consists of all $2^r-1$ nonzero bitstrings of length $r$ in $\mathbb{F}^r_2$. As a result, each row has weight $(n+1)/2$ and the maximum column weight is $r$. As an example, consider the $[7,4,3]$ $r=3$ Hamming code with parity-check matrix
\begin{align}
    H_{[7,4,3]} = \begin{pmatrix}
        1 & 0 & 0 & 1 & 1 & 0 & 1 \\
        0 & 1 & 0 & 1 & 0 & 1 & 1 \\
        0 & 0 & 1 & 0 & 1 & 1 & 1
    \end{pmatrix} \, .
\end{align}
One can see in the above example that the 7 columns of $H$ comprise all 7 nonzero bitstrings on 3 bits. Using the structure of $H$ for the Hamming codes, one can easily show that $\mathcal{A}^\perp = \mathrm{GL}_r(\mathbb{F}_2)$ as follows. For any $W \in \mathcal{A}^\perp$, observe that the columns of $WH$ transform according to $W$. Since $W$ is invertible and $H$ contains all nonzero bitstrings, the action of $W$ simply permutes the columns of $H$, which implicitly defines a corresponding permutation matrix $\sigma \in \mathrm{S}_n$ such that we obtain the automorphism condition $WH=H\sigma$. So we have $\mathrm{GL}_r(\mathbb{F}_2) \subset \mathcal{A}^\perp$; at the same time, since $H$ has rank $r$, we must also have $\mathcal{A}^\perp \subset \mathrm{GL}_r(\mathbb{F}_2)$. Thus, $\mathcal{A}^\perp = \mathrm{GL}_r(\mathbb{F}_2)$. In addition, since both the distance and dual distance are greater than 2, Corollary \ref{cor:dual aut equal} tells us that the corresponding logical gate group $\mathcal{A} \simeq \mathcal{A}^\perp = \mathrm{GL}_r(\mathbb{F}_2)$. Furthermore, since $\mathrm{S}_r$ is a subgroup of $\mathrm{GL}_r(\mathbb{F}_2)$, we also have $\mathcal{T}(H) = \mathrm{S}_r$, which is maximal.

Note that the physical implementation of a generic matrix $W \in \mathcal{A}^\perp = \mathrm{GL}_r(\mathbb{F}_2)$ may involve multiqubit gates and a superconstant circuit depth. However, we can always decompose any $W$ into elementary column operations that physically involve simple SWAPs and CNOTs, the number of which is at most $r(r-1)$. Furthermore, since each elementary operation also belongs to $\mathcal{A} \simeq \mathrm{GL}_r(\mathbb{F}_2)$, each step in the decomposition is a code automorphism, and so we can interleave error correction along the way. To summarize, for the Hamming code family with the standard form of the parity-check matrix, we have $\mathcal{A} \simeq \mathrm{GL}_r(\mathbb{F}_2)$ with Tanner graph automorphism subgroup $\mathcal{T} \simeq \mathrm{S}_r$. An arbitrary $W \in \mathcal{A}^\perp$ can be decomposed into at most $r(r-1) = O(\log^2 n)$ steps $W = W_1W_2,W_3,\dots$ comprised of permutations $\mathcal{T}$ and constant-depth circuits.


\subsection{$[2^r-1,r,2^{r-1}]$ simplex codes}

The dual of the $[2^r-1, 2^r-r-1, 3]$ Hamming codes are known as the $[2^r-1,r,2^{r-1}]$ simplex codes. They have minimum distance $d = (n+1)/2$ and dimension $k = \log_2(n+1)$, which saturates the Plotkin bound. In determining the automorphism group of Hamming codes, we first calculated $\mathcal{A}^\perp$ and then applied Corollary~\ref{cor:dual aut equal}. This $\mathcal{A}^\perp$ is $\mathcal{A}$ for the corresponding simplex code. Hence we again have that the corresponding logical gate group $\mathcal{A} \simeq \mathcal{A}^\perp \simeq \mathrm{GL}_r(\mathbb{F}_2)$. Since we have that $r = k$, we see that simplex codes saturate the bound of Corollary~\ref{cor:upper bound gates}; they have the largest possible automorphism group for a given number of logical bits.

For the Hamming codes, we had $W \in \mathrm{GL}_r(\mathbb{F}_2)$, and so we were able to easily characterize the graph automorphism subgroup as well as decompose any element into elementary steps that could be implemented in constant depth interleaved with error correction. For the simplex codes, the $W$s generate the same group but embedded in a larger matrix group $\mathrm{GL}_m(\mathbb{F}_2)$ where $m \geq 2^r-1-r > r$ for $r\geq 3$. For practical considerations, we would like to choose a parity-check matrix $H$ which minimizes the (gate) overhead for implementing each $W$ on the physical level. One choice of $H$ is to leverage the fact that all binary Hamming codes can be written as cyclic codes. For a length $n=2^r-1$ Hamming code, its codewords can be generated by any primitive polynomial of degree $r$ in $\mathbb{F}_2[x]/(x^n-1)$ \cite{MacWilliamsSloane}. For the simplex codes, this means we can build $H$ using the group-algebra construction with $\mathcal{G} = \mathbb{Z}_n$ and a primitive polynomial which is typically a trinomial\footnote{The constraint on the polynomial can be relaxed to only having a primitive greatest common divisor with $x^n+1$ \cite{malcolm2025}.}. As examples, the $[7,3,4]$ simplex code has parity-check matrix $H = \mathbb{B}[1+x+x^3]$, and the $[15,4,8]$ simplex code has parity-check matrix $H = \mathbb{B}[1+x+x^4]$. By writing the parity-check matrices of the simplex codes in cyclic form, we can ensure that we have $\mathbb{Z}_n$ as a subgroup of their Tanner graph automorphism groups.


\subsection{Other Reed-Muller codes}

Some of the previously mentioned codes can be considered instantiations of Reed Muller codes up to puncturing. In general, Reed Muller $\mathrm{RM}(r,m)$ codes~\cite{muller1954application, abbe2020} are defined by evaluating multivariate polynomials over $\mathbb{F}_2$. Consider the polynomial ring with $m$ variables $\mathbb{F}_2[x_1, x_2, ..., x_m]$. For a polynomial $f \in \mathbb{F}_2[x_1, x_2, ..., x_m]$ and a vector $y = (y_1, y_2, ...,y_m) \in \mathbb{F}_2^m$, we can compute the evaluation of $y$, $f(y_1, y_2, ..., y_m)$. Now define $\mathrm{Eval}(f)$ to be the bitstring obtained from evaluating $f(y)$ each of the $2^m$ vectors in $\mathbb{F}_2^m$. The code $\mathrm{RM}(m,r)$ is then defined as:
\begin{align}
    \label{eq:rm}
    \mathrm{RM}(r,m) = \big\{ \mathrm{Eval}(f) ~|~ f \in \mathbb{F}_2[x_1, ..., x_m], ~\deg(f) \le r \big\}.    
\end{align}
Note that over $\mathbb{F}_2$, $x^n = x$, and so a degree $r$ polynomial in $\mathbb{F}_2[x_1,...,x_m]$ has monomials that look like $x_1x_2...x_r$. From \eqref{eq:rm} it can be seen that $n = 2^m$. The number of logical bits $k$ is simply the number of polynomials with degree less than or equal to $r$:
\begin{align}
k = \sum_{i=0}^r {m \choose i},
\end{align}
and the distance is $d = 2^{m-r}$.
When $r=0$, we obtain the $[2^m, 1, 2^m]$ repetition code, and when $r = m$, we obtain the $[2^m, 2^m, 1]$ trivial code. Additionally, other codes can be derived by interpolating between these values. A small example is the $[8,7,2]$ code $\mathrm{RM}(2,3)$, which has the following generator matrix:

\begin{align}
G =
\begin{pmatrix}
\mathrm{Eval}(1) \\
\mathrm{Eval}(x_1) \\
\mathrm{Eval}(x_2) \\
\mathrm{Eval}(x_3) \\
\mathrm{Eval}(x_1x_2) \\
\mathrm{Eval}(x_1x_3) \\
\mathrm{Eval}(x_2x_3)
\end{pmatrix}
=
\begin{blockarray}{cccccccc}
 \rotatebox{60}{\tiny 000} & \rotatebox{60}{\tiny 001} & \rotatebox{60}{\tiny 010} & \rotatebox{60}{\tiny 011} & \rotatebox{60}{\tiny 100} & \rotatebox{60}{\tiny 101} & \rotatebox{60}{\tiny 110} & \rotatebox{60}{\tiny 111} & \\
\begin{block}{(cccccccc)}
1 & 1 & 1 & 1 & 1 & 1 & 1 & 1 \bigstrut[t]\\
0 & 0 & 0 & 0 & 1 & 1 & 1 & 1 \\
0 & 0 & 1 & 1 & 0 & 0 & 1 & 1 \\
0 & 1 & 0 & 1 & 0 & 1 & 0 & 1 \\
0 & 0 & 0 & 0 & 0 & 0 & 1 & 1 \\
0 & 0 & 0 & 0 & 0 & 1 & 0 & 1  \\
0 & 0 & 0 & 1 & 0 & 0 & 0 & 1 \bigstrut[b] \\
\end{block}
\end{blockarray}
\end{align}
Here, rows correspond to the input basis vectors, while columns correspond to the evaluations of the variables $Z_1,Z_2\dots Z_m$.   

In general, the size of the automorphism group of $\mathrm{RM}(r,m)$ is the general affine group $\mathrm{GA}_m(\mathbb{F}_2) \simeq \mathbb{F}_2^m \rtimes \mathrm{GL}_m(\mathbb{F}_2)$. This can be seen easily from the code construction: each evaluation point of the variables $x_i$ corresponds to a bit position. In total there are $2^m$ bit positions. An affine transformation on the variables corresponds to a permutation of the bit positions, thus any affine transformation is a valid automorphism.


Note that there are three groups to compare: $\mathrm{GL}_k(\mathbb{F}_2)$, $\mathrm{GL}_m(\mathbb{F}_2)$ and $\mathrm{S}_n$:
\begin{itemize}
    \item When $r=1$, then $k=m+1$. Here $\mathrm{GL}_m(\mathbb{F}_2)$ is roughly the same as $\mathrm{GL}_k(\mathbb{F}_2)$, while the set of permutations $\mathrm{S}_n$ is much bigger.
    \item When $r=m-1$, then $k=2^m-1$. For this case, $\mathrm{GL}_k(\mathbb{F}_2) = \mathrm{GL}_{n-1}(\mathbb{F}_2)$, and while the automorphism group is $\mathrm{S}_n$, this is still much smaller than $\mathrm{GL}_k(\mathbb{F}_2)$. This is the repetition code like case. 
    \item For $r=m/2$, then $k=n/2$. For this case, the size of the automorphism group is $\abs{\mathrm{GL}_m(\mathbb{F}_2)} \approx (0.29)2^{m^2} = \mathrm{\Theta}\big( n^{\log n} \big)$.
\end{itemize}
Alternatively, we can consider punctured RM codes:
\subsubsection{Punctured Reed Muller codes as cyclic codes}

Reed Muller codes are not cyclic in general, however, they can be punctured to obtain cyclic codes. 
\begin{defn}
    A \textbf{punctured RM} code is denoted $\mathrm{RM}^{*}(r,m)$ and is defined by taking an $\mathrm{RM}(r,m)$ code and deleting the coordinate corresponding to $Z_1,\ldots Z_m=0$ from all codewords.
\end{defn}
As given in \cite{macwilliams_1964}, we have the following theorem:
\begin{thm}
       Punctured RM code $RM(r,m)^{*}$ corresponds to a cyclic code, which has zeros $\alpha^s$ that satisfy:
       \begin{align}
           1 \leq w_2(s)\leq m-r-1 \qquad 1 \leq s\leq 2^m-2
       \end{align}
   where $s$ labels the cyclotomic cosets of the field $GF(2^m)$ that partition the integers as:
   \begin{align}
       \{0,1,\ldots 2^m\} = \bigcup_{s}C_s
   \end{align}
   where $s$ runs through the coset representatives$\mod 2^m$.
   
   The minimal polynomial $M_s(x)$ of $\alpha^s$ is:
   \begin{align}
       M^{(s)}(x) = \prod_{i\in C_s} (x-\alpha^i)
   \end{align}
   where $\alpha$ denotes the primitive element of the field.
   
   The generator and check polynomials can be defined in terms of the minimal polynomials $M^{(s)}(x)$ of $\alpha^s$.
   \begin{align}
       g(x) = \prod_{1\leq w_2(s)\leq m-r-1,1\leq s \leq 2^m-2} M^{s}(x)
   \end{align}
   \begin{align}
       h(x) = (x+1)\prod_{m-r\leq w_2(s)\leq m-1, 1 \leq s \leq 2^m-2 } M^{s}(x)
   \end{align}
\end{thm}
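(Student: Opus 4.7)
The plan is to establish the theorem in three steps: first identify the coordinates of $\mathrm{RM}^*(r,m)$ with $GF(2^m)^\ast$ to exhibit cyclicity, then compute the defining set of zeros via a Mattson--Solomon/trace-expansion argument, and finally confirm the set is complete by a dimension count.

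First I would fix a primitive element $\alpha \in GF(2^m)$ together with an $\mathbb{F}_2$-basis $\{\beta_1,\ldots,\beta_m\}$ of $GF(2^m)$ and its trace-dual $\{\gamma_1,\ldots,\gamma_m\}$. Expanding $\alpha^j = \sum_i y_i^{(j)} \beta_i$, this identifies the coordinate indexed by $j \in \{0,1,\ldots,2^m-2\}$ of $\mathrm{RM}^*(r,m)$ with the point $(y_1^{(j)},\ldots,y_m^{(j)}) \in \mathbb{F}_2^m \setminus \{0\}$. A codeword is then $c_j = f(y_1^{(j)},\ldots,y_m^{(j)})$ for some $f(z_1,\ldots,z_m)$ of total degree $\le r$. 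Because multiplication by $\alpha$ is an $\mathbb{F}_2$-linear bijection of $GF(2^m)$, the cyclic shift $j \mapsto j+1$ corresponds to an invertible linear substitution on the $z_i$'s, which preserves total degree. Hence the shifted word is still a codeword, and $\mathrm{RM}^*(r,m)$ is cyclic.

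Next I would compute the defining set $T \subseteq \{0,1,\ldots,n-1\}$ of zeros, where $n = 2^m - 1$. Using $y_i^{(j)} = \mathrm{Tr}(\gamma_i \alpha^j)$, any monomial codeword $c_j = \prod_{i \in I} y_i^{(j)}$ with $|I|\le r$ expands via $\mathrm{Tr}(x) = \sum_{t=0}^{m-1} x^{2^t}$ into an $\mathbb{F}_2$-linear combination of terms $\alpha^{j\kappa}$, where $\kappa$ ranges over integers expressible as a sum of $|I|$ powers of two modulo $n$. The evaluation $c(\alpha^s) = \sum_j c_j \alpha^{sj}$ then reduces, by character orthogonality $\sum_j \alpha^{j\ell} = \mathbbm{1}[\ell \equiv 0 \pmod n]$, to a sum over tuples $(t_i)$ satisfying $s + \sum_i 2^{t_i} \equiv 0 \pmod n$. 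Writing $-s \equiv n - s \pmod n$ and using $n = \sum_{t=0}^{m-1} 2^t$, one finds $w_2(n-s) = m - w_2(s)$, and (using $2^m \equiv 1 \pmod n$ to handle repeated exponents) at least $m - w_2(s)$ powers of two are required to represent $-s \pmod n$. Hence $c(\alpha^s) = 0$ whenever $m - w_2(s) > r$, i.e.\ $w_2(s) \le m-r-1$. Excluding $s = 0$ (its non-vanishing corresponds to the all-ones check present in $\mathrm{RM}(r,m)$ but removed by puncturing) gives $T \supseteq \{s : 1 \le w_2(s) \le m-r-1\}$, and this set is automatically a union of cyclotomic cosets since $s$ and $2s \pmod n$ have the same 2-weight.

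Finally I would match dimensions: the proposed $T$ has size $\sum_{w=1}^{m-r-1} \binom{m}{w}$, while $\mathrm{RM}^*(r,m)$ has dimension $\sum_{i=0}^r \binom{m}{i}$, and a one-line calculation using $\binom{m}{w} = \binom{m}{m-w}$ shows $|T| + \dim \mathrm{RM}^*(r,m) = 2^m - 1 = n$. Thus $T$ is exactly the defining set, and the stated formulas $g(x) = \prod_{s \in T} M^{(s)}(x)$ and $h(x) = (x^n-1)/g(x) = (x-1)\prod_{s \notin T,\, s\ne 0} M^{(s)}(x)$ follow immediately. The main obstacle I anticipate is the careful bookkeeping in the second step, namely proving the lower bound on the number of power-of-two terms needed to represent $-s \pmod n$ while accounting for repeated $t_i$ and wraparound under $2^m \equiv 1 \pmod n$; this is the technical heart of the Mattson--Solomon/affine-invariance characterization of extended cyclic codes and would likely be the place to lean on a classical reference rather than reproduce the full combinatorial argument.
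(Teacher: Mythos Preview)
The paper does not actually prove this theorem: it is stated with the attribution ``As given in \cite{macwilliams_1964}, we have the following theorem'' and then immediately used, with no argument supplied. So there is no paper proof to compare against; the authors are treating this as a classical fact imported from the literature.

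Your proposal is a faithful outline of the standard proof one finds in MacWilliams--Sloane and related sources. The three-step structure (identify coordinates with $GF(2^m)^\ast$ for cyclicity, compute the defining set via the Mattson--Solomon transform and the 2-weight combinatorics of $n-s$, close with a dimension count) is exactly the classical route, and your observation that $w_2(n-s) = m - w_2(s)$ is the key combinatorial identity. Your self-diagnosis is also accurate: the genuinely delicate part is the lower bound on the number of powers of two needed to represent $-s \pmod{2^m-1}$ once repeated exponents and wraparound are allowed, and in practice one does cite rather than reprove this (it is the affine-invariance characterization of extended cyclic codes due to Kasami--Lin--Peterson). One small point worth tightening: you say puncturing ``removes'' the $s=0$ zero, but more precisely puncturing at the all-zero evaluation point does not change which $\alpha^s$ with $s\neq 0$ are zeros; the $s=0$ case is handled separately because the all-ones codeword (evaluation of the constant polynomial $1$) has $c(\alpha^0) = 2^m - 1 \equiv 1 \pmod 2$, so $\alpha^0 = 1$ is genuinely not a root of any nonzero codeword. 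Otherwise the plan is sound.
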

It is worth noting that the automorphism group for punctured RM code $\mathrm{RM}^{*}(r,m)$ is $\mathrm{GL}_m(\mathbb{F}_2)$. We highlight two specific examples which have interesting properties. The first example contains the punctured $\mathrm{RM}^*(1,m)$. These codes correspond to dual Hamming or simplex codes. In particular, take $\mathrm{RM}(1,m)$ and drop the first bit and the first row corresponding to all ones. The resulting $[2^m-1, m, 2^{m-1}]$ is the dual of the Hamming code with parameters [$2^{m}-1$,$2^{m}-m-1,3$]. 
The generator polynomial for $RM^{*}(1,3)$ can be found to be:
\begin{align}
    g(x)= 1 + x^2 + x^3 +x^4,
\end{align}
which corresponds to the check polynomial $h(x)$ of the $[7,4,3]$ Hamming code.

The second example contains some punctured families of RM codes that can also be thought of as Bose–Chaudhuri–Hocquenghem (BCH) codes \cite{Bose_1960_2, Bose_1960_1, Hocquenghem_1959}, a special subclass of cyclic codes. As an example, let us look at $\mathrm{RM}(2,5)$. By puncturing the code (say at the zero position) one obtains the cyclic BCH code $[31,16,\geq 7]$ code. The generator polynomial for the above BCH code is given as:
\begin{align}
    g(x) = x^{15} + x^{14} + x^{13} + x^{12} + x^{10} + x^9 + x^8 + x^6 + x^5 + x^3 + 1.
\end{align}


\section{Examples of quantum codes with automorphism gadgets}
\label{sec:examples}

In this section, we explicitly construct several hypergraph product codes, analyze their parameters, and provide some brief applications of their automorphism gadgets using the formalism from Section \ref{sec:aut gadgets}.

\subsection{$\llbracket 52,10,3 \rrbracket$ HGP code with $\mathcal{A} \simeq \mathrm{S}_4 \times \mathrm{S}_4$}

\begin{figure}[t]
    \centering
    \includegraphics[width=0.7\textwidth]{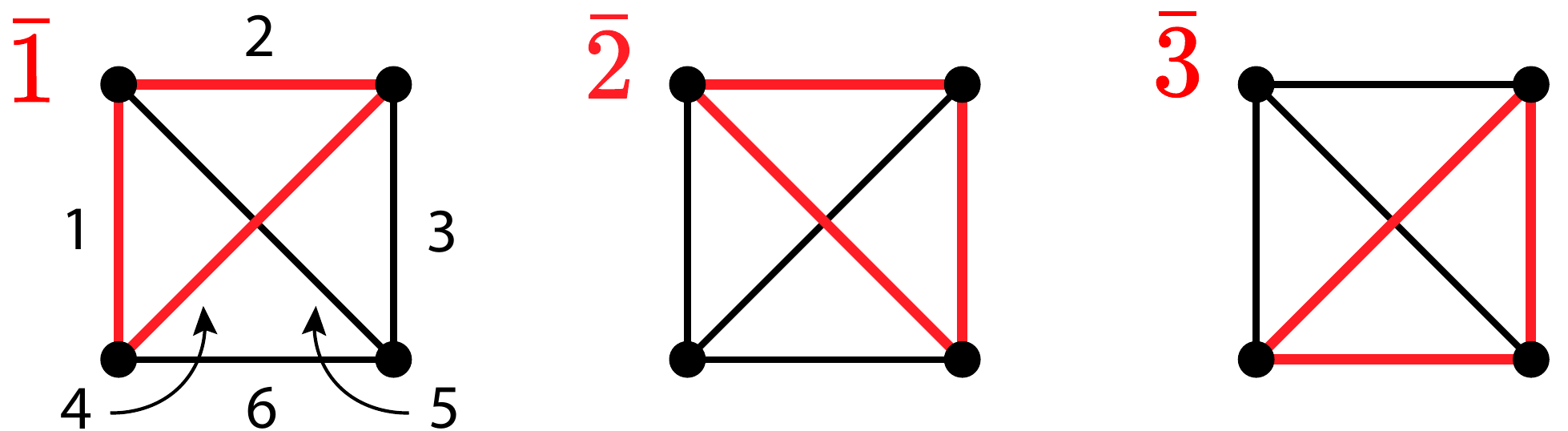}
    \caption{Codeword generators of the $K_4$ cycle code corresponding to the three rows of \eqref{eq:K_4 generator matrix}. Black numbers label the physical bits on edges, and red numbers label the logical bits.}
    \label{fig:K4 codewords}
\end{figure}

For the first example, we will take the $[6,3,3]$ cycle code on the complete graph on 4 vertices ($K_4$). The parity-check matrix was defined in \eqref{eq:H_K4}. Consider the following generator matrix:
\begin{align}\label{eq:K_4 generator matrix}
    g = \begin{pmatrix}
        1 & 1 & 0 & 1 & 0 & 0 \\
        0 & 1 & 1 & 0 & 1 & 0 \\
        0 & 0 & 1 & 1 & 0 & 1
    \end{pmatrix} \, ,
\end{align}
which corresponds to three out of the four ``triangular'' cycles in $K_4$; see Figure \ref{fig:K4 codewords} for an illustration. Using cycle notation for permutations, it is easy to verify that logical transposition $(\bar{1}\bar{2})$ is achieved via the physical permutation $(15)(34)$, and the logical transposition $(\bar{2}\bar{3})$ is achieved via $(24)(56)$. These two logical transpositions generate the entire logical permutation group $\mathrm{S}_3$ on all three logical qubits. In addition to all logical permutations, we also have the logical transformation
\begin{align}\label{eq:V_K4 CNOT}
    v = \begin{pmatrix}
        1 & 1 & 1 \\
        0 & 1 & 0 \\
        0 & 0 & 1
    \end{pmatrix}
\end{align}
corresponding to the physical permutation $(25)(46)$. The rows of \eqref{eq:V_K4 CNOT} describe the multiqubit logical gate $\mathrm{C}_{\bar{1}}\mathrm{NOT}_{\bar{2}\bar{3}} = \mathrm{C}_{\bar{1}}\mathrm{NOT}_{\bar{2}} \cdot \mathrm{C}_{\bar{1}}\mathrm{NOT}_{\bar{3}}$ where logical qubit $\bar{1}$ is controlled and logical qubits $\bar{2}$ and $\bar{3}$ are targeted. The logical permutations in combination with \eqref{eq:V_K4 CNOT} generate the (Tanner graph) automorphism group $\mathrm{S}_4$ for the $K_4$ cycle code. For each physical permutation $\sigma$ described above, we also have a corresponding row-operation $w$ given by $wh=h\sigma$. Since we are using a cycle code, every $w$ will also be a permutation; for example, the $w$ corresponding to \eqref{eq:V_K4 CNOT} is the permutation $(23)$. We take the HGP \eqref{eq:HGP H_X, H_Z} of \eqref{eq:H_K4} with itself to obtain a $\llbracket 52,10,3 \rrbracket$ HGP code, with 36 data qubits and 9 logical qubits in the left sector, and 16 data qubits and 1 logical qubit in the right sector. The CSS parity checks have weight 5, and all qubits participate in either 2 or 3 $X$-type checks or $Z$-type checks. It will be convenient to represent the logical state of the 9 left logical qubits as a $3\times 3$ block
\begin{align}\label{eq:K4 HGP 3x3 logicals}
    \ket{\overline{\Psi}_{\rm L}} = \left| \begin{array}{ccc}
        \overline{\psi}_1 & \overline{\psi}_2 & \overline{\psi}_3 \\
        \overline{\psi}_4 & \overline{\psi}_5 & \overline{\psi}_6 \\
        \overline{\psi}_7 & \overline{\psi}_8 & \overline{\psi}_9
    \end{array} \right\rangle
\end{align}
akin to the geometric picture of Figure \ref{fig:HGP layout}. The logical permutations of the classical code now lift to permutations of the ``rows'' and ``columns'' in \eqref{eq:K4 HGP 3x3 logicals} which corresponds to the group $\mathrm{S}_3 \times \mathrm{S}_3$. On the physical level, we will also be permuting rows and columns of both the left sector and right sector of data qubits, using $\sigma$ for the left sector and $w$ for the right sector. Note that the right logical qubit is invariant to all these automorphisms since its logical operators are repetition codewords (from the transpose of a cycle code). Adding in any ``diagonal'' logical SWAP such as $(\bar{1}\bar{5})$ from other means\footnote{e.g. an extractor \cite{he2025extr}} enlarges this permutation subgroup ($\mathrm{S}_3 \times \mathrm{S}_3$) to the full permutation group on all 9 left logical qubits. The inclusion of any two-qubit logical CNOT gate then enables us to compile the entire class of affine gates\footnote{Recall that the affine class of gates on $k$ (qu)bits is generated by all CNOTs and forms the group $\mathrm{GL}_k(\mathbb{F}_2)$.} on these 9 logical qubits.

\subsection{$\llbracket 48,6,3 \rrbracket$ HGP code with $\mathcal{A} \simeq \mathrm{S}_4 \times \mathrm{S}_4$ and transversal CZ gates}
\label{sec:K4 HGP copy-cup}

For the second example, we will also take the $K_4$ cycle code with parity-check matrix \eqref{eq:H_K4} and generator matrix \eqref{eq:K_4 generator matrix}. But this time, we take the HGP of the $K_4$ cycle code with its transpose, the $[4,1,4]$ repetition code with bits on the 4 vertices of $K_4$ and parity checks on the 6 edges. The resulting HGP code has parameters $\llbracket 48,6,3 \rrbracket$ with 3 left logical qubits and 3 right logical qubits; note that there is a symmetry between the left and right sectors now. The $X$-type and $Z$-type checks have weights 6 and 4 respectively, and each qubit participates in 2 $X$-type and 3 $Z$-type checks. For this example, we will care about the right logical qubits, and so for clarity we write down the full canonical logical bases for both left and right sectors:
\begin{subequations}
\begin{align}
    G_{Z,{\rm L}} &= (\, g \otimes \mathbf{e}_1 \;|\; \mathbf{0} \,)  \\
    G_{X,{\rm L}} &= (\, E_1 \otimes \mathbf{1} \;|\; \mathbf{0} \,)  \\
    G_{Z,{\rm R}} &= (\, \mathbf{0} \;|\; \mathbf{e}_1 \otimes g \,)  \\
    G_{X,{\rm R}} &= (\, \mathbf{0} \;|\; \mathbf{1} \otimes E_1 \,) \, ,
\end{align}
\end{subequations}
where $g$ is given by \eqref{eq:K_4 generator matrix}, $\mathbf{e}_1 = (1,0,0,0) \notin \ker h^\transpose$, $\mathbf{1} = (1,1,1,1)$ is the repetition codeword, and $E_1 \in \mathbb{F}^{3\times 6}_2$ is a matrix whose rows are not in $\ker h$. From the above logical basis, we notice that the first left automorphism gadget $U_{1,{\rm L}} = (\sigma \otimes \ident) \oplus (\sigma' \otimes \ident)$ (here $\sigma'\equiv w$) acts nontrivially on the 3 left logical qubits and trivially on the 3 right logical qubits. Likewise, the second right automorphism gadget $U_{2,{\rm R}} = (\ident \otimes \sigma') \oplus (\ident \otimes \sigma)$ acts nontrivially on the right logical qubits and trivially on the left logical qubits. Note that for this example, $U_{2,{\rm L}}$ and $U_{2,{\rm R}}$ are redundant, similarly for $U_{1,{\rm L}}$ and $U_{1,{\rm R}}$. Thus, the group of logical automorphisms generated from $U_{1,{\rm L}}$ and $U_{2,{\rm R}}$ is isomorphic to $\mathrm{S}_4 \times \mathrm{S}_4$. The actions on the left and right logical qubits are identical to those in the classical $K_4$ cycle codes: all logical permutations as well as the multitargeted CNOT \eqref{eq:V_K4 CNOT}. Importantly, the `$\times$' here represents independent operations between the left and right sectors, whereas for the first example everything was performed solely on the left sector.

\begin{figure}[t]
    \centering
    \includegraphics[width=0.99\textwidth]{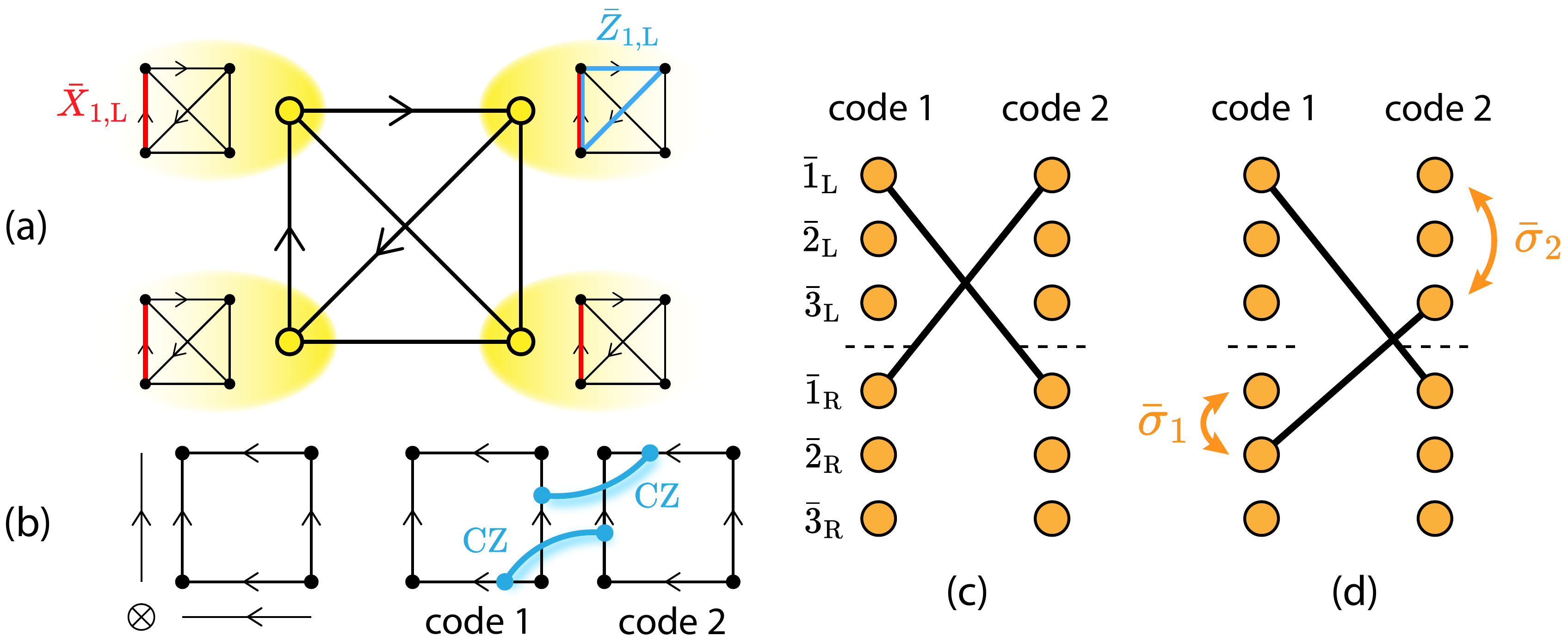}
    \caption{\textbf{(a)} One can view the $K^2_4$ graph as putting ``inner'' $K_4$ graphs, corresponding to one input $K_4$, on the vertices of an ``outer'' $K_4$, the other input, and connecting different inner graphs with transversal edges according to the edges of the outer graph. Logical operators for the first left logical qubit are drawn. An orientation satisfying the Leibniz condition along the support of (classical) codeword 1 is drawn on both inner and outer graphs. \textbf{(c)} Sketch of the copy-cup CZ gate on the data qubits (edges) resulting from a nonzero cup product. \textbf{(c)} The logical action of the copy-cup CZ gate is depicted. Orange circles depict logical qubits, and thick lines between logical qubits denote logical CZ gates. \textbf{(d)} With access to automorphism gadgets, the logical CZ gates from (c) can now address different logical qubits.}
    \label{fig:K4xK4 logicals}
\end{figure}

There are several reasons to choose the transpose code as the second input code. Recall that the $K_4$ cycle code defines bits on edges and parity checks on the vertices of $K_4$. When we take the HGP of the $K_4$ cycle code with its transpose, the resulting HGP code can also be defined according to a graph: the Cartesian graph product $K_4 \times K_4 = K^2_4$. Qubits are located on the edges, $X$-checks on the vertices, and $Z$-checks on the square ``faces'' of the product graph $K^2_4$ spanned by an edge in one copy of $K_4$ and an edge in the second copy. As a consequence, the $X$-syndromes for $Z$ errors can be decoded with minimum-weight matching, similar to that in the surface code; $Z$-syndromes however cannot be decoded in this manner and require other means. Topologically, a graph can be interpreted as a simplicial complex with vertices as 0-simplices and edges as 1-simplices. As such, the HGP product complex is imbued with transversal CZ gates stemming from a cohomology operation known as a cup product \cite{golowich2024non, breuckmann2024cups}. We defer the interested reader to the paper by Breuckmann \emph{et al.} \cite{breuckmann2024cups}, and we will simply recite their relevant results for our purposes. The first step to defining the ``copy-cup'' CZ gate is to assign directionality to the edges, i.e. an orientation, of the input $K_4$ graphs; we allow for edges to be ``free'' and have no direction. For a vertex $v \in V$, let $\delta(v) = \delta_{\rm in}(v) \cup \delta_{\rm out}(v) \cup \delta_{\rm free}(v) \in E$ denote the edges incident to $v$. The orientation needs to satisfy the ``Leibniz condition'': $\abs{\delta_{\rm in}(v)} + \abs{\delta_{\rm out}(v)} = 0$ (mod 2) for all $v \in V$, in other words the parity of incoming and outgoing edges must be the same at every vertex. Upon choosing appropriate orientations for both our input $K_4$ graphs, we also induce an orientation on their Cartesian product $K^2_4$: whenever we copy an edge, we simply also copy its corresponding orientation; see Figure \ref{fig:K4xK4 logicals}a for a sketch. The cup product is nonzero between any two directed edges on $K^2_4$ that satisfy
\begin{enumerate}
    \item The two directed edges do not originate from the same input graph. Roughly, one edge stems from one input graph and the other edge from the other input graph.
    \item The endpoint vertex of the first edge is the starting point of the second edge. In other words, the edges are consecutively oriented.
\end{enumerate}
The copy-cup CZ gate between two code blocks is then constructed as follows. Whenever two edges satisfy the above conditions for a nonzero cup product, we apply a CZ gate between the first edge (according to the consecutive orientation) on the first code block and the second edge on the second code block; see Figure \ref{fig:K4xK4 logicals}b for a sketch.

When we choose the orientations on the input $K_4$ graphs to follow our classical codewords (Figures \ref{fig:K4 codewords} and \ref{fig:K4xK4 logicals}a), the copy-cup CZ gate results in logical $\overline{\rm CZ}$ gates between logical qubits of different sectors; see Figure \ref{fig:K4xK4 logicals}c. For example in Figure \ref{fig:K4xK4 logicals}a, if we choose both input orientations to follow the support of logical bit 1 in the classical $K_4$ cycle code, then the copy-cup CZ gate performs logical $\overline{\rm CZ}$ between logical qubits $\bar{1}_{\rm L}$ and $\bar{1}_{\rm R}$ of two code blocks. Choosing the input orientations to follow different classical codewords results in $\overline{\rm CZ}$s that involve other logical qubits. Recall that we can perform arbitrary intrasector logical permutations using the automorphism gadgets lifted from the $K_4$ cycle code. Using these logical permutations, we can have the copy-cup CZ gates address different logical qubits, like in Figure \ref{fig:K4xK4 logicals}d. This degree of addressability is greater than what we would simply get from the freedom of choosing the initial orientations.

\subsection{$\llbracket 288,9,(16,3) \rrbracket$ homological product code with $\mathcal{A} \simeq \mathrm{S}_4\times\mathrm{S}_4\times\mathrm{S}_4$ and transversal CCZ gates}

We only briefly discuss our last example, since it is a simple extension of the previous one in Section \ref{sec:K4 HGP copy-cup}. We take a (quantum $\times$ classical) homological product \eqref{eq:3D HGP H_X,H_Z} of the $\llbracket 48,6,3 \rrbracket$ HGP code from the previous section with the transpose $K_4$ cycle code to obtain a $\llbracket 288,9,(d_X=16,d_Z=3) \rrbracket$ homological product code. Similar to the previous example, the structure of this code can be defined with respect to a graph, this time the triple Cartesian product of three $K_4$ graphs. $X$-checks are defined on the vertices, data qubits on the edges, and $Z$-checks on the faces (again spanned by edges originating from distinct $K_4$ copies). $X$-checks have weight 9, and each qubit participates in 2 of them. $Z$-checks have weight 4, and each qubit participates in 6 of them. The 9 logical qubits are partitioned into three sectors: left, middle and right, with 3 logical qubits belonging to each sector. We also have three types of automorphism gadgets \eqref{eq:3D HGP gadgets}: one from each input $K_4$ cycle code. Similar to the example in the previous section, the automorphism gadgets include arbitrary logical permutations within each sector.

The cup product formalism \cite{breuckmann2024cups} also applies to this code, and with an appropriate choice of orientations on the three input $K_4$ graphs, we can obtain transversal CCZ gates on our homological product code. Similar to Figure \ref{fig:K4xK4 logicals}b, the three input orientations will induce an orientation on the product graph. For the copy-cup CCZ gate between three code blocks, we apply a CCZ gate between every three consecutive edges that originate from different input $K_4$ graphs: the first edge on the first code block, the second edge on the second code block, and the third edge on the third code block. On the logical level, the copy-cup CCZ gate will perform logical $\overline{\rm CCZ}$ gates on logical qubits residing in different sectors and in different code blocks: i.e. logical qubit $\bar{1}_{\rm L}$ on the first code block, $\bar{1}_{\rm M}$ on the second code block, and $\bar{1}_{\rm R}$ on the third code block, as well as their interblock permuted triples. Leveraging logical permutations from the automorphism gadgets, we can increase the total number of logical qubit triples that the copy-cup $\overline{\rm CCZ}$s act on and hence increase the addressability of the copy-cup gates.


\section*{Acknowledgments}
We thank Victor Albert, Ali Fahimniya, Andrew Lucas and Shayan Majidy for helpful discussions.
This material is based upon work supported in part by the
Defense Advanced Research Projects Agency (DARPA)
under Agreement HR00112490357, the NSF QLCI award OMA2120757 and the DoE ASCR Quantum Testbed
Pathfinder program (awards No.~DE-SC0019040 and No.~DE-SC0024220).
SJST acknowledges funding and support from Joint Center for Quantum Information and Computer Science (QuICS) Lanczos Graduate Fellowship, MathQuantum Graduate Fellowship, and the National University of Singapore (NUS) Development Grant. 
\bibliographystyle{alpha}
\renewcommand*{\bibfont}{\small}
\bibliography{thebib}

\newcommand{\etalchar}[1]{$^{#1}$}
\begin{thebibliography}{XBAP{\etalchar{+}}24}

\bibitem[ABO08]{Aharonov_2008}
Dorit Aharonov and Michael Ben-Or.
\newblock Fault-tolerant quantum computation with constant error rate.
\newblock {\em SIAM Journal on Computing}, 38(4):1207--1282, 2008.

\bibitem[AGS15]{aaronson2015gates}
Scott Aaronson, Daniel Grier, and Luke Schaeffer.
\newblock The classification of reversible bit operations.
\newblock {\em arXiv preprint arXiv:1504.05155}, 2015.

\bibitem[ASY20]{abbe2020}
Emmanuel Abbe, Amir Shpilka, and Min Ye.
\newblock Reed-muller codes: Theory and algorithms.
\newblock {\em arXiv preprint arXiv:2002.03317}, 2020.

\bibitem[Bab16]{Babai_2016}
L\'{a}szl\'{o} Babai.
\newblock Graph isomorphism in quasipolynomial time [extended abstract].
\newblock In {\em Proceedings of the Forty-Eighth Annual ACM Symposium on Theory of Computing}, STOC '16, page 684–697, New York, NY, USA, 2016. Association for Computing Machinery.

\bibitem[BB24]{Breuckmann_2024_ZX}
Nikolas~P. Breuckmann and Simon Burton.
\newblock Fold-transversal clifford gates for quantum codes.
\newblock {\em Quantum}, 8:1372, June 2024.

\bibitem[BC06]{bacon2006}
Dave Bacon and Andrea Casaccino.
\newblock Quantum error correcting subsystem codes from two classical linear codes.
\newblock {\em arXiv preprint arXiv/quant-ph:0610088}, 2006.

\bibitem[BCG{\etalchar{+}}24]{Bravyi_2024}
Sergey Bravyi, Andrew~W. Cross, Jay~M. Gambetta, Dmitri Maslov, Patrick Rall, and Theodore~J. Yoder.
\newblock High-threshold and low-overhead fault-tolerant quantum memory.
\newblock {\em Nature}, 627(8005):778–782, March 2024.

\bibitem[BDET24]{breuckmann2024cups}
Nikolas~P. Breuckmann, Margarita Davydova, Jens~N. Eberhardt, and Nathanan Tantivasadakarn.
\newblock Cups and gates i: Cohomology invariants and logical quantum operations.
\newblock {\em arXiv preprint arXiv:2410.16250}, 2024.

\bibitem[BDSB24]{burton2024}
Simon Burton, Elijah Durso-Sabina, and Natalie~C. Brown.
\newblock Genons, double covers and fault-tolerant clifford gates.
\newblock {\em arXiv preprint arXiv:2406.09951}, 2024.

\bibitem[BE21a]{Breuckmann_2021_BP}
Nikolas~P. Breuckmann and Jens~N. Eberhardt.
\newblock Balanced product quantum codes.
\newblock {\em IEEE Transactions on Information Theory}, 67(10):6653–6674, October 2021.

\bibitem[BE21b]{Breuckmann_2021}
Nikolas~P. Breuckmann and Jens~Niklas Eberhardt.
\newblock Quantum low-density parity-check codes.
\newblock {\em PRX Quantum}, 2(4), October 2021.

\bibitem[BH14]{Bravyi_2014_HGP}
Sergey Bravyi and Matthew~B. Hastings.
\newblock Homological product codes.
\newblock In {\em Proceedings of the Forty-Sixth Annual ACM Symposium on Theory of Computing}, STOC '14, pages 273--282, New York, NY, USA, 2014. Association for Computing Machinery.

\bibitem[BHM10]{Bravyi_2010_top}
Sergey Bravyi, Matthew~B. Hastings, and Spyridon Michalakis.
\newblock Topological quantum order: Stability under local perturbations.
\newblock {\em Journal of Mathematical Physics}, 51(9), September 2010.

\bibitem[Big74]{biggs1993algebraic}
Norman Biggs.
\newblock {\em Algebraic Graph Theory}.
\newblock Cambridge Mathematical Library. Cambridge University Press, 2 edition, 1974.

\bibitem[BMD06]{Bombin_2006}
H.~Bombin and M.~A. Martin-Delgado.
\newblock Topological quantum distillation.
\newblock {\em Physical Review Letters}, 97(18), October 2006.

\bibitem[Bon84]{Bonisoli_1984}
Arrigo Bonisoli.
\newblock Every equidistant linear code is a sequence of dual hamming codes.
\newblock {\em Ars Combinatoria}, 18:181--186, 1984.

\bibitem[BRC60a]{Bose_1960_2}
R.C. Bose and D.K. Ray-Chaudhuri.
\newblock Further results on error correcting binary group codes.
\newblock {\em Information and Control}, 3(3):279--290, 1960.

\bibitem[BRC60b]{Bose_1960_1}
R.C. Bose and D.K. Ray-Chaudhuri.
\newblock On a class of error correcting binary group codes.
\newblock {\em Information and Control}, 3(1):68--79, 1960.

\bibitem[Cam19]{Campbell_2019}
Earl~T Campbell.
\newblock A theory of single-shot error correction for adversarial noise.
\newblock {\em Quantum Science and Technology}, 4(2):025006, February 2019.

\bibitem[CFD{\etalchar{+}}24]{constantinides2024}
Nathan Constantinides, Ali Fahimniya, Dhruv Devulapalli, Dolev Bluvstein, Michael~J. Gullans, J.~V. Porto, Andrew~M. Childs, and Alexey~V. Gorshkov.
\newblock Optimal routing protocols for reconfigurable atom arrays.
\newblock {\em arXiv preprint arXiv:2411.05061}, 2024.

\bibitem[CHRY24]{cross2024}
Andrew Cross, Zhiyang He, Patrick Rall, and Theodore Yoder.
\newblock Improved qldpc surgery: Logical measurements and bridging codes.
\newblock {\em arXiv preprint arXiv:2407.18393}, 2024.

\bibitem[CR18a]{Chao_2018_Aut}
Rui Chao and Ben~W. Reichardt.
\newblock Fault-tolerant quantum computation with few qubits.
\newblock {\em npj Quantum Information}, 4(1), September 2018.

\bibitem[CR18b]{Chao_2018}
Rui Chao and Ben~W. Reichardt.
\newblock Quantum error correction with only two extra qubits.
\newblock {\em Physical Review Letters}, 121(5), August 2018.

\bibitem[CRSS97]{calderbank_1997}
A.R. Calderbank, E.M. Rains, P.W. Shor, and N.J.A. Sloane.
\newblock Quantum error correction via codes over gf(4).
\newblock In {\em Proceedings of IEEE International Symposium on Information Theory}, pages 292--, 1997.

\bibitem[CS96]{Calderbank_1996}
A.~R. Calderbank and Peter~W. Shor.
\newblock Good quantum error-correcting codes exist.
\newblock {\em Phys. Rev. A}, 54:1098--1105, Aug 1996.

\bibitem[CZZ{\etalchar{+}}24]{Cain_2024}
Madelyn Cain, Chen Zhao, Hengyun Zhou, Nadine Meister, J.~Pablo~Bonilla Ataides, Arthur Jaffe, Dolev Bluvstein, and Mikhail~D. Lukin.
\newblock Correlated decoding of logical algorithms with transversal gates.
\newblock {\em Phys. Rev. Lett.}, 133:240602, Dec 2024.

\bibitem[DEL{\etalchar{+}}22]{Dinur_2022_LTC}
Irit Dinur, Shai Evra, Ron Livne, Alexander Lubotzky, and Shahar Mozes.
\newblock Locally testable codes with constant rate, distance, and locality.
\newblock In {\em Proceedings of the 54th Annual ACM SIGACT Symposium on Theory of Computing}, STOC 2022, page 357–374, New York, NY, USA, 2022. Association for Computing Machinery.

\bibitem[EK09]{Eastin_2009}
Bryan Eastin and Emanuel Knill.
\newblock Restrictions on transversal encoded quantum gate sets.
\newblock {\em Physical Review Letters}, 102(11), March 2009.

\bibitem[EKZ22]{evra2022decodable}
Shai Evra, Tali Kaufman, and Gilles Z{\'e}mor.
\newblock Decodable quantum ldpc codes beyond the n distance barrier using high-dimensional expanders.
\newblock {\em SIAM Journal on Computing}, 53(6):FOCS20--276, 2022.

\bibitem[FB88]{foster1988}
R.M. Foster and I.Z. Bouwer.
\newblock {\em The Foster Census: R.M. Foster's Census of Connected Symmetric Trivalent Graphs}.
\newblock Charles Babbage Research Centre, 1988.

\bibitem[FH14]{Freedman_2014}
Michael~H. Freedman and Matthew~B. Hastings.
\newblock Quantum systems on non-k-hyperfinite complexes: a generalization of classical statistical mechanics on expander graphs.
\newblock {\em Quantum Info. Comput.}, 14(1–2):144–180, January 2014.

\bibitem[FH21]{Freedman_2021}
Michael Freedman and Matthew Hastings.
\newblock Building manifolds from quantum codes.
\newblock {\em Geometric and Functional Analysis}, 31(4):855--894, Aug 2021.

\bibitem[FZLL25]{fu2025nogo}
Xiaozhen Fu, Han Zheng, Zimu Li, and Zi-Wen Liu.
\newblock No-go theorems for logical gates on product quantum codes.
\newblock {\em arXiv preprint arXiv:2507.16797}, 2025.

\bibitem[GG24]{golowich2024quantum}
Louis Golowich and Venkatesan Guruswami.
\newblock Quantum ldpc codes of almost linear distance via homological products.
\newblock {\em arXiv preprint arXiv:2411.03646}, 2024.

\bibitem[GJ25]{guyot2025}
Jérôme Guyot and Samuel Jaques.
\newblock On the addressability problem on {CSS} codes.
\newblock {\em arXiv preprint arXiv:2502.13889}, 2025.

\bibitem[GKZ24]{gamarnik2024slow}
David Gamarnik, Bobak~T. Kiani, and Alexander Zlokapa.
\newblock Slow mixing of quantum gibbs samplers.
\newblock {\em arXiv preprint arXiv:2411.04300}, 2024.

\bibitem[GL24]{golowich2024non}
Louis Golowich and Ting-Chun Lin.
\newblock Quantum ldpc codes with transversal non-clifford gates via products of algebraic codes.
\newblock {\em arXiv preprint arXiv:2410.14662}, 2024.

\bibitem[Got97]{gottesman1997thesis}
Daniel Gottesman.
\newblock Stabilizer codes and quantum error correction.
\newblock {\em arXiv preprint arXiv:quant-ph/9705052}, 1997.

\bibitem[Got14]{Gottesman_2014}
Daniel Gottesman.
\newblock Fault-tolerant quantum computation with constant overhead.
\newblock {\em Quantum Info. Comput.}, 14(15–16):1338–1372, November 2014.

\bibitem[GR13]{Grassl_2013}
Markus Grassl and Martin Roetteler.
\newblock Leveraging automorphisms of quantum codes for fault-tolerant quantum computation.
\newblock In {\em 2013 IEEE International Symposium on Information Theory}, page 534–538. IEEE, 2013.

\bibitem[GR24]{gong2024}
Anqi Gong and Joseph~M. Renes.
\newblock Computation with quantum reed-muller codes and their mapping onto 2d atom arrays.
\newblock {\em arXiv preprint arXiv:2410.23263}, 2024.

\bibitem[Hao21]{hao2021}
Hanson Hao.
\newblock Investigations on automorphism groups of quantum stabilizer codes.
\newblock {\em arXiv preprint arXiv:2109.12735}, 2021.

\bibitem[HB68]{Hakimi_1968}
S.~Hakimi and J.~Bredeson.
\newblock Graph theoretic error-correcting codes.
\newblock {\em IEEE Transactions on Information Theory}, 14(4):584--591, 1968.

\bibitem[HBD17]{helfgott2017}
Harald~Andrés Helfgott, Jitendra Bajpai, and Daniele Dona.
\newblock Graph isomorphisms in quasi-polynomial time.
\newblock {\em arXiv preprint arXiv:1710.04574}, 2017.

\bibitem[HCWY25]{he2025extr}
Zhiyang He, Alexander Cowtan, Dominic~J. Williamson, and Theodore~J. Yoder.
\newblock Extractors: Qldpc architectures for efficient pauli-based computation.
\newblock {\em arXiv preprint arXiv:2503.10390}, 2025.

\bibitem[HDSHL24]{hong2024}
Yifan Hong, Elijah Durso-Sabina, David Hayes, and Andrew Lucas.
\newblock Entangling four logical qubits beyond break-even in a nonlocal code.
\newblock {\em Phys. Rev. Lett.}, 133:180601, 2024.

\bibitem[HFWH13]{Hill_2013}
Charles~D. Hill, Austin~G. Fowler, David~S. Wang, and Lloyd C.~L. Hollenberg.
\newblock Fault-tolerant quantum error correction code conversion.
\newblock {\em Quantum Info. Comput.}, 13(5–6):439–451, May 2013.

\bibitem[HGL25]{Hong_2025}
Yifan Hong, Jinkang Guo, and Andrew Lucas.
\newblock Quantum memory at nonzero temperature in a thermodynamically trivial system.
\newblock {\em Nature Communications}, 16(1), January 2025.

\bibitem[HKZ24]{hsin2024}
Po-Shen Hsin, Ryohei Kobayashi, and Guanyu Zhu.
\newblock Classifying logical gates in quantum codes via cohomology operations and symmetry.
\newblock {\em arXiv preprint arXiv:2411.15848}, 2024.

\bibitem[HMKL24]{LRESC}
Yifan Hong, Matteo Marinelli, Adam~M. Kaufman, and Andrew Lucas.
\newblock Long-range-enhanced surface codes.
\newblock {\em Physical Review A}, 110(2), 2024.

\bibitem[Hoc59]{Hocquenghem_1959}
Alexis Hocquenghem.
\newblock Codes correcteurs d'erreurs.
\newblock {\em Chiffres (Paris)}, 2:147--156, 1959.

\bibitem[Joy05]{GUAVA_GAP}
David Joyner.
\newblock Guava: an error-correcting codes package.
\newblock {\em SIGSAM Bull.}, 39(2):65–68, June 2005.

\bibitem[Kas61]{Kasami_1961}
T.~Kasami.
\newblock A topological approach to construction of group codes.
\newblock {\em J. Inst. Elec. Commun. Engrs.}, 44:1316--1321, 1961.

\bibitem[Kit03]{Kitaev_2003}
A.Yu. Kitaev.
\newblock Fault-tolerant quantum computation by anyons.
\newblock {\em Annals of Physics}, 303(1):2–30, January 2003.

\bibitem[Kit06]{Kitaev_2006}
Alexei Kitaev.
\newblock Anyons in an exactly solved model and beyond.
\newblock {\em Annals of Physics}, 321(1):2–111, January 2006.

\bibitem[KLZ98]{Knill_1998}
Emanuel Knill, Raymond Laflamme, and Wojciech~H. Zurek.
\newblock Resilient quantum computation.
\newblock {\em Science}, 279(5349):342--345, 1998.

\bibitem[KSWB25]{koutsioumpas2025}
Stergios Koutsioumpas, Hasan Sayginel, Mark Webster, and Dan~E Browne.
\newblock Automorphism ensemble decoding of quantum ldpc codes.
\newblock {\em arXiv preprint arXiv:2503.01738}, 2025.

\bibitem[LG17]{lockhart2017quantum}
Joshua Lockhart and Carlos E~Gonz{\'a}lez Guill{\'e}n.
\newblock Quantum state isomorphism.
\newblock {\em arXiv preprint arXiv:1709.09622}, 2017.

\bibitem[LGAB24]{lavasani2024stable}
Ali Lavasani, Michael~J. Gullans, Victor~V. Albert, and Maissam Barkeshli.
\newblock On stability of k-local quantum phases of matter.
\newblock {\em arXiv preprint arXiv:2405.19412}, 2024.

\bibitem[Lit19]{Litinski_2019}
Daniel Litinski.
\newblock A game of surface codes: Large-scale quantum computing with lattice surgery.
\newblock {\em Quantum}, 3:128, March 2019.

\bibitem[LP24]{2BGA}
Hsiang-Ku Lin and Leonid~P. Pryadko.
\newblock Quantum two-block group algebra codes.
\newblock {\em Phys. Rev. A}, 109:022407, 2024.

\bibitem[LPS88]{Lubotzky_1988}
A.~Lubotzky, R.~Phillips, and P.~Sarnak.
\newblock Ramanujan graphs.
\newblock {\em Combinatorica}, 8(3):261--277, September 1988.

\bibitem[Mac64]{macwilliams_1964}
Jessie Macwilliams.
\newblock Permutation decoding of systematic codes.
\newblock {\em The Bell System Technical Journal}, 43(1):485--505, 1964.

\bibitem[MC25]{Manes_2025}
Argyris~Giannisis Manes and Jahan Claes.
\newblock Distance-preserving stabilizer measurements in hypergraph product codes.
\newblock {\em Quantum}, 9:1618, 2025.

\bibitem[MGF{\etalchar{+}}25]{malcolm2025}
Alexander~J. Malcolm, Andrew~N. Glaudell, Patricio Fuentes, Daryus Chandra, Alexis Schotte, Colby DeLisle, Rafael Haenel, Amir Ebrahimi, Joschka Roffe, Armanda~O. Quintavalle, Stefanie~J. Beale, Nicholas~R. Lee-Hone, and Stephanie Simmons.
\newblock Computing efficiently in q{LDPC} codes.
\newblock {\em arXiv preprint arXiv:2502.07150}, 2025.

\bibitem[MS77]{MacWilliamsSloane}
F.J. MacWilliams and N.J.A. Sloane.
\newblock {\em The Theory of Error-correcting Codes}.
\newblock Mathematical Library. North-Holland Publishing Company, 1977.

\bibitem[Mul54]{muller1954application}
D.~E. Muller.
\newblock Application of boolean algebra to switching circuit design and to error detection.
\newblock {\em Transactions of the I.R.E. Professional Group on Electronic Computers}, EC-3(3):6--12, 1954.

\bibitem[NP24]{nguyen2024FT}
Quynh~T. Nguyen and Christopher~A. Pattison.
\newblock Quantum fault tolerance with constant-space and logarithmic-time overheads.
\newblock {\em arXiv preprint arXiv:2411.03632}, 2024.

\bibitem[PCV25]{pacenti2025}
Michele Pacenti, Dimitris Chytas, and Bane Vasic.
\newblock Construction and decoding of quantum margulis codes.
\newblock {\em arXiv preprint arXiv:2503.03936}, 2025.

\bibitem[PK21a]{Panteleev_2021}
Pavel Panteleev and Gleb Kalachev.
\newblock Degenerate quantum {LDPC} codes with good finite length performance.
\newblock {\em Quantum}, 5:585, 2021.

\bibitem[PK21b]{panteleev2021quantum}
Pavel Panteleev and Gleb Kalachev.
\newblock Quantum {LDPC} codes with almost linear minimum distance.
\newblock {\em IEEE Transactions on Information Theory}, 68(1):213--229, 2021.

\bibitem[PK22]{Panteleev_2022_good}
Pavel Panteleev and Gleb Kalachev.
\newblock Asymptotically good quantum and locally testable classical {LDPC} codes.
\newblock In {\em Proceedings of the 54th Annual ACM SIGACT Symposium on Theory of Computing}, STOC 2022, page 375–388, New York, NY, USA, 2022. Association for Computing Machinery.

\bibitem[PR97]{Petrank_1997}
E.~Petrank and R.M. Roth.
\newblock Is code equivalence easy to decide?
\newblock {\em IEEE Transactions on Information Theory}, 43(5):1602--1604, 1997.

\bibitem[PRBK24]{placke2024glass}
Benedikt Placke, Tibor Rakovszky, Nikolas~P. Breuckmann, and Vedika Khemani.
\newblock Topological quantum spin glass order and its realization in qldpc codes.
\newblock {\em arXiv preprint arXiv:2412.13248}, 2024.

\bibitem[QC22]{ReShape_decoder}
Armanda~O. Quintavalle and Earl~T. Campbell.
\newblock Reshape: A decoder for hypergraph product codes.
\newblock {\em IEEE Transactions on Information Theory}, 68(10):6569--6584, 2022.

\bibitem[QWV23]{Quintavalle_2023}
Armanda~O. Quintavalle, Paul Webster, and Michael Vasmer.
\newblock Partitioning qubits in hypergraph product codes to implement logical gates.
\newblock {\em Quantum}, 7:1153, 2023.

\bibitem[RAC{\etalchar{+}}24]{reichardt2024}
Ben~W. Reichardt, David Aasen, Rui Chao, Alex Chernoguzov, Wim van Dam, John~P. Gaebler, Dan Gresh, Dominic Lucchetti, Michael Mills, Steven~A. Moses, Brian Neyenhuis, Adam Paetznick, Andres Paz, Peter~E. Siegfried, Marcus~P. da~Silva, Krysta~M. Svore, Zhenghan Wang, and Matt Zanner.
\newblock Demonstration of quantum computation and error correction with a tesseract code.
\newblock {\em arXiv preprint arXiv:2409.04628}, 2024.

\bibitem[RK23]{rakovszky2023physics1}
Tibor Rakovszky and Vedika Khemani.
\newblock The physics of (good) ldpc codes i. gauging and dualities.
\newblock {\em arXiv preprint arXiv:2310.16032}, 2023.

\bibitem[RK24]{rakovszky2024physics2}
Tibor Rakovszky and Vedika Khemani.
\newblock The physics of (good) ldpc codes ii. product constructions.
\newblock {\em arXiv preprint arXiv:2402.16831}, 2024.

\bibitem[RKL{\etalchar{+}}24]{deroeck2024stable}
Wojciech~De Roeck, Vedika Khemani, Yaodong Li, Nicholas O'Dea, and Tibor Rakovszky.
\newblock Ldpc stabilizer codes as gapped quantum phases: stability under graph-local perturbations.
\newblock {\em arXiv preprint arXiv:2411.02384}, 2024.

\bibitem[SJOY25]{swaroop2025}
Esha Swaroop, Tomas Jochym-O'Connor, and Theodore~J. Yoder.
\newblock Universal adapters between quantum ldpc codes.
\newblock {\em arXiv preprint arXiv:2410.03628}, 2025.

\bibitem[SKW{\etalchar{+}}25]{sayginel2025}
Hasan Sayginel, Stergios Koutsioumpas, Mark Webster, Abhishek Rajput, and Dan~E Browne.
\newblock Fault-tolerant logical clifford gates from code automorphisms.
\newblock {\em arXiv preprint arXiv:2409.18175}, 2025.

\bibitem[Ste96]{Steane_1996}
A.~M. Steane.
\newblock Error correcting codes in quantum theory.
\newblock {\em Phys. Rev. Lett.}, 77:793--797, Jul 1996.

\bibitem[TKY24]{tamiya2024FT}
Shiro Tamiya, Masato Koashi, and Hayata Yamasaki.
\newblock Polylog-time- and constant-space-overhead fault-tolerant quantum computation with quantum low-density parity-check codes.
\newblock {\em arXiv preprint arXiv:2411.03683}, 2024.

\bibitem[TS24]{tan2024hgp}
Shi Jie~Samuel Tan and Lev Stambler.
\newblock Effective distance of higher dimensional {HGP}s and weight-reduced quantum {LDPC} codes.
\newblock {\em arXiv preprint arXiv:2409.02193}, 2024.

\bibitem[Tut47]{tutte1947family}
W.~T. Tutte.
\newblock A family of cubical graphs.
\newblock {\em Mathematical Proceedings of the Cambridge Philosophical Society}, 43(4):459–474, 1947.

\bibitem[Tut66]{tutte1966connectivity}
William~Thomas Tutte.
\newblock Connectivity in graphs.
\newblock In {\em Connectivity in Graphs}. University of Toronto press, 1966.

\bibitem[TZ14]{Tillich_2014}
Jean-Pierre Tillich and Gilles Zemor.
\newblock Quantum {LDPC} codes with positive rate and minimum distance proportional to the square root of the blocklength.
\newblock {\em IEEE Transactions on Information Theory}, 60(2):1193--1202, 2014.

\bibitem[VB19]{Vasmer_2019_3D}
Michael Vasmer and Dan~E. Browne.
\newblock Three-dimensional surface codes: Transversal gates and fault-tolerant architectures.
\newblock {\em Phys. Rev. A}, 100:012312, Jul 2019.

\bibitem[WVSB22]{Webster_2022}
Paul Webster, Michael Vasmer, Thomas~R. Scruby, and Stephen~D. Bartlett.
\newblock Universal fault-tolerant quantum computing with stabilizer codes.
\newblock {\em Phys. Rev. Res.}, 4:013092, Feb 2022.

\bibitem[XBAP{\etalchar{+}}24]{Xu2024}
Qian Xu, J.~Pablo Bonilla~Ataides, Christopher~A. Pattison, Nithin Raveendran, Dolev Bluvstein, Jonathan Wurtz, Bane Vasi{\'{c}}, Mikhail~D. Lukin, Liang Jiang, and Hengyun Zhou.
\newblock Constant-overhead fault-tolerant quantum computation with reconfigurable atom arrays.
\newblock {\em Nature Physics}, 20(7):1084--1090, 2024.

\bibitem[XZZ{\etalchar{+}}24]{xu2024fast}
Qian Xu, Hengyun Zhou, Guo Zheng, Dolev Bluvstein, J.~Pablo~Bonilla Ataides, Mikhail~D. Lukin, and Liang Jiang.
\newblock Fast and parallelizable logical computation with homological product codes.
\newblock {\em arXiv preprint arXiv:2407.18490}, 2024.

\bibitem[YL24]{yin2024stable}
Chao Yin and Andrew Lucas.
\newblock Low-density parity-check codes as stable phases of quantum matter.
\newblock {\em arXiv preprint arXiv:2411.01002}, 2024.

\bibitem[YTC16]{Yoder_2016}
Theodore~J. Yoder, Ryuji Takagi, and Isaac~L. Chuang.
\newblock Universal fault-tolerant gates on concatenated stabilizer codes.
\newblock {\em Phys. Rev. X}, 6:031039, Sep 2016.

\bibitem[ZP19]{Zeng_2019}
Weilei Zeng and Leonid~P. Pryadko.
\newblock Higher-dimensional quantum hypergraph-product codes with finite rates.
\newblock {\em Phys. Rev. Lett.}, 122:230501, Jun 2019.

\bibitem[ZP20]{Zeng_2020}
Weilei Zeng and Leonid~P. Pryadko.
\newblock Minimal distances for certain quantum product codes and tensor products of chain complexes.
\newblock {\em Physical Review A}, 102(6), December 2020.

\bibitem[ZSP{\etalchar{+}}24]{zhu2024higher}
Guanyu Zhu, Shehryar Sikander, Elia Portnoy, Andrew~W. Cross, and Benjamin~J. Brown.
\newblock Non-clifford and parallelizable fault-tolerant logical gates on constant and almost-constant rate homological quantum ldpc codes via higher symmetries.
\newblock {\em arXiv preprint arXiv:2310.16982}, 2024.

\end{thebibliography}
\end{document}